\documentclass[conference]{IEEEtran}

\newif\ifLongVersion\LongVersionfalse

\ifLongVersion 
\usepackage[createShortEnv, conf={normal}]{proof-at-the-end}
\newenvironment{proofsTextEnd}{}{}
\newenvironment{recTextEnd}{}{}
\newenvironment{spTextEnd}{}{}
\else 
\usepackage[createShortEnv,conf={end,link to proof,restate}]{proof-at-the-end}
\newenvironment{proofsTextEnd}{\begin{textAtEnd}[category=proofs]}{\end{textAtEnd}}
\newenvironment{recTextEnd}{\begin{textAtEnd}[category=rec]}{\end{textAtEnd}}
\newenvironment{spTextEnd}{\begin{textAtEnd}[category=sp]}{\end{textAtEnd}}
\fi

\usepackage{cite}
\usepackage{paralist}
\usepackage{amssymb}
\usepackage{amsmath}
\usepackage{mathtools}
\usepackage{todonotes}
\usepackage{algorithm}
\usepackage{algorithmicx}
\usepackage[noend]{algpseudocode}
\usepackage{multicol}
\usepackage{bussproofs}

\newenvironment{proof}{}{\qed}
\newcommand{\qed}{\hfill$\square$}

\newenvironment{example}{\noindent\emph{Example}.}{}

\newtheorem{definition}{Definition}
\newtheorem{lemma}{Lemma}
\newtheorem{theorem}{Theorem}
\newtheorem{corollary}{Corollary}
\newtheorem{fact}{Fact}


\newcommand{\nat}{\mathbb{N}}

\newcommand{\arity}{\#}
\newcommand{\arityof}[1]{{\arity{#1}}}
\newcommand{\rankof}[1]{\alpha({#1})}
\newcommand{\sortof}[1]{\sort({#1})}
\newcommand{\homof}[2]{h_{{\mathcal{#1}},{\mathcal{#2}}}}

\newcommand{\lenof}[1]{|{#1}|}
\newcommand{\cardof}[1]{{\mathrm{card}({#1})}}
\newcommand{\sizeof}[1]{\mathrm{size}({#1})}
\newcommand{\width}[1]{\mathrm{wd}({#1})}
\newcommand{\twd}[1]{\mathrm{twd}({#1})}

\newcommand{\universeOf}[1]{\mathsf{#1}}

\newcommand{\vars}{\mathcal{V}}
\newcommand{\Vars}{\mathcal{X}}

\newcommand{\isdef}{\stackrel{\scalebox{0.5}{$\mathsf{def}$}}{=}}

\newcommand{\interv}[2]{[{#1},{#2}]}
\newcommand{\tuple}[1]{\langle {#1} \rangle}

\newcommand{\set}[1]{\{ {#1} \}}
\newcommand{\mset}[1]{\{\!\!\{ {#1} \}\!\!\}}
\newcommand{\mcup}{+}
\newcommand{\emptymset}{\emptyset_{\mathrm{m}}}
\newcommand{\trunk}[1]{\lceil {#1} \rceil}
\newcommand{\lcm}{\mathrm{lcm}}

\newcommand{\pow}[1]{\mathrm{pow}({#1})}
\newcommand{\finpow}[1]{\mathrm{pow}_{\mathit{fin}}({#1})}
\newcommand{\mpow}[1]{\mathrm{mpow}({#1})}
\newcommand{\dom}[1]{\mathrm{dom}({#1})}
\newcommand{\img}[1]{\mathrm{img}({#1})}

\newcommand{\finsubseteq}{\subseteq_{\mathit{fin}}}

\newcommand{\bigO}{\mathcal{O}}
\newcommand{\twoexptime}{$2\mathsf{EXPTIME}$}

\newcommand{\np}{$\mathsf{NP}$}
\newcommand{\ptime}{$\mathsf{PTIME}$}
\newcommand{\conp}{$\mathsf{co}$-$\mathsf{NP}$}
\newcommand{\exptime}{$\mathsf{EXPTIME}$}




\newcommand{\fsignature}{\mathcal{F}}

\newcommand{\spsignature}{\fsignature_{\mathcal{SP}}}
\newcommand{\dspsignature}{\fsignature_{\mathcal{DSP}}}

\newcommand{\alphabet}{\mathbb{A}}
\newcommand{\alphabetTwo}{\mathbb{B}}
\newcommand{\alphabetParse}{\alphabet_\class}




\newcommand{\states}{\mathcal{Q}}

\newcommand{\initstates}{\mathcal{I}}

\newcommand{\arrow}[2]{\xrightarrow{{\scriptscriptstyle #1}}_{{\scriptstyle #2}}}

\newcommand{\auto}[2]{\mathcal{A}_{
    {#1}
    \ifthenelse{\equal{#2}{}}{}{,{#2}}
}}

\newcommand{\autsat}[2]{\mathcal{A}^{\scriptscriptstyle\mathsf{sat}}_{
    {#1}
    \ifthenelse{\equal{#2}{}}{}{,{#2}}
}}

\newcommand{\autcut}[2]{\mathcal{A}^{\scriptscriptstyle\mathsf{cut}}_{
    {#1}
    \ifthenelse{\equal{#2}{}}{}{,{#2}}
}}

\newcommand{\autcst}[2]{\mathcal{A}^{\scriptscriptstyle\mathsf{cst}}_{
    {#1}
    \ifthenelse{\equal{#2}{}}{}{,{#2}}
  }
}

\makeatletter
\newcommand*{\da@rightarrow}{\mathchar"0\hexnumber@\symAMSa 4B }
\newcommand*{\da@leftarrow}{\mathchar"0\hexnumber@\symAMSa 4C }
\newcommand*{\xdashrightarrow}[2][]{%
  \mathrel{%
    \mathpalette{\da@xarrow{#1}{#2}{}\da@rightarrow{\,}{}}{}%
  }%
}
\newcommand{\xdashleftarrow}[2][]{%
  \mathrel{%
    \mathpalette{\da@xarrow{#1}{#2}\da@leftarrow{}{}{\,}}{}%
  }%
}
\newcommand*{\da@xarrow}[7]{%
  \sbox0{$\ifx#7\scriptstyle\scriptscriptstyle\else\scriptstyle\fi#5#1#6\m@th$}%
  \sbox2{$\ifx#7\scriptstyle\scriptscriptstyle\else\scriptstyle\fi#5#2#6\m@th$}%
  \sbox4{$#7\dabar@\m@th$}%
  \dimen@=\wd0 %
  \ifdim\wd2 >\dimen@
    \dimen@=\wd2 %
  \fi
  \count@=2 %
  \def\da@bars{\dabar@\dabar@}%
  \@whiledim\count@\wd4<\dimen@\do{%
    \advance\count@\@ne
    \expandafter\def\expandafter\da@bars\expandafter{%
      \da@bars
      \dabar@
    }%
  }%
  \mathrel{#3}%
  \mathrel{%
    \mathop{\da@bars}\limits
    \ifx\\#1\\%
    \else
      _{\copy0}%
    \fi
    \ifx\\#2\\%
    \else
      ^{\copy2}%
    \fi
  }%
  \mathrel{#4}%
}
\makeatother


\newcommand{\store}{\mathfrak{s}}

\newcommand{\cardconstr}[3]{\mathsf{card}_{{#2},{#3}}({#1})}





%

\newcommand{\cmso}{$\mathsf{CMSO}$}

\newcommand{\mso}{$\mathsf{MSO}$}
\newcommand{\mscmso}{$\mathsf{(C)MSO}$}



\newcommand{\poly}[1]{{\mathsf{poly}({#1})}}


\newcommand{\size}[1]{{\mathrm{size}({#1})}}


\renewcommand{\mod}{~\mathrm{mod}~}

\newcommand{\sintfusion}[2]{\widetilde{\mathtt{IF}}({#1}\ifthenelse{\equal{#2}{}}{}{,{#2}})}

\newcommand{\step}[1]{\Rightarrow_{\scriptscriptstyle{#1}}}
\newcommand{\assoc}{\stackrel{\scriptstyle{\mathsf{a}}}{\sim}}

\newcommand{\astep}[1]{\stackrel{\scriptstyle{\mathsf{a}~}}{\Rightarrow}_{\scriptscriptstyle{#1}}}



\newcommand{\langof}[2]{\mathcal{L}_{#1}({#2})}
\newcommand{\alangof}[3]{\mathcal{L}_{#1}^{\scriptscriptstyle{{\mathcal{#2}}}}({#3})}

\newcommand{\requiv}[1]{{\simeq_{\scriptscriptstyle{#1}}}}

\newcommand{\supp}[1]{\mathrm{supp}({#1})}













\newcounter{index}

\newcommand{\graphof}[3]{\mathrm{subgraph}_{#1}[{#2}
\ifthenelse{\equal{#3}{}}{]}{,{#3}]}}

\newcommand{\graph}{G}

\newcommand{\vertices}{V}
\newcommand{\vertof}[1]{\vertices_{\scriptscriptstyle{#1}}}
\newcommand{\vertavoid}[1]{\widetilde{\vertices}_{\scriptscriptstyle{#1}}}
\newcommand{\edgeof}[1]{\edges_{\scriptscriptstyle{#1}}}
\newcommand{\labels}{\lambda}
\newcommand{\labof}[1]{\labels_{\scriptscriptstyle{#1}}}
\newcommand{\edgerel}{\upsilon}
\newcommand{\edgerelof}[1]{\edgerel_{\scriptscriptstyle{#1}}}

\newcommand{\slabs}{\tau}

\newcommand{\sources}{\xi}
\newcommand{\sourceof}[1]{\sources_{\scriptscriptstyle{#1}}}

\newcommand{\algof}[1]{\mathcal{#1}}

\newcommand{\treealgebra}{\algof{T}}

\newcommand{\repalgebra}{\algof{P}}
\newcommand{\repdomain}{\universeOf{P}}
\newcommand{\emptygraph}{\mathbf{0}}

\newcommand{\symbOf}[1]{\overline{#1}}
\newcommand{\sgraph}[1]{#1}

\newcommand{\grammar}{\Gamma}

\newcommand{\hval}{\mathbf{val}}

\newcommand{\trans}{\delta}

\newcommand{\hr}{$\mathsf{HR}$}

\newcommand{\rules}{\mathcal{R}}
\newcommand{\nonterm}{\mathcal{N}}

\newcommand{\edges}{{E}}

\newcommand{\tree}{T}
\newcommand{\trees}{\universeOf{T}}

\newcommand{\subtree}[2]{{#1}|_{{#2}}}

\newcommand{\twof}[1]{\mathrm{tw}({#1})}

\newcommand{\adhof}[2]{\mathrm{adh}_{#1}({#2})}

\newcommand{\unit}[2]{\overline{1}_{
        {#1}
        \ifthenelse{\equal{#2}{}}{}{,{#2}}
}}

\newcommand{\zero}[2]{\overline{0}_{
        {#1}
        \ifthenelse{\equal{#2}{}}{}{,{#2}}
}}


\newcommand{\pop}{\parallel}
\newcommand{\pexp}[2]{{#1}^{\sharp{#2}}}
\newcommand{\pexpalg}[3]{{#2}^{\sharp^{\algof{#1}}{#3}}}
\newcommand{\idemof}[1]{\mathrm{idem}({#1})}
\newcommand{\sop}{\circ}
\newcommand{\rop}{\rhd}
\newcommand{\sort}{\sigma}
\newcommand{\sorts}{\Sigma}

\newcommand{\restrict}[1]{\mathsf{restrict}_{{#1}}}

\newcommand{\rename}[1]{\mathsf{rename}_{{#1}}}

\newcommand{\extend}[1]{\mathsf{append}_{#1}}

\newcommand{\parsefunc}{\pi}

\newcommand*{\langu}{\mathcal{L}}
\newcommand*{\class}{\algof{C}}

\newcommand{\fpof}[1]{\mathtt{FP}(#1)}
\newcommand{\afp}{\mathtt{FP}}
\newcommand{\X}{\mathtt{X}}
\newcommand{\Y}{\mathtt{Y}}
\renewcommand{\P}{\mathtt{P}}
\renewcommand{\S}{\mathtt{S}}

\newcommand{\encof}[1]{\mathbf{enc}({#1})}

\newcommand{\spof}[1]{\mathbf{sp}({#1})}

\newcommand{\btwclass}[1]{\algof{G}^{\scriptscriptstyle{\leq {#1}}}}
\newcommand{\btwuniverse}[1]{\universeOf{G}^{\scriptscriptstyle{\leq {#1}}}}
\newcommand{\btwsignature}[1]{\fsignature_\algof{G}^{\scriptscriptstyle{\leq {#1}}}}
\newcommand{\singledge}[2]{{#1}_{#2}}

\newcommand{\nontermof}[1]{\theta({#1})}

\begin{document}

\title{Regular Grammars for Sets of Graphs of Tree-Width 2}

 \author{
   \IEEEauthorblockN{Marius Bozga}
   \IEEEauthorblockA{\textit{Verimag}, \textit{CNRS} \\
     Grenoble, France, \\
   marius.bozga@univ-grenoble-alpes.fr}
   \and
   \IEEEauthorblockN{Radu Iosif}
   \IEEEauthorblockA{\textit{Verimag}, \textit{CNRS} \\
     Grenoble, France, \\
     radu.iosif@univ-grenoble-alpes.fr}
   \and
   \IEEEauthorblockN{Florian Zuleger}
   \IEEEauthorblockA{\textit{TU Wien} \\
     Wien, Austria \\
     florian.zuleger@tuwien.ac.at}
 }


\maketitle

\begin{abstract}
  Regular word grammars are restricted context-free grammars that
  define all the recognizable languages of words. This paper
  generalizes regular grammars from words to certain classes of
  graphs, by defining regular grammars for unordered unranked trees
  and graphs of tree-width $2$ at most. The qualifier ``regular'' is
  justified because these grammars define precisely the recognizable
  (equivalently, \cmso-definable) sets of the respective graph
  classes. The proof of equivalence between regular and recognizable
  sets of graphs relies on the effective construction of a recognizer
  algebra of size doubly-exponential in the size of the grammar. This
  sets a \twoexptime\ upper bound on the (\exptime-hard) problem of
  inclusion of a context-free language in a regular language, for
  graphs of tree-width $2$ at most. A further syntactic restriction of
  regular grammars suffices to capture precisely the \mso-definable
  sets of graphs of tree-width $2$ at most, i.e., the sets defined by
  \cmso\ formul{\ae} without cardinality constraints. Moreover, we
  show that \mso-definability coincides with recognizability by
  algebras having an aperiodic parallel composition semigroup, for
  each class of graphs defined by a bound on the tree-width.
\end{abstract}

\section{Introduction}
Recognizability is a central pillar of the theory of formal
languages. This foundational role is underpinned by the variety of
well-established formalisms that define recognizable languages. For
instance, recognizable word languages can be represented by finite
automata, semigroups, regular expressions, regular grammars and
monadic-second order logic. These formalisms provide a robust
framework for understanding the structure of recognizable languages.

While the cases of words and trees are largely understood, the case of
graphs is currently facing several open problems. One such problem is
the lack of a notion of automaton defining a class of graph languages
that is closed under boolean operations and has a logical
characterization. A seminal development in this direction is the work
of Courcelle that introduced graph algebras
~\cite{CourcelleI,courcelle_engelfriet_2012}. In particular, the
\emph{hyperedge replacement} (\hr) algebra generalizes the notion of
recognizability from words to graphs. Courcelle established that all
\cmso-definable sets of graphs are recognizable~\cite{CourcelleI}, but
there are recognizable sets of graphs that are not
\cmso-definable. Notably, the class of graphs of tree-width at most~$2$ is chronologically the first for which recognizability in the
\hr\ algebra has been shown to be equivalent to definability in
counting monadic-second order logic (\cmso)~\cite{CourcelleV}. The
picture has been since completed by the seminal papers of
Boja\'{n}czyk and
Pilipczuk~\cite{10.1145/2933575.2934508,journals/lmcs/BojanczykP22},
that established the equivalence between recognizability and
\cmso-definability, for all classes of graphs defined by a bound on
their tree-width.

Unlike for words and trees, the equivalence between recognizability
and definability for graphs does not provide finite syntactic
descriptions of the recognizable sets (e.g., automata, regular
expression or grammars) that can be used to compute boolean operations
(union, intersection, complement) and decide emptiness of a set,
membership in a set, or inclusion between sets. For the class of
graphs of tree-width at most~$2$, the finite representation problem
has been addressed by the development of regular expressions that
capture exactly the \cmso-definable sets
thereof~\cite{DBLP:conf/icalp/Doumane22}. Moreover, grammars that
capture the \cmso-definable sets of graphs of bounded
\emph{embeddable} tree-width (i.e., an over-approximation of
tree-width that considers only decompositions whose backbones are
spanning trees of the given graph) have been
defined~\cite{Lpar24}.

Despite these recent advances, the problems of computing boolean
operations and deciding emptiness, membership and inclusion for sets
of graphs (other than trees) have received scant attention. Moreover,
the definition of finite algebraic descriptions of the \cmso-definable
sets of graphs of tree-width bounded by $k\geq3$ is currently an open
problem.



\paragraph{Contributions} (1) We introduce grammars that capture
precisely the recognizable sets for the classes of (1) unordered and
unranked trees, (2) series-parallel graphs, and, more generally, (3)
graphs of tree-width at most~$2$. Following the initial terminology of
Courcelle~\cite{CourcelleV}, we call these grammars
\emph{regular}. For each of the mentioned classes, the regular
grammars are defined by composite (derived) \hr\ operations and by
simple syntactic restrictions on the form of their rules.

Just as a regular word grammar can be converted into a finite
automaton recognizing its language, a regular graph grammar can be
converted into a finite algebra that recognizes its language.  The
elements of this algebra are finite sets of either multisets or tuples
of nonterminals, that describe (sets of) derivations of the grammar,
used to interpret the algebraic operations on graphs. For each of the
three classes of graphs considered, the size of the recognizer algebra
for a regular grammar is at most doubly exponential in the size of the
grammar. This uniform upper bound provides a \twoexptime\ algorithm
for the problem of inclusion of a context-free graph language into a
regular language (i.e., the language of a regular grammar) for the
three classes of graphs considered. To the best of our efforts, we
could not find a matching lower bound, other than the simply
exponential blowup incurred by the determinization of automata on
words or ranked trees \cite{comon:hal-03367725}.

Conversely, we prove that each recognizable set is regular in its
respective class. Moreover, a restriction on the form of the rules of
regular grammars suffices to define precisely the sets of graphs
recognized by algebras having an aperiodic parallel composition
semigroup, or simply aperiodic algebras.

(2) We investigate the relation between recognizable and
\mso-definable sets, i.e., defined by logical formul{\ae} without
modulo constraints on the cardinality of set variables. Since the
equivalence of recognizability and \cmso-definability has been already
established for tree-width bounded classes of graphs
\cite{10.1145/2933575.2934508}, we prove that the sets recognized by
aperiodic algebras, i.e., for the three classes of graphs considered,
are the same as the ones definable in \mso. A roughly similar
situation occurs for words, where recognizability by aperiodic monoids
is equivalent to definability in first-order
logic~\cite{SCHUTZENBERGER1965190}. However, this elegant result does
not generalize to trees\footnote{A characterization of the aperiodic
sets of trees definable in first-order logic is given
in~\cite{10.1145/1614431.1614435}.}, let alone graphs. Furthermore, we
generalize the equivalence between recognizability by aperiodic
algebras and \mso-definability, for all classes of graphs defined by a
bound on the tree-width.

\paragraph{Related Work} Data tree description languages such as XML
(eXtensible Markup Language) or JSON (JavaScript Object Notation) have
motivated a systematic study of automata and logics for unranked
trees, see~\cite{10.1007/11523468_4} for an overview. Two-variable
first-order logics for unranked trees with total sibling order and
equality between data values are studied
in~\cite{DBLP:journals/jacm/BojanczykMSS09}. Deterministic automata on
unranked trees with total sibling order are investigated
in~\cite{10.1007/11537311_7}. Several definitions of unranked tree
automata, whose languages are equivalent to various fragments of
\cmso\ are given in~\cite{journals/iandc/BoiretHNT17}. Notably,
bottom-up tree automata with transitions triggered by counting
constraints on the multiset of states labeling the children of the
current node have the same expressivity as \cmso. However, there is no
result on the complexity of the inclusion problem for this particular
class of automata. Subclasses with more restrictive transition
triggers have membership problems ranging from \ptime\ and
\np-hardness and inclusion problems from \ptime\ and
\conp-completeness. Algebraic recognizers for unranked (ordered and
unordered) trees are also defined
in~\cite{DBLP:conf/birthday/BojanczykW08}, in terms of forest algebras
consisting of a horizontal monoid (with disjoint union) and a vertical
monoid (with context composition). However, the focus of this work is
logical definability of unranked tree languages in the EF fragment of
the CTL temporal logic, based on certain axiomatic properties of these
monoids, such as commutativity and aperiodicity.

The regular expressions for graphs of tree-width at most~$2$,
introduced by Doumane~\cite{DBLP:conf/icalp/Doumane22}, capture the
\cmso-definable sets, just as the regular grammars we propose in this
paper. However, establishing a formal connection between these regular
expressions and the regular grammars in this paper (i.e., a Kleene
theorem for graphs of tree-width at most~$2$) requires further
study. A major difficulty is that the definition of the guard
conditions on iteration variables from a regular
expression~\cite[Definition 39]{DBLP:conf/icalp/Doumane22} is not
syntactic, but relies on a decidable semantic check~\cite[Proposition
  42]{DBLP:conf/icalp/Doumane22}. In contrast, our regular grammars
have an easily-checkable syntactic definition, where iteration
variables are confined to rules of the form
(\ref{it1:syntactic-regular}) in Definition
\ref{def:stratified-grammar}. Moreover, the proof of recognizability
of a regular language is quite different from ours:
\cite{DBLP:conf/icalp/Doumane22} establishes recognizability via
\cmso-definability, whereas we directly construct a finite recognizer
algebra from a regular grammar. In principle, it is possible to derive
an algebra of $k$-exponential size from the constructions of a \cmso{}
formula, where $k$ is the quantifier rank of the formula. However,
this analysis is stated as future work
in~\cite{DBLP:conf/icalp/Doumane22}. In contrast, each regular grammar
is guaranteed to have a recognizer algebra of size at most
$2$-exponential in the size of the grammar.  We state as an open
problem the question whether this upper bound can be further improved
and conjecture that this is the case at least for grammars in which
the rules of the form (\ref{it1:syntactic-regular}) are aperiodic.


The work~\cite{Lpar24} introduced tree-verifiable graph grammars, a
strict generalization of the regular graph grammars of
Courcelle~\cite{CourcelleV} that capture the \cmso-definable sets of
graphs of bounded embeddable tree-width. This parameter is an
over-approximation of the tree-width that uses tree decompositions
whose backbone is a spanning tree of the considered graph. The authors
of~\cite{Lpar24} already proved the equivalence with
\cmso-definability of the tree grammars considered in this paper. We
reprove their result using an explicit construction of a recognizer
algebra, which provides complexity upper bounds for the language
inclusion problem. However, our results on regular grammars for
series-parallel graphs and graphs of tree-width $2$ are orthogonal to
the results of~\cite{Lpar24}.


\section{Definitions}
The set of natural numbers is denoted by $\nat$. Given numbers
$i,j\in\nat$, we write $\interv{i}{j} \isdef \set{i, i+1, \ldots, j}$,
assumed to be empty if $i>j$. The cardinality of a finite set $A$ is
denoted by $\cardof{A}$. By writing $A \finsubseteq B$ we mean that
$A$ is a finite subset of $B$. The disjoint union $A \uplus B$ is
defined as the union of $A$ and $B$, if $A \cap B=\emptyset$, and
undefined, otherwise. For a set $A$, we denote by $\pow{A}$ its
powerset and by $\finpow{A}$ the set of finite subsets of $A$. For a
set $A$, we denote by $A^+$ the set of non-empty ordered sequences of
elements from $A$ and by $\lenof{s}$ the length of a sequence $s \in
A^+$.

We denote by $\mpow{A}$ the power-multiset of $A$, i.e., the set of
multisets $m : A \rightarrow \nat$ and by $m_1 \mcup m_2$ we denote
the union of the multisets, i.e., $(m_1 \mcup m_2)(a) \isdef m_1(a) +
m_2(a)$, for all $a \in A$. \ifLongVersion The support of a multiset
$m : A \rightarrow \nat$ is the set $\supp{m} \isdef \set{a \in A \mid
  m(a) > 0}$.  \fi

For a relation $R \subseteq A \times B$, we denote by $\dom{R}$ and
$\img{R}$ the sets consisting of the first and second components of
the pairs in $R$, respectively.  We write $R^{-1}$ for the inverse
relation, $R(S)$ for the image of a set $S$ via $R$ and $R(a)$ for
$R(\set{a})$, for some $a \in A$.

A bijective function $f : \nat \rightarrow \nat$ is a \emph{finite
permutation} if the set $\set{n \in \nat \mid f(n)\neq n}$ is
finite. The finite permutation $(n,m)$ swaps $n$ and $m$ mapping each
$i \in \nat \setminus \set{n,m}$ onto itself and $[i_1,\ldots,i_n]$
maps each $j\in\interv{1}{n}$ to $i_j$ and any other $j > n$ to
itself.

We denote by $\poly{x}$ the set of functions $p : \nat \rightarrow
\nat$ for which there exist $a,b,k\in\nat$ such that $p(n) \leq a\cdot
n^k+b$, for all $n \in \nat$. This notation is used in exponentiation,
e.g., $f \in 2^\poly{x}$ stands for $f(x) = 2^{p(x)}$, for some $p(x)
\in \poly{x}$.

\ifLongVersion\else The proofs of the technical results are in
Appendix \ref{app:proofs}. \fi

\subsection{Algebras and Recognizability}

Let $\sorts=\set{\sort_1,\sort_2,\ldots}$ be a set of sorts and
$\fsignature = \set{f_1, f_2, \ldots}$ be a signature of function
symbols. Each function symbol $f$ has an associated tuple of argument
sorts $\rankof{f} = \tuple{\sort_1, \ldots, \sort_n}$ and a value sort
$\sortof{f}$. The arity of $f$ is denoted $\arityof{f}\isdef n$. If
$\arityof{f}=0$ we call $f$ a constant. A variable is a sorted symbol
of arity zero, not part of the signature. We denote by
$\vars=\set{x,y,\ldots}$ the set of variables. The sort of a variable
$x \in \vars$ is denoted $\sortof{x}$. A term $t[x_1,\ldots,x_n]$ is
built as usual from function symbols and the variables
$x_1,\ldots,x_n$ of matching sorts. A term without variables is
ground. A term consisting of a single variable is trivial. A term
$t[x]$ having a single variable is a context. By $t[u_1,\ldots,u_n]$
we denote the term obtained by replacing each occurrence of $x_i$ by
$u_i$, for all $i \in \interv{1}{n}$. A subterm of $t$ is a term
$u[x_1,\ldots,x_n]$ such that $t=w\left[u[t_1,\ldots,t_n]\right]$ for
a context $w[x]$ and terms $t_1,\ldots,t_n$. Note that this is not the
usual definition of subterm, usually defined as a full subtree rooted
at some position within $t$.

An $(\fsignature,\sorts)$-\emph{algebra} is a tuple $\algof{A} =
(\set{\universeOf{A}^\sort}_{\sort\in\sorts},
\set{f^\algof{A}}_{f\in\fsignature})$, where each
$\universeOf{A}^\sort$ is the domain of sort $\sort$ and $f^\algof{A}
: \universeOf{A}^{\sort_1} \times \ldots \times
\universeOf{A}^{\sort_n} \rightarrow \universeOf{A}^{\sort}$ is the
interpretation of the function symbol $f\in\fsignature$, where
$\rankof{f} = \tuple{\sort_1, \ldots, \sort_n}$ and
$\sortof{f}=\sort$. We sometimes omit writing $\sorts$ and assume that
$\sorts=\{\sort \mid f\in\fsignature, \sort = \sortof{f} \mbox{ or } \sort \in \rankof{f} \}$. The domains of
different sorts are assumed to be disjoint and $\universeOf{A} \isdef
\biguplus_{\sort\in\sorts} \universeOf{A}^\sort$ denotes the domain of
$\algof{A}$. The algebra $\algof{A}$ is locally finite if
$\universeOf{A}_\sort$ is finite, for each $\sort\in\sorts$ and finite
if $\universeOf{A}$ is finite.

A term $t[x_1,\ldots,x_n]$ defines a function $t^\algof{A} :
\universeOf{A}^{\sortof{x_1}} \times \ldots \times
\universeOf{A}^{\sortof{x_n}} \rightarrow \universeOf{A}^{\sortof{t}}$
obtained by interpreting each function symbol from $t$ in $\algof{A}$.
An $\fsignature$-algebra is representable if its universe is the set
of values of the ground $\fsignature$-terms. Each algebra considered
in this paper will be representable.

Given $\fsignature$-algebras $\algof{A}$ and $\algof{B}$, a
\emph{homomorphism} between $\algof{A}$ and $\algof{B}$ is a function
$h : \universeOf{A} \rightarrow \universeOf{B}$ such that
$\sortof{h(a)}=\sortof{a}$, for all $a \in \universeOf{A}$ and
$h(f^\algof{A}(a_1,\ldots,a_n))=f^\algof{B}(h(a_1),\ldots,h(a_n))$,
for all $f \in \fsignature$ and elements $a_1,\ldots,a_n \in
\universeOf{A}$ of matching sorts. Note that a homomorphism between
multi-sorted algebras is sort-preserving. Assuming that $\algof{A}$
and $\algof{B}$ are representable algebras, it can be shown that, if a
homomorphism between $\algof{A}$ and $\algof{B}$ exists\footnote{For
instance, there is no homomorphism between the $\fsignature$-algebras
$\algof{A}$ and $\algof{B}$ if $c_1^\algof{A}=c_2^\algof{A}$ and
$c_1^\algof{B} \neq c_2^\algof{B}$, for distinct constants $c_1,c_2
\in \fsignature$.}, then it is unique and surjective. In this case,
the unique homomorphism between $\algof{A}$ and $\algof{B}$ is denoted
$\homof{\algof{A}}{\algof{B}}$. The following notions are central to
this paper:

\begin{definition}\label{def:homo-rec}
  Let $\algof{A}$ be an $\fsignature$-algebra.  An
  $\fsignature$-\emph{recognizer} for $\algof{A}$ is a pair
  $(\algof{B},C)$, where $\algof{B}$ is a locally finite
  $\fsignature$-algebra, $C \subseteq \universeOf{B}$ and the
  homomorphism $\homof{\algof{A}}{\algof{B}}$ exists. A set
  $\langu\subseteq\universeOf{A}$ is recognized by $(\algof{B},C)$ iff
  $\langu=\homof{\algof{A}}{\algof{B}}^{-1}(C)$. A set is
  \emph{recognizable} in $\algof{A}$ if
  $\langu=\homof{\algof{A}}{\algof{B}}^{-1}(C)$, for an
  $\fsignature$-recognizer $(\algof{B},C)$.
\end{definition}

The recognizable sets of an algebra are closed under union,
intersection and complement with respect to the domain of that
algebra. These operations are effectively computable when the
recognizer algebra is effectively computable.

We say that a $(\fsignature_\algof{B},\sorts_\algof{B})$-algebra
$\algof{B}$ is derived from the
$(\fsignature_\algof{A},\sorts_\algof{A})$-algebra $\algof{A}$ if
$\sorts_\algof{B} \subseteq \sorts_\algof{A}$ and, for every function
symbol $f \in \fsignature_\algof{B}$, there exists some
$\fsignature_\algof{A}$-term $t_f$ such that $f^\algof{B} =
t_f^\algof{A}$. It is easy to prove that recognizability in an algebra
implies recognizability in each of its derived algebras, but not
viceversa. For instance, the word language $\set{a^nb^n \mid
  n\in\nat}$ is recognizable in the algebra of words with operation $x
\mapsto axb$, but not in the classical monoid of words with
concatenation.

\ifLongVersion\else Additional material concerning recognizability is
given in Appendix \ref{app:rec}.  \fi

\begin{recTextEnd}
\begin{lemma}\label{lemma:derived-sub-rec}
  Let $\algof{A}$ be an algebra and $\algof{B}$ be an algebra derived
  from $\algof{A}$. Then, $\langu\subseteq\universeOf{B}$ is
  recognizable in $\algof{B}$, if it is recognizable in~$\algof{A}$.
\end{lemma}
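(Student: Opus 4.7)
The plan is to lift any $\fsignature_{\algof{A}}$-recognizer of $\langu$ to an $\fsignature_{\algof{B}}$-recognizer, using the very derivation that relates $\algof{B}$ to $\algof{A}$. Suppose $\langu = \homof{\algof{A}}{\algof{C}}^{-1}(D)$ for some locally finite $\fsignature_{\algof{A}}$-algebra $\algof{C}$ and $D \subseteq \universeOf{C}$. From $\algof{C}$ I would construct an $\fsignature_{\algof{B}}$-algebra $\algof{C}'$ by taking, for each $\sort \in \sorts_{\algof{B}}$, the domain $\universeOf{C}^{\sort}$, and by interpreting each $f \in \fsignature_{\algof{B}}$ as $f^{\algof{C}'} \isdef t_f^{\algof{C}}$, where $t_f$ is the $\fsignature_{\algof{A}}$-term witnessing $f^{\algof{B}} = t_f^{\algof{A}}$. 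Local finiteness of $\algof{C}'$ is then inherited from that of $\algof{C}$ together with $\sorts_{\algof{B}} \subseteq \sorts_{\algof{A}}$, and I would set $D' \isdef D \cap \universeOf{C}'$ as the accepting set of the new recognizer.

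The central step is to verify that the unique homomorphism $\homof{\algof{B}}{\algof{C}'}$ exists and agrees with $\homof{\algof{A}}{\algof{C}}$ on the whole of $\universeOf{B}$. I would argue this by induction on a ground $\fsignature_{\algof{B}}$-term representing an element $b \in \universeOf{B}$, which exists because $\algof{B}$ is representable. For $b = f^{\algof{B}}(b_1, \ldots, b_n)$, chaining (i) the defining equation $f^{\algof{B}} = t_f^{\algof{A}}$, (ii) the induction hypothesis on $b_1, \ldots, b_n$, (iii) the standard fact that $\homof{\algof{A}}{\algof{C}}$ commutes with the interpretation of the term $t_f$, and (iv) the defining equation $f^{\algof{C}'} = t_f^{\algof{C}}$, yields $\homof{\algof{A}}{\algof{C}}(b) = t_f^{\algof{C}}(\homof{\algof{B}}{\algof{C}'}(b_1), \ldots, \homof{\algof{B}}{\algof{C}'}(b_n)) = f^{\algof{C}'}(\homof{\algof{B}}{\algof{C}'}(b_1), \ldots, \homof{\algof{B}}{\algof{C}'}(b_n))$, simultaneously establishing the existence of $\homof{\algof{B}}{\algof{C}'}(b)$ and its coincidence with $\homof{\algof{A}}{\algof{C}}(b)$.

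Once this coincidence is in hand, for every $b \in \universeOf{B}$ one has $b \in \homof{\algof{B}}{\algof{C}'}^{-1}(D')$ iff $\homof{\algof{A}}{\algof{C}}(b) \in D$, using that $\homof{\algof{A}}{\algof{C}}(b)$ automatically lies in $\universeOf{C}'$; this is precisely $b \in \langu$, so $(\algof{C}', D')$ is the desired $\fsignature_{\algof{B}}$-recognizer. The main obstacle, though mild, is step (iii): justifying it cleanly requires an auxiliary structural induction on terms showing that any algebra homomorphism commutes with every term-induced operation. Apart from this bookkeeping, the argument is routine once the correct derived recognizer $\algof{C}'$ is identified.
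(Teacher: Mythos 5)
Your proposal is correct and follows essentially the same route as the paper: interpret each $f \in \fsignature_{\algof{B}}$ in the given recognizer algebra via the defining term $t_f$, intersect the accepting set, and use representability of $\algof{B}$ (induction on a ground term for each element) to show the new homomorphism coincides with the old one, so the preimages agree. The only cosmetic difference is that the paper restricts the carrier of the derived recognizer to values of ground $\fsignature_{\algof{B}}$-terms (keeping it representable), while you keep the full sort-restricted carrier $\universeOf{C}^{\sort}$ for $\sort \in \sorts_{\algof{B}}$; both choices are locally finite and surjectivity is not needed, so your version works just as well.
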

\begin{proof}\noindent\emph{Proof.}
  Let $\fsignature_\algof{A}$ and $\fsignature_\algof{B}$ be the
  signatures of $\algof{A}$ and $\algof{B}$, respectively. By
  assumption we have that for every $f \in \fsignature_\algof{B}$
  there is some first-order $\fsignature_\algof{A}$-term $t_f$ such
  that $f^\algof{B} = t_f^\algof{A}$.  Let $(\algof{D},E)$ be a
  $\fsignature_\algof{A}$-recognizer, such that
  $\langu=\homof{\algof{A}}{\algof{D}}^{-1}(E)$. We define the
  $\fsignature_\algof{B}$-recognizer $(\algof{I},J)$, where $\algof{I} \isdef (\set{\universeOf{I}^\sort}_{\sort\in\sorts_\algof{B}},
    \set{f^\algof{I}}_{f \in \fsignature_\algof{B}})$ and
    \begin{align*}
      \universeOf{I}^\sort \isdef & ~\set{t^\algof{D} \mid t \text{ ground $\fsignature_\algof{B}$-term, } \sortof{t}=\sort} \\
      f^\algof{I} \isdef & ~t_f^\algof{D}
    \end{align*}
    for all $\sort\in\sorts_\algof{B}$, where
    \begin{align*}
    \sorts_\algof{B} = & ~\set{\sort_1,\ldots,\sort_n,\sortof{f} \mid f \in \fsignature_\algof{B},~
      \rankof{f}=\tuple{\sort_1,\ldots,\sort_n}}
    \end{align*}
    and $J \isdef \universeOf{I} \cap E$. We prove that
    $\langu=\homof{\algof{B}}{\algof{I}}^{-1}(J)$.  We consider some
    $x \in \universeOf{B}$.  By assumption that we consider only
    representable algebras, there exists a ground
    $\fsignature_\algof{B}$-term $t$ such that $x=t^\algof{B}$.  Let
    $u$ be the $\fsignature_\algof{A}$-term obtained by expanding the
    terms from $\fsignature_\algof{B}$, i.e., $x=u^\algof{A}$.  Then,
    $\homof{\algof{B}}{\algof{I}}(x) =
    \homof{\algof{B}}{\algof{I}}(t^\algof{B}) = t^\algof{I} =
    u^\algof{D} = \homof{\algof{A}}{\algof{D}}(u^\algof{A}) =
    \homof{\algof{A}}{\algof{D}}(x)$.  Hence, $x \in
    \homof{\algof{B}}{\algof{I}}^{-1}(J)$ iff $x \in
    \homof{\algof{A}}{\algof{D}}^{-1}(E)$.  With
    $\langu\subseteq\universeOf{B}$, we obtain that $\langu =
    \homof{\algof{A}}{\algof{D}}^{-1}(E) =
    \homof{\algof{B}}{\algof{I}}^{-1}(J)$.
\end{proof}

An equivalent definition of recognizability in a
$(\fsignature_\algof{A},\sorts_\algof{A})$-algebra $\algof{A}$ uses
congruences. For a given set $\langu\subseteq\universeOf{A}$, we
denote by $\requiv{\langu} \subseteq \universeOf{A} \times
\universeOf{A}$ the equivalence relation $a_1 \requiv{\langu} a_2$ if
and only if $\sortof{a_1} = \sortof{a_2}$ and $t^\algof{A}[a_1] \in
\langu \Leftrightarrow t^\algof{A}[a_2] \in \langu$, for each
$\fsignature_\algof{A}$-context $t[x]$. We say that $\requiv{\langu}$
is (\emph{locally}) \emph{finite} if it has finitely many equivalence
classes (of each sort). It is routine to show that $\requiv{\langu}$
is a \emph{congruence} for each set $\langu$, i.e., for each function
symbol $f \in \fsignature$ and all elements
$a_1,b_1,\ldots,a_{\arityof{f}},b_{\arityof{f}} \in \universeOf{A}$,
if $a_i \requiv{\langu} b_i$, for all $i\in\interv{1}{\arityof{f}}$
then $f^\algof{A}(a_1,\ldots,a_\arityof{f}) \requiv{\langu}
f^\algof{A}(b_1,\ldots,b_\arityof{f})$. Thus, $\requiv{\langu}$ is
called the \emph{syntactic congruence} of $\langu$. For each element
$a\in\universeOf{A}$, we denote by $[a]_\requiv{\langu}$ the
equivalence class of $a$ and by $\algof{A}_\requiv{\langu}$ the
$(\fsignature_\algof{A}, \sorts_\algof{A})$-algebra whose elements are
the equivalence classes, of sorts $\sortof{[a]_\requiv{\langu}} =
\sortof{a}$. The function symbols $f \in \fsignature_\algof{A}$ are
interpreted in $\algof{A}_\requiv{\langu}$ as follows\footnote{The
function $f^{\algof{A}_\requiv{\langu}}$ is well defined because
$\requiv{\langu}$ is a congruence relation.}:
\begin{align*}
  f^{\algof{A}_\requiv{\langu}}([a_1]_\requiv{\langu}, \ldots,
  [a_\arityof{f}]_\requiv{\langu}) \isdef &
  ~[f^\algof{A}(a_1,\ldots,a_\arityof{f})]_\requiv{\langu}
\end{align*}
Let $h_\requiv{\langu} : \universeOf{A} \rightarrow
\universeOf{A}_\requiv{\langu}$ the function that maps each element $a
\in \universeOf{A}$ to its equivalence class $h_\requiv{\langu} \isdef [a]_\requiv{\langu}$.
Then, $h_\requiv{\langu}$ is a homomorphism, by the definition of
$\algof{A}_\requiv{\langu}$. If $\requiv{\langu}$ is locally finite,
it is routine to show that $(\algof{A}_\requiv{\langu},\{
[a]_\requiv{\langu} \mid a \in \langu\})$ is a recognizer for
$\langu$, called the \emph{syntactic recognizer} of $\langu$. For
self-containment reasons, we prove a well-known fact,
namely that the syntactic recognizer is the minimal
recognizer of that set, which can be stated as follows:

\begin{lemma}\label{lemma:syntactic-congruence-homomorphism}
  Let $\langu\subseteq\universeOf{A}$ be a recognizable set in a
  representable $(\fsignature,\sorts)$-algebra $\algof{A}$ and
  $(\algof{I},J)$ be a recognizer for $\langu$, for a representable
  $(\fsignature,\sorts)$-algebra $\algof{I}$, witnessed by the
  homomorphism $\homof{A}{I}$. Then, $h_\requiv{\langu}$ factors
  through $\homof{A}{I}$ via some homomorphism $g$ between $\algof{I}$
  and $\algof{A}_\requiv{\langu}$, i.e., $h_\requiv{\langu} = g \circ
  \homof{A}{I}$.
\end{lemma}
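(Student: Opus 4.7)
The plan is to define $g : \universeOf{I} \to \universeOf{A}_\requiv{\langu}$ directly by picking, for each $i \in \universeOf{I}$, any preimage $a \in \homof{A}{I}^{-1}(i)$ and setting $g(i) \isdef [a]_\requiv{\langu}$. Since $\algof{A}$ and $\algof{I}$ are representable and a homomorphism between them exists, the paper has already observed that $\homof{A}{I}$ is surjective, so preimages always exist and $g$ is total. Once $g$ is shown to be a well-defined homomorphism, the factorization identity $h_\requiv{\langu} = g \circ \homof{A}{I}$ follows by unwinding the definition: for any $a \in \universeOf{A}$, $g(\homof{A}{I}(a)) = [a]_\requiv{\langu} = h_\requiv{\langu}(a)$.

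The main obstacle, and the crux of the argument, is well-definedness of $g$: I need to show that $\homof{A}{I}(a_1) = \homof{A}{I}(a_2)$ implies $a_1 \requiv{\langu} a_2$. This is where the recognizer property is used. Given any $\fsignature$-context $t[x]$, I apply $\homof{A}{I}$ to $t^\algof{A}[a_i]$ and use the homomorphism property to obtain $\homof{A}{I}(t^\algof{A}[a_i]) = t^\algof{I}[\homof{A}{I}(a_i)]$ for $i = 1, 2$. Since $\homof{A}{I}(a_1) = \homof{A}{I}(a_2)$, the two images coincide, so $t^\algof{A}[a_1] \in J' \iff t^\algof{A}[a_2] \in J'$ where $J' \isdef \homof{A}{I}^{-1}(J) = \langu$. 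This is precisely $t^\algof{A}[a_1] \in \langu \iff t^\algof{A}[a_2] \in \langu$, which is the definition of $a_1 \requiv{\langu} a_2$. Sort preservation of $g$ comes for free since $\homof{A}{I}$ and $h_\requiv{\langu}$ are sort preserving.

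The remaining verification that $g$ commutes with each function symbol is routine: for $f \in \fsignature$ and $i_1, \ldots, i_{\arityof{f}} \in \universeOf{I}$, pick preimages $a_k \in \homof{A}{I}^{-1}(i_k)$ of matching sorts, and compute
\[
  g(f^\algof{I}(i_1,\ldots,i_{\arityof{f}}))
  = g(\homof{A}{I}(f^\algof{A}(a_1,\ldots,a_{\arityof{f}})))
  = [f^\algof{A}(a_1,\ldots,a_{\arityof{f}})]_\requiv{\langu},
\]
which equals $f^{\algof{A}_\requiv{\langu}}(g(i_1),\ldots,g(i_{\arityof{f}}))$ by the defining equation of $f^{\algof{A}_\requiv{\langu}}$. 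This uses the fact, already recorded in the excerpt, that $\requiv{\langu}$ is a congruence, so the choice of preimages $a_k$ does not matter.

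Putting these pieces together yields both the homomorphism property and the factorization $h_\requiv{\langu} = g \circ \homof{A}{I}$, establishing the lemma. As a side remark, the argument also shows that $g$ is unique, because $\homof{A}{I}$ is surjective and the value of $g$ on $\homof{A}{I}(a)$ is forced to be $[a]_\requiv{\langu}$; this matches the uniqueness remark on homomorphisms between representable algebras made earlier in the paper.
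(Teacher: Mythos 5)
Your proposal is correct and follows the same overall strategy as the paper: define $g$ on $\universeOf{I}$ by sending $i$ to the $\requiv{\langu}$-class of a preimage under the (surjective) homomorphism $\homof{A}{I}$, and justify this via the key fact that the kernel of $\homof{A}{I}$ refines $\requiv{\langu}$ (your context-based argument for this, using $\homof{A}{I}(t^\algof{A}[a_i]) = t^\algof{I}[\homof{A}{I}(a_i)]$ and $\langu = \homof{A}{I}^{-1}(J)$, is in fact spelled out more completely than the paper's one-line proof of its Fact 1, which as written only treats the trivial context). Where you genuinely diverge is in verifying that $g$ commutes with the function symbols: the paper fixes a section $\kappa \subseteq \homof{A}{I}^{-1}$, sets $g \isdef h_\requiv{\langu} \circ \kappa$, and then needs an auxiliary fact, proved by induction on terms and using representability of $\algof{I}$, producing elements $a, a_1, \ldots, a_{\arityof{f}}$ of $\universeOf{A}$ with $a = f^\algof{A}(a_1,\ldots,a_{\arityof{f}})$ that are kernel-equivalent to $\kappa(i), \kappa(i_1), \ldots, \kappa(i_{\arityof{f}})$. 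You avoid this entirely: having established that $g(i)$ does not depend on the chosen preimage, you may pick arbitrary preimages $a_k$ of the $i_k$, note that $f^\algof{A}(a_1,\ldots,a_{\arityof{f}})$ is then a preimage of $f^\algof{I}(i_1,\ldots,i_{\arityof{f}})$, and conclude by the congruence property of $\requiv{\langu}$ — a one-step computation with no induction and no further appeal to representability of $\algof{I}$ beyond surjectivity of $\homof{A}{I}$. This is a modest but real streamlining of the paper's argument, and your closing remark on uniqueness of $g$ is also correct.
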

\begin{proof}\noindent\emph{Proof.}
  Because the algebras $\algof{A}$ and $\algof{I}$ are representable,
  the homomorphism $\homof{A}{I} : \universeOf{A} \rightarrow
  \universeOf{I}$ is surjective: for each element $i \in
  \universeOf{I}$ there exists a $\fsignature_\algof{A}$-term $t$ such
  that $t^\algof{I} = i$, hence $\homof{A}{I}(t^\algof{A}) =
  t^\algof{I} = i$ and $t^\algof{A} \in \universeOf{A}$. Let $\kappa
  \subseteq \homof{A}{I}^{-1}$ be a function (such a function exists
  because $\homof{A}{I}$ is surjective) and define $g \isdef
  h_\requiv{\langu} \circ \kappa$. First, we prove the following fact:

  \begin{fact}\label{fact:syntactic-congruence-homomorphism}
    $\homof{A}{I}(a_1)=\homof{A}{I}(a_2)$ only if $a_1 \requiv{\langu}
    a_2$, for all $a_1,a_2\in\universeOf{A}$.
  \end{fact}
  \begin{proof} \noindent\emph{Proof.} Assume that $\homof{A}{I}(a_1)=\homof{A}{I}(a_2)$. Because
    $\langu=\homof{A}{I}^{-1}(J)$, then either $a_1,a_2 \in \langu$ or
    $a_1,a_2\not\in\langu$, thus $a_1 \requiv{\langu} a_2$.
  \end{proof}

  \vspace*{.5\baselineskip}
  \noindent We prove first that $h_\requiv{\langu}=g \circ
  \homof{A}{I}$.  Let $a \in \universeOf{A}$ be an element. Then,
  $g(\homof{A}{I}(a))=h_\requiv{\langu}(\kappa(\homof{A}{I}(a)))$, by
  the definition of $g$. Since $\kappa\subseteq\homof{A}{I}^{-1}$, we
  obtain $\homof{A}{I}(\kappa(\homof{A}{I}(a)))=\homof{A}{I}(a)$,
  hence $\kappa(\homof{A}{I}(a)) \requiv{\langu} a$, by Fact
  \ref{fact:syntactic-congruence-homomorphism}. Thus, we obtain
  $g(\homof{A}{I}(a)) = h_\requiv{\langu}(\kappa(\homof{A}{I}(a))) =
  h_\requiv{\langu}(a)$. Since the choice of $a\in \universeOf{A}$ was
  arbitrary, this proves $h_\requiv{\langu} = g \circ \homof{A}{I}$.

  Second, we prove that $g$ is a homomorphism between $\algof{I}$ and
  $\algof{A}_\requiv{\langu}$. To this end, let $i \isdef
  f^\algof{I}(i_1,\ldots,i_\arityof{f}) \in \universeOf{I}$ be an
  element, for some $i_1,\ldots,i_\arityof{f}$. Because $\algof{I}$ is
  a representable algebra, there exists an $\fsignature$-term
  $t=f(t_1,\ldots,t_\arityof{f})$ such that $i=t^\algof{I}$ and
  $t_j^\algof{I} = i_j$, for all $j \in \interv{1}{\arityof{f}}$.  Let
  $\sim_{\homof{A}{I}}$ be the kernel of $\homof{A}{I}$, i.e., the
  equivalence relation $a_1 \sim_{\homof{A}{I}} a_2 \iff
  \homof{A}{I}(a_1)=\homof{A}{I}(a_2)$, for all
  $a_1,a_2\in\universeOf{A}$.

  \begin{fact}\label{fact:inverse-homomorphism}
    There exist elements $a,a_1,\ldots,a_\arityof{f} \in
    \universeOf{A}$ such that $a =
    f^\algof{A}(a_1,\ldots,a_\arityof{f})$, $a \sim_{\homof{A}{I}}
    \kappa(i)$ and $a_j \sim_{\homof{A}{I}} \kappa(i_j)$, for all $i
    \in \interv{1}{\arityof{f}}$.
  \end{fact}
  \begin{proof} \noindent\emph{Proof.}
    By induction on the structure of $t=f(t_1,\ldots,t_\arityof{f})$.
    By the inductive hypothesis, there exist $a_j \sim_{\homof{A}{I}}
    \kappa(i_j)$, for all $j \in \interv{1}{\arityof{f}}$, and let $a
    \isdef f^\algof{A}(a_1,\ldots,a_\arityof{f})$. We prove $a \sim_{\homof{A}{I}} \kappa(i)$:
    \begin{align*}
      \homof{A}{I}(a) = & ~\homof{A}{I}(f^\algof{A}(a_1,\ldots,a_\arityof{f})) \\
      = & ~f^\algof{I}(\homof{A}{I}(a_1), \ldots,\homof{A}{I}(a_\arityof{f})) \\
      & \text{because } \homof{A}{I} \text{ is a homomorphism} \\
      = & ~f^\algof{I}(\homof{A}{I}(\kappa(i_1)), \ldots,\homof{A}{I}(\kappa(i_\arityof{f}))) \\
      & \text{because } a_j \sim_{\homof{A}{I}} \kappa(i_j) \text{, for all } j \in \interv{1}{\arityof{f}} \\
      = & ~f^\algof{I}(i_1, \ldots,i_\arityof{f}) \text{, because } \kappa \subseteq \homof{A}{I}^{-1} \\
      = & ~i = \homof{A}{I}(\kappa(i))
    \end{align*}
  \end{proof}

  Back to the proof, we compute:
  \begin{align*}
    g(f^\algof{I}(i_1,\ldots,i_\arityof{f})) = & ~h_\requiv{\langu}(\kappa(i)) = h_\requiv{\langu}(a) \text{, by Fact \ref{fact:inverse-homomorphism}} \\
    = & ~f^{\algof{A}_\requiv{\langu}}(h_\requiv{\langu}(a_1), \ldots, h_\requiv{\langu}(a_\arityof{f})) \\
    & \text{because $a_j \sim_{\homof{A}{I}} \kappa(i_j)$} \\
    & \text{by Fact \ref{fact:syntactic-congruence-homomorphism} and $\sim_{\homof{A}{I}} \subseteq \requiv{\langu}$, by Fact \ref{fact:inverse-homomorphism}} \\
    = & ~f^{\algof{A}_\requiv{\langu}}(h_\requiv{\langu}(\kappa(i_1)), \ldots, h_\requiv{\langu}(\kappa(i_\arityof{f}))) \\
    = & ~f^{\algof{A}_\requiv{\langu}}(g(i_1), \ldots, g(i_\arityof{f}))
  \end{align*}
  Since the choice of $f \in \fsignature$ and $i_1, \ldots,
  i_\arityof{f} \in \universeOf{I}$ was arbitrary, we obtain that $g$
  is a homomorphism between $\algof{I}$ and
  $\algof{A}_\requiv{\langu}$.
\end{proof}

We further prove the following factorization property of the
$h_\requiv{\langu}$ homomorphism:
\begin{lemma}\label{lemma:factorization-of-related-syntactic-congruences}
  Let $\algof{A}$ and $\algof{B}$ be two representable
  $(\fsignature,\sorts)$-algebras such that the homomorphism
  $\homof{A}{B}$ exists. Let $\langu_B \subseteq \universeOf{B}$ be a
  set and $\langu_A \isdef \homof{A}{B}^{-1}(\langu_B)$. Then, the
  algebras $\algof{A}_\requiv{\langu_A}$ and
  $\algof{B}_\requiv{\langu_B}$ are related by an isomorphism $g$ and
  $h_\requiv{\langu_A}$ factors through $\homof{A}{B}$ via
  $h_\requiv{\langu_B}$ up to the isomorphism of the co-domains, i.e.,
  $g \circ h_\requiv{\langu_A} = h_\requiv{\langu_B} \circ
  \homof{A}{B}$.
\end{lemma}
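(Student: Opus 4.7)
The plan is to exhibit the isomorphism $g$ directly by showing that the two syntactic congruences correspond via $\homof{A}{B}$, and then to verify that $g$ is a homomorphism and makes the required triangle commute.

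The key observation is the biconditional
\[
a_1 \requiv{\langu_A} a_2 \iff \homof{A}{B}(a_1) \requiv{\langu_B} \homof{A}{B}(a_2),
\]
for all $a_1, a_2 \in \universeOf{A}$ of the same sort. To prove this, I would fix an $\fsignature$-context $t[x]$ (note that the two algebras share the same signature, so contexts are common to both), and use the homomorphism identity $\homof{A}{B}(t^\algof{A}[a_i]) = t^\algof{B}[\homof{A}{B}(a_i)]$ together with $\langu_A = \homof{A}{B}^{-1}(\langu_B)$ to get $t^\algof{A}[a_i] \in \langu_A \iff t^\algof{B}[\homof{A}{B}(a_i)] \in \langu_B$. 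Quantifying over $t[x]$ then yields the equivalence. This is the one place that uses essentially all the hypotheses, and it is the main technical step; everything else is diagram chasing.

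Given this equivalence, I define $g : \universeOf{A}_\requiv{\langu_A} \to \universeOf{B}_\requiv{\langu_B}$ by $g([a]_\requiv{\langu_A}) \isdef [\homof{A}{B}(a)]_\requiv{\langu_B}$. The forward implication makes $g$ well-defined, the backward implication makes $g$ injective, and surjectivity of $\homof{A}{B}$ (guaranteed because $\algof{A}$ and $\algof{B}$ are representable and $\homof{A}{B}$ exists) makes $g$ surjective. Sort preservation follows because $\homof{A}{B}$ is sort-preserving.

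To see that $g$ is a homomorphism, I compute for each $f \in \fsignature$ and classes $[a_1]_\requiv{\langu_A}, \ldots, [a_n]_\requiv{\langu_A}$:
\[
g\bigl(f^{\algof{A}_\requiv{\langu_A}}([a_1], \ldots, [a_n])\bigr) = [\homof{A}{B}(f^\algof{A}(a_1, \ldots, a_n))]_\requiv{\langu_B},
\]
and then push $\homof{A}{B}$ through $f^\algof{A}$ using the homomorphism property and the definition of $f^{\algof{B}_\requiv{\langu_B}}$ to obtain $f^{\algof{B}_\requiv{\langu_B}}(g([a_1]), \ldots, g([a_n]))$. Combined with surjectivity and injectivity, this makes $g$ an isomorphism. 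Finally, the factorization is immediate by unfolding: $g(h_\requiv{\langu_A}(a)) = g([a]_\requiv{\langu_A}) = [\homof{A}{B}(a)]_\requiv{\langu_B} = h_\requiv{\langu_B}(\homof{A}{B}(a))$, for every $a \in \universeOf{A}$, establishing $g \circ h_\requiv{\langu_A} = h_\requiv{\langu_B} \circ \homof{A}{B}$.
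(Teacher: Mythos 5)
Your proposal is correct and follows essentially the same route as the paper: establish the biconditional $a_1 \requiv{\langu_A} a_2 \iff \homof{A}{B}(a_1) \requiv{\langu_B} \homof{A}{B}(a_2)$ via contexts and $\langu_A = \homof{A}{B}^{-1}(\langu_B)$, define $g$ on equivalence classes through $\homof{A}{B}$, and verify bijectivity, the homomorphism property, and the commuting triangle exactly as the paper does. Your explicit appeal to surjectivity of $\homof{A}{B}$ (from representability) for surjectivity of $g$ is a slightly more detailed account of the step the paper compresses into ``By $(\dagger)$, $g$ is bijective,'' but the argument is the same.
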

\begin{proof} \noindent\emph{Proof.}
  We show first that: \begin{align*}
    a_1 \requiv{\langu_A} a_2 \iff & ~\homof{A}{B}(a_1)
    \requiv{\langu_B} \homof{A}{B}(a_2) ~(\dagger)
  \end{align*}
  ``$\Rightarrow$'' Let $a_1,a_2 \in \universeOf{A}$ such that $a_1
  \requiv{\langu_A} a_2$. Then, $\sortof{\homof{A}{B}(a_1)} =
  \sortof{a_1} = \sortof{a_2} = \sortof{\homof{A}{B}(a_2)}$. Let
  $t[x]$ be an $\fsignature$-context. We establish the
  equivalence: \begin{align*}
    t^\algof{B}[\homof{A}{B}(a_1)] = \homof{A}{B}(t^\algof{A}[a_1]) \in & ~\langu_B \\
    \iff & \text{ because } \langu_A = \homof{A}{B}^{-1}(\langu_B) \\
    t^\algof{A}[a_1] \in & ~\langu_A \\
    \iff & \text{ because } a_1 \requiv{\langu_A} a_2 \\
    t^\algof{A}[a_2] \in & ~\langu_A \\
    \iff & \text{ because } ~\langu_A = \homof{A}{B}^{-1}(\langu_B) \\
    t^\algof{B}[\homof{A}{B}(a_2)] =  \homof{A}{B}(t^\algof{A}[a_2]) \in & ~\langu_B
  \end{align*}

  \noindent ``$\Leftarrow$'' Let $a_1,a_2 \in \universeOf{A}$ such
  that $\homof{A}{B}(a_1) \requiv{\langu_B} \homof{A}{B}(a_2)$. Then,
  $\sortof{a_1} = \sortof{\homof{A}{B}(a_1)} =
  \sortof{\homof{A}{B}(a_2)} = \sortof{a_2}$. Let $t[x]$ be an
  $\fsignature$-context. We establish the equivalence: \begin{align*}
    t^\algof{A}[a_1] \in & \langu_A \\
    \iff & \text{ because } \langu_A = \homof{A}{B}^{-1}(\langu_B) \\
    \homof{A}{B}(t^\algof{A}[a_1]) \in & \langu_B \\
    \iff & \text{ because } \homof{A}{B}(a_1) \requiv{\langu_B} \homof{A}{B}(a_2) \\
    \homof{A}{B}(t^\algof{A}[a_2]) \in & \langu_B \\
    \iff & \text{ because } \langu_A = \homof{A}{B}^{-1}(\langu_B) \\
    t^\algof{A}[a_2] \in & \langu_A
  \end{align*}

  \noindent We define the function $g :
  \universeOf{A}_\requiv{\langu_A} \rightarrow
  \universeOf{B}_\requiv{\langu_B}$ as $g([a]_\requiv{\langu_A})
  \isdef [\homof{A}{B}(a)]_\requiv{\langu_B}$, for each $a \in
  \universeOf{A}$. By ($\dagger$), $g$ is bijective. To prove that $g$
  is an isomorphism between $\algof{A}_\requiv{\langu_A}$ and
  $\algof{B}_\requiv{\langu_A}$, we observe that, for each function
  symbol $f \in \fsignature$: \begin{align*}
    & ~g(f^{\algof{A}_\requiv{\langu_A}}([a_1]_\requiv{\langu_A},\ldots,[a_\arityof{f}]_\requiv{\langu_A})) \\
    = & ~g([f^\algof{A}(a_1,\ldots,a_\arityof{f})]_\requiv{\langu_A}) \\
    = & ~[\homof{A}{B}(f^\algof{A}(a_1,\ldots,a_\arityof{f}))]_\requiv{\langu_B} \\
    = & ~f^{\algof{B}_\requiv{\langu_B}}([\homof{A}{B}(a_1)]_\requiv{\langu_B},\ldots,[\homof{A}{B}(a_\arityof{f})]_\requiv{\langu_B}) \\
    = & ~f^{\algof{B}_\requiv{\langu_B}}(g([a_1]_\requiv{\langu_A}), \ldots, g([a_\arityof{f}]_\requiv{\langu_A}))
  \end{align*}
  The equality $g \circ h_\requiv{\langu_A} = h_\requiv{\langu_B} \circ
  \homof{A}{B}$ follows from the definition of $g$.
\end{proof}
\end{recTextEnd}

\subsection{Graphs}
\label{subsec:graphs}

\begin{figure*}[t!]
  \centerline{\input{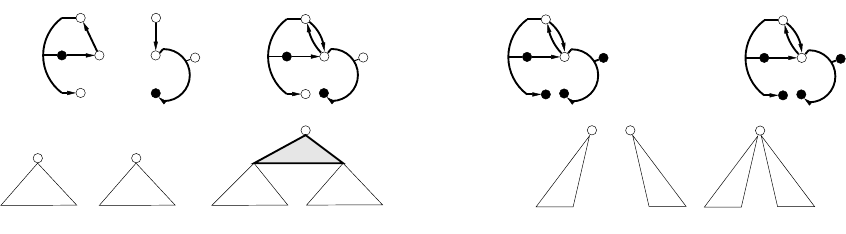_t}}
  \caption{Composition (a), Restriction (b) and Renaming (c) of
    Graphs. Append (d) and Composition (e) of Trees. Sources are
    denoted by hollow circles. Each arrow indicates the order of
    vertices attached to the corresponding edge. }
  \label{fig:graphs}
  \vspace*{-\baselineskip}
\end{figure*}

Let $\alphabet$ be a fixed finite alphabet of edge labels, having
associated arities $\arityof{a} \geq 1$ for each $a\in\alphabet$.  The
following definition introduces hypergraphs, where edges are attached
to ordered sequences of vertices, of length one or more, in which a
vertex can occur multiple times. For simplicity, we call these objects
graphs, in the rest of the paper.

\begin{definition}\label{def:graphs}
  A \emph{graph} of sort $\slabs\finsubseteq \nat$ is a tuple $\graph
  = \tuple{\vertof{}, \edgeof{}, \labof{}, \edgerelof{},
    \sourceof{}}$, where $\vertof{}$ and $\edgeof{}$ are finite sets
  of vertices and edges, respectively, $\labof{} : \edgeof{}
  \rightarrow \alphabet$ and $\edgerelof{} : \edgeof{} \rightarrow
  \vertof{}^+$ give the label and sequence of vertices attached to
  each edge, respectively, such that $\arityof{\labof{}(e)} =
  \lenof{\edgerelof{}(e)}$, for each edge $e \in \edgeof{}$, and
  $\sourceof{} : \slabs \rightarrow \vertof{}$ is a one-to-one mapping
  that designates the sources of $\graph$. A vertex $\sourceof{}(s)$
  is called the $s$-source of $\graph$. We write $\vertof{\graph}$,
  $\edgeof{\graph}$, $\labof{\graph}$, $\edgeof{\graph}$ and
  $\sourceof{\graph}$ for the elements of $\graph$.
\end{definition}

\begin{definition}\label{def:hr}
  The \emph{graph algebra} $\algof{G}$ has the signature:
  \vspace*{-.5\baselineskip}
  \begin{align*}
  \fsignature_\algof{G} \isdef & ~\set{\emptygraph_\slabs, \restrict{\slabs} \mid \slabs\finsubseteq\nat} ~\cup \\
  & ~\set{\sgraph{a}_{s_1, \ldots, s_n} \mid a \in \alphabet,~ \arityof{a} =n,~ s_1, \ldots, s_n \in \nat} ~\cup \\
  & ~\set{\rename{\alpha} \mid \alpha \text{ finite permutation of } \nat} ~\cup \\
  & ~\set{\pop_{\slabs_1,\slabs_2} \mid \slabs_1,\slabs_2 \finsubseteq \nat}
  \end{align*}
  interpreted over the domain $\universeOf{G}$ of graphs as follows:
  \begin{compactitem}[-]
  \item $\emptygraph^{\algof{G}}_\slabs$ is the graph of sort $\slabs$
    consisting of one $s$-source, for each $s\in\slabs$ and no edges,
  \item $\sgraph{a}^{\algof{G}}_{s_1, \ldots, s_n}$ is the graph of
    sort $\set{s_1,\ldots,s_n}$ consisting of a single $a$-labeled
    edge attached to the of the sequence of sources $s_1,\ldots,s_n$ (note that the sources $s_1,\ldots,s_n$ are not required to be pairwise distinct),
  \item $\restrict{\slabs}^{\algof{G}}(x)$ forgets the source labels
    $s\in\sortof{x} \setminus \slabs$ from $x$,
  \item $\rename{\alpha}^{\algof{G}}(x)$ renames the sources of $x$
    according to $\alpha$,
  \item $x_1 \pop^\algof{G}_{\sortof{x_1},\sortof{x_2}} x_2$ is the
    disjoint union of $x_1$ and $x_2$, followed by joinining the
    $s$-sources, respectively, for all $s \in \sortof{x_1} \cap
    \sortof{x_2}$.
  \end{compactitem}
\end{definition}

\begin{example}
For example, the leftmost graph in Figure \ref{fig:graphs} (a) has
four vertices of which tree sources labeled $1$, $2$ and $3$ and three
edges labeled $a$, $b$ and $c$. The $a$-labeled edge is attached to
three vertices, whereas the $b$- and $c$-labeled edges are binary. The
middle graph is of sort $\set{1,2,4}$ and the rightmost one of sort
$\set{1,2,3,4}$. Figure \ref{fig:graphs} (a) shows the result of the
composition of two graphs, whereas (b) and (c) show the result of
applying restriction and renaming to this composition.
\end{example}

Note that the parallel composition of graphs of equal sorts
$\pop^\algof{G}_{\slabs,\slabs}$ is associative and commutative, for
each sort $\slabs\finsubseteq\nat$. For simplicity, we write $\pop$
instead of $\pop_{\slabs,\slabs}$ if $\slabs$ can be inferred from the
context. We use the syntactic shorthands:
\begin{align*}
  \pexp{x}{0} = \emptygraph_{\sortof{x}}
  \hspace*{5mm}
  \pexp{x}{n}\isdef\underbrace{x \pop \ldots \pop x}_{
    \scriptscriptstyle{n\geq1 \text{ times}}
  }
\end{align*}
The following definition is used to introduce trees, series-parallel
graphs and graphs of tree-width at most~$2$:

\begin{definition}\label{def:class}
  A \emph{class} $\class$ of graphs is a derived
  $(\fsignature_\class,\sorts_\class)$-algebra of $\algof{G}$, where
  $\fsignature_\class$ is a finite signature and $\sorts_\class$ is
  finite set of sorts such that $\pop_{\slabs,\slabs} \in
  \fsignature_\class$, for each $\slabs \in \sorts_\class$.
\end{definition}

For each integer $k \geq 1$, we define the subalgebra $\btwclass{k}$
over the signature $\btwsignature{k}$ consisting of the function
symbols from $\fsignature_\algof{G}$ restricted to sorts that are
subsets of $\interv{1}{k+1}$, having the same interpretation as in
$\algof{G}$. Since $\btwsignature{k}$ is a finite signature that
contains $\pop_{\slabs,\slabs}$, for each $\slabs \subseteq
\interv{1}{k+1}$, we obtain that each algebra $\btwclass{k}$ is a
class. Note that $\algof{G}$ is not a class because the signature
$\fsignature_\algof{G}$ is infinite.

We also consider aperiodic algebras over signatures containing
parallel composition(s). Given an algebra $\algof{A}$ having sorts
$\sorts_\algof{A}$ over a signature $\fsignature_\algof{A}$ containing
$\pop_{\slabs,\slabs}$, for each sort $\slabs \in \sorts_\algof{A}$,
an element $a \in \universeOf{A}$ is idempotent if $a \pop^\algof{A} a
= a$. An idempotent power of $a$ is an idempotent element of the form
$\pexp{a}{n}$, for some $n\geq1$. It is easy to see that each element
$a$ of a locally finite algebra has a unique idempotent power, denoted
$\idemof{a}$.

\begin{definition}\label{def:aperiodic-recognizer}
  A locally finite algebra $\algof{A}$ is \emph{aperiodic} if and only if
  $\idemof{a} \pop^\algof{A} a = \idemof{a}$, for each element $a \in
  \universeOf{A}$. A recognizer $(\algof{A},B)$ is aperiodic if the
  algebra $\algof{A}$ is aperiodic.
\end{definition}

\subsection{Trees}
\label{subsec:trees}

Trees form a class of graphs having a single sort $\set{1}$, where the
$1$-source of a tree is its root. Following the definitions of
Courcelle~\cite{CourcelleV}, we consider trees with edges of arity two
or more, where the first vertex attached to an edge is the parent of
the latter vertices attached to the same edge. The children of a node
are thus grouped into several sets of same-edge siblings. The set of
trees is generated by operations from Figure \ref{fig:graphs} (d) and
(e), starting from graphs with a single vertex and no edges.

\begin{definition}\label{def:trees}
  The class $\algof{T}$ of trees has signature:
  \begin{align*}
    \fsignature_\algof{T} \isdef \set{\extend{b} \mid b \in \alphabet} \cup \set{\pop,\emptygraph_{\set{1}}}
  \end{align*}
  where $\extend{b}^\algof{T}(x_1,\ldots,x_{\arityof{b}-1})$ joins the
  roots of the trees $x_1,\ldots,x_{\arityof{b}-1}$ to the positions
  $2,\ldots,\arityof{b}$ of an edge labeled $b$, whose first position
  becomes the root of the resulting tree. The interpretations of
  $\pop$ and $\emptygraph_{\set{1}}$ are the same in the tree algebra
  $\algof{T}$ as in the graph algebra $\algof{G}$ (Definition
  \ref{def:hr}).
\end{definition}

It is easy to see that $\algof{T}$ is indeed a derived algebra of
$\algof{G}$\footnote{
$\extend{b}^\algof{T}(x_1,\ldots,x_{\arityof{b}-1}) \isdef
\restrict{\set{1}}^\algof{G}(\sgraph{b}^\algof{G}_{1,2,\ldots,\arityof{b}}
\pop^\algof{G} \rename{(1,2)}^\algof{G}(x_1) \pop^\algof{G} \ldots
\rename{(1,\arityof{b})}^\algof{G}(x_{\arityof{b}-1}))$}. The standard
terminology is immediately retrieved from the above definition. The
vertices of a tree are called \emph{nodes}. For each edge, the first
vertex attached to it is the \emph{parent} of the other vertices
called \emph{children}. Note that a node may be attached to several
edges on the first position and hence may have children corresponding
to different edges. The \emph{rank} of a tree is the maximum number of
children of a node. A set of trees is \emph{ranked} if the
corresponding set of ranks is finite and \emph{unranked},
otherwise. In particular, the domain of $\algof{T}$ is unranked. A
binary tree is a tree with binary edges only. \ifLongVersion We denote
by $\subtree{\tree}{n}$ the subtree of $\tree$ rooted at some node $n
\in \vertof{\tree}$. \fi

Binary trees are used in the definition of the tree-width parameter of
a graph. A set of nodes $N$ in a tree is connected iff between any two
nodes in $N$ there exists an undirected path of edges that traverses
only nodes from $N$.

\begin{definition}\label{def:tree-decomposition}
  A tree decomposition of a graph $\graph$ is a pair $(\tree,\beta)$,
  where $\tree$ is a binary tree and $\beta : \vertof{\tree}
  \rightarrow \pow{\vertof{\graph}}$ is a mapping of the nodes of $\trees$ with bags
  of vertices from $\graph$, such
  that:~\begin{inparaenum}[(1)]
  \item\label{it0:tree-decomposition} the sources of $\graph$ are contained
    in some bag,
    %
  \item\label{it1:tree-decomposition} for each edge of $\graph$, the vertices
    attached to it are contained in some bag,
    %
  \item\label{it2:tree-decomposition} for each vertex of $\graph$, the
    nodes of $\tree$ whose bags contain it form a non-empty connected
    component of $\tree$.
    %
  \end{inparaenum}
  The width of the tree decomposition is $\width{\tree,\beta} \isdef
  \max\{\cardof{\beta(n)} \mid n \in \vertof{\tree}\}-1$ and the
  \emph{tree-width} of $\graph$ is $\twd{\graph} \isdef \min
  \{\width{\tree,\beta} \mid (\tree,\beta) \text{ tree decomposition
    of } \graph\}$.
\end{definition}

In this paper we consider graphs of tree-width at most~$2$, and some proper subclasses such as (disoriented) series-parallel graphs.

\section{Grammars}
\label{sec:grammars}

We aim at using grammars to define the recognizable sets of graphs
pertaining to certain classes, e.g., trees, series-parallel and
tree-width $2$ graphs. Given a signature $\fsignature$, a
$\fsignature$-\emph{grammar} $\grammar=(\nonterm,\rules)$ consists of
a finite set $\nonterm$ of sorted nonterminals and a finite set $\rules$ of
rules of the form (i) $X \rightarrow t[X_1,\ldots,X_n]$, where
$X,X_1,\ldots,X_n \in \nonterm$ and $t$ is an $\fsignature$-term of
sort $\sortof{X}$, or (ii) $\rightarrow X$, for some $X \in \nonterm$,
called axioms. We denote by $\sizeof{\grammar}$ the total number of
occurrences of a nonterminal or function symbol in some rule of
$\rules$.

Given terms $u$ and $w$, a derivation step $u \step{\grammar} w$
obtains $w$ by replacing an occurrence of a variable $X$ in $u$ by a
term $t$, where $X\rightarrow t$ is a rule of $\grammar$. A
$X$-derivation of $t$ is a finite sequence of derivation steps $X
\step{\grammar}^* t$. The derivation is complete iff $t$ is a ground
term. Given an $\fsignature$-algebra $\algof{A}$, the language of
$\grammar$ in $\algof{A}$ is $\alangof{}{A}{\grammar} \isdef
\bigcup_{\rightarrow X \in \rules} \alangof{X}{\algof{A}}{\grammar}$,
where $\alangof{X}{A}{\grammar} \isdef \set{t^\algof{A} \mid X
  \step{\grammar}^* t \text{ complete derivation}}$. A set is
context-free iff it is the language of a grammar.

Alternatively, a context-free set is the set of evaluations of a
recognizable set of terms in a given algebra. Note that the rules of a
grammar are an equivalent representation of a set of terms recognized
by a tree automaton, see~\cite{comon:hal-03367725}.

The filtering of an $\fsignature$-grammar $\grammar=(\nonterm,\rules)$
by an $\fsignature$-recognizer $(\algof{B},C)$ is the
$\fsignature$-grammar $\grammar^{(\algof{B},C)}$ having a nonterminal
$X^b$ for each $X \in \nonterm$ and $b \in \universeOf{B}$ and rules
either (i) $X^b \rightarrow t[X_1^{b_1}, \ldots, X_n^{b_n}]$, for each
rule $X \rightarrow t[X_1,\ldots,X_n] \in \rules$ and each sequence of
elements $b, b_1, \ldots, b_n \in \universeOf{B}$, such that $b =
t^\algof{B}(b_1,\ldots,b_n)$, or (ii) $\rightarrow X^c$, for each
axiom $\rightarrow X \in \rules$ and each element $c \in C$. The
following result is also known as the \emph{Filtering Theorem}:

\ifLongVersion We note that the number of
non-terminals of $\grammar^{(\algof{B},C)}$ is bounded by $ \cardof{
  \nonterm } \cdot \cardof{ \universeOf{B} }$, the number of rules of
$\grammar^{(\algof{B},C)}$ is bounded by $\cardof{ \rules } \cdot
\cardof{\universeOf{B}}^{\sizeof{\grammar}}$. \fi

\begin{theorem}[Theorem 3.88 in~\cite{courcelle_engelfriet_2012}]
\label{thm:filtering}
  Let $\algof{A}$ be an $\fsignature$-algebra.  For each
  $\fsignature$-grammar~$\grammar=(\nonterm,\rules)$ and
  $\fsignature$-recognizer $(\algof{B},C)$ for $\algof{A}$, we have
  $\alangof{}{\algof{A}}{\grammar^{(\algof{B},C)}}=\alangof{}{\algof{A}}{\grammar}
  \cap \homof{\algof{A}}{\algof{B}}^{-1}(C)$. Moreover,
  $\grammar^{(\algof{B},C)}$ can be built in time
  $\bigO(\cardof{\rules} \cdot
  \cardof{\universeOf{B}}^\cardof{\grammar}\cdot
  \Omega^{\sizeof{\grammar}})$, where $\Omega$ is an upper bound on
  the time needed to compute $f^\algof{A}(b_1,\ldots,b_\arityof{f})$,
  for each function symbol $f \in \fsignature$ and elements $b_1,
  \ldots, b_\arityof{f} \in \universeOf{B}$.
\end{theorem}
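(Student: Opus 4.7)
The plan is to establish, by induction on the length of a derivation, the invariant: for each nonterminal $X \in \nonterm$, each $b \in \universeOf{B}$, and each ground $\fsignature$-term $t$ of sort $\sortof{X}$, there is a complete derivation $X^b \step{\grammar^{(\algof{B},C)}}^* t$ iff $X \step{\grammar}^* t$ and $t^\algof{B} = b$. Since $\algof{A}$ and $\algof{B}$ are representable and $\homof{\algof{A}}{\algof{B}}$ is unique, one has $t^\algof{B} = \homof{\algof{A}}{\algof{B}}(t^\algof{A})$; this identification is what ultimately turns the invariant into the claimed intersection characterization.

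For the forward direction I would factor a complete $X^b$-derivation through its first step, applying some rule $X^b \rightarrow u[X_1^{b_1}, \ldots, X_n^{b_n}]$ of $\grammar^{(\algof{B},C)}$ and leaving shorter complete $X_i^{b_i}$-derivations of terms $t_i$ with $t = u[t_1, \ldots, t_n]$. By construction of the filtered grammar, $X \rightarrow u[X_1, \ldots, X_n] \in \rules$ and $b = u^\algof{B}(b_1, \ldots, b_n)$, while the induction hypothesis provides $X_i \step{\grammar}^* t_i$ together with $t_i^\algof{B} = b_i$. Composing yields $X \step{\grammar}^* t$, and homomorphicity gives $t^\algof{B} = u^\algof{B}(t_1^\algof{B}, \ldots, t_n^\algof{B}) = b$. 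The backward direction is entirely symmetric: decompose the $\grammar$-derivation through its first step using $X \rightarrow u[X_1, \ldots, X_n]$, set $b_i \isdef t_i^\algof{B}$, note that the filtered rule $X^b \rightarrow u[X_1^{b_1}, \ldots, X_n^{b_n}]$ exists because $b = t^\algof{B} = u^\algof{B}(b_1, \ldots, b_n)$, and invoke the induction hypothesis on the sub-derivations.

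The language equality then follows by combining the invariant with the axioms: $t^\algof{A} \in \alangof{}{\algof{A}}{\grammar^{(\algof{B},C)}}$ iff there exist an axiom $\rightarrow X^c$ with $c \in C$ and a complete derivation $X^c \step{\grammar^{(\algof{B},C)}}^* t$, equivalently (by the invariant) $\rightarrow X \in \rules$, $X \step{\grammar}^* t$ and $\homof{\algof{A}}{\algof{B}}(t^\algof{A}) = t^\algof{B} = c \in C$, which is exactly membership in $\alangof{}{\algof{A}}{\grammar} \cap \homof{\algof{A}}{\algof{B}}^{-1}(C)$. For the complexity bound, the construction enumerates, for every rule $X \rightarrow t[X_1, \ldots, X_n] \in \rules$, all tuples $(b, b_1, \ldots, b_n) \in \universeOf{B}^{n+1}$ and keeps those satisfying $b = t^\algof{B}(b_1, \ldots, b_n)$; evaluating $t^\algof{B}$ costs at most $\sizeof{\grammar}$ applications of function symbols of $\algof{B}$, each in time $\Omega$, and enumerating axioms adds a factor bounded by $\cardof{C} \leq \cardof{\universeOf{B}}$, yielding the claimed running time.

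The main obstacle is mere bookkeeping rather than ideas: one must linearize the inductive decomposition of derivations that need not follow any canonical expansion order, keep the multi-sorted matching between nonterminals, variables and algebra elements consistent throughout, and carefully identify $t^\algof{B}$ with $\homof{\algof{A}}{\algof{B}}(t^\algof{A})$ using representability of the algebras.
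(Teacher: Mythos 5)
Your proposal is correct. The paper does not give its own proof of this statement — it imports the language equality as Theorem 3.88 of Courcelle and Engelfriet and only remarks that the time bound for building $\grammar^{(\algof{B},C)}$ follows directly from the definition of the filtered grammar. Your induction on derivations, establishing that $X^b$ derives a ground term $t$ in $\grammar^{(\algof{B},C)}$ iff $X$ derives $t$ in $\grammar$ and $t^\algof{B}=b$ (with $t^\algof{B}=\homof{\algof{A}}{\algof{B}}(t^\algof{A})$ by uniqueness of the homomorphism on representable algebras), is exactly the standard argument behind the cited result, and your accounting for the construction time matches the paper's remark.
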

\noindent The estimation on the time needed to build a filtered
grammar follows immediately from its definition\footnote{The upper
bound $\Omega$ is needed to evaluate the side condition $b =
t^\algof{B}(b_1,\ldots,b_n)$ for a term $t$ in a rule $X \rightarrow
t[X_1,\ldots,X_n] \in \rules$, where the size of $t$ is bounded by
$\sizeof{\grammar}$.}.

%

\newcommand{\RefinementTheorem}{Refinement Theorem}

\subsection{Stratified Grammars}
\label{subsec:stratified-grammars}

Stratified grammars have been introduced by Courcelle (under the name
AC grammars, for associative-commutative) as a simple syntactic
restriction that ensures the recognizability of an unranked
context-free set of trees~\cite[Definition 3.7]{CourcelleV}. However,
the initial definition of AC grammars does not capture all
recognizable unranked sets of trees (e.g., trees with an even number
of leaves cannot be defined by AC grammars). In contrast, we prove
later that our definition of stratified grammars completely
characterize the recognizable sets of unranked trees and beyond (i.e.,
series-parallel and graphs of tree-width $2$). Moreover, a syntactic
fragment of stratified grammars defines exactly those sets recognized
by aperiodic algebras (Definition \ref{def:aperiodic-recognizer}).

\begin{definition}\label{def:stratified-grammar}
  A \emph{stratified} $\fsignature$-grammar
  $\grammar=(\nonterm,\rules)$ has nonterminals partitioned as
  $\nonterm=\mathcal{X}\uplus\mathcal{Y}$ and rules of the following
  forms, for some nonterminals $X \in \mathcal{X}$ and $Y,Y_1\ldots
  Y_k \in \mathcal{Y}$ pairwise distinct: \begin{compactenum}[A.]
  \item\label{it1:syntactic-regular} $X \rightarrow X \pop
    \pexp{Y}{q}$, for an integer $q \geq 1$,
  \item\label{it2:syntactic-regular} $X \rightarrow \pexp{Y_1}{q_1}
    \pop \ldots \pop \pexp{Y_k}{q_k}$, for integers
    $q_1,\ldots,q_k \geq 1$,
  \item\label{it31:syntactic-regular} $Y \rightarrow
    t[Z_1,\ldots,Z_m]$, for an $\fsignature$-term $t$,
    $Z_1,\ldots,Z_m \in \nonterm$,
  \item\label{it32:syntactic-regular} $X \rightarrow t$, for a ground
    $\fsignature$-term $t$,
  \item\label{it4:syntactic-regular} $\rightarrow Z$, $Z \in \nonterm$.
  \end{compactenum}
  We say that $\grammar$ is \emph{aperiodic} if and only if, for every
  pair of non-terminals $X\in\mathcal{X}$ and $Y\in\mathcal{Y}$, the
  set of rules $\set{X \rightarrow
    X \pop \pexp{Y}{q_i} \in \rules}_{i\in\interv{1}{k}}$  (\ref{it1:syntactic-regular})
  is either (i) empty, (ii) a singleton and $q_1=1$, or (iii) $k\geq2$ and $q_1, \ldots q_k$ are coprime.
\end{definition}


We define the footprint of stratified grammars in terms of templates
describing the shape of the rules of the form other than
(\ref{it1:syntactic-regular}). The templates will be defined as
grammar rules over a special set of nonterminals consisting of exactly
two nonterminals $\X_\slabs$ and $\Y_\slabs$ of sort $\slabs$, for
every sort $\slabs$ of the considered class.  Intuitively, $\X_\slabs$
resp. $\Y_\slabs$ describe the possible occurrences of the
nonterminals $X \in \mathcal{X}$ (resp. $Y \in \mathcal{Y}$) with
$\sortof{X} = \slabs$ (resp. $\sortof{Y} = \slabs$). We introduce the
following notation that relates the nonterminals $Z \in
\mathcal{X}\uplus\mathcal{Y}$ of a stratified grammar to the
corresponding template nonterminals: $\nontermof{Z} \isdef
\X_{\sortof{Z}}$, for $Z \in \mathcal{X}$, and $\nontermof{Z} \isdef
\Y_{\sortof{Z}}$, for $Z \in \mathcal{Y}$.

\begin{definition}\label{def:footprint}
  The \emph{footprint} of a stratified grammar $\grammar =
  (\mathcal{X}\uplus\mathcal{Y},\rules)$ is the set $\fpof{\grammar}$
  consisting of the templates:
  \begin{compactenum}[1.]
  \item\label{it1:footprint} $\X_\slabs \rightarrow
    \pexp{\Y_\slabs}{m}$, where $m \isdef ~ \min\{\sum_{i=1}^k q_i
    \mid X \rightarrow \pexp{Y_1}{q_1} \pop \ldots \pop
    \pexp{Y_k}{q_k} \in \rules,~ \sortof{X} = \sortof{Y_i} =
    \slabs\}$, where $\min \emptyset = \infty$ if there is no such
    rule,
  \item\label{it2:footprint} $\nontermof{Z} \rightarrow
    t[\nontermof{Z_1}, \ldots, \nontermof{Z_m}]$, for each rule $Z
    \rightarrow t[Z_1,\ldots,Z_m] \in \rules$ of the form
    (\ref{it31:syntactic-regular}) or (\ref{it32:syntactic-regular}),
  \item\label{it3:footprint} $\rightarrow \nontermof{Z}$, for each
    axiom $\rightarrow Z$ of $\grammar$.
  \end{compactenum}
  Given sets of templates $\afp_1$ and $\afp_2$, we write $\afp_1
  \subseteq \afp_2$ if and only if $m_1 \geq m_2$, for all $\X_\slabs
  \rightarrow \pexp{\Y_\slabs}{m_1} \in \afp_1$ and $\X_\slabs
  \rightarrow \pexp{\Y_\slabs}{m_2} \in \afp_2$, and each template of
  types \ref{it2:footprint} or~\ref{it3:footprint} from $\afp_1$
  belongs to $\afp_2$.
\end{definition}

Footprints will be used in the definition of regular grammars.
For each class $\class$, we will fix a footprint $\afp_\class$ and
define the regular grammars for the class as the set of grammars
$\grammar$ having $\fpof{\grammar} \subseteq \afp_\class$.


Certain uses of stratified grammars require a normal form:

\begin{definition}\label{def:normal-form}
  A stratified grammar $\grammar =
  (\mathcal{X}\uplus\mathcal{Y},\rules)$ is \emph{normalized} iff, for
  each pair of nonterminals $(X,Y) \in \mathcal{X} \times
  \mathcal{Y}$, there exists at most one rule $X \rightarrow X \pop
  \pexp{Y}{q} \in \rules$ of form (\ref{it1:syntactic-regular}).
\end{definition}

The following lemma shows that the normalization of a stratified
grammar is effective, at the cost of an exponential blowup in size,
and incurs a restriction of the footprint:

\begin{lemmaE}[][category=proofs]\label{lemma:normal-form}
  Let $\grammar = (\mathcal{X}\uplus\mathcal{Y},\rules)$ be a
  stratified grammar for a class $\class$. Then, one can build a
  normalized stratified grammar $\grammar' =
  (\mathcal{X}\uplus\mathcal{Y},\rules')$ such
  that: \begin{compactenum}[1.]
  \item $\alangof{}{\class}{\grammar'} =
    \alangof{}{\class}{\grammar}$,
  \item $\fpof{\grammar'} \subseteq \fpof{\grammar}$,
  \item $\sizeof{\grammar'} \in
    2^{\poly{\sizeof{\grammar} \cdot
        \log(\sizeof{\grammar})}}$,
  \end{compactenum}
  Moreover, $\grammar$ is aperiodic only if $q=1$, for all rules $X
  \rightarrow X \pop \pexp{Y}{q}$ of the form
  (\ref{it1:syntactic-regular}) from $\grammar'$.
\end{lemmaE}
\begin{proofE}
  We perform the following transformation of $\grammar$, for each pair
  $(X,Y) \in \mathcal{X} \times \mathcal{Y}$.  Let $X \rightarrow X
  \pop \pexp{Y}{q_1}, \ldots, X \rightarrow X \pop \pexp{Y}{q_h}$ be
  the rules of the form (\ref{it1:syntactic-regular}) from $\grammar$,
  for the given pair $(X,Y)$. We assume w.l.o.g. that each rule above
  occurs on at least one complete derivation starting with an axiom
  nonterminal; rules that do not comply with this assumption can be
  identified in $\poly{\sizeof{\grammar}}$ time and removed without
  changing the language of $\grammar$. Moreover, we assume
  w.l.o.g. that each nonterminal from $\mathcal{X}\uplus\mathcal{Y}$
  occurs in at least one rule from $\grammar$ and that $q_1 < \ldots <
  q_h$.  We distinguish two cases: \begin{compactitem}[-]
  \item if $h\geq1$, let $d\geq1$ be the gcd of the integers $q_1,
    \ldots, q_h$. By Schur's theorem, there exists a computable
    integer $n\geq1$ such that for each integer $x > n$ there exist
    $x_1,\ldots,x_h \in \nat$ such that $d x = q_1 x_1 + \ldots + q_h
    x_h$. More precisely, $n$ is the Frobenius number of the coprimes
    $\frac{q_1}{d}, \ldots, \frac{q_h}{d}$. We further observe that
    the set $M \isdef \set{q_1 x_1 + \ldots + q_h x_h \mid
      x_1,\ldots,x_h \in \nat} \cap \interv{1}{dn}$ is finite and
    computable.
  \item else $h=0$, in which case we set $d=0$ and
    $M = \emptyset$.
  \end{compactitem}
  We now obtain the set of rules $\rules'$ from $\rules$ as
  follows:
  \begin{compactenum}[(a)]
  \item\label{it1:singleB} we keep the rules $Z \rightarrow t[Z_1,
    \ldots Z_m]$ of the form (\ref{it31:syntactic-regular}) or
    (\ref{it32:syntactic-regular}),
  \item\label{it2:singleB} we remove the rules $X \rightarrow X \pop
    \pexp{Y}{q_1}$, $\ldots$, $X \rightarrow X \pop \pexp{Y}{q_h}$
    (\ref{it1:syntactic-regular}) and we add the rule $X \rightarrow X
    \pop \pexp{Y}{d}$ if $d>0$ (there are no such rules if $d=0$),
  \item\label{it3:singleB} for every rule $X \rightarrow
    \pexp{Y_1}{q_1} \pop \ldots \pop \pexp{Y_k}{q_k}$
    (\ref{it2:syntactic-regular}) such that $Y = Y_\ell$, for some
    $\ell \in \interv{1}{k}$, we add a rule \(X \rightarrow
    \pexp{Y_1}{q_1} \pop \ldots \pop \pexp{Y_{\ell-1}}{q_{\ell-1}}
    \pop \pexp{Y}{q_\ell+m} \pop \pexp{Y_{\ell+1}}{q_{\ell+1}} \pop
    \ldots \pop \pexp{Y_k}{q_k}\), for each $m \in M$.
  \end{compactenum}
  It is easy to show that each derivation of $\grammar$ can be
  simulated by a derivation of $\grammar'$ with the same outcome, and
  viceversa.  Since we perform iteratively the above transformation
  for every pair $(X,Y) \in \mathcal{X} \times \mathcal{Y}$, we obtain
  the grammar $\grammar'$ such that $\alangof{}{\algof{C}}{\grammar'}
  = \alangof{}{\algof{C}}{\Gamma}$ and $\fpof{\Gamma'} \subseteq
  \fpof{\Gamma}$, having the stated property regarding the rules of
  the form (\ref{it1:syntactic-regular}).  We note that in general we
  only have the inclusion $\fpof{\Gamma'} \subseteq \fpof{\Gamma}$
  because step (\ref{it3:singleB}) of the transformation potentially
  increases the size of the right-hand side of the rules of the form
  (\ref{it2:syntactic-regular}), raising the lower bound in the
  footprint, for the case (\ref{it1:footprint}).

  Moreover, $\grammar$ is aperiodic if and only if $d=1$ for all rules
  introduced at step (\ref{it2:singleB}), by
  Definition~\ref{def:stratified-grammar}. We note that these are the
  only rules of the form (\ref{it1:syntactic-regular}) from
  $\grammar'$, by construction.

  It remains to prove that $\sizeof{\grammar'} \in
  2^{\poly{(\cardof{\mathcal{X}}+\cardof{\mathcal{Y}}) \cdot
      \log(\sizeof{\grammar})}}$. Consider the changes described in
  the previous. Clearly, steps (\ref{it1:singleB}) and
  (\ref{it2:singleB}) do not increase the size of the grammar.  Step
  (\ref{it3:singleB}) adds, for each $m \in M$ and each rule $\rho$ of
  form (\ref{it2:syntactic-regular}), a rule of size at most
  $\sizeof{\rho} + m$.  We have $m \leq n$ and note that $n \in
  \poly{\sizeof{\grammar}}$ follows from the upper bound on the
  Frobenius number $n \leq (\frac{q_1}{d}-1)(\frac{q_h}{d}-1)-1 \leq
  \sizeof{\grammar}^2 \in \poly{\sizeof{\grammar}}$, see, e.g.,
  \cite{Erdos1972}. Moreover, this transformation is applied once for
  each pair $(X,Y) \in \mathcal{X} \times \mathcal{Y}$. 
  Hence, we
  obtain for every rule $\rho \in \rules$ of form
  (\ref{it31:syntactic-regular}) at most
  $\poly{\sizeof{\grammar}}^{\cardof{\mathcal{X}}\cdot \cardof{\mathcal{Y}}}$
  rules of form (\ref{it31:syntactic-regular}) in $\rules'$, each of
  size at most $\sizeof{\rho} +
  \poly{\sizeof{\grammar}}^{\cardof{\mathcal{Y}}}$.
  We note that for every $X$ we apply the transformation at most once
  for every $Y$, i.e., at most
  $\cardof{\mathcal{Y}}$ many times. Thus, we
  obtain $\sizeof{\grammar'} \leq
  \poly{\sizeof{\grammar}}^{\poly{(\cardof{\mathcal{X}}+\cardof{\mathcal{Y}})}}
  \in 2^{\poly{\sizeof{\grammar} \cdot \log(\sizeof{\grammar})}}$, by
  the assumption that each nonterminal from $\mathcal{X} \uplus
  \mathcal{Y}$ occurs in some rule of $\grammar$.
\end{proofE}
It is worth pointing out that the above lemma is true in any algebra
that interprets the parallel composition as an associative and
commutative operation.

The result of applying the Filtering Theorem
(Theorem~\ref{thm:refinement}) directly to a stratified grammar and a
recognizer $(\algof{B},C)$ is not necessarily a stratified grammar,
e.g., because a rule $X \rightarrow X \pop Y$
(\ref{it1:syntactic-regular}) yields zero or more rules $X^{a}
\rightarrow X^{b} \pop Y^{c}$, for each $a,b,c\in\universeOf{B}$ such
that $a \neq b$ and $a = b \pop^{\algof{B}} c$, that are not of the
form (\ref{it1:syntactic-regular}). The next result strengthens the
Filtering Theorem, showing that the intersection of the language of a
stratified grammar with a recognizable set can be represented by an
effectively constructible stratified grammar, whose footprint is
included in the original one:

\begin{theoremE}[\RefinementTheorem][category=proofs]\label{thm:refinement}
  Let $\class$ be a class. For each stratified
  $\fsignature_\class$-grammar $\grammar$ and each
  $\fsignature_\class$-recognizer $(\algof{A},B)$ one can build a
  stratified $\fsignature_\class$-grammar $\grammar'$ such that
  $\fpof{\Gamma'} \subseteq \fpof{\Gamma}$ and
  $\alangof{}{\algof{C}}{\grammar'} =
  \alangof{}{\algof{C}}{\Gamma^{(\algof{A},B)}}$. Moreover,
  $\grammar'$ is aperiodic whenever $\grammar$ is an aperiodic
  stratified grammar and $\algof{A}$ is an aperiodic
  $\fsignature_\class$-algebra.
\end{theoremE}
\begin{proofE}
\newcommand{\recgrammarab}{\grammar^{(\algof{A},B)}}

Let $N \isdef \cardof{\universeOf{A}}$ and $\grammar \isdef
(\mathcal{X}\uplus\mathcal{Y},\rules)$. By
Lemma~\ref{lemma:normal-form}, we can assume that for each pair of
nonterminals $(X,Y) \in \mathcal{X} \times \mathcal{Y}$ there is at
most one rule $X \rightarrow X \pop \pexp{Y}{q}$
(\ref{it1:syntactic-regular}) in $\rules$. We define the grammar
$\Gamma' \isdef (\mathcal{X}'\uplus\mathcal{Y}', \rules')$, where
$\mathcal{X}' \isdef \set{ X_a \mid X \in \mathcal{X},~ a \in
  \universeOf{A}}$ and $\mathcal{Y}' \isdef \set{ Y_a \mid Y \in
  \mathcal{Y},~ a \in \universeOf{A}}$ and $\sort(Z_a) \isdef
\sort(Z)$ for all $Z \in \mathcal{X}\uplus\mathcal{Y}$ and $a \in
\universeOf{A}$. The set of rules $\rules'$ of $\grammar'$ is the
following: \begin{enumerate}[A.]\addtocounter{enumi}{4}
\item\label{it1:syntactic-regular-refinement} $X_b \rightarrow X_b
  ~\pop~ \pexp{Y_a}{qr}$, for every rule $X \rightarrow X ~\pop~
  \pexp{Y}{q} \in \rules$ and each $a, b \in \universeOf{A}$ and $r
  \in \interv{1}{N}$ such that $b = b \pop^\algof{A}
  \pexpalg{A}{a}{qr}$,
\item\label{it2:syntactic-regular-refinement} $X_b \rightarrow \pop_{Y
  \in \mathcal{Y}} \pop_{a \in \universeOf{A}} \pexp{Y_a}{s(Y,a) +
  q(Y,a)}$, for each nonterminal $X \in \mathcal{X}$, element $b \in
  \universeOf{A}$ and integers $s(Y,a), q(Y,a) \in \nat$ such that $\sum_{a \in \universeOf{A}}
  s(Y,a) = s(Y)$ and $0 \leq q(Y,a) \leq N \cdot q(Y)$ and: \begin{itemize}[-]
  \item there exists a rule $X \rightarrow ~\pop_{Y \in \mathcal{Y}}
    \pexp{Y}{s(Y)} \in \rules$, and
  \item for all $Y \in \mathcal{Y}$, either: \begin{itemize}[*]
  \item $\sum_{a \in \universeOf{A}} q(Y,a) = k \cdot q(Y)$ if there
    exists a rule $X \rightarrow X \pop \pexp{Y}{q(Y)} \in \rules$ and
    an integer $k\in \nat$, or
  \item $q(Y,a) = 0$, for each $a \in \universeOf{A}$, if no rule of
    the form $X \rightarrow X \pop \pexp{Y}{q} \in \rules$ exists,
  \end{itemize}
  \item $b = ~\pop_{Y \in \mathcal{Y}} \pop_{a \in \universeOf{A}}
    \pexp{a}{s(Y,a) + q(Y,a)}$.
  \end{itemize}
\item\label{it3:syntactic-regular-refinement} $Z_b \rightarrow
  t[(Z_1)_{a_1},\ldots,(Z_m)_{a_m}]$ for every rule $Z \rightarrow
  t[Z_1,\ldots,Z_m] \in \rules$ of the form
  (\ref{it31:syntactic-regular}) or (\ref{it32:syntactic-regular}) and
  $b, a_1, \ldots a_m \in \universeOf{A}$ such that $b =
  t^\algof{A}(a_1, \ldots, a_m)$,
\item\label{it4:syntactic-regular-refinement} $\rightarrow Z_b$ for
  each axiom $\rightarrow Z$ in $\mathcal{R}$ and $b \in B$.
\end{enumerate}

First, it is an easy check that
$\grammar'=(\mathcal{X}'\uplus\mathcal{Y}',\rules')$ is indeed a
stratified grammar and moreover satisfies $\fpof{\Gamma'} \subseteq
\fpof{\Gamma}$, by construction. In particular, the templates for
rules of the forms (\ref{it31:syntactic-regular}),
(\ref{it32:syntactic-regular}) and (\ref{it4:syntactic-regular}) are
the same for $\grammar$ and $\grammar'$, whereas the lower bound $m$
of template $\X \rightarrow \pexp{\Y}{m}$ for the rules of the
form (\ref{it2:syntactic-regular}) is increased in $\grammar'$
w.r.t. $\grammar$, by the definition of $\rules'$ in the case
(\ref{it2:syntactic-regular-refinement}). More precisely, this is
because, for each rule $X_b \rightarrow \pop_{Y \in \mathcal{Y}}
\pop_{a \in \universeOf{A}} \pexp{Y_a}{s(Y,a) + q(Y,a)} \in \rules'$,
there exists a rule $X \rightarrow ~\pop_{Y \in \mathcal{Y}}
\pexp{Y}{s(Y)} \in \rules$ such that $\sum_{a\in\universeOf{A}} s(Y,a)
+ q(Y,a) \geq s(Y)$.

Moreover, if $\grammar$ is aperiodic, we have $q=1$, for each rule $X
\rightarrow X \pop \pexp{Y}{q}$ of the
form~(\ref{it1:syntactic-regular}), by
Lemma~\ref{lemma:normal-form}. By the fact below, the only rules of
the form~(\ref{it1:syntactic-regular-refinement}) from $\grammar'$ are
such that $qr=1$, hence $\grammar'$ is an aperiodic stratified
grammar.

\begin{fact}
  If $\algof{A}$ is an aperiodic algebra, we have $b = b
  \pop^\algof{A} a$, for all $a,b\in\universeOf{A}$ and $r \ge 1$ such that
  \[
  b = b \pop^\algof{A} \underbrace{a \pop^\algof{A} \ldots \pop^\algof{A} a}_{\isdef ~\pexpalg{A}{a}{r} \text{($r$ times)}}.
  \]
\end{fact}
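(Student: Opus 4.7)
The plan is to iterate the assumption and then collapse the resulting power of $a$ to the idempotent $\idemof{a}$ via aperiodicity. Concretely, from $b = b \pop^\algof{A} \pexpalg{A}{a}{r}$ a straightforward induction on $k$ (substituting the equality into itself) yields $b = b \pop^\algof{A} \pexpalg{A}{a}{kr}$ for every $k \geq 1$. The key is that parallel composition in $\algof{A}$ is associative and commutative, so this iteration is unproblematic.

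Next I would observe that, writing $\idemof{a} = \pexpalg{A}{a}{n}$ for the (unique) idempotent power of $a$, aperiodicity gives $\idemof{a} \pop^\algof{A} a = \idemof{a}$, hence by a trivial induction $\pexpalg{A}{a}{m} = \idemof{a}$ for every $m \geq n$. Choosing $k$ with $kr \geq n$ (for instance $k = n$) this yields $\pexpalg{A}{a}{kr} = \idemof{a}$, and therefore $b = b \pop^\algof{A} \idemof{a}$.

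Finally, I would derive the desired conclusion by one application of associativity and aperiodicity:
\[
b \pop^\algof{A} a \;=\; (b \pop^\algof{A} \idemof{a}) \pop^\algof{A} a \;=\; b \pop^\algof{A} (\idemof{a} \pop^\algof{A} a) \;=\; b \pop^\algof{A} \idemof{a} \;=\; b.
\]
There is no real obstacle here; the only subtlety is making sure that the iteration step and the stabilization of $\pexpalg{A}{a}{m}$ both rely on nothing beyond associativity/commutativity of $\pop^\algof{A}$ and the single aperiodicity equation $\idemof{a} \pop^\algof{A} a = \idemof{a}$, which are exactly the hypotheses available.
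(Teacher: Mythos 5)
Your proposal is correct and follows essentially the same route as the paper's own proof: iterate the hypothesis to reach $b = b \pop^\algof{A} \pexpalg{A}{a}{rn}$, identify this power of $a$ with the idempotent $\idemof{a}$, and then use the aperiodicity equation $\idemof{a} \pop^\algof{A} a = \idemof{a}$ together with associativity to conclude $b \pop^\algof{A} a = b$. The only cosmetic difference is that you stabilize $\pexpalg{A}{a}{m}$ at $\idemof{a}$ for all $m \geq n$, whereas the paper writes the iterated power as $\pexpalg{A}{(\idemof{a})}{r}$ before collapsing it; both rest on the same two ingredients (AC of $\pop^\algof{A}$ and the single aperiodicity identity), exactly as you note.
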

\begin{proof}
  Assume that $\idemof{a}=\pexpalg{A}{a}{n}$, for some $n\geq1$. By
  composing the left and right-hand sides of the above equality $n$
  times with $\pexpalg{A}{a}{r}$, we obtain $b = b \pop^\algof{A}
  \pexpalg{A}{a}{rn} = b \pop^\algof{A}
  \pexpalg{A}{\left(\idemof{a}\right)}{r}$. By the aperiodicity of
  $\algof{A}$, we have $\idemof{a} \pop^\algof{A} a = \idemof{a}$,
  hence \(b = b \pop^\algof{A} \pexpalg{A}{\left(\idemof{a}\right)}{r}
  = \left(b \pop^\algof{A}
  \pexpalg{A}{\left(\idemof{a}\right)}{r}\right) \pop^\algof{A} a = b
  \pop^\algof{A} \left( \pexpalg{A}{\left(\idemof{a}\right)}{r}
  \pop^\algof{A} a\right) = b \pop^\algof{A} \left(\idemof{a} \pop
  a\right) = b \pop^\algof{A} a.\)
\end{proof}

Second, we argue that $\alangof{}{\algof{C}}{\grammar'} =
\alangof{}{\algof{C}}{\Gamma^{(\algof{A},B)}}$:

\vspace*{.5\baselineskip}
\noindent''$\subseteq$''
We will argue that the same $\fsignature_\class$-terms can be
derived from $\recgrammarab$ and $\grammar'$.
More precisely, we prove the following equality, for each nonterminal $Z_a \in \mathcal{X}' \uplus \mathcal{Y}'$:
\begin{align}
  \set{t \mid Z_a \step{\recgrammarab}^* t} = & \set{t \mid Z_a \step{\grammar'}^* t}
\end{align}
We first observe that $\recgrammarab$ and $\grammar'$ have the same
set of axioms of the form (\ref{it4:syntactic-regular-refinement}).
Further, the rules of the form
(\ref{it3:syntactic-regular-refinement}) of $\grammar'$ are exactly
the rules in $\recgrammarab$ obtained by refinement of the rules of
form either (\ref{it31:syntactic-regular}) or
(\ref{it32:syntactic-regular}) of $\grammar$, and hence will generate
the same set of terms. 
It remains to prove that the rules of the forms
(\ref{it1:syntactic-regular-refinement}),
(\ref{it2:syntactic-regular-refinement}) and
(\ref{it1:syntactic-regular}), (\ref{it2:syntactic-regular}) produce
the same set of terms starting from a nonterminal $X_b$ in the
grammars $\grammar'$ and $\recgrammarab$, respectivelly.

We now consider an arbitrary $\pop$-term $t=\pop_{Y
\in \mathcal{Y}} \pop_{a \in \universeOf{A}} \pexp{Y_a}{s(Y,a) +
  q(Y,a)}$ that can be generated by $\recgrammarab$ starting with some
nonterminal $X_b$ using the refined rules of the forms
(\ref{it1:syntactic-regular}) and (\ref{it2:syntactic-regular}), where
$s(Y,a)$ and $q(Y,a)$ are positive integers satisfying the following
constraints: \begin{enumerate}[1.]
\item\label{it1:refinement-left-right} there exists a rule $X
  \rightarrow \pop_{Y \in \mathcal{Y}} \pexp{Y}{s(Y)}$ in $\grammar$,
\item\label{it2:refinement-left-right} there exists a rule $X
  \rightarrow X \pop \pexp{Y}{q(Y)}$ in $\grammar$,
\item\label{it3:refinement-left-right} $b = \pop^{\algof{A}}_{Y \in
  \mathcal{Y}} \pop^{\algof{A}}_{a \in \universeOf{A}} \pexp{a}{s(Y,a)
  + q(Y,a)}$.
\end{enumerate}
To prove that $X_b \step{\grammar'} t$, we distinguish two
cases. First, assume that there is a rule $X \rightarrow X \pop
\pexp{Y}{q(Y)} \in \rules$; we recall that this rule is unique, and
hence $q(Y)$ is unique. By the pigeonhole principle, for each $a \in
\universeOf{A}$ there exist integers $0 \le p(Y,a) < N$ and $1 \le
r(Y,a) \le N$ such that $p(Y,a) + r(Y,a) \le N$ and
$\pexpalg{A}{a}{p(Y,a) + r(Y,a)} = \pexpalg{A}{a}{p(Y,a)}$. By
composing both sides of this equality $q(Y)$ times, we obtain
$\pexpalg{A}{a}{q(Y) \cdot p(Y,a) + q(Y) \cdot r(Y,a)} =
\pexpalg{A}{a}{q(Y) \cdot p(Y,a)}$. Next, we define integers $0 \le
q'(Y,a) < N \cdot q(Y)$, for each $a \in \universeOf{A}$ and $Y \in
\mathcal{Y}$, as follows:
\begin{align*} q'(Y,a) \isdef
  & ~\left\{\begin{array}{l} p(Y,a) \cdot q(Y) ~+ \\
  (q(Y,a)-p(Y,a) \cdot q(Y) ) \bmod (r(Y,a) \cdot q(Y)), \\
  \hspace*{5mm} \text{if } q(Y,a)\ge N \cdot q(Y) \\
  q(Y,a) \text{, otherwise}
  \end{array}\right.
\end{align*}
For each $a \in \universeOf{A}$ and $Y \in \mathcal{Y}$, we have
$q(Y,a) = q'(Y,a) + k(Y,a) \cdot r(Y,a) \cdot q(Y)$, for some
coefficients $k(Y,a) \ge 0$. Moreover, by point~\ref{it3:refinement-left-right}., we obtain that $b = b \pop
a^{q(Y)\cdot r(Y,a)}$, if $q(Y,a)\ge N \cdot q(Y)$, for each $a
\in \universeOf{A}$ and $Y \in \mathcal{Y}$ ($\dagger$).

Second, if there is a no rule $X \rightarrow X \pop \pexp{Y}{q(Y)} \in
\rules$, we have $q(Y,a) = 0$ for all $a \in \universeOf{A}$, and we
set $q'(Y,a) \isdef 0$ for all $a \in \universeOf{A}$.

We now observe that $X_b \step{\grammar'} \pop_{Y \in \mathcal{Y}}
\pop_{a \in \universeOf{A}} \pexp{Y_a}{s(Y,a) + q'(Y,a)}$, because a
rule $X_b \rightarrow \pop_{Y \in \mathcal{Y}} \pop_{a \in
  \universeOf{A}} \pexp{Y_a}{s(Y,a) + q'(Y,a)}$ of the form
(\ref{it2:syntactic-regular-refinement}) exists in
$\grammar'$. According to the definition of $\grammar'$, this is the
case provided that $b = \pop^{\algof{A}}_{Y \in \mathcal{Y}}
\pop^{\algof{A}}_{a \in \universeOf{A}} \pexp{a}{s(Y,a) +
  q'(Y,a)}$. The latter condition can be obtained from ($\dagger$), by
subtracting $k(Y,a)$ times $r(Y,a)\cdot q(Y)$ from $q(Y,a)$ in the
exponent of $a$ in equation (\ref{it3:refinement-left-right}). Then,
by applying the rules of the form
(\ref{it1:syntactic-regular-refinement}), $k(Y,a)$-times respectively,
we obtain the term $\pop_{Y \in \mathcal{Y}} \pop_{a \in
  \universeOf{A}} \pexp{Y_a}{s(Y,a) + q(Y,a)}$ using that $q(Y,a) =
q'(Y,a) + k(Y,a) \cdot r(Y,a) \cdot q(Y)$, for all $Y\in\mathcal{Y}$
and $a\in\universeOf{A}$.

\noindent''$\supseteq$'':
We note that it is straightforward to prove by induction that
$\graph \in \alangof{Z_a}{\class}{\grammar'}$ only if $\homof{G}{A}(\graph) = a$, for all $a \in \universeOf{A}$.
Hence, $\alangof{}{\class}{\grammar'} \subseteq \homof{\class}{\algof{A}}^{-1}(B)$.
Moreover, it is easy to verify that every complete derivation of $\grammar'$ induces a complete derivation $\grammar$ (in particular, every rule $X_b \rightarrow X_b ~\pop~ \pexp{Y_a}{qr} \in \rules'$ induces an $r$-fold application of $X \rightarrow X ~\pop~ \pexp{Y}{q} \in \rules$, and every rule $X_b \rightarrow \pop_{Y \in \mathcal{Y}} \pop_{a \in \universeOf{A}} \pexp{Y_a}{s(Y,a) + q(Y,a)} \in \rules'$ induces an application of $X \rightarrow ~\pop_{Y \in \mathcal{Y}} \pexp{Y}{s(Y)} \in \rules$ plus a $k_Y$-fold application of $X \rightarrow X \pop \pexp{Y}{q(Y)} \in \rules$ for each $Y$, where $\sum_{a \in \universeOf{A}} q(Y,a) = k_Y \cdot q(Y)$).
Hence, $\alangof{}{\class}{\grammar'} \subseteq \alangof{}{\class}{\grammar}$.
Thus, we get  $\alangof{}{\class}{\grammar'} \subseteq \alangof{}{\class}{\grammar} \cap \homof{\class}{\algof{A}}^{-1}(B) = \alangof{}{\class}{\recgrammarab}$.
\end{proofE}

The refinement theorem will be used as follows:
We first establish the existence of a \emph{universally stratified} grammar, i.e., a grammar $\grammar_\class$ with $\fpof{\grammar_\class} = \afp_\class$ and $\alangof{}{\class}{\grammar_\class}=\universeOf{C}$.
The upcoming definitions of regular grammars via the condition $\fpof{\grammar} \subseteq \afp_\class$ will then ensure that these families of grammars are closed under refinement
(Theorem~\ref{thm:refinement}),
see Corollary~\ref{cor:completeness}:

\begin{definition}\label{def:universal-stratified}
  A class $\class$ is (aperiodic) \emph{universally stratified} if there exists a (aperiodic) stratified grammar
  $\grammar_\class$ such that $\fpof{\grammar_\class} = \afp_\class$ and
  $\alangof{}{\class}{\grammar_\class}=\universeOf{C}$, for the domain
  $\universeOf{C}$ of $\class$.
\end{definition}

\begin{corollaryE}[][category=proofs]\label{cor:completeness}
  Let $\class$ be a (aperiodic) universally stratified class and
  $\langu\subseteq\universeOf{C}$ be a set recognizable in $\class$
  (by an aperiodic recognizer). Then, there exists a (aperiodic)
  stratified grammar $\grammar$ for $\class$ such that $\fpof{\grammar}
  \subseteq \fpof{\grammar_\class}$ and
  $\langu=\alangof{}{\class}{\grammar}$.
\end{corollaryE}
\begin{proofE}
  Let $(\algof{A},B)$ be a $\fsignature_\class$-recognizer for
  $\langu$, i.e., we have $\langu=\homof{C}{A}^{-1}(B)$. Since
  $\class$ is universally stratified, there exists a stratified
  grammar $\grammar_\class$ for $\class$ such that
  $\alangof{}{\class}{\grammar_\class}=\universeOf{C}$. Then, $\langu
  = \universeOf{C} \cap \homof{C}{A}^{-1}(B) =
  \alangof{}{\class}{\grammar^{(\algof{A},B)}_\class}$, by Theorem
  \ref{thm:filtering}. By Theorem~\ref{thm:refinement}, there exists a
  stratified grammar $\grammar'$ such that
  $\fpof{\grammar'}\subseteq\fpof{\grammar_\class}$ and
  $\langu=\alangof{}{\class}{\grammar^{(\algof{A},B)}_\class}=\alangof{}{\class}{\grammar'}$.
  Moreover, if $\grammar_\class$ is an aperiodic stratified grammar
  for $\class$ and $\langu$ is recognized by an aperiodic recognizer,
  then $\grammar'$ is an aperiodic stratified grammar for $\class$, by
  Theorem~\ref{thm:refinement}.
\end{proofE}

In the rest, we introduce several notions needed to prove the
contrapositive of Corollary \ref{cor:completeness}, namely that the
language of a stratified grammar
$\grammar=(\mathcal{X}\uplus\mathcal{Y},\rules)$ in a given class
$\class$ is recognizable in $\class$. This is achieved by building,
from the grammar $\grammar$ a recognizer $(\algof{A}_\grammar,
B_\grammar)$ for its language.  The elements of the recognizer algebra
$\algof{A}_\grammar$ will include multisets of nonterminals from the
set $\mathcal{Y}$ of the given stratified grammar. These multisets are
finite representations of graphs consisting of arbitrarily large
parallel compositions of subgraphs. The purpose of the notions below
is bounding the multiplicity of these multisets, in order to define
the finite algebra $\algof{A}_\grammar$.

Given a multiset $m \in \mpow{\mathcal{Y}}$, we define the reduced
multiset $\trunk{m}_\grammar$, in which each nonterminal $Y$ has
multiplicity $\trunk{m}_\grammar(Y)$ bounded by a constant depending on $\grammar$.
A multiset $m$ is said to be \emph{reduced} iff $m = \trunk{m}_\grammar$. \ifLongVersion\else The
definition of the reduction operation relies on the exponents
occurring in the rules of the form (\ref{it1:syntactic-regular}) and
(\ref{it2:syntactic-regular}) from $\grammar$.
\fi

Formally, let $\grammar=(\mathcal{X}\uplus\mathcal{Y},\rules)$ be a stratified grammar. For each nonterminal $Y \in \mathcal{Y}$, we
define the following integers, depending on $\grammar$:
\begin{align*}
  b(Y) \isdef & ~\max\set{q_i \mid X \rightarrow \pexp{Y_1}{q_1} \pop \ldots \pop \pexp{Y_n}{q_n} \in \rules,~ Y_i = Y} \\
  p(Y) \isdef & ~\max\set{1,\lcm\set{q \mid X \rightarrow X \pop \pexp{Y}{q} \in \rules}} \\
  q(Y) \isdef & ~b(Y) + p(Y)
\end{align*}
\begin{align*}
  \trunk{m}_\grammar(Y) \isdef \begin{cases}
    m(Y), \text{ if } m(Y) < q(Y) \\
    q(Y) + (m(Y) - q(Y)) \mod p(Y), \text{ otherwise }
  \end{cases}
\end{align*}


\begin{lemma}\label{fact:trunk-two}
  Given a stratified grammar
  $\grammar=(\mathcal{X}\uplus\mathcal{Y},\rules)$, for all multi-sets
  $m_1, m_2 \in \mpow{\mathcal{Y}}$, we have $\trunk{m_1 \mcup
    m_2}_\grammar = \trunk{\trunk{m_1}_\grammar \mcup
    \trunk{m_2}_\grammar}_\grammar$.
\end{lemma}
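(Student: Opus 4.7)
The plan is to prove the identity pointwise in each nonterminal $Y \in \mathcal{Y}$. Since multiset union is computed componentwise, i.e., $(m_1 \mcup m_2)(Y) = m_1(Y) + m_2(Y)$, and the parameters $b(Y), p(Y), q(Y)$ depend only on $\grammar$ (not on any multiset), the reduction $\trunk{\cdot}_\grammar$ acts on the $Y$-component as a fixed function $f : \nat \to \nat$ with parameters $q \isdef q(Y)$ and $p \isdef p(Y)$: namely, $f(n) = n$ if $n < q$ and $f(n) = q + (n - q) \mod p$ otherwise. Writing $n_i \isdef m_i(Y)$, the lemma reduces to the numerical identity $f(n_1 + n_2) = f(f(n_1) + f(n_2))$.

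I would first establish the one-sided auxiliary identity $f(n + k) = f(f(n) + k)$ for all $n, k \in \nat$, by a case split on whether $n < q$. In the first case $f(n) = n$ and the identity is immediate. In the second case, $f(n) \geq q$, so both $n + k$ and $f(n) + k$ are at least $q$, and both sides evaluate to $q + ((\cdot) - q) \mod p$ applied to their respective arguments; since $f(n) \equiv n \pmod{p}$ when $n \geq q$ by construction, we have $f(n) + k \equiv n + k \pmod{p}$, so the two applications of $f$ agree.

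The main claim then follows by applying this auxiliary identity twice: once with $(n, k) = (n_1, n_2)$ to rewrite $f(n_1 + n_2)$ as $f(f(n_1) + n_2)$, and, after using commutativity of $+$, once more with $(n, k) = (n_2, f(n_1))$ to arrive at $f(f(n_1) + f(n_2))$. The result then lifts back to multisets componentwise, yielding $\trunk{m_1 \mcup m_2}_\grammar = \trunk{\trunk{m_1}_\grammar \mcup \trunk{m_2}_\grammar}_\grammar$.

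The argument is almost entirely routine; the only point requiring care is confirming that $f$ depends only on $\grammar$ and on $Y$, so that the same function applies uniformly to $m_1(Y)$, $m_2(Y)$ and their sum. Beyond this bookkeeping and the modular-arithmetic case analysis, there is no hidden subtlety, and I do not anticipate a significant obstacle.
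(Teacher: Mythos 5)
Your proof is correct, and it is essentially the argument the paper has in mind: the paper dismisses this lemma as ``immediate from the definition of $\trunk{\cdot}_\grammar$'', and your pointwise reduction to the numerical identity $f(n_1+n_2)=f(f(n_1)+f(n_2))$, via the auxiliary fact $f(n+k)=f(f(n)+k)$ and the congruence $f(n)\equiv n \pmod{p(Y)}$ for $n\geq q(Y)$, is exactly the routine verification being elided. No gap; your write-up just makes the elementary case analysis explicit.
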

\begin{proof}\emph{Proof.}
  Immediate from the definition of the operation $\trunk{\cdot}_\grammar$.
\end{proof}

Given a class $\class$, a \emph{$\pop$-term} is a
$\fsignature_\class$-term consisting of occurrences of some function
symbol $\pop_{\slabs,\slabs}$ and variables of sort
$\slabs\in\sorts_\class$ only. For a stratified
$\fsignature_\class$-grammar
$\grammar=(\mathcal{X}\uplus\mathcal{Y},\rules)$ and a nonterminal
$X\in\mathcal{X}$, we denote by $X \leadsto_\grammar m$ the existence
of a derivation $X \step{\grammar}^* t$, where $t$ is a $\pop$-term
$t$ and $m$ is the multiset of variables from $t$. Intuitively, $X
\leadsto_\grammar m$ means that $X$ can produce the multiset $m$ of
$\mathcal{Y}$ nonterminals using only the rules of the form
(\ref{it1:syntactic-regular}) and (\ref{it2:syntactic-regular}) from
$\grammar$. This matching relation will be used to define the
interpretation of the signature $\fsignature_\class$ in the recognizer
algebra $\algof{A}_\grammar$.

\begin{proofsTextEnd}
\begin{lemma}\label{fact:trunk}
  Let $\grammar=(\mathcal{X}\uplus\mathcal{Y},\rules)$ be a normalized
  stratified grammar.  For each $X \in \mathcal{X}$ and multiset $m
  \in \mpow{\mathcal{Y}}$, we have $X \leadsto_\grammar m \iff X
  \leadsto_\grammar \trunk{m}_\grammar$.
\end{lemma}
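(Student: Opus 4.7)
The approach is to extract from the definition of $X \leadsto_\grammar m$ the combinatorial structure of its witnessing derivations, and then to show that shifting the number of applications of rules of form~(\ref{it1:syntactic-regular}) by the appropriate amount converts a derivation of $m$ into one of $\trunk{m}_\grammar$ and vice versa.

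First I would observe that since rules of form~(\ref{it1:syntactic-regular}) produce $X$ again on the right-hand side while rules of form~(\ref{it2:syntactic-regular}) consume $X$, any derivation witnessing $X \leadsto_\grammar m$ consists of a prefix of applications of (\ref{it1:syntactic-regular})-rules with left-hand side $X$, followed by a single application of a (\ref{it2:syntactic-regular})-rule with left-hand side $X$. Because $\grammar$ is normalized, for each pair $(X,Y)$ there is at most one such (\ref{it1:syntactic-regular})-rule, whose exponent I denote $q_{X,Y}$ (undefined otherwise). Thus $X \leadsto_\grammar m$ holds iff there exist a (\ref{it2:syntactic-regular})-rule $X \rightarrow \pexp{Y_1}{s_1} \pop \ldots \pop \pexp{Y_k}{s_k}$, whose right-hand side contributes a multiset $s$ (with $s(Y) \leq b(Y)$ by definition of $b(Y)$), together with multipliers $n_Y \geq 0$ for each $Y$ where $q_{X,Y}$ is defined, such that $m(Y) = s(Y) + n_Y \cdot q_{X,Y}$ when $q_{X,Y}$ is defined, and $m(Y) = s(Y)$ otherwise.

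For the direction ``$\Rightarrow$'', I take such a decomposition for $m$ and aim to produce $\trunk{m}_\grammar$ using the same (\ref{it2:syntactic-regular})-rule with modified multipliers $n'_Y$. For each $Y$ with $m(Y) < q(Y)$, we have $\trunk{m}_\grammar(Y) = m(Y)$, so I keep $n'_Y = n_Y$. Otherwise, $m(Y) \geq q(Y) > b(Y) \geq s(Y)$ forces $n_Y > 0$ and in particular $q_{X,Y}$ to be defined. Since $q_{X,Y}$ divides $p(Y)$ by definition of the latter as an lcm, and since $m(Y) - \trunk{m}_\grammar(Y)$ is a multiple of $p(Y)$ by definition of $\trunk{\cdot}_\grammar$, I set $n'_Y \isdef (\trunk{m}_\grammar(Y) - s(Y))/q_{X,Y}$; this is a nonnegative integer because $\trunk{m}_\grammar(Y) \geq q(Y) = b(Y) + p(Y) \geq s(Y)$ and because $\trunk{m}_\grammar(Y) \equiv m(Y) \equiv s(Y) \pmod{q_{X,Y}}$. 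The direction ``$\Leftarrow$'' is symmetric: starting from a derivation of $\trunk{m}_\grammar$ with multipliers $n'_Y$, the same divisibility argument lets me increase $n'_Y$ by $(m(Y) - \trunk{m}_\grammar(Y))/q_{X,Y}$ whenever $m(Y) > \trunk{m}_\grammar(Y)$, with the (\ref{it1:syntactic-regular})-rule for $(X,Y)$ guaranteed to exist because $\trunk{m}_\grammar(Y) \geq q(Y) > b(Y)$ forces $n'_Y \geq 1$.

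The main subtle point is the interplay between $p(Y)$, which is defined as an lcm ranging over all $X' \in \mathcal{X}$ with an (\ref{it1:syntactic-regular})-rule for $Y$, and the per-pair exponent $q_{X,Y}$ actually usable in our derivation from the specific nonterminal $X$. The divisibility $q_{X,Y} \mid p(Y)$ ensures that any multiple of $p(Y)$ can be realized by applying the (\ref{it1:syntactic-regular})-rule associated with $(X,Y)$ the appropriate integral number of times, and this is exactly what closes both directions of the equivalence.
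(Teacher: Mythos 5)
Your proof is correct and takes essentially the same route as the paper's: both arguments come down to adjusting, for each $Y$, the number of applications of the unique rule $X \rightarrow X \pop \pexp{Y}{q_{X,Y}}$ of form (\ref{it1:syntactic-regular}), using that $q_{X,Y}$ divides $p(Y)$, that $m(Y)-\trunk{m}_\grammar(Y)$ is a multiple of $p(Y)$, and that a count $\geq q(Y) > b(Y)$ forces that rule to exist and be used. Your explicit characterization of $\leadsto_\grammar$ (one rule of form (\ref{it2:syntactic-regular}) plus per-$Y$ multiples of $q_{X,Y}$) is just a cleaner bookkeeping of what the paper does by directly removing or adding $nk$ applications of that rule with $p(Y)=kq$.
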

\begin{proof}\emph{Proof.}
  ``$\Rightarrow$'' Let $Y$ be a nonterminal.  If
  $m(Y)=\trunk{m}_\grammar(Y)$ there is nothing to prove, hence we
  assume that $m(Y) > \trunk{m}_\grammar(Y)$, i.e.,
  $\trunk{m}_\grammar(Y) = q(Y) + (m(Y) - q(Y)) \mod p(Y) = b(Y) +
  p(Y)+ (m(Y) - q(Y)) \mod p(Y) < m(Y)$.  Suppose, for a
  contradiction, that $\grammar$ has no rule of the form $X
  \rightarrow \pexp{Y}{q}$, for the chosen nonterminals $X$ and $Y$.
  In this case, $m(Y) = \trunk{m}_\grammar(Y)$, contradiction.  Hence,
  there exists a rule $X \rightarrow X \pop \pexp{Y}{q}$ ($\dagger$)
  and an integer $n \in \nat$ such that $m(Y) = n\cdot p(Y) +
  \trunk{m}_\grammar(Y)$. Because $\grammar$ is normalized, the
  rule ($\dagger$) is unique. Let $X \step{\grammar}^* t$ be the
  derivation that witnesses $X \leadsto_\grammar m$.  This derivation
  consists of one or more applications of the rule ($\dagger$) and one
  or more rules of the form $X \rightarrow \pexp{Y_1}{q_1} \pop \ldots
  \pop \pexp{Y_n}{q_n}$. By removing $nk$ occurrences of the rule
  ($\dagger$), we obtain a derivation $X \step{\grammar}^* u$, where
  $\trunk{m}_\grammar(Y)$ is the number of occurrences of $Y$ in $u$
  and $k\geq1$ is an integer such that $p(Y)=kq$. This reduction is
  repeated for each $Y \in \mathcal{Y}$, until we obtain a derivation
  that witnesses $X \leadsto_\grammar \trunk{m}_\grammar$.

  ``$\Leftarrow$'' Let $Y$ be a nonterminal. If
  $m(Y)=\trunk{m}_\grammar(Y)$ there is nothing to prove, hence we
  assume that $m(Y) > \trunk{m}_\grammar(Y)$, i.e.,
  $\trunk{m}_\grammar(Y) = q(Y) + (m(Y) - q(Y)) \mod p(Y)$.  Suppose,
  for a contradiction, that no rule $X \rightarrow X \pop \pexp{Y}{q}$
  exists in $\grammar$. In this case, $m(Y) = \trunk{m}_\grammar(Y)$,
  contradiction.  Then, let $X \rightarrow X \pop \pexp{Y}{q}$ be such
  a rule and $k \in \nat$ be such that $p(Y)=kq$; by the definition of
  $p(Y)$, such a $k\in\nat$ exists. We extend the derivation $X
  \step{\grammar}^* u$ witnessing $X \leadsto_\grammar
  \trunk{m}_\grammar$ by adding $nk$ steps that apply the ($\dagger$)
  rule to it, where $n\in\nat$ is such that $m(Y)=n \cdot q(Y) +
  \trunk{m}_\grammar(Y)$. By repeating this step for each $Y \in
  \supp{m}$, we obtain a derivation that witnesses $X
  \leadsto_\grammar m$.
\end{proof}
\end{proofsTextEnd}

\begin{example}
  For a stratified grammar $\grammar$ having $\mathcal{X}=\set{X}$,
  $\mathcal{Y}=\set{Y_1,Y_2}$ and the following rules: \begin{align*}
    X \rightarrow X \pop \pexp{Y_1}{3}
    \hspace*{4mm} X \rightarrow X \pop \pexp{Y_2}{5}
    \hspace*{4mm} X \rightarrow \pexp{Y_1}{2} \pop \pexp{Y_2}{3}
  \end{align*}
  $X \leadsto_\grammar m$ iff $m(Y_1) = 2 \mod 3$ and $m(Y_2)
  = 3 \mod 5$.
\end{example}

\section{Regular Grammars}
\label{sec:regular}

We define formally the classes of trees (Definition~\ref{def:trees}),
series-parallel graphs (Definition~\ref{def:sp}) and graphs of
tree-width at most~$2$ (Definition~\ref{def:tw2}). The regular
grammars for each class are stratified grammars whose footprint is
subsumed by a footprint depending on the class. Each of the considered
classes is shown to have a universal aperiodic regular grammar,
ensuring that every recognizable subset of the class is the language
of some regular grammar (Corollary \ref{cor:completeness}). The bulk
of this section is devoted to establishing the reverse direction,
i.e., that the language of each regular grammar is recognizable. This
is achieved by an effective construction of a recognizer from the
given regular grammar. This construction yields a uniform
doubly-exponential upper bound on the size of the recognizer.

In the rest of this section, we prove the following result:

\begin{theorem}\label{thm:main}
  Let $\class$ be one of the classes of trees, (disoriented)
  series-parallel graphs and graphs of tree-width at most~$2$. A set
  $\langu\subseteq\universeOf{C}$ is recognizable (by an aperiodic
  recognizer) if and only if there exists a (aperiodic) regular
  grammar $\grammar$ such that $\langu=\alangof{}{C}{\grammar}$.
\end{theorem}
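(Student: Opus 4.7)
The plan is to prove Theorem \ref{thm:main} for each of the three classes separately, while exploiting a common template. For each class $\class$, we first fix a footprint $\afp_\class$ (which requires a careful class-specific design in the forthcoming subsections) and thereby define regular grammars for $\class$ as stratified grammars $\grammar$ with $\fpof{\grammar} \subseteq \afp_\class$. The theorem then splits cleanly into two independent directions: completeness (every recognizable set is the language of a regular grammar, with aperiodicity preserved) and soundness (the language of every regular grammar is recognizable, with aperiodicity preserved).

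For the completeness direction my approach is to invoke Corollary \ref{cor:completeness}. The only thing to verify, class by class, is that $\class$ is (aperiodic) universally stratified, i.e., that one can exhibit a concrete (aperiodic) stratified grammar $\grammar_\class$ with $\fpof{\grammar_\class} = \afp_\class$ and $\alangof{}{\class}{\grammar_\class} = \universeOf{C}$. For trees, this grammar can be built around a single $\mathcal{X}$-nonterminal describing a root context, together with $\mathcal{Y}$-nonterminals describing single-edge subtrees composed via $\extend{b}$, which yields productions of the canonical stratified shape with $q=1$, hence aperiodic. For series-parallel graphs, I would construct $\grammar_\class$ along the recursive parallel/series decomposition; the two-sorted source structure $\{1,2\}$ dictates that both the ``parallel layer'' rules (form~(\ref{it1:syntactic-regular})) and the ``series skeleton'' rules (form~(\ref{it2:syntactic-regular})) can be written with exponent $1$, so aperiodicity is maintained. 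For graphs of tree-width at most~$2$ one lifts this construction to three sources, relying on the observation that any such graph admits a binary tree decomposition of width~$2$, whose bags can be enumerated by a grammar of the required stratified shape.

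For the soundness direction, the plan is to construct, from a given regular grammar $\grammar = (\mathcal{X} \uplus \mathcal{Y}, \rules)$, an explicit $\fsignature_\class$-recognizer $(\algof{A}_\grammar, B_\grammar)$. I would first normalize $\grammar$ via Lemma \ref{lemma:normal-form} so that for each pair $(X,Y)$ there is at most one rule $X \rightarrow X \pop \pexp{Y}{q}$, incurring only a single-exponential blow-up and preserving aperiodicity. The elements of $\algof{A}_\grammar$ are built from finite sets of reduced multisets $\trunk{m}_\grammar \in \mpow{\mathcal{Y}}$ (possibly enriched with tuple data in the tree-width~$2$ case) describing how any nonterminal $X \in \mathcal{X}$ can derive the current subgraph via the matching relation $X \leadsto_\grammar m$. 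The key invariants that let me interpret each operation of $\fsignature_\class$ on reduced descriptors are Lemma \ref{fact:trunk} (reduction preserves $\leadsto_\grammar$) and Lemma \ref{fact:trunk-two} (reduction commutes with multiset union), together with analogous commutation lemmas, which I would prove, for $\restrict{\slabs}$ and $\rename{\alpha}$. I then take $B_\grammar$ to consist of those descriptors that correspond to derivations starting from some axiom. Aperiodicity of the resulting algebra follows because, under the normalized-aperiodic regime, every idempotent element $\idemof{a}$ already saturates each reduced-multiset slot, so $\idemof{a} \pop^{\algof{A}_\grammar} a = \idemof{a}$. Since each reduced multiset has multiplicities bounded polynomially in $\sizeof{\grammar}$, there are at most $2^{\poly{\sizeof{\grammar}}}$ of them, and the state space (essentially a power set of these) has size at most $2^{2^{\poly{\sizeof{\grammar}}}}$, yielding the claimed uniform $2$-exponential upper bound.

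The principal obstacle will be the soundness direction for graphs of tree-width at most~$2$. Unlike trees, where composition is essentially vertical, or series-parallel graphs, where the two-source interface is fixed, the three-source case requires the descriptors in $\algof{A}_\grammar$ to record not only multisets of $\mathcal{Y}$-nonterminals but also enough information about how the sources interact under $\rename{\alpha}$, $\restrict{\slabs}$ and $\pop_{\slabs_1, \slabs_2}$. The technical work is to formulate descriptors rich enough that all operations of $\fsignature_{\btwclass{2}}$ admit a well-defined interpretation yet small enough to stay within the $2^{2^{\poly{\sizeof{\grammar}}}}$ budget, and then to prove that the resulting homomorphism $\homof{C}{A_\grammar}$ indeed separates $\alangof{}{\class}{\grammar}$ from its complement, i.e., that $B_\grammar = \homof{C}{A_\grammar}(\alangof{}{\class}{\grammar})$. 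The rest is a careful induction on derivations of $\grammar$ together with a routine inspection showing that aperiodicity carries over from the normalized grammar to the constructed algebra.
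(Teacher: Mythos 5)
Your template for the two directions matches the paper's for trees: a universal aperiodic stratified grammar plus Corollary~\ref{cor:completeness} for completeness, and a recognizer whose elements are sets of reduced multisets of $\mathcal{Y}$-nonterminals (with Lemmas~\ref{fact:trunk} and~\ref{fact:trunk-two} doing the work) for soundness. But there are two genuine gaps beyond trees. First, for (disoriented) series-parallel graphs, sets of reduced multisets alone are not an adequate state space: an S-graph is not a parallel composition, and the nondeterminism coming from associativity of $\sop$ cannot be absorbed into a multiset. The paper's recognizer therefore uses, for S-graphs, \emph{pairs} $(S,Q)$ of nonterminals recording a left-to-right partial parse $S \astep{\grammar}^* u \sop Q$ modulo the associativity law, together with decomposition lemmas for derivations modulo $\assoc$ (Lemmas~\ref{lemma:assoc-sp-start},~\ref{lemma:assoc-sp}); your proposal only hints at ``tuple data'' for tree-width~$2$ and provides no mechanism for this in the SP case.

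Second, and more seriously, for tree-width at most~$2$ you plan to work directly with the three-source algebra $\btwclass{2}$, i.e.\ with $\rename{\alpha}$, $\restrict{\slabs}$ and $\pop_{\slabs_1,\slabs_2}$, and you explicitly leave the design of descriptors ``rich enough'' for this signature as an open obstacle. That obstacle is exactly the heart of the theorem, and the paper's answer is to \emph{not} use $\btwclass{2}$: it introduces the derived two-sorted algebra $\algof{G}_2$ (sorts $\set{1}$ and $\set{1,2}$, operations $\pop$, the ternary $\sop$, $\rop$ and edge constants), justified by the block-tree decomposition of connected graphs (Tutte) combined with the fact that every nontrivial block of a tree-width-$2$ graph is a disoriented P-graph (Courcelle's Lemma~6.15, strengthened in Lemma~\ref{lemma:tw2-block-source} so that any vertex can serve as $1$-source). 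This choice gives terms a near-canonical decomposition — unique up to AC of $\pop$ and the associativity-like law (\ref{eq:assoc-tw2}) for $\sop$ — which is precisely what makes the view/profile abstraction (multisets for sort $\set{1}$ and for P-graphs, triples $(S,Q,X)$ for S-graphs) both well-defined and complete, and what makes the universal grammar (mimicking the block tree) possible with the prescribed footprint. Your alternative of ``enumerating bags of a width-$2$ tree decomposition'' neither pins down a footprint for which a universal grammar with $\fpof{\grammar_\class}=\afp_\class$ exists, nor yields the canonical-decomposition property that the soundness construction needs; without replacing it by something like the paper's $\algof{G}_2$ (or supplying genuinely new descriptors for rename/restrict terms), the hardest third of the theorem remains unproved.
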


For the class of trees (Definition~\ref{def:trees}), two terms that
represent the same graph differ only by a reordering of their
$\pop$-subterms, modulo associativity and commutativity. This leads to
an almost canonical decomposition of trees, that is instrumental in
proving the two directions of Theorem~\ref{thm:main}, i.e., the
existence of a universal aperiodic stratified grammar (``only if'')
and of a recognizer for the language of a regular tree grammar
(``if''). In particular, the elements of the recognizer are sets of
multisets of nonterminals used to parse the $\pop$-subterms of trees,
where the outer sets are necessary to deal with the nondeterminism in
the grammar. Hence, the doubly-exponential upper bound on the size of
the recognizer algebra.

\ifLongVersion
The situation is somewhat more complex for (disoriented)
series-parallel graphs, where two terms evaluating to the same graph
may differ by a reordering of their subterms, modulo the associativity
of the serial composition operation, denoted as $\sop$ in the
following. The recognizer of the language of a regular series-parallel
grammar alternates between multisets (to parse parallel compositions)
and pairs of nonterminals (to parse serial compositions). The use of
multisets to parse parallel compositions is similar to the case of
trees, whereas the use of pairs of nonterminals is inspired by the
parsing of a word by a finite automaton. In particular, a serial
composition of two or more subgraphs is always parsed left-to-right.
\fi

For graphs of tree-width at most~$2$, the proof of Theorem~\ref{thm:main} uses a standard decomposition of connected graphs into
blocks (i.e., maximally 2-connected components)~\cite[Theorem
  III.23]{TutteBook} in conjunction with the fact that each block of
tree-width at most~$2$ is a disoriented series-parallel
graph~\cite[Lemma 6.15]{CourcelleV}. For technical purposes, we prove
a stronger version of this result, that allows to pick any vertex for
the $1$-source of a block. The decomposition of a graph of tree-width
$2$ as a tree of disoriented series-parallel blocks leads to the
definition of an algebra (i.e., a class) and of a family of regular
grammars, having sorts $\set{1,2}$ and $\set{1}$ for blocks and block
trees, respectively. Accordingly, the recognizer algebra is
$2$-sorted, multisets of nonterminals sort $\set{1}$ represent block
trees, whereas multisets and pairs of nonterminals of sort $\set{1,2}$
represent disoriented series-parallel blocks.

The upper bound on the size of the recognizer algebra for a regular
grammar generalizes from trees to graphs of tree-width at most
$2$. Despite the best of our efforts, we could not find a matching
lower bound, namely a regular grammar $\grammar$ whose language could
not be recognized by an algebra with less than
$2^{2^{\sizeof{\grammar}}}$ elements.

An application of the effective construction of recognizer algebras is
bounding the time complexity for the problem of inclusion between an
arbitrary context-free language and a regular language:

\begin{theoremE}[][category=proofs]\label{thm:inclusion}
  Let $\class$ be one of the classes of trees, (disoriented)
  series-parallel graphs and graphs of tree-width at most~$2$. The
  problem \emph{``given $\fsignature_\class$-grammars $\grammar$ and
    $\grammar'$ such that $\grammar'$ is regular, does
    $\alangof{}{C}{\grammar} \subseteq \alangof{}{C}{\grammar'}$
    ?''}  belongs to \twoexptime\ and is \exptime-hard, even if
  $\grammar'$ is aperiodic.
\end{theoremE}
\begin{proofE}
  Let $(\algof{A}, \universeOf{B})$ be a recognizer for
  $\alangof{}{C}{\grammar'}$. The existence of such a recognizer
  follows from the fact that $\grammar'$ is regular for $\class$,
  where $\class$ is the class of either trees (Theorem
  \ref{thm:tree-reg}), series-parallel graphs (Theorem
  \ref{thm:sp-reg}), or graphs of tree-width $2$ at most (Theorem
  \ref{thm:tw2-reg}).  Moreover, we have $\cardof{\universeOf{A}} \in
  2^{2^\poly{\sizeof{\grammar'}}}$ and that each function
  $f^\algof{A}$ can be computed in time
  $2^\poly{\sizeof{\grammar'}}$, for each of the three classes
  considered. Then, $(\algof{A},\universeOf{A}\setminus
  \universeOf{B})$ is a recognizer for the complement language
  $\overline{ \alangof{}{C}{\grammar'}} \isdef \universeOf{C}
  \setminus \alangof{}{C}{\grammar'}$.  By Theorem
  \ref{thm:filtering}, we obtain
  $\alangof{}{C}{\grammar^{(\algof{A},\universeOf{A}\setminus
      \universeOf{B})}} = \alangof{}{C}{\grammar} \cap \overline{
    \alangof{}{C}{\grammar'}} = \alangof{}{C}{\grammar} \setminus
  \alangof{}{C}{\grammar'}$.  Hence
  $\alangof{}{C}{\grammar^{(\algof{A},\universeOf{A}\setminus
      \universeOf{B})}} = \emptyset$ if and only if
  $\alangof{}{C}{\grammar} \subseteq \alangof{}{C}{\grammar'}$.
  Since $\grammar^{(\algof{A}, \universeOf{A}\setminus
    \universeOf{B})}$ can be computed in time:
  \begin{align*}
  \bigO\left(\sizeof{\grammar} \cdot
  \left(2^{2^\poly{\sizeof{\grammar'}}}\right)^{\sizeof{\grammar}} \cdot
  \left(2^\poly{\sizeof{\grammar'}}\right)^{\sizeof{\grammar}}\right) \\
  \in 2^{2^{\poly{\sizeof{\grammar}+\sizeof{\grammar'}}}}
  \end{align*}
  and the emptiness of a grammar can be checked in linear time, we
  obtain that the inclusion problem $\alangof{}{C}{\grammar}
  \subseteq \alangof{}{C}{\grammar'}$ belongs to \twoexptime.

  For the \exptime-hardness lower bound, we deal separately with the case of trees and series-parallel graphs.
  Since the class of graphs of tree-width $2$ at most subsumes both trees and series-parallel graphs, either result implies \exptime-hardness of the inclusion problem for the class of graphs of tree-width $2$ at most.

  We next prove the \exptime-hardness of the inclusion problem for trees (over a binary alphabet).
  We reduce from the inclusion problem between languages of
  nondeterministic bottom-up tree automata
  $A_i=(Q_i,I_i,\delta_i,F_i)$ over an arbitrary ranked alphabet
  $\alphabet$, i.e., whose symbols $f \in \alphabet$ have associated
  ranks $\arityof{f}\geq0$, where $Q_i$ is a finite set of states,
  $I_i,F_i \subseteq Q_i$ are initial and final states and $\delta_i$
  is a set of transitions of the form $f(q_1,\ldots,q_{\arityof{f}})
  \arrow{}{} q$, for $i=1,2$.
  The language $\langof{}{A_i}$ is the set of ranked trees $t$ over $\alphabet$ that can be labeled with states by the automaton $A_i$ such that the following hold:
  \begin{compactitem}[-]
  \item each subtree $f(t_1,\ldots,t_{\arityof{f}})$ of $t$ is
    recursively labeled by $q$ if $t_j$ is labeled by $q_j$, for $j
    \in \interv{1}{\arityof{f}}$ and $f(q_1,\ldots,q_{\arityof{f}})
    \arrow{a}{} q \in \delta_j$,
  \item the root of $t$ is labeled by a final state.
  \end{compactitem}
  The problem ``given nondeterministic bottom-up tree automata $A_1$ and $A_2$, does $\langof{}{A_1} \subseteq \langof{}{A_2}$?''  is  known to be \exptime-complete \cite[Corollary 1.7.9]{comon:hal-03367725}.

  We now state how tree automata can be encoded by regular tree
  grammars.  For this we first fix some ranked alphabet $\alphabet$.
  Let $m = \max \{ \arityof{a} \mid a \in \alphabet \}$.  We define
  the alphabet $\alphabetTwo_\alphabet = \{e_f \mid f \in \alphabet \}
  \cup \{ p_i \mid 1 \le i \le m \}$ for some fresh symbols $e_f$ and
  $p_i$, with $\arityof{e_f} = 2$ and $\arityof{p_i} = 2$ for all $f
  \in \alphabet$ and $1 \le i \le m$.  Let $\class$ be the class of
  trees over the alphabet $\alphabetTwo$.  We define the encoding
  $\encof{\tree}$ of a tree over the alphabet $\alphabet$ by a tree
  over the alphabet $\alphabetTwo_\alphabet$; the encoding is defined
  via the unique term $t$ such that $t^\algof{T} = \tree$:
  \begin{align*}
  \encof{\tree} \isdef \left\{\begin{array}{l}
    ~ \emptygraph_{\set{1}}^\algof{T} \text{, if } t=  \emptygraph_{\set{1}} \\
    ~ \extend{e_f}^\algof{T}(\pop^\algof{T}_{i\in\interv{1}{_\arityof{f}}} \extend{p_i}^\algof{T}(\encof{t_i})),  \\
    \hspace*{5mm} \text{if } t = f(t_1,\ldots,t_\arityof{f}),~ \arityof{f}=n
    \end{array}\right.
  \end{align*}

  We now show how to encode some tree automaton  $A=(Q,I,\delta,F)$ over the alphabet $\alphabet$ by a regular tree grammar $\grammar_A =(\nonterm,\rules)$ over the alphabet $\alphabetTwo_\alphabet$.
  We choose $\nonterm$ as the union of the non-terminals $Y_{i,q}$, for all $1 \le i \le m$ and $q \in \states$, the non-terminals $X_{i,q,f}$, for all $1 \le i \le m$, $q \in \states$ and $f \in \alphabet$, and the non-terminal $X$.
  For every transition  $f(q_1,\ldots,q_{\arityof{f}})
  \arrow{}{} q$ of $A$ and $1 \le i \le m$, we add to $\rules$ the rules
  \begin{align*}
  Y_{q} \rightarrow & ~ \extend{e_f}(X_{f,q}) &  X_{f,q} \rightarrow & ~ Y_{1,q_1} \pop \cdots \pop Y_{\arityof{f},q_{\arityof{f}}} \\
  Y_{i,q} \rightarrow & ~ \extend{p_i}(X_{q}) &
  X_{q} \rightarrow & ~ Y_q
  \end{align*}
  The axioms of $\grammar$ consist of the rules $\rightarrow Y_q$ such that $q \in \initstates$.
  It is easy to check that
  $\alangof{}{T}{\grammar_A} = \encof{\langof{}{A}}$ and that $\grammar_A$ is aperiodic (actually, it has no rules of the form (\ref{it1:syntactic-regular})).
  Then, given
  nondeterministic bottom-up automata $A_1$ and $A_2$, we have
  $\langof{}{A_1} \subseteq \langof{}{A_2}$ if and only if $\encof{\langof{}{A_1}} \subseteq \encof{\langof{}{A_2}}$ if and only if $\alangof{}{T}{\grammar_{A_1}} \subseteq
  \alangof{}{T}{\grammar_{A_2}}$.
  Since the time needed to compute $\grammar_A$ is polynomial in the size of $A$, for any bottom-up
  tree automaton $A$, the inclusion problem for the class of trees over an alphabet containing only binary symbols, where the
  right-hand side is an aperiodic regular grammar, is \exptime-hard.

  Now, let $\class$ be the class of SP graphs over an alphabet
  $\alphabet$ of binary symbols.  We reduce from the inclusion problem
  for trees, where the right-hand side is an aperiodic regular
  grammar.  To this end, we encode a tree $\tree$ over an alphabet
  $\alphabet$ of binary symbols of binary edge labels as an SP graph
  $\spof{\tree}$ over the edge alphabet $\alphabetTwo = \alphabet \cup
  \set{e}$, for some fresh edge label $e$, where $e$ is used to link
  the leaves of the tree to the $2$-source of the SP graph; the
  $1$-source is the root of the tree. It is clear that this encoding
  is injective, hence $\alangof{}{T}{\grammar} \subseteq
  \alangof{}{T}{\grammar'}$ if and only if
  $\spof{\alangof{}{T}{\grammar}} \subseteq
  \spof{\alangof{}{T}{\grammar'}}$, for any two
  $\fsignature_\algof{T}$-grammars $\grammar$ and $\grammar'$.

  First, given an arbitrary tree grammar $\grammar$ over $\alphabet$,
  we define the $\fsignature_\algof{SP}$-grammar $\overline{\grammar}$
  having the same nonterminals as $\grammar$ and the rules:
  \begin{compactitem}[-]
  \item $Z \rightarrow \sgraph{a}_{1,2} \sop Z'$, for each rule
    $Z \rightarrow \extend{\alpha}(Z')$ of $\grammar$, where $a
    \in \alphabet$,
  \item $Z \rightarrow \sgraph{e}_{1,2}$, for each rule $Z \rightarrow
    \emptygraph_{\set{1}}$ of $\grammar$.
  \item $\rightarrow Z$, for each axiom $\rightarrow Z$ of $\grammar$.
  \end{compactitem}
  It is easy to check that $\spof{\alangof{}{T}{\grammar}} =
  \alangof{}{SP}{\overline{\grammar}}$: each term generated by
  $\grammar$ corresponds to exactly one term generated by
  $\overline{\grammar}$, starting from the same nonterminal, in which each subterm $\extend{\alpha}(x)$ is replaced by
  $\sgraph{a}_{1,2} \sop x$ and each subterm
  $\emptygraph_{\set{1}}$ by $\sgraph{e}_{1,2}$.

  Second, given a regular tree grammar
  $\grammar=(\mathcal{X}\uplus\mathcal{Y},\rules)$ over the alphabet $\alphabet$, we define the regular series-parallel
  grammar $\widetilde{\grammar}=(\set{P_X \mid X \in \mathcal{X} \cup \alphabetTwo} \uplus \set{S_Y \mid Y \in \mathcal{Y}},
  \widetilde{\rules})$ with rules:
  \begin{align*}
    S_Y \rightarrow & ~P_a \sop P_X, P_X \rightarrow ~S_{Y_1} \pop \ldots \pop S_{Y_n}, \\
    \text{ if } & Y \rightarrow \extend{a}(X),~ X \rightarrow Y_1 \pop \ldots \pop Y_n \in \rules \text{ for } n \geq 2 \\
    S_{Y_1} \rightarrow & ~P_a \sop S_{Y_2},
    \text{ if } Y_1 \rightarrow \extend{\alpha}(X),~ X \rightarrow Y_2 \in \rules \\
    S_Y \rightarrow & ~P_a \sop P_e,
    \text{ if } Y \rightarrow \extend{a}(X),~ X \rightarrow \emptygraph_{\set{1}} \in \rules \\
    P_b \rightarrow  & ~\sgraph{b}_{1,2},
    \text{ if } b \in \alphabetTwo \\
    \rightarrow & ~P_X,
    \text{ if } \rightarrow X, X \rightarrow Y_1 \pop \ldots \pop Y_n \in \rules, \text{, for } n \geq 2 \\
    \rightarrow & ~S_Y,
    \text{ if } \rightarrow X, X \rightarrow Y \in \rules \\
    \rightarrow & ~P_e,
    \text{ if } \rightarrow X, X \rightarrow \emptyset_{\set{1}} \in \rules
  \end{align*}
  It is easy to see that $\widetilde{\grammar}$ is an aperiodic
  regular series-parallel grammar if $\grammar$ is an aperiodic tree grammar and, moreover, $\spof{\alangof{}{T}{\grammar}} =
  \alangof{}{SP}{\overline{\grammar}}$.

  Then, $\spof{\alangof{}{T}{\grammar}} \subseteq
  \spof{\alangof{}{T}{\grammar'}}$ if and only if
  $\alangof{}{SP}{\overline{\grammar}_1} \subseteq
  \alangof{}{SP}{\widetilde{\grammar}_2}$, for any two
  $\spsignature$-grammars $\grammar$ and $\grammar'$, such that
  $\grammar'$ is regular.  Since the time needed to compute both
  $\overline{\grammar}_1$ and $\widetilde{\grammar}_2$ is polynomial
  in the sum of the sizes of $\grammar$ and $\grammar'$, the
  inclusion problem for the series-parallel class, where the
  right-hand side is an aperiodic regular grammar, is \exptime-hard.
\end{proofE}

We believe that the above upper bound can be improved for aperiodic
regular grammars and conjecture that the size of the minimal
recognizer is bounded by a simple exponential in the size of the
grammar. If this conjecture is proved, a consequence is that the
inclusion problem between a context-free language and an aperiodic
regular language is \exptime-complete, at least for the classes of
graphs considered in this paper, thus matching the complexity of the
inclusion between ranked tree languages~\cite[Corollary
  1.7.9]{comon:hal-03367725}.

Another application of effectively constructible recognizers is the
computation of the union, intersection and complement of recognizable
sets, within their respectively classes. \ifLongVersion We recall that
a boolean combination of recognizable languages (using union,
intersection and complement w.r.t. the considered class) is recognized
by an algebra of size polynomial in the size of the input recognizers.
As a consequence, the generalized inclusion problem \emph{``given
$\fsignature_\class$-grammars $\grammar_0, \grammar, \ldots,
\grammar_n$ such that $\grammar, \ldots, \grammar_n$ are regular and a
boolean combination $\beta(X_1,\ldots,X_n)$, does
$\alangof{}{C}{\grammar_0} \subseteq
\beta(\alangof{}{C}{\grammar},\ldots,\alangof{}{C}{\grammar_n})$?''}
has the same time complexity as the inclusion problem considered by
Theorem~\ref{thm:inclusion}. \else As a side remark, the generalized
inclusion problem, of a context-free language into a boolean
combination of recognizable languages has the same \twoexptime\ upper
bound as the problem from Theorem~\ref{thm:inclusion}.  \fi

\subsection{Trees}
\label{subsec:trees-reg}

Let $\alphabet$ be a finite alphabet of edge labels, of arities
$\arityof{a} \geq 1$ for each $a \in \alphabet$.  We recall that the
class $\algof{T}$ of unordered trees has signature
$\fsignature_\algof{T} = \set{\extend{b} \mid b \in \alphabet} \cup
\set{\pop,\emptygraph_{\set{1}}}$ and sort $\set{1}$, where the
$1$-source of a tree is its root (Definition~\ref{def:trees}). For
simplicity, the footprint $\afp_\algof{T}$ of the class of trees is
written using $\X$ and $\Y$ instead of $\X_{\set{1}}$ and
$\Y_{\set{1}}$, respectively:

\begin{definition}\label{def:tree-reg}
  A \emph{regular tree grammar} is a stratified
  $\fsignature_\algof{T}$-grammar
  $\grammar=(\mathcal{X}\uplus\mathcal{Y},\rules)$ such that
  $\fpof{\grammar} \subseteq \afp_\algof{T}$, where:
  \begin{align*}
    \afp_\algof{T}
    \isdef & ~\set{\X \rightarrow \pexp{\Y}{\geq 0}} \cup \set{\Y
      \rightarrow \extend{b}(\X,\ldots,\X) \mid b \in \alphabet} \\
    & \cup \set{\rightarrow \X}
  \end{align*}
\end{definition}

To establish the ``only if'' direction of Theorem~\ref{thm:main} for
the class $\algof{T}$ it is sufficient to prove the existence of an
aperiodic universal regular tree grammar (Corollary
\ref{cor:completeness}). Lemma \ref{lemma:universal-synt-reg-tree}
below shows that such a grammar exists. To improve clarity, we
annotate each rule with its type, according to
Definition~\ref{def:stratified-grammar}:

\begin{lemmaE}[][category=proofs]\label{lemma:universal-synt-reg-tree}
  Let $\grammar_\algof{T} \isdef(\set{X,Y},\rules)$ be the regular
  tree grammar having the following rules:
  \begin{align*}
    Y \rightarrow & ~\extend{a}(X, \ldots, X) \text{, for all } a \in \alphabet \text{ (\ref{it31:syntactic-regular})} \\
    X \rightarrow & ~X \pop Y \text{ (\ref{it1:syntactic-regular})} \hspace*{5mm}
    X \rightarrow \emptygraph_{\set{1}} \text{ (\ref{it2:syntactic-regular})} \hspace*{5mm}
    \rightarrow X \text{ (\ref{it4:syntactic-regular})}
  \end{align*}
  Then, we have $\alangof{}{\algof{T}}{\grammar_\algof{T}} = \universeOf{T}$.
\end{lemmaE}
\begin{proofE}
  ``$\subseteq$'' Obvious, as $\grammar_\algof{T}$ uses only
  operations from $\fsignature_\algof{T}$. ``$\supseteq$''.  Let $T
  \in \universeOf{T}$ be a tree.  Let $t$ be an
  $\fsignature_\algof{T}$-term with $t^\algof{T} = T$.  By a
  straight-forward induction on the structure of $t$ we can establish
  that $T \in \alangof{}{\algof{T}}{\grammar_\algof{T}}$.
\end{proofE}

Note that the rule $X \rightarrow \emptygraph_{\set{1}}$ is of the
form (\ref{it2:syntactic-regular}) because the right-hand side is an
empty parallel composition. It is easy to see that
$\fpof{\grammar_\algof{T}}$ is a subset of the footprint
$\afp_\algof{T}$ of the class of trees (Definition~\ref{def:tree-reg})
and the only rule of the form (\ref{it1:syntactic-regular}) has
exponent $1$, hence $\grammar_\algof{T}$ is an aperiodic regular tree
grammar.

\begin{figure}[t!]
  \centerline{\input{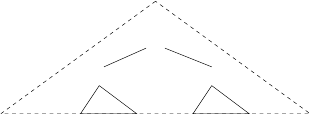_t}}
  \caption{A view $\mset{Y_1,\ldots,Y_n}$ of a tree
    $(v_1\pop\ldots\pop v_n)^\algof{T}$.}
  \label{fig:tree-view}
  \vspace*{-\baselineskip}
\end{figure}

To establish the ``if'' direction of Theorem~\ref{thm:refinement} for
the class $\algof{T}$, let us fix a normalized regular tree grammar
$\grammar=(\mathcal{X}\uplus\mathcal{Y},\rules)$. \ifLongVersion We
require $\grammar$ to be normalized
(Definition~\ref{def:normal-form}), in order to use
Lemma~\ref{fact:trunk}. \fi Assuming that $\grammar$ is normalized
loses no generality (Lemma~\ref{lemma:normal-form}). Intuitively, a
view of a tree $\tree$ is a multiset of $\mathcal{Y}$ nonterminals
that gives the complete information on how $\tree$ is parsed by a
subsequence of some derivation of $\grammar$. Reduced views provide a
finitary abstraction of this information, that loses no precision. A
profile (i.e., the set of all possible reduced views) describes all
the possible ways in which $\tree$ can be parsed by the
grammar. Figure \ref{fig:tree-view} depicts this intuition, formalized
below:

\newcommand{\recgrammar}{\grammar_\universeOf{T}^{(\algof{A},B)}}

\begin{definition}\label{def:tree-view}
  Let $\tree$ be a tree and
  $\grammar=(\mathcal{X}\uplus\mathcal{Y},\rules)$ be a normalized
  regular tree grammar. A \emph{view of $\tree$ for $\grammar$} is a
  multiset $\mset{Y_1,\ldots,Y_n} \in \mpow{\mathcal{Y}}$ such that
  there exist complete derivations $Y_i \step{\grammar}^* v_i$, for
  all $i \in \interv{1}{n}$ and $\tree = (v_1 \pop \ldots \pop
  v_n)^\algof{T}$.  A \emph{reduced view} of $\tree$ is a multiset
  $\trunk{m}_\grammar$, where $m$ is a view $m$ of $\tree$, for
  $\grammar$. The \emph{profile of $\tree$ for $\grammar$} is the set
  $h_\grammar(\tree) \isdef \{\trunk{m}_\grammar \mid m \text{ is a view of
  } \tree \text{ for } \grammar\}$.
\end{definition}


We shall build a $\fsignature_\algof{T}$-recognizer
$(\algof{A}_\grammar,B_\grammar)$ such that
$\alangof{}{T}{\grammar}=h_\grammar^{-1}(B_\grammar)$, where
$h_\grammar$ is the function that maps a tree to its profile. The
$\fsignature_\algof{T}$-algebra $\algof{A}_\grammar$ has domain
$\universeOf{A}_\grammar \isdef \set{h_\grammar(\tree) \mid \tree \in
  \universeOf{T}}$. The interpretations of the function symbols from
$\fsignature_\algof{T}$ are the least sets built according to the
inference rules from Figure \ref{fig:tree-alg}.

Note that the algebra $\algof{A}_\grammar$ is finite, because there
are finitely many reduced views, i.e., a reduced view counts each
nonterminal $Y \in \mathcal{Y}$ up to a bound depending on the
grammar.  Moreover, the elements $a \in \universeOf{A}$ are sets of
reduced views, hence there are finitely many such sets. We define
$\sizeof{a}\isdef\sum_{m\in a}\cardof{m}$, for each element $a \in
\universeOf{A}$. The following lemma gives an upper bound on the size
of the elements of $\algof{A}_\grammar$, the cardinality of
$\universeOf{A}_\grammar$ and the time needed to compute the
interpretation of a function symbol:

\begin{lemmaE}[][category=proofs]\label{lemma:size-of-tree-rec}
  Let $\grammar = (\mathcal{X}\uplus\mathcal{Y}, \rules)$ be a
  normalized regular tree grammar.  Then,
  $\cardof{\universeOf{A}_\grammar} \in
  2^{{\sizeof{\grammar}}^{\poly{\cardof{\mathcal{Y}}}}}$, $\sizeof{a}
  \in {\sizeof{\grammar}}^{\poly{\cardof{\mathcal{Y}}}}$, for each $a
  \in \universeOf{A}_\grammar$ and
  $f^{\algof{A}_\grammar}(a_1,\ldots,a_{\arityof{f}})$ is computable
  in time ${\sizeof{\grammar}}^{\poly{\cardof{\mathcal{Y}}}}$, for
  each function symbol $f \in \fsignature_\algof{T}$ and all
  $a_1,\ldots,a_{\arityof{f}} \in \universeOf{A}_\grammar$.
\end{lemmaE}
\begin{proofE}
  We observe that:
  \begin{align*}
    b(Y) = & ~\max\set{q_i \mid X \rightarrow \pexp{Y_1}{q_1} \pop \ldots \pop \pexp{Y_n}{q_n} \in \rules,~ Y_i = Y} \\
    \le & ~\sizeof{\grammar}, \\
    p(Y) = & ~\max\set{1,\lcm\set{q \mid X \rightarrow X \pop \pexp{Y}{q} \in \rules}} \\
    \le & ~\sizeof{\grammar}^{\cardof{\mathcal{Y}}},\\
    q(Y) = & ~b(Y) + p(Y) \le \sizeof{\grammar}^{\cardof{\mathcal{Y}}} + \sizeof{\grammar} \\
    \in & ~{\sizeof{\grammar}}^{\poly{\cardof{\mathcal{Y}}}}
  \end{align*}
  We then set $\mathcal{Q} \isdef \max\set{q(Y) \mid Y \in
    \mathcal{Y}} \in
  {\sizeof{\grammar}}^{\poly{\cardof{\mathcal{Y}}}}$.  There are at
  most $\mathcal{Q}^{\cardof{\mathcal{Y}}} \in
  {\sizeof{\grammar}}^{\poly{\cardof{\mathcal{Y}}}}$ different reduced
  multisets and hence at most
  $2^{{\sizeof{\grammar}}^{\poly{\cardof{\mathcal{Y}}}}}$ different
  sets of reduced multisets, thus $\cardof{\universeOf{A}} \in
  2^{{\sizeof{\grammar}}^{\poly{\cardof{\mathcal{Y}}}}}$.

  We note that an element $a \in \universeOf{A}$ is a set of reduced
  multisets and there are at most $\mathcal{Q}^{\cardof{\mathcal{Y}}}
  \in {\sizeof{\grammar}}^{\poly{\cardof{\mathcal{Y}}}}$ different
  reduced multisets $m$.  We then observe that $\cardof{m} \le
  \cardof{\mathcal{Y}} \cdot \mathcal{Q} \in {\sizeof{\grammar}}^{\poly{\cardof{\mathcal{Y}}}}$.
  Hence, we obtain that $\sizeof{a} \in {\sizeof{\grammar}}^{\poly{\cardof{\mathcal{Y}}}}$.

  We now compute an upper bound on the time needed to evaluate the
  function $f^\algof{A}(a_1,\ldots,a_n)$, for each function symbol $f
  \in \fsignature_\algof{T}$ of arity $\arityof{f}=n \geq 0$ and
  elements $a_1,\ldots,a_n \in
  \universeOf{A}_\grammar$: \begin{itemize}[-]
  \item $\emptygraph_{\set{1}}^\algof{A} = \set{ \emptymset }$
    requires constant time.
  \item $a_1 \pop^\algof{A} a_2$ requires computing $\trunk{m_1 \mcup
    m_2}$ for all $m_1 \in a_1$ and $m_2 \in a_2$. We first note that
    computing $\trunk{m_1 \mcup m_2}$ requires a constant number of
    arithmetic operations for every $Y \in \mathcal{Y}$. Hence,
    $\trunk{m_1 \mcup m_2}$ can be computed in
    $\poly{\cardof{\mathcal{Y}} \cdot \mathcal{Q}} \in
    {\sizeof{\grammar}}^{\poly{\cardof{\mathcal{Y}}}}$.  As we have
    to do this computation for at most
    ${\sizeof{\grammar}}^{\poly{\cardof{\mathcal{Y}}}}$ multisets
    $m_1 \in a_1$ and $m_2 \in a_2$, we get that $a_1 \pop^\algof{A}
    a_2$ can be computed in
    ${\sizeof{\grammar}}^{\poly{\cardof{\mathcal{Y}}}}$.
  \item $\extend{b}^\algof{A}(a_1,\ldots,a_{\arityof{b}-1})$ requires
    an iteration over all rules $Y \rightarrow
    \extend{b}(X_1,\ldots,X_{\arityof{b}-1})$ of which there are at
    most $\sizeof{\grammar}$ many and over all $m_i \in a_i$ of which
    there are at most
    ${\sizeof{\grammar}}^{\poly{\cardof{\mathcal{Y}}}}$ many.  We then
    need to check whether $X_i \leadsto_\grammar m_i$, for all $i \in
    \interv{1}{\arityof{b}-1}$. This check can be done by going over
    all rules of the form (\ref{it2:syntactic-regular}) with some
    left-hand side $X_i$ and then performing a modulo computation
    w.r.t. the rules of the form (\ref{it1:syntactic-regular}) with
    left-hand size $X_i$ (we recall that in $\grammar$ there is at
    most one such rule for every $Y \in \mathcal{Y}$. Most precisely,
    for each rule $X_j \rightarrow \pexp{Y_1}{q_1} \pop \ldots
    \pexp{Y_k}{q_k}$ and each $Y_\ell$, $\ell\in \interv{1}{k}$, one
    must check that $r(j,\ell)$ divides $m_i(Y_\ell) - q_\ell$, where
    $X_j \rightarrow X_j \pop \pexp{Y_\ell}{r(j,\ell)}$ is the unique
    rule of the form (\ref{it1:syntactic-regular}) for $X_j$ and
    $Y_\ell$. Since all numbers are encoded in unary, this check takes
    time at most $(m_i(Y_\ell)+q_\ell) \cdot r(j,\ell) \leq
    (\mathcal{Q}+\sizeof{\grammar}) \cdot \sizeof{\grammar} \in
    {\sizeof{\grammar}}^{\poly{\cardof{\mathcal{Y}}}}$.
    Then, $\extend{b}^\algof{A}(a_1,\ldots,a_{\arityof{b}-1})$ can be
    computed in ${\sizeof{\grammar}}^{\poly{\cardof{\mathcal{Y}}}}$.
  \end{itemize}
\end{proofE}

The following two lemmas establish that the language of the given
regular tree grammar $\grammar$ is recognized by
$(\algof{A}_\grammar,B_\grammar)$, where $B_\grammar \isdef
\set{h_\grammar(\tree) \mid \tree \in \alangof{}{T}{\grammar}}$, via
the homomorphism $h_\grammar$ between $\algof{T}$ and
$\algof{A}_\grammar$. A consequence of Lemma
\ref{lemma:universal-synt-reg-tree} is that the $\algof{T}$ algebra is
representable, meaning that $\algof{A}_\grammar$ is also representable
and $h_\grammar$ is the only such homomorphism.

\begin{figure}[t!]
{\small\begin{center}
  \begin{minipage}{2cm}
    \begin{prooftree}
      \AxiomC{$\begin{array}{c} \\\\ \end{array}$}
      \RightLabel{$\emptygraph$}
      \UnaryInfC{$\emptymset \in \emptygraph_{\set{1}}^{\algof{A}_\grammar}$}
    \end{prooftree}
  \end{minipage}
    \begin{minipage}{5cm}
    \begin{prooftree}
      \AxiomC{$\begin{array}{c} \\ \set{m_i \in a_i}_{i=1,2} \end{array}$}
      \RightLabel{$\pop$}
      \UnaryInfC{$\trunk{m_1+m_2}_\grammar \in a_1 \pop^{\algof{A}_\grammar} a_2$}
    \end{prooftree}
    \end{minipage}

    \vspace*{.5\baselineskip}
    \begin{minipage}{5cm}
      \begin{prooftree}
        \AxiomC{$\begin{array}{c}
            \set{X_i \leadsto_\grammar m_i \in a_i}_{i=1,\ldots,n} \\
            Y \rightarrow \extend{b}(X_1,\ldots,X_{n}) \in \rules
          \end{array}$}
        \RightLabel{$\extend{}$}
        \UnaryInfC{$\mset{Y} \in \extend{b}^{\algof{A}_\grammar} (a_1,\ldots,a_{n})$}
      \end{prooftree}
    \end{minipage}
\end{center}}
\caption{The interpretation of the signature $\fsignature_\algof{T}$ in $\algof{A}_\grammar$}
\vspace*{-\baselineskip}
\label{fig:tree-alg}
\end{figure}

\begin{lemmaE}[][category=proofs]\label{lemma:tree-view-membership}
  For all $\tree_1, \tree_2 \in \universeOf{T}$, if
  $h_\grammar(\tree_1) = h_\grammar(\tree_2)$ then $\tree_1 \in
  \alangof{}{\algof{T}}{\grammar} \iff \tree_2 \in
  \alangof{}{\algof{T}}{\grammar}$.
\end{lemmaE}
\begin{proofE}
  We assume $\tree_1 \in \alangof{}{\algof{T}}{\grammar}$. Then, there
  is a $X$-derivation of $\tree_1$, for some axiom $\rightarrow X$ of
  $\grammar$. Without loss of generality, each such derivation can be
  reordered such that it starts with zero or more rules of type
  (\ref{it1:syntactic-regular}), followed by a single rule of type
  (\ref{it2:syntactic-regular}). Let $m_1$ be the multiset of
  nonterminals from $\mathcal{Y}$ resulting from this prefix of the
  derivation, namely:
  \begin{align*}
    X \step{\grammar}^* \pop_{Y \in \mathcal{Y}} \pexp{Y}{m_1(Y)}
  \end{align*}
  By Definition~\ref{def:tree-view}, $m_1$ is a view of $\tree_1$,
  hence $\trunk{m_1} \in h_\grammar(\tree_1)$ and $\trunk{m_1} \in
  h_\grammar(\tree_2)$ follows from the equality $h_\grammar(\tree_1)
  = h_\grammar(\tree_2)$. Then, there exists a view $m_2$ of $\tree_2$
  such that $\trunk{m_1} = \trunk{m_2}$. Since $X \leadsto_\grammar
  m_1$, we obtain $X \leadsto_\grammar \trunk{m_1}_\grammar =
  \trunk{m_2}_\grammar$ and $X \leadsto_\grammar m_2$, by Lemma
  \ref{fact:trunk}. Because $m_2$ is a view of $\tree_2$ and $X
  \leadsto_\grammar m_2$, by Defintion \ref{def:tree-view}, there
  exists a complete $X$-derivation:
  \begin{align*}
    X \step{\grammar}^* \pop_{Y \in \mathcal{Y}} \pexp{Y}{m_2(Y)} \step{\grammar}^* \pop_{Y \in \mathcal{Y},~ i \in \interv{1}{m_2(Y)}} v(Y,i)
  \end{align*}
  where $v(Y,i)$ is a ground term such that $Y_i \step{\grammar}^*
  v(Y,i)$, for each $Y \in \mathcal{Y}$ and $i \in \interv{1}{m_2(Y)}$
  and $\tree_2 = (\pop_{Y \in \mathcal{Y},~ i \in \interv{1}{m_2(Y)}}
  v(Y,i))^\algof{T}$. Because $\rightarrow X$ is an axiom of
  $\grammar$, we obtain $\tree_2 \in
  \alangof{}{\algof{T}}{\grammar}$. The other direction is symmetric.
\end{proofE}

\begin{lemmaE}[][category=proofs]\label{lemma:tree-view-homomorphism}
  $h_\grammar$ is a homomorphism between $\algof{T}$ and
  $\algof{A}_\grammar$.
\end{lemmaE}
\begin{proofE}
  We prove the following points:

  \vspace*{.5\baselineskip}
  \noindent \underline{$h_\grammar(\emptygraph_{\set{1}}^\algof{T}) =
    \emptygraph_{\set{1}}^{\algof{A}_\grammar}$}: the tree having one
  vertex and no edges is an empty $\pop$-composition of trees derived
  starting with nonterminals from $\mathcal{Y}$. Hence,
  $h_\grammar(\emptygraph_{\set{1}}^\algof{T}) = \set{\emptymset} =
  \emptygraph_{\set{1}}^{\algof{A}_\grammar}$, by the definition of
  $\emptygraph_{\set{1}}^{\algof{A}_\grammar}$.

  \vspace*{.5\baselineskip}
  \noindent \underline{$h_\grammar(\tree_1 \pop^\algof{T} \tree_2) =
    h_\grammar(\tree_1) \pop^{\algof{A}_\grammar}
    h_\grammar(\tree_2)$}: ``$\subseteq$'' Let $m \in
  h_\grammar(\tree_1 \pop^\algof{T} \tree_2)$ be a reduced view.
  Then, by Definition~\ref{def:tree-view}, there is a view $m'$ with
  $m' = \trunk{m}_\grammar$ such that for each $Y \in \mathcal{Y}$ and
  $i \in \interv{1}{m(Y)}$, there exists a complete derivation $Y
  \step{\grammar}^* v(Y,i)$, where $v(Y,i)$ is a ground term, such
  that $\tree_1 \pop^\algof{T} \tree_2 = \pop^\algof{T}_{Y \in
    \mathcal{Y},~ i \in \interv{1}{m'(Y)}} v(Y,i)^\algof{T}$.  By
  splitting the set of terms $\set{v(Y,i) \mid Y \in \mathcal{Y}, i
    \in \interv{1}{m'(Y)}}$ according to whether $v(Y,i)^\algof{T}$ is
  a subtree of $\tree_1$ or $\tree_2$, we obtain multisets $m_1', m_2'
  \in \mpow{\mathcal{Y}}$ such that $m' = m_1' + m_2'$ and $\tree_j =
  (\pop_{Y \in \mathcal{Y}, i \in \interv{1}{m_j'(Y)}}
  v(Y,i))^\algof{T}$, for $j = 1,2$. Hence, $m'_i$ is a view of
  $\tree_i$, for $i=1,2$. Let $m_i \isdef \trunk{m_i'}_\grammar$ be
  reduced views of $\tree_i$, for $i=1,2$. By Lemma
  \ref{fact:trunk-two}, we have $m = \trunk{m'}_\grammar =
  \trunk{m'_1+m'_2}_\grammar = \trunk{\trunk{m'_1}_\grammar +
    \trunk{m'_2}_\grammar}_\grammar = \trunk{m_1+m_2}_\grammar$.
  Then, $m = m_1 + m_2 \in h_\grammar(\tree_1)
  \pop^{\algof{A}_\grammar} h_\grammar(\tree_2)$, by the definition of
  $\pop^{\algof{A}_\grammar}$.

  \vspace*{.5\baselineskip}
  \noindent ``$\supseteq$'' Let $m \in h_\grammar(\tree_1)
  \pop^{\algof{A}_\grammar} h_\grammar(\tree_2)$ be a reduced view.
  By the definition of $\pop^{\algof{A}_\grammar}$, there exist
  reduced views $m_i \in h_\grammar(\tree_i)$, for $i=1,2$, such that
  $m = \trunk{m_1 + m_2}_\grammar$.  Then, there exist views $m'_i$ of
  $\tree_i$, such that $m_i = \trunk{m'_i}_\grammar$, for
  $i=1,2$. Hence, for each $Y \in \supp{m'_j}$ and $i \in
  \interv{1}{m_j(Y)}$, there exists a complete derivation $Y
  \step{\grammar}^* v(Y,i,j)$ such that $\tree_j =
  (\pop_{Y\in\mathcal{Y},i\in\interv{1}{m'_j(Y)}}
  v(Y,i,j))^\algof{T}$, for $j = 1,2$.  We now define the ground terms
  $v(Y,i)$ as $v(Y,i,1)$ for $1 \le i \le m'_1(Y)$, and as $v(Y,i -
  m'_1(Y),2)$ for $m'_1(Y) + 1 \le i \le m'_1(Y) + m'_2(Y)$.  With
  these definitions, we have complete derivations $Y \step{\grammar}^*
  v(Y,i)$ for all $1 \le i \le m'_1(Y) + m'_2(Y)$ and $Y \in
  \supp{m'_1+m'_2}$.  Thus, $m'_1+m'_2$ is a view of
  $\tree_1\pop^\algof{T}\tree_2$. Since $\trunk{m'_1+m'_2}_\grammar =
  \trunk{\trunk{m'_1}_\grammar + \trunk{m'_2}_\grammar}_\grammar =
  \trunk{m_1+m_2}_\grammar =m$, by Lemma \ref{fact:trunk-two}, we
  obtain $m \in h_\grammar(\tree_1\pop^\algof{T}\tree_2)$.

  \vspace*{.5\baselineskip}
  \noindent
  \underline{$h_\grammar(\extend{b}^\algof{T}(\tree_1,\ldots,\tree_{\arityof{b}-1})) =$}
  \underline{$\extend{b}^{\algof{A}_\grammar}(h_\grammar(\tree_1),\ldots,h_\grammar(\tree_{\arityof{b}-1}))$}:
  ``$\subseteq$'' Let $m \in h_\grammar(\tree)$ be a reduced view of
  $\tree$, where $\tree \isdef
  \extend{b}^\algof{T}(\tree_1,\ldots,\tree_{\arityof{b}-1})$. Then,
  there exists a view $m'$ of $\tree$ such that $m =
  \trunk{m'}_\grammar$. Since the only factor in a $\pop$-composition
  that yields $\tree$ is $\tree$ itself, there exists exactly one
  nonterminal $Y \in \mathcal{Y}$ such that $m'=\mset{Y}$. Without
  loss of generality, each derivation that yields a ground term $t$
  such that $\tree=t^\algof{T}$ can be reorganized such that it starts
  with a rule $Y \rightarrow \extend{b}(X_1,\ldots,X_{\arityof{b}-1})$
  followed by complete $X_i$-derivations that yield ground terms $t_i$
  such that $\tree_i = t_i^\algof{T}$, for each $i \in
  \interv{1}{\arityof{b}-1}$. Moreover, we consider w.l.o.g. that each
  such $X_i$-derivation starts with the following prefix:
  \begin{align*}
    X_i \step{\grammar}^* \pop_{Y\in\mathcal{Y}} \pexp{Y}{m_i(Y)}
  \end{align*}
  for some view $m_i$ of $\tree_i$, for each $i \in
  \interv{1}{\arityof{b}-1}$. Then $\trunk{m_i}_\grammar \in
  h_\grammar(\tree_i)$, for each $i \in
  \interv{1}{\arityof{b}-1}$. Since $X_i \leadsto_\grammar m_i$, we
  obtain $X_i \leadsto_\grammar \trunk{m_i}_\grammar$, by Lemma
  \ref{fact:trunk}, hence $m = \trunk{m'}_\grammar = \mset{Y} \in
  \extend{b}^{\algof{A}_\grammar}(h_\grammar(\tree_1),\ldots,h_\grammar(\tree_{\arityof{b}-1}))$,
  by the definition of
  $\extend{b}^{\algof{A}_\grammar}$.

  \vspace*{.5\baselineskip}
  \noindent``$\supseteq$'' Let $m \in
  \extend{b}^{\algof{A}_\grammar}(h_\grammar(\tree_1),$ $\ldots,
  h_\grammar(\tree_{\arityof{b}-1}))$ be a multiset. By the definition
  of $\extend{b}^{\algof{A}_\grammar}$, there exist reduced views $m_i
  \in h_\grammar(\tree_i)$ and a rule $Y \rightarrow
  \extend{b}(X_1,\ldots,X_{\arityof{b}-1})$ in $\grammar$ such that
  $X_i \leadsto_\grammar m_i$, for all $i \in
  \interv{1}{\arityof{b}-1}$, and $m = \mset{Y}$. Then, for each $i
  \in \interv{1}{\arityof{b}-1}$, there exists a view $m'_i$ of
  $\tree_i$ for $\grammar$ such that $m_i=\trunk{m'_i}_\grammar$.  By
  Lemma \ref{fact:trunk}, we obtain $X_i \leadsto_\grammar m'_i$,
  hence there exists complete derivations:
  \begin{align*}
    X_i \step{\grammar}^* \pop_{Y \in \mathcal{Y}} \pexp{Y}{m'_i(Y)} \step{\grammar}^* \pop_{Y \in \mathcal{Y},~ j \in \interv{1}{m'_i(Y)}} v(Y,i,j)
  \end{align*}
  where $v(Y,i,j)$ are ground terms such that $Y \step{\grammar}^*
  v(Y,i,j)$ and $\tree_i = (\pop_{Y \in \mathcal{Y},~ j \in
    \interv{1}{\arityof{b}-1}}) v(Y,i,j))^\algof{T}$, for each $i \in
  \interv{1}{\arityof{b}-1}$.
  We combine these $X_i$-derivations with the rule $Y \rightarrow \extend{b}(X_1,\ldots,X_{\arityof{b}-1})$ to obtain
  a complete $Y$-derivation of a ground term that evaluates to $\tree
  \isdef \extend{b}^\algof{T}(\tree_1,\ldots,\tree_{\arityof{b}-1})$,
  hence $\mset{Y}$ is a view of $\tree$ and $m = \trunk{\mset{Y}} =
  \mset{Y} \in h_\grammar(\tree)$.
\end{proofE}

The next lemma is needed to deal with aperiodic regular tree
grammars. Its proof relies on the fact that the powerset of a
commutative aperiodic semigroup, where the semigroup operation is
lifted to sets, is aperiodic~\cite[Lemma
  5.11]{journals/corr/abs-2008-11635}. Since, for an aperiodic
stratified grammar, the reduced multisets of nonterminals with
(reduced) union forms an aperiodic commutative semigroup, we obtain
the result:

\begin{lemmaE}[][category=proofs]\label{lemma:aperiodic-tree-rec}
  The algebra $\algof{A}_\grammar$ is aperiodic if the regular tree grammar
  $\grammar$ is aperiodic.
\end{lemmaE}
\begin{proofE}
  The regular grammar $\grammar$ is aperiodic iff the modified grammar
  $\grammar'$ obtained from Lemma~\ref{lemma:normal-form} has $q=1$ for
  each rule $X \rightarrow X \pop \pexp{Y}{q}$ of the form
  (\ref{it1:syntactic-regular}). If $\grammar$ is aperiodic, we must have
  $p(Y)=1$, for each nonterminal $Y \in \mathcal{Y}$, because $p$ is
  computed from $\grammar'$. Then, for each multiset $m \in
  \mpow{\mathcal{Y}}$, we have:
  \[\trunk{m}(Y) = \left\{\begin{array}{ll}
  m(Y) \text{, if } m(Y) < q(Y) \\
  q(Y) \text{, otherwise}
  \end{array}\right.\]
  By \cite[Lemma 5.11]{journals/corr/abs-2008-11635}, it is sufficient
  to prove that the commutative semigroup $(\universeOf{M},\oplus)$ is
  aperiodic, where $\universeOf{M} \subseteq \mpow{\mathcal{Y}}$ is the set of all reduced multisets and $m_1 \oplus m_2 \isdef
  \trunk{m_1+m_2}$.
  Now consider an element $m \in \universeOf{M}$ and
  its idempotent power $\idemof{m}$.
  By the definition of $\trunk{.}$
  and Lemma~\ref{fact:trunk-two}, we obtain that $\idemof{m}(Y)$ is
  either $0$ or $q(Y)$, for each nonterminal $Y \in \mathcal{Y}$.
  Hence, $\idemof{m} \oplus m = \idemof{m}$. Since the choice of
  $m\in\universeOf{M}$ was arbitrary, we obtain that
  $(\universeOf{M},\oplus)$ is a aperiodic.
\end{proofE}

The main result of this subsection is the ``if'' direction of Theorem~\ref{thm:main} for the class of trees\footnote{Note that we do not require the grammar to be normalized.}:

\begin{theoremE}[][category=proofs]\label{thm:tree-reg}
  The language of each (aperiodic) regular tree grammar $\grammar$ is
  recognizable by an (aperiodic) recognizer $(\algof{A},B)$, where
  $\cardof{\universeOf{A}} \in 2^{2^{\poly{\size{\grammar}}}}$ and
  $f^{\algof{A}}(a_1,\ldots,a_{\arityof{f}})$ can be computed in time
  $2^{\poly{\size{\grammar}}}$, for each function symbol $f \in
  \fsignature_\algof{T}$ and all elements $a_1,\ldots,a_\arityof{f}
  \in \universeOf{A}$.
\end{theoremE}
\begin{proofE}
  Let $\grammar$ be a regular (aperiodic) tree grammar.  Using
  Lemma~\ref{lemma:normal-form}, we obtain a normalized (aperiodic) grammar
  $\grammar'=(\mathcal{X}\uplus\mathcal{Y},\rules')$.
  In particular, we have $\alangof{}{\algof{T}}{\grammar'} =
  \alangof{}{\algof{T}}{\grammar}$ and $\sizeof{\grammar'} \in
  2^{\poly{\cardof{\mathcal{X}}+\cardof{\mathcal{Y}}} \cdot
    \log(\sizeof{\grammar})}$.  Note that $\grammar$ and $\grammar'$
  have the same set of nonterminals. Then, we build the recognizer
  $\algof{A}_{\grammar'}$ based on the grammar $\grammar'$. By
  Lemma~\ref{lemma:tree-view-membership}, we obtain
  $\alangof{}{\algof{T}}{\grammar'} = h_\grammar^{-1}(B_{\grammar'})$,
  where $B_{\grammar'} \isdef \set{h_{\grammar'}(\tree) \mid \tree \in
    \alangof{}{T}{\grammar'}}$.  By
  Lemma~\ref{lemma:tree-view-homomorphism},
  $(\algof{A}_{\grammar'},B_{\grammar'})$ is a recognizer for
  $\alangof{}{T}{\grammar'}$.
  Because of
  $\alangof{}{\algof{T}}{\grammar'} =
  \alangof{}{\algof{T}}{\grammar}$, $\algof{A}_{\grammar'}$ is also a
  recognizer for the language of $\grammar$.
  Moreover, $\algof{A}_{\grammar'}$ is
  aperiodic if ${\grammar'}$ is aperiodic, by Lemma
  \ref{lemma:aperiodic-tree-rec}.
  By Lemma~\ref{lemma:size-of-tree-rec}, we further have that:
  \begin{align*}
    \cardof{\universeOf{A}_{\grammar'}} \in & ~2^{{\sizeof{\grammar'}}^{\poly{\cardof{\mathcal{Y}}}}} \\
    \in & ~2^{2^{\poly{\cardof{\mathcal{X}}+\cardof{\mathcal{Y}}} \cdot \log(\sizeof{\grammar}) \cdot \poly{\cardof{\mathcal{Y}}}}} \\
    \in & ~2^{2^{\poly{\size{\grammar}}}}
  \end{align*}
  Moreover,
  $f^{\algof{A}_{\grammar'}}(a_1,\ldots,a_n)$ can be computed in time:
  \begin{align*}
  {\sizeof{\grammar'}}^{\poly{\cardof{\mathcal{Y}}}}
  \in &
  ~2^{\poly{\cardof{\mathcal{X}}+\cardof{\mathcal{Y}}} \cdot \log(\sizeof{\grammar})\cdot \poly{\cardof{\mathcal{Y}}}} \\
  \in & ~2^{\poly{\size{\grammar}}}
  \end{align*}
  for each $f \in \fsignature_\algof{T}$ and $a_1,\ldots,a_n \in \universeOf{A}_{\grammar'}$.
 \end{proofE}

\subsection{Graphs of Tree-Width $2$}
\label{subsec:tw2-reg}

In principle, one can define the class of graphs of tree-width $2$ by
the $\btwclass{2}$ subalgebra of $\algof{G}$, defined by the
restriction of sorts to subsets of $\set{1,2,3}$. However, such a
definition does not suit the development of (stratified) regular
grammars that match the recognizable subsets of the class. Instead, we
introduce a derived algebra of $\algof{G}$ that relies on the previous
algebras of trees (Definition~\ref{def:trees}) and disoriented
series-parallel (Definition~\ref{def:sp}). This is made possible by a
graph-theoretic decomposition property, which states that every graph
of tree-width at most~$2$ is, roughly speaking, a tree of disoriented
series parallel graphs.

For reasons of presentation, we consider the following simplifying
restrictions: \begin{compactenum}[1.]
\item We do not consider self-loops. This restriction loses no
  generality, because loops can be encoded as unary edges.
\item We consider an alphabet $\alphabetTwo$ of binary edge
  labels. 
  The extension of the results in this paper to edges of arity more than $3$ will be considered for an extended version.
\item We only consider connected graphs. This is in line with trees
  and series-parallel graphs, which are also connected. The
  generalization of each of these classes to disconnected graphs will
  be considered for an extended version.
\end{compactenum}
The rest of this section is organized as follows. The definitions of
(disoriented) series parallel graphs are recalled in subsection (1).
The class of tree-width $2$ graphs and the proof of Theorem~\ref{thm:main} for this class are presented in subsection (2).

\subsubsection{Series-Parallel Graphs}
\label{subsubsec:sp}

We give the formal definition of (disoriented) series-parallel graphs
and recall a well-known decomposition result. \ifLongVersion\else For
reasons of presentation, we defer the proof of Theorem~\ref{thm:main}
for series-parallel graphs to Appendix \ref{app:sp}. \fi

\begin{definition}\label{def:sp}
  The class $\algof{SP}$ of \emph{series-parallel} graphs has sort
  $\set{1,2}$ and signature: \begin{align*} \spsignature \isdef
    \set{\sgraph{b}_{1,2} \mid b \in \alphabetTwo} \cup
    \set{\sop,\pop} \end{align*} where $x \sop^\algof{SP} y$ joins the
  $2$-source of $x$ with the $1$-source of $y$ and erases the source
  label of the joined vertex.

  The class $\algof{DSP}$ of \emph{disoriented} series-parallel graphs
  has sort $\set{1,2}$ and signature: \begin{align*} \dspsignature
    \isdef \set{\sgraph{b}_{i,3-i} \mid b \in \alphabetTwo,~i=1,2} \cup
    \set{\sop,\pop}
  \end{align*}
  The interpretation of $\pop$ and $\sgraph{b}_{1,2}$
  (resp. $\sgraph{b}_{2,1}$) is the same in $\algof{SP}$
  (resp. $\algof{DSP}$) as in the graph algebra $\algof{G}$
  (Definition~\ref{def:hr}).
\end{definition}

Figure \ref{fig:sp} (a) shows the interpretation of the function
symbols from $\spsignature$. We remark that $\algof{SP}$ is a derived
algebra of $\algof{G}$\footnote{$x \pop^\algof{SP} y \isdef x
  \pop^\algof{G} y$ and $x \sop^\algof{SP} y \isdef
  \rename{(2,3)}^\algof{G}(\restrict{\set{1,3}}^\algof{G}(x
  \pop^\algof{G} \rename{[2,3,1]}^\algof{G}(y)))$.}. Figure
\ref{fig:sp} (b) shows a disoriented series-parallel graph.

\begin{figure}[t!]
  \vspace*{-\baselineskip}
  \centerline{\input{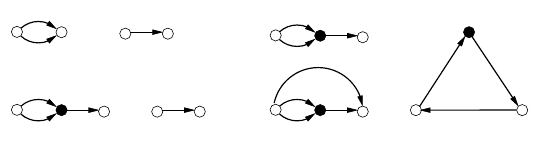_t}}
  \vspace*{-.5\baselineskip}
  \caption{Series-parallel operations (a). A disoriented
    series-parallel graph (b).}
  \label{fig:sp}
  \vspace*{-\baselineskip}
\end{figure}

A graph $\graph \in \universeOf{SP}$ is said to be
\emph{$\sop$-atomic} if there are no graphs $\graph_1,\graph_2 \in
\universeOf{SP}$, such that $\graph = \graph_1 \sop^\algof{SP}
\graph_2$. The set $\universeOf{SP}$ is partitioned into
\emph{P-graphs}, that are $\sop$-atomic, and \emph{S-graphs}, that are
not $\sop$-atomic. For example, Figure \ref{fig:sp}~(a) shows an
S-graph on the top-right and a P-graph on the bottom-right.

\ifLongVersion A classical result is that S-graphs are
serial compositions of at least two P-graphs, whereas P-graphs are
either single edges or parallel compositions of at least two S-graphs:
\fi

\begin{lemmaE}[Lemma 6.3 in \cite{CourcelleV}][]\label{lemma:sp-decomp}
  For each $\graph \in \universeOf{SP}$, we have: \begin{compactenum}[1.]
  \item\label{it1:lemma:sp-decomp} If $\graph$ is an S-graph then
    there exists a unique sequence $\graph_1, \ldots, \graph_k$ of
    P-graphs such that $\graph = \graph_1
    \sop^\algof{SP} \ldots \sop^\algof{SP} \graph_k$, for some
    $k\geq2$.
  \item\label{it2:lemma:sp-decomp} If $\graph$ is a P-graph then
    either $\graph$ is a single edge, or there exists a unique
    nonempty set $\set{\graph_1, \ldots, \graph_k}$ of single edges
    and S-graphs such that $\graph = \graph_1 \pop^\algof{SP} \ldots
    \pop^\algof{SP} \graph_k$, for some $k\geq2$.
  \end{compactenum}
\end{lemmaE}

To apply this canonical decomposition to a disoriented series-parallel
graph of sort $\set{1,2}$, one notices that each disoriented
series-parallel graph can be transformed into an oriented one by a
unique reversal of zero or more edges. For instance, the triangle in
Figure \ref{fig:sp}~(b) has a unique decomposition as
$(\sgraph{c}_{1,2} \sop \sgraph{a}_{1,2}) \pop \sgraph{b}_{2,1}$. We
say that a disoriented series-parallel graph is a P-graph
(resp. S-graph) iff its corresponding series-parallel orientation is a
P-graph (resp. S-graph).

\ifLongVersion In particular, the definition of regular grammars and
recognizer for a regular grammar is done in a similar way for the
class $\algof{DSP}$ as for $\algof{SP}$. For reasons of conciseness,
we do not restate the definitions and corresponding technical results
for $\algof{DSP}$, as they can easily be inferred from the
corresponding ones for $\algof{SP}$. \fi

\begin{spTextEnd}
We define regular series-parallel grammars in terms of disjoint sets
of nonterminals $\mathcal{P}$ and $\mathcal{S}$, with the intuition
that P-graphs (resp. S-graphs) are parsed starting with a nonterminal
from $\mathcal{P}$ (resp. $\mathcal{S}$). In the rest of this section,
we write $P$, $S$, $\mathcal{P}$ and $\mathcal{S}$ instead of $X$,
$Y$, $\mathcal{X}$ and $\mathcal{Y}$, respectively. Since $\set{1,2}$
is the only sort from $\spsignature$ (resp. $\dspsignature$), we write
$\P$ and $\S$ instead of $\X_{\set{1,2}}$ and $\Y_{\set{1,2}}$,
respectively, in the footprint of a regular grammar:

\begin{definition}\label{def:regular-sp-grammar}
  A \emph{regular series-parallel grammar} is a stratified
  $\spsignature$-grammar
  $\grammar=(\mathcal{P}\uplus\mathcal{S},\rules)$ such that $\fpof{\grammar} \subseteq \afp_\algof{SP}$, where:
  \begin{align*}
    \afp_\algof{SP} \isdef & ~\set{\P \rightarrow \pexp{\S}{\geq 2}, \S \rightarrow \P \sop \S, \S \rightarrow \P \sop \P, \rightarrow \P, \rightarrow \S} ~\cup \\
    & ~\set{\P \rightarrow \sgraph{b}_{1,2}, \S \rightarrow \sgraph{b}_{1,2} \mid b \in \alphabetTwo}
  \end{align*}
\end{definition}

The definition of regular grammars for the $\algof{DSP}$ class has, in
addition, the footprint constraints $\P \rightarrow \sgraph{b}_{2,1}$
and $\S \rightarrow \sgraph{b}_{2,1}$, for all $b \in
\alphabetTwo$. Note that the use of the series composition is
restricted to rules of the form $S_1 \rightarrow P \sop S_2$ and $S
\rightarrow P_1 \sop P_2$. This ensures that the P-components of an
S-graph are parsed left-to-right, thus avoiding the nondeterminism
introduced by the associativity of the series composition. By way of
analogy, a right-recursive regular word grammar reads words left to
right, just as a finite automaton. Fixing a parsing order for
S-components is an ingredient that allows us to build, from the rules
of the grammar, a recognizer for its language.

We establish first the existence of a universal aperiodic regular
series-parallel grammar, in order to prove the ``only if'' direction
of Theorem~\ref{thm:main} for the $\algof{SP}$ class (Corollary
\ref{cor:completeness}). As before, we annotate the rules of the
grammar with their type (Definition~\ref{def:stratified-grammar}).

\begin{lemma}\label{lemma:universal-synt-reg-sp}
  Let $\grammar_\algof{SP} \isdef(\set{S,P},\rules)$ be the regular
  series-parallel grammar having the following rules, for all $b \in \alphabetTwo$:
  \vspace*{-1.5\baselineskip}
  \begin{center}
    \begin{minipage}{1.5cm}
      \begin{align*}
        \rightarrow & ~P \text{ (\ref{it4:syntactic-regular})} \\[-1mm]
        \rightarrow & ~S \text{ (\ref{it4:syntactic-regular})}
      \end{align*}
    \end{minipage}
    \begin{minipage}{2cm}
      \begin{align*}
        S \rightarrow & ~\sgraph{b}_{1,2} \text{ (\ref{it31:syntactic-regular})} \\[-1mm]
        P \rightarrow & ~\sgraph{b}_{1,2} \text{ (\ref{it32:syntactic-regular})}
      \end{align*}
    \end{minipage}
    \begin{minipage}{2.5cm}
      \begin{align*}
        S \rightarrow & ~P \sop S \text{ (\ref{it31:syntactic-regular})} \\[-1mm]
        S \rightarrow & ~P \sop P \text{ (\ref{it31:syntactic-regular})}
      \end{align*}
    \end{minipage}
    \begin{minipage}{2.5cm}
      \begin{align*}
        P \rightarrow & ~P \pop S \text{ (\ref{it1:syntactic-regular})} \\[-1mm]
        P \rightarrow & ~S \pop S \text{ (\ref{it2:syntactic-regular})}
      \end{align*}
    \end{minipage}
  \end{center}
  Then, we have $\alangof{}{\algof{SP}}{\grammar_\algof{SP}} =
  \universeOf{SP}$.
\end{lemma}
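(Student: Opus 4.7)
\emph{Proof proposal.} The inclusion $\alangof{}{\algof{SP}}{\grammar_\algof{SP}} \subseteq \universeOf{SP}$ is immediate since every rule uses only operations from $\spsignature$ and each axiom rewrites to a sort-$\set{1,2}$ nonterminal. For the converse, the plan is to prove, by strong induction on the number of edges, the joint claim
\[
P \step{\grammar_\algof{SP}}^* t \text{ with } t^\algof{SP} = \graph \text{ for each P-graph } \graph,
\]
and
\[
S \step{\grammar_\algof{SP}}^* u \text{ with } u^\algof{SP} = \graph \text{ for each S-graph } \graph,
\]
from which the result follows by the axioms $\rightarrow P$ and $\rightarrow S$, since every $\graph \in \universeOf{SP}$ is either a P-graph or an S-graph.

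The base case is when $\graph = \sgraph{b}_{1,2}^\algof{SP}$ is a single edge. Single edges are $\sop$-atomic, hence P-graphs, and the rule $P \rightarrow \sgraph{b}_{1,2}$ of form~(\ref{it32:syntactic-regular}) gives the derivation. Note that we also have $S \rightarrow \sgraph{b}_{1,2}$ of form~(\ref{it31:syntactic-regular}), so a single edge can also be reached from $S$; this is needed in the inductive step below to handle P-graphs whose parallel components include single edges.

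For the inductive step, consider a P-graph $\graph$ that is not a single edge. By Lemma~\ref{lemma:sp-decomp}(\ref{it2:lemma:sp-decomp}), there is a unique set $\set{\graph_1, \ldots, \graph_k}$ with $k \geq 2$ of single edges and S-graphs such that $\graph = \graph_1 \pop^\algof{SP} \ldots \pop^\algof{SP} \graph_k$. Each $\graph_i$ has strictly fewer edges than $\graph$, so by the inductive hypothesis (or by the base case for single edges) we get $S \step{\grammar_\algof{SP}}^* u_i$ with $u_i^\algof{SP} = \graph_i$. Applying $P \rightarrow P \pop S$ (form~(\ref{it1:syntactic-regular})) $k-2$ times followed by $P \rightarrow S \pop S$ (form~(\ref{it2:syntactic-regular})) yields $P \step{\grammar_\algof{SP}}^* S \pop \ldots \pop S$ with $k$ copies of $S$; replacing each $S$ by $u_i$ gives the desired derivation, using associativity and commutativity of $\pop^\algof{SP}$. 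For an S-graph $\graph$, Lemma~\ref{lemma:sp-decomp}(\ref{it1:lemma:sp-decomp}) gives a unique sequence $\graph_1, \ldots, \graph_k$ of P-graphs with $k \geq 2$ and $\graph = \graph_1 \sop^\algof{SP} \ldots \sop^\algof{SP} \graph_k$. If $k = 2$, apply $S \rightarrow P \sop P$ and use the inductive hypothesis on each P-graph $\graph_i$ (each has strictly fewer edges than $\graph$). If $k \geq 3$, apply $S \rightarrow P \sop S$, derive $\graph_1$ from $P$ by the inductive hypothesis, and derive the tail $\graph_2 \sop^\algof{SP} \ldots \sop^\algof{SP} \graph_k$ from $S$; the tail is an S-graph (its decomposition has length $k - 1 \geq 2$) with strictly fewer edges than $\graph$, so the inductive hypothesis applies.

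The only subtle point is the left-to-right shape imposed on serial compositions by the rules $S \rightarrow P \sop S$ and $S \rightarrow P \sop P$: the uniqueness of the decomposition in Lemma~\ref{lemma:sp-decomp}(\ref{it1:lemma:sp-decomp}), combined with the associativity of $\sop^\algof{SP}$, ensures that this fixed parsing direction is sufficient to reach every S-graph, so no genuine obstacle arises.
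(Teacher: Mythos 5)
Your proposal is correct and follows essentially the same route as the paper: the ``$\supseteq$'' direction by mutual induction over P- and S-graphs using the canonical decomposition of Lemma~\ref{lemma:sp-decomp}, composing the components with $k-2$ applications of the recursive rules $P \rightarrow P \pop S$ resp. $S \rightarrow P \sop S$ and one application of the base rules. The only cosmetic difference is that you induct on the edge count and peel the S-graph one P-component at a time, whereas the paper derives all components simultaneously from the unique decomposition; this does not change the substance of the argument.
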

\begin{proof}\emph{Proof.}
  ``$\subseteq$'' Each graph $\graph \in
  \alangof{}{\algof{SP}}{\grammar}$ is series-parallel, because the
  definition of $\grammar_\algof{SP}$ uses only operations from
  $\spsignature$.

  \vspace*{\baselineskip}\noindent ``$\supseteq$'' Let $\graph \in
  \universeOf{SP}$ be a series-parallel graph. By \emph{$X$-derivation
    of $\graph$} we mean a complete derivation $X \step{\grammar}^* t$
  such that $t^\algof{SP} = \graph$. We prove the following facts
  simultaneously: \begin{compactitem}[-]
  \item If $\graph$ is an S-graph then there exists a $S$-derivation
    of $\graph$,
  \item If $\graph$ is a P-graph then there exists a $P$-derivation of
    $\graph$.
  \end{compactitem}
  Consider the following cases for the
  induction:

  \vspace*{.5\baselineskip}\noindent \underline{$\graph =
    \sgraph{b}_{1,2}^\algof{SP}$, for some $b \in \alphabetTwo$}: The
  $P$-derivation applies the rule $P \rightarrow \sgraph{b}_{1,2}$ of
  $\grammar_\algof{SP}$.

  \vspace*{.5\baselineskip}\noindent \underline{$\graph$ is a P-graph
    and not a single edge}: in this case, we have \(\graph = \graph_1
  \pop^\algof{SP} \ldots \pop^\algof{SP} \graph_k\), for some $k\geq
  2$, where $\graph_1, \ldots, \graph_k$ are either single edges or
  S-graphs, by Lemma \ref{lemma:sp-decomp}. Let $i \in \interv{1}{k}$
  be an index.  If $\graph_i=\sgraph{b}_{1,2}^\algof{SP}$ is a single
  edge, there exists an S-derivation of $\graph_i$ using the rule $S
  \rightarrow \sgraph{b}_{1,2}$ of $\grammar_\algof{SP}$. Otherwise,
  there is a $S$-derivation of $\graph_i$, by the inductive
  hypothesis. We obtain a $P$-derivation of $\graph$ by applying the rule $P \rightarrow S \pop S$ once and $k-2$ times the rule $P
  \rightarrow P \pop S$.

  \vspace*{.5\baselineskip}\noindent \underline{$\graph$ is an
    S-graph}: in this case, we have \(\graph = \graph_1
  \sop^\algof{SP} \ldots \sop^\algof{SP} \graph_k\), for some $k \geq
  2$, where $\graph_1, \ldots, \graph_{k}$ are P-graphs, by Lemma
  \ref{lemma:sp-decomp}. Then, there exist a $P$-derivation of each
  $\graph_i$, for $i \in \interv{1}{k}$, by the inductive
  hypothesis. We obtain an $S$-derivation of $\graph$ by applying once
  the rule $S \rightarrow P \sop P$ and $k-2$ times the rule $S
  \rightarrow P \sop S$.
\end{proof}
\noindent Note that $\fpof{\grammar_\algof{SP}} \subseteq
\afp_\algof{SP}$ (Definition~\ref{def:regular-sp-grammar}) and the
only rule of the form (\ref{it1:syntactic-regular}) in
$\grammar_\algof{SP}$ has exponent~1, hence $\grammar_\algof{SP}$ is
an aperiodic regular series-parallel grammar.

To prove the ``if'' direction of Theorem~\ref{thm:main} for the
$\algof{SP}$ class, let us fix a normalized regular series-parallel grammar
$\grammar=(\mathcal{S}\uplus\mathcal{P},\rules)$. Since
each stratified grammar can be normalized, this assumption
loses no generality, by Lemma~\ref{lemma:normal-form}). In the rest of
this section, we give the definition of a $\spsignature$-recognizer
$(\algof{A}_\grammar,B_\grammar)$ for the language
$\alangof{}{\algof{SP}}{\grammar}$, together with upper bounds on the
size of the algebra $\algof{A}_\grammar$ and the time required to
compute the interpretation of a function symbol from $\spsignature$.

For two $\spsignature$-terms $u$ and $w$, we denote by $u \assoc w$
the fact that $w$ can be obtained from $u$ by zero or more
applications of the associativity axiom for serial composition:
\begin{align}\label{eq:assoc-sp}
(x \sop y) \sop z = x \sop (y \sop z)
\end{align}
In the following, we write $u \astep{\grammar}^* w$ iff $u
\step{\grammar}^* v$ and $v \assoc w$, for some $\spsignature$-term
$v$. The following lemmata are used to (de-)compose partial
derivations modulo associativity of the serial composition:

\begin{lemma}\label{lemma:assoc-sp-start}
  Let $S \in \mathcal{S}$ and $Q \in \mathcal{P} \uplus \mathcal{S}$ be nonterminals.
  Then, for every derivation $S \astep{\grammar}^* u \sop Q$, there are non-terminals $P_1, \ldots, P_n$, derivations $P_i \step{\grammar}^* v_i$ and rules $S_i \rightarrow P_i \sop S_{i+1}$, for $1 \le i < n-1$, with $S=S_1$, and a rule $S_{n-1} \rightarrow P_n \sop Q$ such that
  $v_1 \sop (v_2 \sop \ldots (v_n) \ldots) \assoc u$.
\end{lemma}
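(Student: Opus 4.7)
The proof proceeds by induction on the length of the underlying derivation $S \step{\grammar}^* v$ with $v \assoc u \sop Q$, relying on a structural property of regular series-parallel grammars: because $\fpof{\grammar} \subseteq \afp_\algof{SP}$, every rule whose right-hand side has $\sop$ at its outermost position has the shape $S' \rightarrow P \sop Z$ with $P \in \mathcal{P}$ and $Z \in \mathcal{P}\uplus\mathcal{S}$, while every $\mathcal{P}$-rule has shape $P \rightarrow \sgraph{b}_{1,2}$, $P \rightarrow P' \pop S'$, or $P \rightarrow S' \pop S''$. A quick induction on derivation length then yields the $\mathcal{P}$-atomicity sublemma: any term $w$ with $P \step{\grammar}^* w$ for $P \in \mathcal{P}$ is $\sop$-atomic at its outermost level, i.e. either a single edge or a $\pop$-composition.

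Now given a derivation $S \step{\grammar}^* v \assoc u \sop Q$, I inspect the root of the derivation tree. The rule $S \rightarrow \sgraph{b}_{1,2}$ would produce a term without any $\sop$, incompatible with $v \assoc u \sop Q$, so the root rule must be $S_1 \rightarrow P_1 \sop Z$ with $P_1 \in \mathcal{P}$ and $Z \in \mathcal{P}\uplus\mathcal{S}$; the remaining subderivations are $P_1 \step{\grammar}^* v_1$ and $Z \step{\grammar}^* w$, with $v = v_1 \sop w$. By $\mathcal{P}$-atomicity $v_1$ is $\sop$-atomic, so in the flat top-level $\sop$-decomposition $v = t_1 \sop t_2 \sop \cdots \sop t_m$ we have $t_1 = v_1$ while $t_2, \ldots, t_m$ form a flat decomposition of $w$. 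Since $\assoc$ preserves both the sequence of atoms and the rightmost position, $v \assoc u \sop Q$ forces $t_m = Q$ and $u \assoc t_1 \sop \cdots \sop t_{m-1}$, so the rightmost atom of $w$ is exactly $Q$.

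I split on the cardinality of this decomposition. If $m = 2$ then $w$ is $\sop$-atomic, which up to $\assoc$ forces $w = Q$ (a single nonterminal admits no associative rearrangement). The initial rule then serves as $S_1 \rightarrow P_1 \sop Q$, and choosing $n = 1$ with $v_1 \assoc u$ closes this case. If $m > 2$ then $w$ has $\sop$ at its top level, which by $\mathcal{P}$-atomicity rules out $Z \in \mathcal{P}$, so $Z = S_2 \in \mathcal{S}$; the strictly shorter subderivation $S_2 \step{\grammar}^* w \assoc u' \sop Q$ with $u' \assoc t_2 \sop \cdots \sop t_{m-1}$ is in the scope of the induction hypothesis, yielding nonterminals $P_2, \ldots, P_n$, derivations $P_i \step{\grammar}^* v_i$, rules $S_i \rightarrow P_i \sop S_{i+1}$ for $2 \le i < n$, and a final rule $S_n \rightarrow P_n \sop Q$ such that $v_2 \sop (\cdots \sop v_n) \assoc u'$. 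Prepending the initial rule $S_1 \rightarrow P_1 \sop S_2$ and the derivation $P_1 \step{\grammar}^* v_1$ assembles the required sequence, and $v_1 \sop (v_2 \sop \cdots \sop v_n) \assoc v_1 \sop u' \assoc u$ follows by compatibility of $\assoc$ with $\sop$.

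The main obstacle is the careful interplay between the grammar's derivation structure and the associativity equivalence $\assoc$: one must argue that the flat $\sop$-decomposition of a derived term is an $\assoc$-invariant, and in particular that its leftmost atom is pinned to the term produced by the root $\sop$-rule. The $\mathcal{P}$-atomicity sublemma is precisely the lever that makes this pinning possible, after which the induction reduces to a routine peel-off argument on the flat decomposition.
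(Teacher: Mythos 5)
Your proof is correct and follows essentially the same route as the paper's: both rest on the observation that the only $\sop$-generating rules are $S \rightarrow P \sop S'$ and $S \rightarrow P \sop P'$, so any derivation of a term $\assoc$-equivalent to $u \sop Q$ decomposes into a spine of such rules whose $\mathcal{P}$-subderivations yield $\sop$-atomic terms. The paper simply asserts this spine shape ``up to a reordering of the rule applications'', whereas you obtain it by an explicit induction on derivation length, spelling out the $\mathcal{P}$-atomicity fact and the $\assoc$-invariance of the flat $\sop$-decomposition that the paper leaves implicit.
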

\begin{proof}\emph{Proof.}
  We consider the derivation $S \step{\grammar}^* t[Q]$, where $t[Q]$ is a term such that $t[Q] \assoc u \sop Q$.
  We note that the only rules of a regular SP-grammar that generate series compositions are the rules of shape $S_1 \rightarrow P \sop S_2$ and $S \rightarrow P_1 \sop P_2$.
  Further, because $Q$ occurs at the right-most position of a series-composition of terms, we must have (up to a reordering of the rule applications) that  the derivation $S \step{\grammar}^* t[Q]$ has the
  form
  \begin{align*} S_1 \step{\grammar} & ~P_1 \sop
  S_2 \step{\grammar} P_1 \sop (P_2 \sop S_3) \\
  \ldots & \\
  \step{\grammar} & ~P_1 \sop
  (P_2 \sop \ldots (P_n \sop Q) \ldots) \\
  \step{\grammar}^* &
  ~v_1 \sop (P_2 \sop \ldots (P_n \sop Q) \ldots) \\
  \step{\grammar}^* & ~v_1 \sop (v_2 \sop \ldots (P_n \sop
  Q) \ldots) \\ \ldots \\ \step{\grammar}^* & ~v_1 \sop
  (v_2 \sop \ldots (v_n \sop Q) \ldots) \end{align*}
  where $S=S_1$ and
  $P_i \step{\grammar}^* v_i$ are complete derivations.
  Then, $v_1 \sop
  (v_2 \sop \ldots (v_n \sop Q) \ldots) \assoc u \sop Q$, and we get that $v_1 \sop (v_2 \sop \ldots (v_n) \ldots) \assoc u$.
\end{proof}

\begin{lemma}\label{lemma:assoc-sp}
  Let $S \in \mathcal{S}$ and $Q \in \mathcal{P} \uplus \mathcal{S}$
  be nonterminals and $\graph$ be an S-graph. For all graphs
  $\graph_1,\graph_2\in\universeOf{SP}$, such that
  $\graph=\graph_1\sop^\algof{SP}\graph_2$, the following are
  equivalent: \begin{compactenum}[1.]
  \item\label{it1:lemma:assoc-sp} there exists a derivation $S
    \astep{\grammar}^* u \sop Q$, where $u$ is a ground
    $\spsignature$-term such that $u^\algof{SP}=\graph$,
  \item\label{it2:lemma:assoc-sp} there exist derivations $S
    \astep{\grammar}^* u_1 \sop S'$ and $S' \astep{\grammar}^* u_2
    \sop Q$, where $S' \in \mathcal{S}$ and $u_i$ are ground
    $\spsignature$-terms such that $u_i^\algof{SP}=\graph_i$, for
    $i=1,2$.
  \end{compactenum}
\end{lemma}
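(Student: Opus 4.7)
The plan is to combine Lemma \ref{lemma:assoc-sp-start} with the uniqueness of the P-graph decomposition from Lemma \ref{lemma:sp-decomp}(\ref{it1:lemma:sp-decomp}). First, a preliminary observation: by inspecting the rules of a regular series-parallel grammar (Definition \ref{def:regular-sp-grammar}), rules that generate a series composition have one of the shapes $S_1 \rightarrow P \sop S_2$ or $S \rightarrow P_1 \sop P_2$, so in every chain produced by Lemma \ref{lemma:assoc-sp-start} every intermediate left-hand side lies in $\mathcal{S}$.

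For the direction (\ref{it1:lemma:assoc-sp}) $\Rightarrow$ (\ref{it2:lemma:assoc-sp}), I apply Lemma \ref{lemma:assoc-sp-start} to the given derivation $S \astep{\grammar}^* u \sop Q$. This yields $\mathcal{P}$-nonterminals $P_1, \ldots, P_n$, an $\mathcal{S}$-chain $S = S_1, S_2, \ldots$, rules of the form $S_i \rightarrow P_i \sop S_{i+1}$ followed by a terminal rule producing $Q$, and complete derivations $P_i \step{\grammar}^* v_i$ with $v_1 \sop (v_2 \sop \ldots \sop v_n) \assoc u$. Since each $v_i^\algof{SP}$ is a P-graph (as $P_i \in \mathcal{P}$), the sequence $v_1^\algof{SP}, \ldots, v_n^\algof{SP}$ is the unique P-decomposition of the S-graph $\graph$ guaranteed by Lemma \ref{lemma:sp-decomp}(\ref{it1:lemma:sp-decomp}), so in particular $n \geq 2$. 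Because $\graph_1, \graph_2 \in \universeOf{SP}$ and $\graph = \graph_1 \sop^\algof{SP} \graph_2$, the (possibly singleton) P-decompositions of $\graph_1$ and $\graph_2$ concatenate to the P-decomposition of $\graph$. Uniqueness therefore forces a split index $k \in \interv{1}{n-1}$ such that $\graph_1 = v_1^\algof{SP} \sop^\algof{SP} \ldots \sop^\algof{SP} v_k^\algof{SP}$ and $\graph_2 = v_{k+1}^\algof{SP} \sop^\algof{SP} \ldots \sop^\algof{SP} v_n^\algof{SP}$. Truncating the chain after the $k$-th rule application and closing it with the derivations $P_i \step{\grammar}^* v_i$ for $i \leq k$ produces $S \astep{\grammar}^* u_1 \sop S'$ with $S' \isdef S_{k+1} \in \mathcal{S}$ and $u_1^\algof{SP} = \graph_1$. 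Applying the remaining rules of the chain and the $P_i$-derivations for $i > k$ from $S_{k+1}$ produces $S' \astep{\grammar}^* u_2 \sop Q$ with $u_2^\algof{SP} = \graph_2$.

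For the direction (\ref{it2:lemma:assoc-sp}) $\Rightarrow$ (\ref{it1:lemma:assoc-sp}), I apply Lemma \ref{lemma:assoc-sp-start} to each of $S \astep{\grammar}^* u_1 \sop S'$ and $S' \astep{\grammar}^* u_2 \sop Q$, obtaining two chains of the same shape. The first chain ends with a rule $S_{m-1} \rightarrow P_m \sop S'$; instead of closing it, I splice the second chain in place of the trailing occurrence of $S'$, then apply all the $P_i$- and $P'_j$-subderivations. This yields a single derivation $S \astep{\grammar}^* u \sop Q$ with $u \assoc v_1 \sop \ldots \sop v_m \sop v'_1 \sop \ldots \sop v'_{m'}$, and evaluating in $\algof{SP}$ gives $u^\algof{SP} = u_1^\algof{SP} \sop^\algof{SP} u_2^\algof{SP} = \graph_1 \sop^\algof{SP} \graph_2 = \graph$.

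The main obstacle is the forward direction's argument that $\graph_1$ and $\graph_2$ are contiguous prefixes and suffixes of the P-decomposition of $\graph$, which requires invoking uniqueness in Lemma \ref{lemma:sp-decomp}(\ref{it1:lemma:sp-decomp}) with separate (and trivial) case distinctions on whether each $\graph_i$ is itself a single P-graph or an S-graph; everything else is the bookkeeping of recombining the rule chains produced by Lemma \ref{lemma:assoc-sp-start}.
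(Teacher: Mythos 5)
Your proof is correct and takes essentially the same route as the paper's: both directions rest on Lemma~\ref{lemma:assoc-sp-start}, with the forward direction locating a split index in the chain of P-components and the backward direction splicing the two chains back together. Your explicit appeal to the uniqueness part of Lemma~\ref{lemma:sp-decomp}(\ref{it1:lemma:sp-decomp}) (together with the observation that each $v_i^\algof{SP}$ is a P-graph) only spells out a step the paper leaves implicit when it asserts the existence of the index $i$ splitting $\graph_1$ from $\graph_2$.
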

\begin{proof}\emph{Proof.}
  ``(\ref{it1:lemma:assoc-sp}) $\Rightarrow$
  (\ref{it2:lemma:assoc-sp})''
    We assume that there exists a derivation $S
    \astep{\grammar}^* u \sop Q$, where $u$ is a ground
    $\spsignature$-term such that $u^\algof{SP}=\graph$.
    By Lemma~\ref{lemma:assoc-sp-start}, there are non-terminals $P_1, \ldots, P_n$, derivations $P_i \step{\grammar}^* v_i$ and rules $S_i \rightarrow P_i \sop S_{i+1}$, for $1 \le i < n-1$, with $S=S_1$, and a rule $S_{n-1} \rightarrow P_n \sop Q$ such that   $v_1 \sop (v_2 \sop \ldots (v_n) \ldots) \assoc u$.
    Hence, $v_1^\algof{SP} \sop^\algof{SP} \ldots \sop^\algof{SP} v_n^\algof{SP} = u^\algof{SP}  = \graph_1 \sop^\algof{SP} \graph_2$.
    Then, there exists $i \in \interv{1}{n-1}$ such that $\graph_1 = v_1^\algof{SP} \sop^\algof{SP} \ldots \sop^\algof{SP}  v_i^\algof{SP}$ and $\graph_2 =  v_{i+1}^\algof{SP} \sop^\algof{SP} \ldots \sop^\algof{SP} v_n^\algof{SP}$.
    We choose $S' \isdef S_{i+1}$ and check that indeed $S \astep{\grammar}^* u_1 \sop S'$ and $S' \astep{\grammar}^* u_2 \sop Q$, where $u_1$ and $u_2$ are ground terms such that  $u_1 \assoc v_1 \sop \ldots \sop v_i$ and $u_2 \assoc v_{i+1} \sop \ldots \sop v_n$.

  \vspace*{.5\baselineskip} \noindent``(\ref{it2:lemma:assoc-sp})
  $\Rightarrow$ (\ref{it1:lemma:assoc-sp})''
  We assume that there exist derivations $S    \astep{\grammar}^* u_1 \sop S'$ and $S' \astep{\grammar}^* u_2 \sop Q$, where $S' \in \mathcal{S}$ and $u_i$ are ground   $\spsignature$-terms such that $u_i^\algof{SP}=\graph_i$, for $i=1,2$.
  By Lemma~\ref{lemma:assoc-sp-start},
  there are non-terminals $P_1, \ldots, P_n$, derivations $P_i \step{\grammar}^* v_i$ and rules $S_i \rightarrow P_i \sop S_{i+1}$, for $1 \le i < n-1$, with $S_1 = S$ and $S_{k+1} = S'$   and a rule $S_{n-1} \rightarrow P_n \sop Q$ such that $u_1 \assoc v_1 \sop \ldots \sop
  v_k$ and $u_2 \assoc v_{k+1} \sop \ldots \sop v_n$.
  By composing the
  above derivations, we obtain a derivation $S \astep{\grammar}^* (v_1
  \sop \ldots \sop v_k) \sop (v_{k+1} \sop \ldots \sop v_n) \sop
  Q$. Moreover, we have $(v_1 \sop \ldots \sop v_k)^\algof{SP}
  \sop^\algof{SP} (v_{k+1} \sop \ldots \sop v_n)^\algof{SP} =
  u_1^\algof{SP} \sop^\algof{SP} u_2^\algof{SP} = \graph_1
  \sop^\algof{SP} \graph_2 = \graph$, hence we can choose $u \isdef
  v_1 \sop \ldots v_n$, thus obtaining $S \astep{\grammar}^* u \sop Q$
  and $u^\algof{SP} = \graph$.
\end{proof}

\begin{figure}[t!]
  \centerline{\input{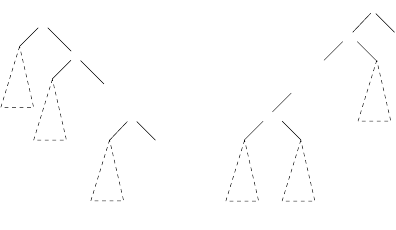_t}} \caption{A derivation $S \step{\grammar}^* (v_1 \sop (v_2 \sop ( \ldots  Q) \ldots)$ (a)
  A view of the graph $\graph = (v_1 \sop \ldots \sop v_n)^\algof{SP}$ corresponding to $S \astep{\grammar}^* (v_1 \sop v_2 \ldots v_n) \sop Q$ (b)   } \label{fig:sp-view} \vspace*{-\baselineskip}
\end{figure}

In order to simplify the development of the recognizer we consider regular series-parallel grammars \emph{in alternative form}:

\begin{definition}\label{def:regular-sp-grammar-alternative}
  A regular series-parallel grammar in \emph{alternative} form is a stratified
  $\spsignature$-grammar
  $\grammar=(\mathcal{P}\uplus\mathcal{S},\rules)$ such that $\fpof{\grammar} \subseteq \afp_\algof{SP}$, where
  \begin{align*}
    \afp_\algof{SP} \isdef & ~\set{\P \rightarrow \pexp{\S}{\geq 1}, \S \rightarrow \P \sop \S, \S \rightarrow \P \sop \P, \rightarrow \P, \rightarrow \S} ~\cup \\
    & ~\set{\S \rightarrow \sgraph{b}_{1,2} \mid b \in \alphabetTwo},
  \end{align*}
  and we require in addition that for every rule $P \rightarrow S \in \rules$ there is some rule
  $S \rightarrow \sgraph{b}_{1,2} \in \rules$ and $S$ does not occur in any other rule of $\rules$.
\end{definition}
Every regular grammar $\grammar$ can be transformed into a regular
grammar in alternative form
$\widehat{\grammar}=(\widehat{\mathcal{S}}\uplus\mathcal{P},\widehat{\rules})$,
by setting $\widehat{\mathcal{S}}\isdef\mathcal{S}\uplus\set{S_b\mid
  b\in\alphabetTwo}$, for some fresh nonterminals $S_b \not\in
\mathcal{S}$, and replacing each rule $P \rightarrow
\sgraph{b}_{1,2}\in \rules$ by the rules $P \rightarrow S_b, S_b
\rightarrow \sgraph{b}_{1,2} \in \widehat{\rules}$.  It is
straightforward to see that $\alangof{}{SP}{\widehat{\grammar}} =
\alangof{}{SP}{\grammar}$ (in every derivation one can replace the
derivation steps $P \rightarrow S_b, S_b \rightarrow \sgraph{b}_{1,2}$
with $P \rightarrow \sgraph{b}_{1,2}$, and vice verse), and that
$\widehat{\grammar}$ is aperiodic iff $\grammar$ is aperiodic.
Clearly, $\size{\grammar} = \Theta(\size{\widehat{\grammar}})$.
Similarly, every regular grammar in alternative form can be turned
into a regular grammar.  Hence, the regular grammars and the regular
grammars in alternative form are in a one-to-one
correspondence\footnote{We note that regular grammars in alternative
form are not regular grammars in general because rules $P \rightarrow
S$ do not satisfy the footprint constraint $\P \rightarrow
\pexp{\S}{\geq2}$ from Definition~\ref{def:regular-sp-grammar}.}.  We
work with regular grammars in alternative form in the following
because we can define the views of P-graphs as multisets of
nonterminals from $\mathcal{S}$:

\begin{definition}\label{def:sp-view}
  Let $\graph$ be a series-parallel graph and
  $\grammar=(\mathcal{S}\uplus\mathcal{P},\rules)$ be a normalized regular series-parallel grammar in alternative form.
  We distinguish the
  following cases:
  \begin{compactenum}
  \item\label{it1:sp-view} $\graph$ is a P-graph: in this case, a
    \emph{view of $\graph$ for $\grammar$} is a multiset
    $\mset{S_1,\ldots,S_n}\in\mpow{\mathcal{S}}$ for which
    there exist complete derivations $S_i \step{{\grammar}}^*
    v_i$, for all $i \in \interv{1}{n}$, such that $\graph = (v_1 \pop
    \ldots \pop v_n)^\algof{SP}$. A \emph{reduced view} of $\graph$ is
    a multiset $\trunk{m}_\grammar$ such that $m$ is a view of
    $\graph$, for $\grammar$.
  \item\label{it2:sp-view} $\graph$ is an S-graph: in this case, a
    view of $\graph$ for $\grammar$ is a pair $(S,Q) \in \mathcal{S}
    \times (\mathcal{S}\uplus\mathcal{P}\uplus\set{\bot})$ for which
    there exists a derivation, either $S \astep{\grammar}^* v \sop Q$,
    if $Q \neq \bot$, or $S \step{\grammar}^* v$, if $Q = \bot$, for a
    ground term $v$, such that $\graph=v^\algof{SP}$. This view is
    reduced, by definition.
  \end{compactenum}
  The \emph{profile of $\graph$ for $\grammar$} is the set
  $h_\grammar(\graph)$ of reduced views of $\graph$ for $\grammar$.
  We call $h_\grammar(\graph)$ a \emph{P-profile} (resp. \emph{S-profile})
  if $\graph$ is a P-graph (resp. an S-graph).
\end{definition}

The distinction between P- and S-graphs plays an important role in the definition of (reduced) views.
Intuitively, a view $\mset{S_1,\ldots,S_n}$ of a P-graph $\graph$ denotes the precise information of how $\graph$ can be obtained as a possible (sub-)derivation of some larger derivation;
here, the alternative form of the regular grammars allows us to also represent single edges as such a multiset $\mset{S}$.
On the other hand, the views of S-graphs are pairs $(S,Q) \in \mathcal{S}\times(\mathcal{S}\uplus\mathcal{P}\uplus\set{\bot})$ that correspond to all possible (sub-)derivations of some larger derivation.
Due to the associativity of the serial composition, the ground subterms in such a derivation can be grouped into a single ground term $v$ with $S \astep{\grammar}^*
v \sop Q$, see Figure \ref{fig:sp-view}.

We define the finite algebra $\algof{A}_\grammar$ having domain
$\universeOf{A}_\grammar \isdef \set{h_\grammar(\graph) \mid \graph \in \universeOf{SP}}$,
where the interpretations of the function symbols from
$\fsignature_\algof{SP}$ are the least sets built according to the
inference rules from Figure \ref{fig:sp-alg}. The rules that define
$\pop^{\algof{A}_\grammar}$ and $\sop^{\algof{A}_\grammar}$ are named
according to the type of their arguments, e.g., $\S\sop^1\P$ and
$\S\sop^2\P$ define the result of $\sop^{\algof{A}_\grammar}$ when the
first argument is an S-profile and the second is a P-profile, etc.

\begin{figure}[t!]
{\small\begin{center}
  \begin{minipage}{7cm}
    \begin{prooftree}
      \AxiomC{$S \rightarrow \sgraph{b}_{1,2} \in \rules$}
      \RightLabel{$\sgraph{b}_{1,2}$}
      \UnaryInfC{$\mset{S} \in \sgraph{b}_{1,2}^{\algof{A}_\grammar}$}
    \end{prooftree}
  \end{minipage}
  \begin{minipage}{7cm}
    \begin{prooftree}
      \AxiomC{$\set{m_i \in a_i}_{i=1,2}$}
      \RightLabel{$\P \pop \P$}
      \UnaryInfC{$\trunk{m_1+m_2}_\grammar \in a_1 \pop^{\algof{A}_\grammar} a_2$}
    \end{prooftree}
  \end{minipage}

  \vspace*{.5\baselineskip}
  \begin{minipage}{7cm}
    \begin{prooftree}
      \AxiomC{$m \in a_1,~ (S,\bot) \in a_2$}
      \RightLabel{$\P \pop \S$}
      \UnaryInfC{$\trunk{m+\mset{S}}_\grammar \in a_1 \pop^{\algof{A}_\grammar} a_2$}
    \end{prooftree}
  \end{minipage}
  \begin{minipage}{7cm}
    \begin{prooftree}
      \AxiomC{$\set{(S_i,\bot) \in a_i}_{i=1,2}$}
      \RightLabel{$\S \pop \S$}
      \UnaryInfC{$\trunk{\mset{S_1}+\mset{S_2}}_\grammar \in a_1 \pop^{\algof{A}_\grammar} a_2$}
    \end{prooftree}
  \end{minipage}

  \vspace*{.5\baselineskip}
  \begin{minipage}{7cm}
    \begin{prooftree}
      \AxiomC{$\begin{array}{c}
          \set{P_i \leadsto_{{\grammar}} m_i \in a_i}_{i=1,2} \\
          S \rightarrow P_1 \sop S_1,~ S_1 \rightarrow P_2 \sop Q \in \rules
          \end{array}$}
      \RightLabel{$\P\sop^1\P$}
      \UnaryInfC{$(S,Q) \in a_1 \sop^{\algof{A}_\grammar} a_2$}
    \end{prooftree}
  \end{minipage}
  \begin{minipage}{7cm}
    \begin{prooftree}
       \AxiomC{$\begin{array}{c}
          \set{P_i \leadsto_{{\grammar}} m_i \in a_i}_{i=1,2} \\
          S \rightarrow P_1 \sop P_2 \in \rules
          \end{array}$}
      \RightLabel{$\P\sop^2\P$}
      \UnaryInfC{$(S,\bot) \in a_1 \sop^{\algof{A}_\grammar} a_2$}
    \end{prooftree}
  \end{minipage}

  \vspace*{.5\baselineskip}
  \begin{minipage}{7cm}
    \begin{prooftree}
       \AxiomC{$\begin{array}{c}
           P \leadsto_{{\grammar}} m \in a_1,~ (S_1,Q) \in a_2 \\
           S \rightarrow P \sop S_1 \in \rules
          \end{array}$}
      \RightLabel{$\P\sop\S$}
      \UnaryInfC{$(S,Q) \in a_1 \sop^{\algof{A}_\grammar} a_2$}
    \end{prooftree}
  \end{minipage}
  \begin{minipage}{7cm}
    \begin{prooftree}
      \AxiomC{$\begin{array}{c}
          \\
          (S,S_1) \in a_1,~ (S_1,Q) \in a_2
        \end{array}$}
      \RightLabel{$\S\sop\S$}
      \UnaryInfC{$(S,Q) \in a_1 \sop^{\algof{A}_\grammar} a_2$}
    \end{prooftree}
  \end{minipage}

  \vspace*{.5\baselineskip}
  \begin{minipage}{7cm}
    \begin{prooftree}
      \AxiomC{$\begin{array}{c}
          (S,S_1) \in a_1,~ P \leadsto_{{\grammar}} m \in a_2 \\
          S_1 \rightarrow P \sop Q \in \rules
          \end{array}$}
      \RightLabel{$\S\sop^1\P$}
      \UnaryInfC{$(S,Q) \in a_1 \sop^{\algof{A}_\grammar} a_2$}
    \end{prooftree}
  \end{minipage}
  \begin{minipage}{7cm}
    \begin{prooftree}
      \AxiomC{$\begin{array}{c}
          \\
          (S,P) \in a_1,~ P \leadsto_{{\grammar}} m \in a_2
        \end{array}$}
      \RightLabel{$\S\sop^2\P$}
      \UnaryInfC{$(S,\bot) \in a_1 \sop^{\algof{A}_\grammar} a_2$}
    \end{prooftree}
  \end{minipage}
\end{center}}
\caption{The interpretation of the signature $\spsignature$ in $\algof{A}_\grammar$}
\label{fig:sp-alg}
\end{figure}

We give upper bounds on the size of the recognizer algebra and the
time needed to evaluate the interpretation of the function
symbols.
The size of an element $a \in \universeOf{A}_\grammar$ is
defined as follows: \begin{align*}
  \sizeof{a} \isdef & ~\left\{\begin{array}{ll}
  \sum_{m \in a} \cardof{m} \text{, if } a \text{ is a P-profile} \\[1mm]
  2\cdot\cardof{a} \text{, if } a \text{ is an S-profile}
  \end{array}\right.
\end{align*}

\begin{lemma}\label{lemma:size-of-sp-rec}
  Let $\grammar = (\mathcal{S}\uplus\mathcal{P}, \rules)$ be a normalized regular series-parallel grammar in alternative form.
  Then, $\cardof{\universeOf{A}_\grammar} \in
  2^{{\sizeof{\grammar}}^{\poly{\sizeof{\grammar}}}}$, $\sizeof{a} \in \sizeof{\grammar}^{\poly{\sizeof{\grammar}}}$, for each $a \in
  \universeOf{A}_\grammar$ and
  $f^{\algof{A}_\grammar}(a_1,\ldots,a_{\arityof{f}})$ is computable in time ${\sizeof{\grammar}}^{\poly{\sizeof{\grammar}}}$, for each function symbol $f \in \fsignature_\algof{SP}$ and all elements  $a_1,\ldots,a_{\arityof{f}} \in \universeOf{A}_\grammar$.
\end{lemma}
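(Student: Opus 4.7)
\emph{Proof plan.}
The plan is to follow the same template as the proof of Lemma \ref{lemma:size-of-tree-rec}, adapted to the richer structure of $\universeOf{A}_\grammar$, which now contains both P-profiles (sets of reduced multisets over $\mathcal{S}$) and S-profiles (sets of pairs in $\mathcal{S} \times (\mathcal{S}\uplus\mathcal{P}\uplus\set{\bot})$).

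First, I will bound the sizes of reduced multisets as before. Exactly as in the tree case, for each $S \in \mathcal{S}$ we have $b(S) \le \sizeof{\grammar}$, $p(S) \le \sizeof{\grammar}^{\cardof{\mathcal{S}}}$, hence $q(S) = b(S) + p(S) \in \sizeof{\grammar}^{\poly{\sizeof{\grammar}}}$. Setting $\mathcal{Q} \isdef \max\set{q(S) \mid S \in \mathcal{S}}$, the number of reduced multisets over $\mathcal{S}$ is at most $\mathcal{Q}^{\cardof{\mathcal{S}}} \in \sizeof{\grammar}^{\poly{\sizeof{\grammar}}}$, and hence the number of P-profiles is at most $2^{\sizeof{\grammar}^{\poly{\sizeof{\grammar}}}}$. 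For S-profiles, the set of pairs has cardinality at most $\cardof{\mathcal{S}} \cdot (\cardof{\mathcal{S}} + \cardof{\mathcal{P}} + 1) \in \poly{\sizeof{\grammar}}$, so the number of S-profiles is at most $2^{\poly{\sizeof{\grammar}}}$. Summing both bounds yields $\cardof{\universeOf{A}_\grammar} \in 2^{\sizeof{\grammar}^{\poly{\sizeof{\grammar}}}}$. The bound $\sizeof{a} \in \sizeof{\grammar}^{\poly{\sizeof{\grammar}}}$ follows because a P-profile contains at most $\mathcal{Q}^{\cardof{\mathcal{S}}}$ multisets, each of cardinality at most $\cardof{\mathcal{S}} \cdot \mathcal{Q}$, while an S-profile has cardinality polynomially bounded.

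Second, I will compute time bounds for each of the constructors in Figure \ref{fig:sp-alg}. The constant $\sgraph{b}_{1,2}^{\algof{A}_\grammar}$ is obtained by enumerating the rules $S \rightarrow \sgraph{b}_{1,2}$, which takes time $\poly{\sizeof{\grammar}}$. For $\pop^{\algof{A}_\grammar}$, the three variants ($\P\pop\P$, $\P\pop\S$, $\S\pop\S$) each reduce to computing $\trunk{m_1+m_2}_\grammar$ for pairs taken from the two operands; each reduction takes time $\poly{\cardof{\mathcal{S}} \cdot \mathcal{Q}}$ and there are at most $\sizeof{\grammar}^{\poly{\sizeof{\grammar}}}$ such pairs, giving a total of $\sizeof{\grammar}^{\poly{\sizeof{\grammar}}}$. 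For $\sop^{\algof{A}_\grammar}$, the $\S\sop\S$ variant is a direct look-up on pairs, while the variants involving P-profiles (namely $\P\sop^1\P$, $\P\sop^2\P$, $\P\sop\S$, $\S\sop^1\P$, $\S\sop^2\P$) all require checking the premise $P \leadsto_\grammar m$ for a reduced multiset $m$. Exactly as in the tree case, this check is done by iterating over the rules of form (\ref{it2:syntactic-regular}) with left-hand side $P$ and performing a modulo computation against the unique rule $P \rightarrow P \pop \pexp{S}{r}$ of form (\ref{it1:syntactic-regular}) for each $S$ (using normalization via Lemma \ref{lemma:normal-form}); every such check runs in time $\sizeof{\grammar}^{\poly{\sizeof{\grammar}}}$. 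Multiplying by the number of candidate pairs of operand elements and candidate rules yields the overall bound $\sizeof{\grammar}^{\poly{\sizeof{\grammar}}}$.

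The main obstacle — and the one genuinely new aspect relative to the tree case — is the bookkeeping induced by the two-sorted structure of elements and by the eight inference rules in Figure \ref{fig:sp-alg}: one must verify that every rule respects the intended P/S-profile typing (so that $f^{\algof{A}_\grammar}$ is a total function into $\universeOf{A}_\grammar$) and that no single rule blows up the time beyond $\sizeof{\grammar}^{\poly{\sizeof{\grammar}}}$. Once these case distinctions are dispatched, the polynomial bounds from the tree-case analysis transfer unchanged, completing the proof.
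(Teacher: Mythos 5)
Your proposal is correct and follows essentially the same route as the paper's proof: the same bounds on reduced multisets via $b$, $p$, $q$ and $\mathcal{Q}$ carried over from the tree case, the same polynomial count of $(S,Q)$ pairs for S-profiles, and the same per-rule time analysis of Figure \ref{fig:sp-alg} with the $P \leadsto_\grammar m$ check done by iterating over rules of form (\ref{it2:syntactic-regular}) and a modulo test against the unique normalized rule of form (\ref{it1:syntactic-regular}). No gaps to report.
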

\begin{proof}\emph{Proof.}
  As in the proof of Lemma~\ref{lemma:size-of-tree-rec}, we obtain
  that there are at most
  ${\sizeof{\grammar}}^{\poly{\cardof{\mathcal{S}}}}$
  different reduced multisets and hence at most
  $2^{{\sizeof{\grammar}}^{\poly{\cardof{\mathcal{S}}}}}$
  different sets of reduced multisets.
  Moreover, there are
  $\cardof{\mathcal{S}} \cdot (\cardof{\mathcal{S}} +
  \cardof{\mathcal{P}} + 1) \in \poly{\cardof{\mathcal{S}} +
    \cardof{\mathcal{P}}}$ many different pairs of the form $(S,Q) \in
  \mathcal{S}\times(\mathcal{S}\uplus\mathcal{P}\uplus\set{\bot})$. Hence,
  there are at most different $2^ \poly{\cardof{\mathcal{S}} +
    \cardof{\mathcal{P}}}$ sets of such pairs. Thus, we obtain: \begin{align*}
    \cardof{\universeOf{A}_\grammar} \in & ~2^{{\sizeof{\grammar}}^{\poly{\cardof{\mathcal{S}}}}} + 2^ \poly{\cardof{\mathcal{S}} + \cardof{\mathcal{P}}} \\
    \in & ~2^{{\sizeof{\grammar}}^{\poly{\sizeof{\grammar}}}}
  \end{align*}
  We note that an element $a \in \universeOf{A}_\grammar$ is either a
  set of reduced multisets or a set of pairs $(S,Q)\in
  \mathcal{S}\times(\mathcal{S}\uplus\mathcal{P}\uplus\set{\bot})$. As
  in the proof of Lemma~\ref{lemma:size-of-tree-rec}, there are at
  most ${\sizeof{\grammar}}^{\poly{\cardof{\mathcal{S}}}}$
  different reduced multisets $m$, for which we have $\cardof{m} \in
  {\sizeof{\grammar}}^{\poly{\cardof{\mathcal{S}}}}$.
  Hence, for each $a \in \universeOf{A}_\grammar$, we
  obtain: \begin{align*}
    \sizeof{a} \in & ~\max\left(
    {\sizeof{{\grammar}}}^{\poly{\cardof{\mathcal{S}}}},
    \poly{\cardof{\mathcal{S}} + \cardof{\mathcal{P}}}\right) \\
    \in & ~{\sizeof{\grammar}}^{\poly{\sizeof{\grammar}}}
  \end{align*}
  We compute an upper bound on the time needed to evaluate the
  function $f^{\algof{A}_\grammar}$, for each $f \in
  \fsignature_\algof{SP}$: \begin{itemize}[-]
  \item $\sgraph{b}_{1,2}^{\algof{A}_\grammar}$: is computed in linear
    time by going over all the rules of ${\grammar}$, where
    $\sizeof{{\grammar}} = \bigO(\sizeof{\grammar})$.
  \item $a_1 \pop^\algof{A} a_2$: We note that for P-profiles
    $a_1,a_2$, the value of $a_1 \pop^\algof{A} a_2$ can be computed
    in time
    ${\sizeof{\grammar}}^{\poly{\cardof{\mathcal{S}}}} \subseteq
    \sizeof{\grammar}^{\poly{\sizeof{\grammar}}}$, by the same
    arguments as in the proof of
    Lemma~\ref{lemma:size-of-tree-rec}. We note that the cases, where
    at least one of $a_1$ and $a_2$ is an S-profile are similar.
  \item $a_1 \sop^\algof{A} a_2$: First, we assume that $a_1$ and
    $a_2$ are both S-profiles. Then, the computation of $a_1
    \sop^\algof{A} a_2$ requires to check for every pair $(S,Q)$
    whether there is an $S_1 \in \mathcal{S}$ such that $(S,S_1) \in
    a_1$ and $(S_1,Q) \in a_2$.  Clearly, this can be done in
    $\poly{\cardof{\mathcal{S}}+ \cardof{\mathcal{P}}}
    \subseteq \sizeof{\grammar}^{\poly{\sizeof{\grammar}}}$.  In case
    $a_1$ is a P-profile and $a_2$ is an S-profile, for every pair
    $(S,Q)$, we need to iterate over all rules $S \rightarrow P \sop
    S_1 \in \rules$ and $m \in a_1$ and check whether $(S_1,Q) \in
    a_2$ and $P \leadsto_{{\grammar}} m$.  We note that there are
    at most
    ${\sizeof{{\grammar}}}^{\poly{\cardof{\mathcal{S}}}}$
    many multisets $m \in a_1$. As in the proof of Lemma
    \ref{lemma:size-of-tree-rec}, the check $P
    \leadsto_{{\grammar}} m$, for a given nonterminal $P$ and
    multiset $m$, takes time
    ${\sizeof{\grammar}}^{\poly{\cardof{\mathcal{S}}}}$.
    As there are at most $\poly{\cardof{\mathcal{S}}+
      \cardof{\mathcal{P}}}$ many pairs $(S_1,Q) \in a_2$, we overall
    get that $a_1 \sop^\algof{A} a_2$, in this case, can be computed
    in ${\sizeof{\grammar}}^{\poly{\cardof{\mathcal{S}}}} \subseteq
    \sizeof{\grammar}^{\poly{\sizeof{\grammar}}}$. The other cases are
    similar and give the same asymptotic complexity bound.
  \end{itemize}
\end{proof}

We establish now the counterparts of Lemmas
\ref{lemma:tree-view-membership} and
\ref{lemma:tree-view-homomorphism}. First, equality of profiles
amounts to indistinguishability of graphs with respect to membership
in $\alangof{}{SP}{\grammar}$:

\begin{lemma}\label{lemma:sp-view-membership}
  For all $\graph_1, \graph_2 \in \universeOf{SP}$, if
  $h_\grammar(\graph_1)=h_\grammar(\graph_2)$
  then $\graph_1\in\alangof{}{SP}{\grammar} \iff
  \graph_2\in\alangof{}{SP}{\grammar}$.
\end{lemma}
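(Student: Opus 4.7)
The proof mirrors that of Lemma~\ref{lemma:tree-view-membership}, but must treat P-graphs and S-graphs separately because the two kinds of reduced views live in disjoint spaces: multisets in $\mpow{\mathcal{S}}$ for P-graphs versus pairs in $\mathcal{S} \times (\mathcal{S}\uplus\mathcal{P}\uplus\{\bot\})$ for S-graphs. The hypothesis $h_\grammar(\graph_1) = h_\grammar(\graph_2)$ therefore forces $\graph_1$ and $\graph_2$ to be of the same kind (or both to have the empty profile, in which case neither is in $\alangof{}{SP}{\grammar}$ since any complete derivation induces at least one view). By symmetry, it is enough to prove the forward direction.

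For the S-graph case, any complete derivation of $\graph_1$ must start from an axiom $\rightarrow S$ with $S \in \mathcal{S}$, because in alternative form every $P$-derivation produces a $\pop$-composition of $S$-derivations and hence only P-graphs. Thus $(S,\bot)$ is a view of $\graph_1$ by part~2 of Definition~\ref{def:sp-view}, so $(S,\bot) \in h_\grammar(\graph_2)$ by profile equality. Unfolding the definition provides a complete derivation $S \step{\grammar}^* t_2$ with $t_2^\algof{SP} = \graph_2$, whence $\graph_2 \in \alangof{}{SP}{\grammar}$.

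For the P-graph case, I proceed as in the tree proof. Pick a complete derivation of $\graph_1$ from an axiom $\rightarrow P$ with $P \in \mathcal{P}$, and reorder it so that its prefix applies only type~A and type~B rules, producing a term $\pop_{S \in \mathcal{S}} \pexp{S}{m_1(S)}$ that is subsequently completed by independent $S$-derivations. Then $m_1$ is a view of $\graph_1$ with $P \leadsto_\grammar m_1$, so $\trunk{m_1}_\grammar \in h_\grammar(\graph_2)$ and there is some view $m_2$ of $\graph_2$ such that $\trunk{m_1}_\grammar = \trunk{m_2}_\grammar$. Invoking Lemma~\ref{fact:trunk} (which is why we require $\grammar$ to be normalized) gives $P \leadsto_\grammar m_2$, and combining this with the complete $S$-derivations witnessing $m_2$ yields a complete derivation of $\graph_2$ from the same axiom.

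The main obstacle is the residual subcase where the axiom is $\rightarrow S$ but $\graph_1$ is a P-graph. This necessarily makes $\graph_1$ a single edge $\sgraph{b}_{1,2}^\algof{SP}$ derived via $S \rightarrow \sgraph{b}_{1,2}$, so that $\mset{S}$ lies in both profiles; reconstructing an axiom-derivation of $\graph_2$ from this bare multiset view requires careful use of the alternative-form constraint that, whenever a rule $P \rightarrow S$ is present, $S$ occurs in no other rule of $\rules$, so $\mset{S}$ can only be a reduced view of the specific graphs that $S$ (and hence the axiom) actually derives.
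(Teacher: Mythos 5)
Your handling of the two cases that carry the bulk of the statement is essentially the paper's own proof: for S-graphs you use the $(S,\bot)$ views exactly as the paper does, and for P-graphs derived from a $\mathcal{P}$-axiom you use the reordered derivation, the reduced view, profile equality, and Lemma~\ref{fact:trunk} (hence the normalization hypothesis) exactly as the paper does; the empty-profile remark is a harmless addition.

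The problem is the residual subcase you isolate at the end, which is where your proposal stops being a proof, and the reason you sketch does not close it. The clause of Definition~\ref{def:regular-sp-grammar-alternative} you invoke is conditional: it constrains a nonterminal $S$ only when some rule $P \rightarrow S$ is actually present in $\rules$. An axiom nonterminal $S$ with a rule $S \rightarrow \sgraph{b}_{1,2}$ need not be the target of any rule $P \rightarrow S$, so you cannot conclude that $S$ occurs in no other rule. More importantly, even if $S$ occurred nowhere else, the statement you need --- that $\mset{S}$ can only be a reduced view of graphs that $S$ itself derives --- is a statement about the truncation $\trunk{\cdot}_\grammar$, not about where $S$ occurs. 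From $h_\grammar(\graph_1)=h_\grammar(\graph_2)$ you only obtain a view $m_2$ of $\graph_2$ with $\trunk{m_2}_\grammar=\mset{S}$, and $m_2=\mset{S}$ is forced only when $q(S)\geq 2$. If $S$ appears on no right-hand side of a rule of form (\ref{it2:syntactic-regular}), then $b(S)=0$, $p(S)=1$, $q(S)=1$, and $\trunk{\mset{S,S}}_\grammar=\mset{S}$; for instance, in the normalized alternative-form grammar whose only rule is $S \rightarrow \sgraph{b}_{1,2}$ with axiom $\rightarrow S$, the single $b$-edge and the double $b$-edge both receive the profile $\set{\mset{S}}$, although $S$ derives only the former. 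So the inference ``$\mset{S}\in h_\grammar(\graph_2)$ implies $S \step{\grammar}^* t_2$ with $t_2^\algof{SP}=\graph_2$'' needs a genuinely additional argument (e.g., that every $S\in\mathcal{S}$ relevant here occurs in some rule of form (\ref{it2:syntactic-regular}), so that $q(S)\geq 2$ and the view reducing to $\mset{S}$ must be $\mset{S}$ itself); the $P\rightarrow S$ clause gives neither this nor a substitute for it.

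It is worth noting that you have put your finger on a real blind spot: the paper's proof does not treat this subcase either --- in the P-graph case it simply asserts that the witnessing axiom is of the form $\rightarrow P$, and in the S-graph case that it is of the form $\rightarrow S$, which is exactly the dichotomy your residual subcase questions (a single edge can be produced from an $\mathcal{S}$-axiom). Your S-graph direction correctly justifies half of that dichotomy via the alternative form; what remains open, in your write-up as in the paper's, is the single-edge-from-$\mathcal{S}$-axiom situation, and closing it requires controlling the truncation constants rather than the syntactic occurrence of $S$.
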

\begin{proof}\emph{Proof.}
  If
  $h_\grammar(\graph_1)=h_\grammar(\graph_2)$
  then $\graph_1$ and $\graph_2$ are either both P-graphs or both
  S-graphs. Assume that $\graph_1\in\alangof{}{SP}{\grammar}$. We prove that
  $\graph_2\in\alangof{}{SP}{\grammar}$ by considering the following
  cases:

  \vspace*{.5\baselineskip}\noindent \underline{$\graph_1,\graph_2$
    are P-graphs}: Since $\graph_1 \in
  \alangof{}{SP}{{\grammar}}$, there exists a complete
  derivation $P \step{{\grammar}}^* v$, for some axiom
  $\rightarrow P$ of ${\grammar}$ such that $v^\algof{SP} =
  \graph_1$. Without loss of generality, the derivation can be
  reorganized as:
    \begin{align*}
      P \step{{\grammar}}^* ~\pop_{S \in \mathcal{S}} ~\pexp{S}{m_1(S)}
      \step{{\grammar}}^* ~\pop_{S \in \mathcal{S}, i \in
        \interv{1}{m_1(S)}} v^1(S,i)
    \end{align*}
    for a multiset $m_1 \in \mpow{\mathcal{S}}$ and ground
    terms $v_1(S,i)$ such that $\graph_1 = (\pop_{S \in
      \mathcal{S},~ i \in \interv{1}{m_1(S)}}
    v_1(S,i))^\algof{SP}$. Then, $m_1$ is a view of $\graph_1$, hence
    $\trunk{m_1}_\grammar \in
    h_\grammar(\graph_1)$. Because
    $h_\grammar(\graph_1)=h_\grammar(\graph_2)$, there exists a view
    $m_2$ of $\graph_2$ such that $\trunk{m_2}_\grammar =
    \trunk{m_1}_\grammar$. Moreover, $P \leadsto_{{\grammar}}
    m_1$, hence $P \leadsto_{{\grammar}}
    \trunk{m_1}_\grammar =
    \trunk{m_2}_\grammar$ and $P
    \leadsto_{{\grammar}} m_2$, by Lemma \ref{fact:trunk}. We
    obtain a derivation:
    \begin{align*}
      P \step{{\grammar}}^* ~\pop_{S \in \mathcal{S}} ~\pexp{S}{m_2(S)}
      \step{{\grammar}}^* ~\pop_{S \in \mathcal{S}, i \in
      \interv{1}{m_2(S)}} v_2(S,i)
    \end{align*}
    for some ground terms $v_2(S,i)$ such that $\graph_1 = (\pop_{S
      \in \mathcal{S},~ i \in \interv{1}{m_2(S)}}
    v_2(S,i))^\algof{SP}$. Since $\rightarrow P$ is an axiom of
    ${\grammar}$, we obtain that $\graph_2 \in
    \alangof{}{SP}{{\grammar}}$.

    \vspace*{.5\baselineskip}\noindent \underline{$\graph_1,\graph_2$
      are S-graphs}: Since $\graph_1
    \in\alangof{}{SP}{{\grammar}}$, there exists a complete
    derivation $S \step{{\grammar}}^* v_1$, such that
    $\graph_1 = v^\algof{SP}$, for some axiom $\rightarrow S$ of
    ${\grammar}$. Then, $(S,\bot) \in
    h_\grammar(\graph_1) =
    h_\grammar(\graph_2)$, hence there exists a complete
    derivation $S \step{{\grammar}}^* v_2$ such that
    $v_2^\algof{SP}=\graph_2$, leading to $\graph_2 \in
    \alangof{}{SP}{{\grammar}}$.

    \vspace*{.5\baselineskip}\noindent The direction
    $\graph_2\in\alangof{}{SP}{\grammar} \Rightarrow
    \graph_1\in\alangof{}{SP}{\grammar}$ is symmetric.
\end{proof}

\begin{lemma}\label{lemma:sp-view-homomorphism}
  $h_\grammar$ is a homomorphism between the
  $\spsignature$-algebras $\algof{SP}$ and $\algof{A}_\grammar$.
\end{lemma}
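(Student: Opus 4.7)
\noindent\emph{Proof plan.}
The proof proceeds by verifying that $h_\grammar(f^\algof{SP}(\graph_1,\ldots,\graph_n))=f^{\algof{A}_\grammar}(h_\grammar(\graph_1),\ldots,h_\grammar(\graph_n))$ for each function symbol $f\in\spsignature$, by inclusion in both directions and by case analysis on the kind (P-graph vs. S-graph) of the arguments. This mirrors the case-based definition of $f^{\algof{A}_\grammar}$ in Figure~\ref{fig:sp-alg}.

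For the constants $\sgraph{b}_{1,2}$, I use Lemma~\ref{lemma:sp-decomp}(\ref{it2:lemma:sp-decomp}) to observe that a single edge is a P-graph with a trivial parallel decomposition, so any view must be a singleton multiset $\mset{S}$ with a complete derivation $S\step{\grammar}^* \sgraph{b}_{1,2}$. The alternative form (Definition~\ref{def:regular-sp-grammar-alternative}) guarantees that the only such derivations in the (normalized) grammar are applications of a rule $S\rightarrow \sgraph{b}_{1,2}\in\rules$, matching exactly the inference rule for $\sgraph{b}_{1,2}^{\algof{A}_\grammar}$.

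For $\pop$, the three subcases (P$\pop$P, P$\pop$S, S$\pop$S) all exploit that parallel decompositions are unique (Lemma~\ref{lemma:sp-decomp}(\ref{it2:lemma:sp-decomp})). If both arguments are P-graphs, then views $m_1$ of $\graph_1$ and $m_2$ of $\graph_2$ combine to a view $m_1+m_2$ of $\graph_1\pop^\algof{SP}\graph_2$, and reduced views combine via $\trunk{m_1+m_2}_\grammar$; Lemma~\ref{fact:trunk-two} guarantees that reduced representatives can be chosen componentwise, just as for the tree case (see the proof of Lemma~\ref{lemma:tree-view-homomorphism}). When an argument $\graph_i$ is an S-graph, it contributes a single $\sop$-atomic factor in the combined parallel decomposition, so its presence is witnessed by some $(S,\bot)\in h_\grammar(\graph_i)$ (a complete $S$-derivation of $\graph_i$), contributing the singleton $\mset{S}$ to the combined multiset. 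Each of the three inference rules in Figure~\ref{fig:sp-alg} corresponds exactly to such a decomposition.

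For $\sop$, the result is always an S-graph. The main obstacle is that, because of associativity~(\ref{eq:assoc-sp}), there is genuine nondeterminism in how to split a derivation $S\astep{\grammar}^* v\sop Q$ with $v^\algof{SP}=\graph_1\sop^\algof{SP}\graph_2$ across the interface between $\graph_1$ and $\graph_2$. This is handled precisely by Lemma~\ref{lemma:assoc-sp}, which gives the required split $S\astep{\grammar}^* v_1\sop S'$ and $S'\astep{\grammar}^* v_2\sop Q$ with $v_i^\algof{SP}=\graph_i$, together with Lemma~\ref{lemma:assoc-sp-start}, which decomposes a partial derivation $S\astep{\grammar}^* u\sop Q$ into a chain of serial rule applications $S_i\rightarrow P_i\sop S_{i+1}$ terminated by $S_{n-1}\rightarrow P_n\sop Q$. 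When $\graph_i$ is a P-graph, it is $\sop$-atomic so the corresponding chain has length one, yielding a single rule $S\rightarrow P\sop S'$ (or $S\rightarrow P\sop Q$) with $P\leadsto_\grammar m$ for some view $m$ of $\graph_i$; when $\graph_i$ is an S-graph, the split itself is witnessed by some $(S,S')\in h_\grammar(\graph_i)$. The six subcases (P$\sop^1$P, P$\sop^2$P, P$\sop$S, S$\sop$S, S$\sop^1$P, S$\sop^2$P) correspond to whether each argument is a P- or S-graph and whether the overall derivation is partial ($Q\neq\bot$) or complete ($Q=\bot$); the complete case at the tail of the derivation requires an application of the rules $S\rightarrow P_1\sop P_2$ or $S_{n-1}\rightarrow P_n\sop Q$ with $Q=\bot$, which matches the rules $\P\sop^2\P$ and $\S\sop^2\P$. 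The converse direction in each subcase is obtained by composing the witnessing derivations provided by the hypotheses of the inference rule, using Lemma~\ref{lemma:assoc-sp} to assemble $S\astep{\grammar}^* v_1\sop v_2\sop Q$ from its two halves.
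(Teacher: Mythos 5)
Your plan is correct and follows essentially the same route as the paper's proof: the same case analysis over the function symbols and over P-/S-graph arguments, with Lemma~\ref{lemma:assoc-sp-start} and Lemma~\ref{lemma:assoc-sp} handling the associativity of $\sop$, and Lemma~\ref{fact:trunk-two} handling the combination of reduced multisets for $\pop$. The only detail you leave implicit is the use of Lemma~\ref{fact:trunk} to pass between a view $m$ satisfying $P \leadsto_\grammar m$ and the reduced view $\trunk{m}_\grammar \in h_\grammar(\graph_i)$ required by the side conditions of the inference rules in Figure~\ref{fig:sp-alg}; the paper invokes it at exactly those points.
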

\begin{proof}\emph{Proof.}
  We prove the following points, for all $b \in \alphabetTwo$ and
  $\graph_1, \graph_2 \in \universeOf{SP}$:

  \vspace*{.5\baselineskip}
  \noindent\underline{$h_\grammar(\sgraph{b}_{1,2}^\algof{SP})
    = \sgraph{b}_{1,2}^{\algof{A}_\grammar}$}: Since
  $\sgraph{b}_{1,2}^\algof{SP}$ is a P-graph, we have
  $h_\grammar(\sgraph{b}_{1,2}^\algof{SP}) = \set{\mset{S}
    \mid S \rightarrow b_{1,2} \in \rules} =
  \sgraph{b}_{1,2}^{\algof{A}_\grammar}$, where
  $\sgraph{b}_{1,2}^{\algof{A}_\grammar}$ is obtained by the inference
  rule ($b_{1,2}$) from Figure \ref{fig:sp-alg}.

  \vspace*{.5\baselineskip}
  \noindent\underline{$h_\grammar(\graph_1
    \pop^{\algof{SP}} \graph_2) = h_\grammar(\graph_1)
    \pop^{\algof{A}_\grammar} h_\grammar(\graph_2)$}:
  Since $\graph_1 \pop^{\algof{SP}} \graph_2$ is a P-graph, each
  element of $h_\grammar(\graph_1 \pop^{\algof{SP}}
  \graph_2)$ is a multiset $m \in \mpow{\mathcal{S}}$ for
  which there exists a view $m'$ of $\graph_1 \pop^{\algof{SP}}
  \graph_2$ such that $m = \trunk{m'}_\grammar$. We
  distinguish the following cases, according to the types of $G_1$ and
  $G_2$: \begin{itemize}[-]
  \item $\graph_1$ and $\graph_2$ are P-graphs: ``$\subseteq$'' Let $m
    \in h_\grammar(\graph_1 \pop^{\algof{SP}} \graph_2)$
    and $m'$ be the view of $\graph_1 \pop^{\algof{SP}} \graph_2$ such
    that $m = \trunk{m'}_\grammar$. Then, for each $S \in
    \mathcal{S}$ and $i \in \interv{1}{m'(S)}$, there exists
    a complete derivation $S \step{{\grammar}}^* v(S,i)$,
    where $v(S,i)$ is a ground term such that $\graph_1
    \pop^{\algof{SP}} \graph_2 = (\pop_{S \in \mathcal{S}, i
      \in \interv{1}{m'(S)}} v(S,i))^\algof{SP}$. We split the set of
    ground terms $v(S,i)$ according to whether $v(S,i)^\algof{SP}$ is
    a subgraph of $\graph_1$ or of $\graph_2$, thus obtaining views
    $m'_i$ of $\graph_i$ such that $m' = m'_1 + m'_2$. Since
    $\trunk{m'_i}_\grammar \in
    h_\grammar(\graph_i)$ and $m = \trunk{m'_1 +
      m'_2}_\grammar =
    \trunk{\trunk{m'_1}_\grammar +
      \trunk{m'_2}_\grammar}_\grammar$, by
    Lemma \ref{fact:trunk-two}, we obtain $m \in
    h_\grammar(\graph_1) \pop^{\algof{A}_\grammar}
    h_\grammar(\graph_2)$, by applying the inference rule
    ($\P \pop \P$) from Figure \ref{fig:sp-alg}.

    \noindent''$\supseteq$'' Let $m \in
    h_\grammar(\graph_1) \pop^{\algof{A}_\grammar}
    h_\grammar(\graph_2)$ be a multiset. Since both
    $h_\grammar(\graph_i)$ are P-profiles, for $i=1,2$, by
    the inference rule ($\P \pop \P$) from Figure \ref{fig:sp-alg},
    there exist $m_i \in h_\grammar(\graph_i)$, for
    $i=1,2$, such that $m = \trunk{m_1 +
      m_2}_\grammar$. Then, there exist views $m'_i$ of
    $\graph_i$ such that $m_i = \trunk{m'_i}_\grammar$. By
    Lemma \ref{fact:trunk-two}, we obtain $m = \trunk{m'_1 +
      m'_2}_\grammar$, hence $m'_1 + m'_2$ is a view of
    $\graph_1 \pop^\algof{SP} \graph_2$ for ${\grammar}$, thus
    $m \in h_\grammar(\graph_1 \pop^\algof{SP} \graph_2)$.
  \item $\graph_i$ is a P-graph and $\graph_{3-i}$ is an S-graph, for
    $i = 1,2$: We assume that $i=1$, the proof being the same for the
    case $i=2$, because of the commutativity of $\pop^\algof{SP}$ and
    $\pop^{\algof{A}_\grammar}$. ``$\subseteq$'' Let $m \in
    h_\grammar(\graph_1 \pop^{\algof{SP}} \graph_2)$ and
    $m'$ be the view of $\graph_1 \pop^{\algof{SP}} \graph_2$ such
    that $m = \trunk{m'}_\grammar$. Then, for each $S \in
    \mathcal{S}$ and $i \in \interv{1}{m'(S)}$, there exists
    a complete derivation $S \step{{\grammar}}^* v(S,i)$,
    where $v(S,i)$ is a ground term such that $\graph_1
    \pop^{\algof{SP}} \graph_2 = (\pop_{S \in \mathcal{S}, i
      \in \interv{1}{m'(S)}} v(S,i))^\algof{SP}$. Since $\graph_2$ is
    an S-graph, the only possibility is that $\graph_2 =
    v(S,i)^\algof{SP}$, for a single $S \in \mathcal{S}$ and $i \in
    \interv{1}{m'(S)}$, such that $S \step{{\grammar}}^*
    v(S,i)$ is a complete derivation.  Then, $(S,\bot) \in
    h_\grammar(\graph_2)$. Moreover, $m'_1 \isdef m' -
    \mset{S}$ (where $-$ here denotes multiset difference) is a view
    of $\graph_1$ for ${\grammar}$ and $m_1 \isdef
    \trunk{m'_1}_\grammar \in
    h_\grammar(\graph_1)$. By Lemma \ref{fact:trunk-two},
    we obtain $m = \trunk{m_1 + \mset{S}}_\grammar$, thus
    $m \in h_\grammar(\graph_1) \sop^{\algof{A}_\grammar}
    h_\grammar(\graph_2)$, by applying the inference rule
    ($\P \pop \S$) from Figure \ref{fig:sp-alg}.

    \noindent``$\supseteq$'' Let $m \in
    h_\grammar(\graph_1) \pop^{\algof{A}_\grammar}
    h_\grammar(\graph_2)$ be a multiset. Since
    $h_\grammar(\graph_1)$ is a P-profile and
    $h_\grammar(\graph_2)$ is an S-profile, by the
    inference rule ($\P \pop \S$) from Figure \ref{fig:sp-alg}, there
    exists $m_1 \in h_\grammar(\graph_1)$ and $(S,\bot)
    \in h_\grammar(\graph_2)$ such that $m = \trunk{m_1 +
      \mset{S}}_\grammar$. Then, there exists a view
    $m'_1$ of $\graph_1$ such that $m_1 =
    \trunk{m'_1}_\grammar$. Moreover, there exists a
    complete derivation $S \step{{\grammar}}^* v$ such that
    $v^\algof{SP} = \graph_2$. Hence, $m'_1 + \mset{S}$ is a view of
    $\graph_1 \pop^\algof{SP} \graph_2$ for ${\grammar}$ and
    $m = \trunk{m_1 + \mset{S}}_\grammar =
    \trunk{\trunk{m'_1}_\grammar +
      \mset{S}}_\grammar$, by Lemma \ref{fact:trunk-two},
    thus $m \in h_\grammar(\graph_1 \pop^\algof{SP}
    \graph_2)$.
  \item $\graph_1$ and $\graph_2$ are S-graphs: ``$\subseteq$'' Let $m
    \in h_\grammar(\graph_1 \pop^{\algof{SP}} \graph_2)$
    and $m'$ be the view of $\graph_1 \pop^{\algof{SP}} \graph_2$ such
    that $m = \trunk{m'}_\grammar$. Then, $m' =
    \mset{S_1,S_2}$ and there are complete derivations $S_i
    \step{{\grammar}}^* v_i$, for $i=1,2$, where $v_i$ is a
    ground term such that $\graph_i = v_i^\algof{SP}$.  Then, we
    obtain $(S_i,\bot) \in h_\grammar(\graph_i)$, for
    $i=1,2$. Moreover, $m'=\mset{S_1}+\mset{S_2}$.  Thus, $m \in
    h_\grammar(\graph_1) \pop^{\algof{A}_\grammar}
    h_\grammar(\graph_2)$, by applying the inference rule
    ($\S\pop\S$) from Figure \ref{fig:sp-alg}.

    \noindent``$\supseteq$'' Let $m \in
    h_\grammar(\graph_1) \pop^{\algof{A}_\grammar}
    h_\grammar(\graph_2)$ be a multiset. Since both
    $h_\grammar(\graph_i)$ are S-profiles, for $i=1,2$, by
    the inference rule ($\S\pop\S$) from Figure \ref{fig:sp-alg},
    there exist views $(S_i,\bot) \in
    h_\grammar(\graph_i)$, hence complete derivations $S_i
    \step{{\grammar}} v_i$ such that $\graph_i =
    v_i^\algof{SP}$, for $i=1,2$ and $m = \trunk{\mset{S_1} +
      \mset{S_2}}_\grammar$. Then, $\mset{S_1}+\mset{S_2}$ is a view
    of $\graph_1 \pop^\algof{SP} \graph_2$ for ${\grammar}$,
    hence $m \in h_\grammar(\graph_1 \pop^\algof{SP}
    \graph_2)$.
  \end{itemize}

  \vspace*{.5\baselineskip}
  \noindent\underline{$h_\grammar(\graph_1
    \sop^{\algof{SP}} \graph_2) = h_\grammar(\graph_1)
    \sop^{\algof{A}_\grammar} h_\grammar(\graph_2)$}:
  Since $\graph_1 \sop^{\algof{SP}} \graph_2$ is an S-graph, any view
  of it for ${\grammar}$ is a pair $(S,Q)$, such that $S \in
  \mathcal{S}$ and $Q \in \mathcal{S} \uplus \mathcal{P}
  \uplus \set{\bot}$. We distinguish the following cases, according to
  the types of $\graph_1$ and $\graph_2$: \begin{itemize}[-]
  \item $\graph_1$ and $\graph_2$ are P-graphs: ``$\subseteq$'' Let
    $(S,Q) \in h_\grammar(\graph_1 \sop^{\algof{SP}}
    \graph_2)$ and assume $Q\neq\bot$ (the case $Q = \bot$ is
    similar). By Definition~\ref{def:sp-view} (\ref{it2:sp-view}),
    there exists a derivation $S \astep{{\grammar}}^* v \sop
    Q$, for a ground term $v$ such that $\graph_1 \sop^{\algof{SP}}
    \graph_2 = v^\algof{SP}$.
    By Lemma \ref{lemma:assoc-sp}, there
    exist derivations $S \astep{{\grammar}}^* v_1 \sop S'$ and
    $S' \astep{{\grammar}}^* v_2 \sop Q$, for some nonterminal
    $S' \in \mathcal{S}$ and ground terms $v_i$ such that $\graph_i =
    v_i^\algof{SP}$, for $i=1,2$.
    Then, by Lemma~\ref{lemma:assoc-sp-start}, and because $\graph_1$ and $\graph_2$ are $\sop$-atomic, there are rules $S \rightarrow P_1 \sop S'$ and $S' \rightarrow P_2 \sop Q$, and complete derivations $P_i \step{{\grammar}}^* v_i$ for some non-terminals $P_1,P_2 \in \mathcal{P}$.
    By a reordering, we can
    assume w.l.o.g. that these complete derivations are of the form:
    \begin{align*}
      P_i \step{{\grammar}}^* \pop_{S \in \mathcal{S}} \pexp{S}{m_i(S)} \step{{\grammar}}^* v_i
    \end{align*}
    for some views $m_i$ of $\graph_i$ for ${\grammar}$, for
    $i=1,2$. Then, $P_i \leadsto_{{\grammar}} m_i$, and we
    obtain $P_i \leadsto_{{\grammar}}
    \trunk{m_i}_\grammar$, by Lemma \ref{fact:trunk}, for
    $i=1,2$. Moreover, $\trunk{m_i}_\grammar \in
    h_\grammar(\graph_i)$, for $i=1,2$.
    Thus, we obtain $(S,Q) \in h_\grammar(\graph_1)
    \sop^{\algof{A}_\grammar} h_\grammar(\graph_2)$, by applying the
    inference rule ($\P\sop^1\P$) from Figure \ref{fig:sp-alg}.

    \noindent ``$\supseteq$'' Let $(S,Q) \in
    h_\grammar(\graph_1) \sop^{\algof{A}_\grammar}
    h_\grammar(\graph_2)$ and assume $Q \neq \bot$ (the
    case $Q = \bot$ is similar). Since both
    $h_\grammar(\graph_i)$ are P-profiles, by the
    inference rule ($\P\sop^1\P$) from Figure \ref{fig:sp-alg}, there
    exists rules $S \rightarrow P_1 \sop S_1$ and $S_1 \rightarrow P_2
    \sop Q$ in ${\grammar}$, such that $P_i
    \leadsto_{{\grammar}} m_i \in
    h_\grammar(\graph_i)$, for $i=1,2$.  Then, there are
    views $m_i'$ of $\graph_i$, with
    $\trunk{m_i'}_\grammar = m_i$, for $i=1,2$, hence also
    derivations:
    \begin{align*}
      P_i \step{{\grammar}}^* \pop_{S \in \mathcal{S}} \pexp{S}{m_i'(S)} \step{{\grammar}}^* v_i,
    \end{align*}
    for some ground terms $v_i$, such that $\graph_i =
    v_i^\algof{SP}$.  Then, we obtain $P_i \leadsto_{{\grammar}} m_i'$, by
    Lemma~\ref{fact:trunk}, for $i=1,2$.  Thus, we can build a
    derivation:
    \begin{align*}
      S \step{{\grammar}} P_1 \sop S_1 \step{{\grammar}} P_1 \sop P_2
      \sop Q \astep{{\grammar}}^* v_1 \sop v_2 \sop Q
    \end{align*}
    such that $\graph_1 \sop^\algof{SP} \graph_2 = (v_1 \sop
    v_2)^\algof{SP}$ and $(S,Q) \in h_\grammar(\graph_1
    \sop^\algof{SP} \graph_2)$, by Definition~\ref{def:sp-view}
    (\ref{it2:sp-view}).
    \item $\graph_1$ is a P-graph and $\graph_2$ is an S-graph:
      ``$\subseteq$'' Let $(S,Q) \in h_\grammar(\graph_1
      \sop^{\algof{SP}} \graph_2)$ be a view of $\graph_1
      \sop^{\algof{SP}} \graph_2$ and assume $Q \neq \bot$ (the case
      $Q = \bot$ is similar). By Definition~\ref{def:sp-view}
      (\ref{it2:sp-view}), there exists a derivation $S
      \astep{{\grammar}}^* v \sop Q$, for a ground term $v$
      such that $\graph_1 \sop^{\algof{SP}} \graph_2 =
      v^\algof{SP}$. By Lemma \ref{lemma:assoc-sp}, there exist
      derivations $S \astep{{\grammar}}^* v_1 \sop S_1$ and
      $S_1 \astep{{\grammar}}^* v_2 \sop Q$, for some $S_1 \in
      \mathcal{S}$, such that $v_i^{\algof{SP}} = \graph_i$, for
      $i=1,2$.
      Then, by Lemma~\ref{lemma:assoc-sp-start}, and because $\graph_1$ is $\sop$-atomic, there is a rule $S \rightarrow P \sop S_1$ and a complete derivations $P \step{{\grammar}}^* v_1$ for some non-terminal $P \in \mathcal{P}$.
      By a reordering, if necessary, we assume the latter
      derivation to have the form: \begin{align*} P
        \step{{\grammar}}^* ~\pop_{S \in
          \mathcal{S}} \pexp{S}{m(S)}
        \step{{\grammar}}^* ~\pop_{S \in
          \mathcal{S},i \in \interv{1}{m(S)}} v_1(S,i)
      \end{align*}
      for some multiset $m \in \mpow{\mathcal{S}}$, where $v_1(S,i)$ are ground terms such that $\graph_1 = (\pop_{S
        \in \mathcal{S},i \in \interv{1}{m(S)}}
      v_1(S,i))^\algof{SP}$. Then, $m$ is a view of $\graph_1$ for
      ${\grammar}$ such that $P \leadsto_{{\grammar}}
      m$. By Lemma \ref{fact:trunk}, we obtain $P
      \leadsto_{{\grammar}} \trunk{m}_\grammar \in
      h_\grammar(\graph_1)$. Moreover, since $\graph_2$ is
      an S-graph, we obtain $(S,Q) \in
      h_\grammar(\graph_2)$, by Definition
      \ref{def:sp-view} (\ref{it2:sp-view}). Thus we obtain $(S,Q) \in
      h_\grammar(\graph_1) \sop^{\algof{A}_\grammar}
      h_\grammar(\graph_2)$, by an application of the
      inference rule ($\P\sop\S$) from Figure \ref{fig:sp-alg}.

      \noindent ``$\supseteq$'' Let $(S,Q) \in
      h_\grammar(\graph_1) \sop^{\algof{A}_\grammar}
      h_\grammar(\graph_2)$ and assume that $Q \neq \bot$
      (the case $Q = \bot$ is similar). Since
      $h_\grammar(\graph_1)$ is a P-profile and
      $h_\grammar(\graph_2)$ is an S-profile, by the
      inference rule ($\P\sop\S$), there exists a rule $S \rightarrow
      P \sop S_1$ in ${\grammar}$ such that $P
      \leadsto_{{\grammar}} m$, for some reduced view $m \in
      h_\grammar(\graph_1)$, and $(S_1,Q) \in
      h_\grammar(\graph_2)$. Then, there exists a view
      $m'$ of $\graph_1$, such that $\trunk{m'}_\grammar =
      m$.  We obtain $P \leadsto_{{\grammar}} m'$, by
      Lemma~\ref{fact:trunk}. Hence, there exists a derivation:
      \begin{align*}
        P \step{{\grammar}}^* \pop_{S \in
          \mathcal{S}} \pexp{S}{m'(S)} \step{{\grammar}}^* v_1,
      \end{align*}
      for some ground term $v_1$, such that $\graph_1 =
      v_1^\algof{SP}$. Moreover, $(S_1,Q) \in h_\grammar(\graph_2)$, hence there exists a derivation $S_1
      \astep{{\grammar}}^* v_2 \sop Q$, where $v_2$ is a ground term such
      that $\graph_2 = v_2^\algof{SP}$. We can build a derivation:
      \begin{align*}
        S \step{{\grammar}} P \sop S_1 \step{{\grammar}}^* (\pop_{S \in
          \mathcal{S}} \pexp{S}{m'(S)}) \sop S_1 \astep{{\grammar}}^* v_1 \sop v_2 \sop Q.
      \end{align*}
      Then, $(v_1 \sop v_2)^\algof{SP} = \graph_1 \sop^\algof{SP}
      \graph_2$ and $(S,Q) \in h_\grammar(\graph_1
      \sop^\algof{SP} \graph_2)$, by Definition~\ref{def:sp-view}
      (\ref{it2:sp-view}).
    \item $\graph_1$ is an S-graph and $\graph_2$ is a P-graph:
      ``$\subseteq$'' Let $(S,Q) \in h_\grammar(\graph_1
      \sop^{\algof{SP}} \graph_2)$ and assume that $Q \neq \bot$ (the
      case $Q = \bot$ is similar). By Definition~\ref{def:sp-view}
      (\ref{it2:sp-view}), there exists a derivation $S
      \astep{\grammar}^* v \sop Q$, where $v$ is a ground term such
      that $\graph_1 \sop^{\algof{SP}} \graph_2 = v^\algof{SP}$. By
      Lemma \ref{lemma:assoc-sp}, there exist derivations $S
      \astep{{\grammar}}^* v_1 \sop S_1$ and $S_1
      \astep{{\grammar}}^* v_2 \sop Q$, for some $S_1 \in
      \mathcal{S}$, such that $v_i^{\algof{SP}} = \graph_i$, for
      $i=1,2$. Since $\graph_1$ is an S-graph, by Definition
      \ref{def:sp-view} (\ref{it2:sp-view}), we obtain $(S,S_1) \in
      h_\grammar(\graph_1)$.
      Then, by Lemma~\ref{lemma:assoc-sp-start}, and because $\graph_2$ is $\sop$-atomic, there is a rule $S_1 \rightarrow P \sop Q$ and a complete derivations $P \step{{\grammar}}^* v_2$ for some non-terminal $P \in \mathcal{P}$.
     By a reordering, if necessary, we assume the latter derivation to have the form: \begin{align*} P
        \step{{\grammar}}^* ~\pop_{S \in
          \mathcal{S}} \pexp{S}{m(S)}
        \step{{\grammar}}^* ~\pop_{S \in
          \mathcal{S},i \in \interv{1}{m(S)}} v_2(S,i)
      \end{align*}
      for some multiset $m \in \mpow{\mathcal{S}}$, where
      $v_2(S,i)$ are ground terms and $\graph_2 = (\pop_{S \in
        \mathcal{S},i \in \interv{1}{m(S)}}
      v_2(S,i))^\algof{SP}$. Then, $m$ is a view of $\graph_2$ for
      ${\grammar}$ such that $P \leadsto_{{\grammar}}
      m$. By Lemma \ref{fact:trunk}, we obtain $P
      \leadsto_{{\grammar}} \trunk{m}_\grammar \in
      h_\grammar(\graph_2)$. Thus, we obtain $(S,Q) \in
      h_\grammar(\graph_1) \sop^{\algof{A}_\grammar}
      h_\grammar(\graph_2)$, by an application of the
      inference rule ($\S\sop^1\P$) from Figure \ref{fig:sp-alg}.

      \noindent ``$\supseteq$'' Let $(S,Q) \in h_\grammar(\graph_1)
      \sop^{\algof{A}_\grammar} h_\grammar(\graph_2)$ and assume that
      $Q \neq \bot$ (the case $Q = \bot$ is similar). Since
      $h_\grammar(\graph_1)$ is an S-profile and
      $h_\grammar(\graph_2)$ is a P-profile, by the inference rule ($\S\sop^1\P$) from Figure \ref{fig:sp-alg}, there exist
      $(S,S_1)\in h_\grammar(\graph_1)$, a rule $S_1 \rightarrow P
      \sop Q$ in ${\grammar}$ and a reduced view $m \in
      h_\grammar(\graph_2)$ such that $P \leadsto_{{\grammar}} m$.
      Then, there exists a view $m'$ of
      $\graph_2$, such that $\trunk{m'}_\grammar = m$, and a derivation
      \begin{align*}
        P \step{{\grammar}}^* \pop_{S \in
          \mathcal{S}} \pexp{S}{m'(S)} \step{{\grammar}}^* v_2,
      \end{align*}
      for some ground term $v_2$, such that $\graph_2 =
      v_2^\algof{SP}$.  Then, we obtain $P
      \leadsto_{{\grammar}} m'$, by
      Lemma~\ref{fact:trunk}. Moreover,there exists a derivation $S
      \astep{\grammar}^* v_1 \sop S_1$, where $v_1$ is a ground term
      such that $\graph_1 = v_1^\algof{SP}$. By Lemma
      \ref{lemma:assoc-sp}, there exists a derivation $S
      \astep{{\grammar}}^* v \sop Q$ such that $v^{\algof{SP}}
      = \graph_1 \sop^{\algof{SP}} \graph_2$. Thus, we obtain $(S,Q)
      \in h_\grammar(\graph_1 \sop^\algof{SP} \graph_2)$,
      by Definition~\ref{def:sp-view} (\ref{it2:sp-view}).
    \item $\graph_1$ and $\graph_2$ are S-graphs: ``$\subseteq$'' Let
      $(S,Q) \in h_\grammar(\graph_1 \sop^{\algof{SP}}
      \graph_2)$ and assume that $Q \neq \bot$ (the case $Q = \bot$ is
      similar). By Definition~\ref{def:sp-view} (\ref{it2:sp-view}),
      there exists a derivation $S \astep{\grammar}^* v \sop Q$, where
      $v$ is a ground term such that $\graph_1 \sop^{\algof{SP}}
      \graph_2 = v^\algof{SP}$.  By Lemma \ref{lemma:assoc-sp}, there
      exist derivations $S \astep{{\grammar}}^* v_1 \sop S_1$
      and $S_1 \astep{{\grammar}}^* v_2 \sop Q$, where $S_1
      \in \mathcal{S}$ and $v_i$ are ground terms such that $\graph_i
      = v_i^{\algof{SP}}$, for $i=1,2$. Hence, we obtain $(S,S_1) \in
      h_\grammar(\graph_1)$ and $(S_1,Q) \in
      h_\grammar(\graph_2)$, by Definition
      \ref{def:sp-view} (\ref{it2:sp-view}). Thus, $(S,Q) \in
      h_\grammar(\graph_1) \sop^{\algof{A}_\grammar}
      h_\grammar(\graph_2)$, by applying the inference
      rule ($\S\sop\S$) from Figure \ref{fig:sp-alg}.

      \noindent''$\supseteq$'' Let $(S,Q) \in
      h_\grammar(\graph_1) \sop^{\algof{A}_\grammar}
      h_\grammar(\graph_2)$ and assume that $Q \neq \bot$
      (the case $Q = \bot$ is similar). Since both
      $h_\grammar(\graph_i)$ are S-profiles, by the
      inference rule ($\S\sop\S$) from Figure \ref{fig:sp-alg}, there
      exist $(S,S_1) \in h_\grammar(\graph_1)$ and
      $(S_1,Q) \in h_\grammar(\graph_2)$. By Definition
      \ref{def:sp-view} (\ref{it2:sp-view}), there exist derivations
      $S \astep{\grammar}^* v_1 \sop S_1$ and $S_1 \astep{\grammar}^*
      v_2 \sop Q$, where $v_i$ are ground terms such that $\graph_i =
      v_i^\algof{SP}$, for $i=1,2$. By Lemma \ref{lemma:assoc-sp}, we
      obtain a derivation $S \astep{\grammar}^* v \sop Q$ such that
      $v^\algof{SP} = \graph_1 \sop^\algof{SP} \graph_2$, hence $(S,Q)
      \in h_\grammar(\graph_1 \sop^\algof{SP} \graph_2)$,
      by Definition~\ref{def:sp-view} (\ref{it2:sp-view}).
  \end{itemize}
\end{proof}

Furthermore, Lemma \ref{lemma:aperiodic-tree-rec} carries
over from the class of trees, because the idempotent power
$\idemof{a}$ of each S-profile $a$ has the aperiodic property
$\idemof{a} \pop^{\algof{A}_\grammar} a = \idemof{a}$, by the
definition of $\pop^{\algof{A}_\grammar}$ for S-profiles.

The main result of this subsection is the ``if'' direction of Theorem~\ref{thm:main} for the class of series-parallel graphs (note that we
do not require the grammar to be normalized).  A similar result holds
for disoriented series-parallel graphs

\begin{theorem}\label{thm:sp-reg}
  The language of each (aperiodic) regular series-parallel grammar
  $\grammar$ is recognizable by an (aperiodic) recognizer
  $(\algof{A},B)$. Moreover, $\cardof{\universeOf{A}} \in
  2^{2^{\poly{\size{\grammar}}}}$ and
  $f^{\algof{A}}(a_1,\ldots,a_{\arityof{f}})$ can be computed in time
  $2^{\poly{\size{\grammar}}}$, for each function symbol $f \in
  \spsignature$ and elements $a_1,\ldots,a_n \in \universeOf{A}$.
\end{theorem}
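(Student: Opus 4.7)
The plan is to follow the same strategy as in the proof of Theorem \ref{thm:tree-reg} for trees, using the series-parallel counterparts developed in this subsection. First, given an arbitrary (aperiodic) regular series-parallel grammar $\grammar$, I would apply the transformation described after Definition \ref{def:regular-sp-grammar-alternative} to obtain a regular grammar in alternative form $\widehat{\grammar}$, with $\alangof{}{\algof{SP}}{\widehat{\grammar}} = \alangof{}{\algof{SP}}{\grammar}$, preserving aperiodicity, and with $\sizeof{\widehat{\grammar}} = \Theta(\sizeof{\grammar})$. Next, I would normalize $\widehat{\grammar}$ using Lemma \ref{lemma:normal-form} to obtain a normalized regular grammar in alternative form $\widehat{\grammar}'= (\mathcal{S}\uplus\mathcal{P},\rules')$, which preserves both the language and aperiodicity, and such that $\sizeof{\widehat{\grammar}'} \in 2^{\poly{\sizeof{\grammar}\cdot\log(\sizeof{\grammar})}}$.

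Then I would construct the recognizer $(\algof{A}_{\widehat{\grammar}'}, B_{\widehat{\grammar}'})$ as introduced in this subsection, where $B_{\widehat{\grammar}'} \isdef \{ h_{\widehat{\grammar}'}(\graph) \mid \graph \in \alangof{}{\algof{SP}}{\widehat{\grammar}'}\}$. Lemma \ref{lemma:sp-view-membership} yields that membership in $\alangof{}{\algof{SP}}{\widehat{\grammar}'}$ only depends on the profile, i.e., $\alangof{}{\algof{SP}}{\widehat{\grammar}'} = h_{\widehat{\grammar}'}^{-1}(B_{\widehat{\grammar}'})$. Lemma \ref{lemma:sp-view-homomorphism} shows that $h_{\widehat{\grammar}'}$ is a homomorphism between $\algof{SP}$ and $\algof{A}_{\widehat{\grammar}'}$, thereby establishing that $(\algof{A}_{\widehat{\grammar}'}, B_{\widehat{\grammar}'})$ is a recognizer for $\alangof{}{\algof{SP}}{\widehat{\grammar}'} = \alangof{}{\algof{SP}}{\grammar}$.

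For the aperiodicity claim, I would adapt the argument of Lemma \ref{lemma:aperiodic-tree-rec}: if $\widehat{\grammar}'$ is aperiodic, then the commutative semigroup of reduced multisets of $\mathcal{S}$-nonterminals with the reduced-sum operation is aperiodic, by \cite[Lemma 5.11]{journals/corr/abs-2008-11635}, so P-profiles satisfy $\idemof{a} \pop^{\algof{A}_{\widehat{\grammar}'}} a = \idemof{a}$. For S-profiles, the idempotence property under $\pop^{\algof{A}_{\widehat{\grammar}'}}$ follows directly from the inference rules $\S \pop \S$ and $\P \pop \S$ of Figure \ref{fig:sp-alg}, since the operation on S-profiles still reduces to the multiset operation.

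Finally, the size and complexity bounds follow by combining Lemma \ref{lemma:size-of-sp-rec} with the size bound on the normalized grammar in alternative form:
\begin{align*}
\cardof{\universeOf{A}_{\widehat{\grammar}'}} \in 2^{\sizeof{\widehat{\grammar}'}^{\poly{\sizeof{\widehat{\grammar}'}}}} \subseteq 2^{2^{\poly{\sizeof{\grammar}}}},
\end{align*}
and each $f^{\algof{A}_{\widehat{\grammar}'}}$ is computable in time $\sizeof{\widehat{\grammar}'}^{\poly{\sizeof{\widehat{\grammar}'}}} \subseteq 2^{\poly{\sizeof{\grammar}}}$. The only real subtlety is verifying that the transformation to alternative form and subsequent normalization preserve aperiodicity and that the semigroup-aperiodicity argument extends cleanly to the two-sorted setting (P-profiles and S-profiles); everything else is a mechanical assembly of the preceding lemmas.
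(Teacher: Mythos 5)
Your overall route is exactly the paper's: pass to alternative form and normalize (the paper performs these two steps in the opposite order, which is immaterial, since each transformation preserves the property established by the other), build the profile algebra $\algof{A}_{\widehat{\grammar}'}$, and conclude via Lemmas \ref{lemma:sp-view-membership} and \ref{lemma:sp-view-homomorphism} together with the $\algof{SP}$ counterpart of Lemma \ref{lemma:aperiodic-tree-rec}; your aperiodicity argument (reduced-multiset semigroup for P-profiles, and the fact that products of S-profiles immediately fall back into the multiset part) is the same as the paper's.

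The one genuine problem is the complexity bookkeeping. The displayed inclusion $2^{\sizeof{\widehat{\grammar}'}^{\poly{\sizeof{\widehat{\grammar}'}}}} \subseteq 2^{2^{\poly{\sizeof{\grammar}}}}$ is false as written: normalization (Lemma \ref{lemma:normal-form}) can make $\sizeof{\widehat{\grammar}'}$ exponential in $\sizeof{\grammar}$, so taking Lemma \ref{lemma:size-of-sp-rec} at face value and plugging in $\sizeof{\widehat{\grammar}'}$ only yields a triply exponential bound on $\cardof{\universeOf{A}}$ (and a doubly exponential time bound for the operations), not the bounds claimed in the theorem. The doubly exponential bound comes out only if you use the sharper estimates actually derived inside the proof of Lemma \ref{lemma:size-of-sp-rec}, where the exponent is polynomial in the number of nonterminals $\cardof{\mathcal{S}}+\cardof{\mathcal{P}}$ rather than in the grammar size; since normalization keeps the nonterminal set fixed and the alternative-form conversion only adds the finitely many $S_b$'s, one gets $\sizeof{\widehat{\grammar}'}^{\poly{\cardof{\mathcal{S}}+\cardof{\mathcal{P}}}} \in 2^{\poly{\sizeof{\grammar}}}$, hence $\cardof{\universeOf{A}} \in 2^{2^{\poly{\sizeof{\grammar}}}}$ and evaluation time $2^{\poly{\sizeof{\grammar}}}$. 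This is precisely the accounting carried out in the proof of Theorem \ref{thm:tree-reg}, which you invoke but do not actually reproduce; with that correction your argument coincides with the paper's.
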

\begin{proof}\emph{Proof.}
  Let $\grammar=(\mathcal{S}\uplus\mathcal{P},\rules)$ be a regular
  (aperiodic) series-parallel grammar.  Using
  Lemma~\ref{lemma:normal-form}, we obtain a normalized (aperiodic) grammar
  $\grammar'=(\mathcal{S}\uplus\mathcal{P},\rules')$.
  We point out that $\grammar$ and $\grammar'$ have the same set of
  nonterminals.
  In particular, we have
  $\alangof{}{\algof{SP}}{\grammar'} =
  \alangof{}{\algof{SP}}{\grammar}$ and $\sizeof{\grammar'} \in
  2^{\poly{\cardof{\mathcal{S}}+\cardof{\mathcal{P}}} \cdot \log(\sizeof{\grammar})}$.
  We then consider the corresponding normalized regular grammar in alternative form $\widehat{\grammar'}=(\widehat{\mathcal{S}}\uplus\mathcal{P},\widehat{\rules'})$.
  We recall that $\alangof{}{SP}{\widehat{\grammar'}} = \alangof{}{SP}{\grammar'}$ and $\size{\grammar'} = \Theta(\size{\widehat{\grammar'}})$.

  We build the recognizer $(\algof{A}_{\widehat{\grammar'}},B_{\widehat{\grammar'}})$
  based on the grammar $\widehat{\grammar'}$.
  By Lemma
  \ref{lemma:sp-view-membership}, we obtain
  $\alangof{}{\algof{SP}}{\widehat{\grammar'}} =
  h_{\widehat{\grammar'}}^{-1}(B_{\widehat{\grammar'}})$, where $B_{\widehat{\grammar'}} \isdef \set{h_{\widehat{\grammar'}}(\graph) \mid \graph \in \alangof{}{SP}{\widehat{\grammar'}}}$.
  By Lemma
  \ref{lemma:sp-view-homomorphism},
  $(\algof{A}_{\widehat{\grammar'}},B_{\widehat{\grammar'}})$ is a recognizer for
  $\alangof{}{SP}{\widehat{\grammar'}}$.
  Because of $\alangof{}{\algof{SP}}{\widehat{\grammar'}} =
  \alangof{}{\algof{SP}}{\grammar}$,
  $\algof{A}_{\widehat{\grammar'}}$ is also a
  recognizer for the language of $\grammar$.
  Moreover, $\algof{A}_{\widehat{\grammar'}}$ is
  aperiodic, if $\widehat{\grammar'}$ is aperiodic, by the counterpart of Lemma \ref{lemma:aperiodic-tree-rec} for $\algof{SP}$.
  We further recall that $\widehat{\grammar'}$ is aperiodic iff $\grammar'$ is aperiodic.
  The complexity upper
  bounds follow in the same way as in the proof of
  Theorem~\ref{thm:tree-reg}.
\end{proof}

\end{spTextEnd}

\subsubsection{An Algebra of Tree-Width $2$ Graphs}
\label{subsubsec:tw2}

We recall several graph-theoretic notions~\cite{TutteBook}. A
\emph{cutvertex} $w$ of $\graph$ is a vertex for which there exist two
distinct vertices $u,v\in\vertof{G}\setminus\set{w}$ such that there
is a path between $u$ and $v$ in $\graph$ and no path between $u$ and
$v$ in the graph obtained from $\graph$ by removing the vertex $w$ and
all edges incident to it. A \emph{block} of $\graph$ is a maximal
subgraph of $\graph$ without a cutvertex. A block is \emph{nontrivial}
if it has more than one vertex.

\begin{lemmaE}[][category=proofs]\label{lemma:nontrivial-block}
  Each connected graph with at least two vertices has only nontrivial
  blocks.
\end{lemmaE}
\begin{proofE}
  Suppose, for a contradiction, that $\graph$ is a connected graph
  with two or more vertices that has a trivial block $B$, consisting
  of one vertex $u \in \vertof{\graph}$. Then, $\graph$ has another
  vertex $v \in \vertof{\graph}\setminus\set{u}$. Because $\graph$ is
  connected, there is a path between $u$ and $v$, hence there is an
  edge $e$ attached to $u$ and some other vertex $w$ (possibly
  $w=v$). The graph with vertex set $\set{u,w}$ and edge set $\set{e}$
  has no cutvertex and subsumes $B$, contradicting the fact that $B$
  is a maximal subgraph of $\graph$ without a cutvertex.
\end{proofE}

A \emph{block tree} of $\graph$ is an undirected graph\footnote{A
  directed tree is obtained by chosing any vertex as root and
  orienting the edges from the root downwards.} whose vertices are the
blocks and the cutvertices of $\graph$, and there is an edge between a
block $B$ and a cutvertex $c$ of $\graph$ if and only if $c \in
\vertof{B}$. The proof that the block tree of a graph is indeed a tree
can be found in, e.g., \cite[Theorem III.23]{TutteBook}. The following
lemma relates connected graphs of tree-width at most~$2$ with
disoriented series-parallel graphs:

\begin{lemma}[Lemma 6.15 in \cite{CourcelleV}]\label{lemma:tw2-block}
  Let $\graph$ be a connected graph without sources. Then $\graph$ has
  tree-width at most~$2$ if and only if its blocks are either trivial
  or disoriented P-graphs.
\end{lemma}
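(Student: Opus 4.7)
My plan is to prove each direction separately, using the block-tree decomposition of connected graphs together with the classical characterization of $2$-connected graphs of tree-width at most~$2$ as $2$-terminal series-parallel graphs.

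For the ``if'' direction, I would assume that every nontrivial block of $\graph$ is a disoriented P-graph and show how to assemble tree decompositions of the blocks into a tree decomposition of $\graph$ of width at most~$2$. Concretely, each disoriented P-graph, viewed as an unlabeled undirected graph by forgetting orientations and the source assignment, is a $2$-connected series-parallel graph, and such graphs are well known to admit a tree decomposition of width~$2$ in which the two sources occur in the root bag (this can also be obtained by induction on the structure of a $\fsignature_\algof{DSP}$-term using Lemma \ref{lemma:sp-decomp}). Trivial blocks have a single vertex and hence a trivial tree decomposition of width~$0$. Given the block tree of $\graph$, I would glue these per-block tree decompositions together: for each cutvertex $c$ shared by blocks $B_1, \ldots, B_k$, pick a node $n_i$ in each $(T_{B_i}, \beta_{B_i})$ whose bag contains $c$, and add fresh tree edges joining the $n_i$'s through an auxiliary node whose bag is $\set{c}$. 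Connectivity of $\graph$ ensures the resulting tree is actually a tree; the three tree-decomposition conditions of Definition \ref{def:tree-decomposition} are then verified block-wise, noting that the connectedness of bag-preimages of any vertex $v$ follows because $v$ lies in at most one nontrivial block (if $v$ is not a cutvertex) or its occurrences span the subtrees for the blocks it belongs to, all of which pass through the fresh $\set{v}$-node.

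For the ``only if'' direction, I would start from a tree decomposition $(\tree, \beta)$ of $\graph$ of width at most~$2$, and for each block $B$ extract the induced tree decomposition $(\tree, \beta_B)$ with $\beta_B(n) \isdef \beta(n) \cap \vertof{B}$; this has width at most~$2$ and is a valid tree decomposition of $B$ (conditions (\ref{it1:tree-decomposition}) and (\ref{it2:tree-decomposition}) are inherited, and $B$ has no sources so (\ref{it0:tree-decomposition}) is vacuous). For a nontrivial block $B$, the block is $2$-connected, hence by the classical theorem (see, e.g., \cite{CourcelleV}) $B$ is a $2$-connected series-parallel graph, and as such it admits a series-parallel decomposition with any chosen pair of distinct vertices as the two terminals. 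Picking two arbitrary distinct vertices of $B$ as its $1$-source and $2$-source yields a $\spsignature$-term witnessing $B$ as an SP-graph, up to edge orientations. Allowing edges to be oriented either way then produces a $\dspsignature$-term for $B$, i.e., $B$ is a disoriented P-graph (P rather than S because $2$-connectedness forbids an $\sop$-decomposition of the whole graph with respect to its two sources, as otherwise the join vertex would be a cutvertex).

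The main obstacle I expect is the classical characterization invoked in the ``only if'' direction: showing that every $2$-connected graph of tree-width at most~$2$ is a $2$-terminal series-parallel graph for \emph{any} chosen pair of vertices. Rather than reproving this from scratch, my plan is to cite Courcelle's Lemma~6.15 from \cite{CourcelleV} and only verify that the orientation freedom of $\dspsignature$ exactly accounts for the fact that $\graph$ comes without a prescribed edge orientation, and that the ``P'' qualifier is justified by ruling out a global $\sop$-split via the $2$-connectedness of nontrivial blocks (Lemma \ref{lemma:nontrivial-block}).
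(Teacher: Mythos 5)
Your ``if'' direction is fine: gluing width-$2$ tree decompositions of the blocks along cutvertices via the block tree (and binarizing the resulting tree) is a standard and correct argument, and deriving a width-$2$ decomposition of a disoriented P-graph by induction on its term using Lemma~\ref{lemma:sp-decomp} works. Note, however, that the paper itself does not reprove this lemma at all -- it imports it as Lemma~6.15 of \cite{CourcelleV} -- and instead proves from scratch the \emph{stronger} Lemma~\ref{lemma:tw2-block-source}, by inducting directly on a width-$2$ tree decomposition of a block whose adhesions have size exactly $2$ (size $\geq 2$ because the block has no cutvertex), assembling the DSP term along the decomposition tree.

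The genuine gap is in your ``only if'' direction. The claim that a $2$-connected graph of tree-width at most~$2$ ``admits a series-parallel decomposition with \emph{any} chosen pair of distinct vertices as the two terminals'' is false. Take $K_4$ minus an edge, say vertices $a,b,c,d$ with the edge $cd$ missing: it is $2$-connected and series-parallel (terminals $a,b$), but it is not a $2$-terminal SP graph with terminals $c,d$ (by Duffin's criterion, adding the edge $cd$ creates a $K_4$; equivalently, with sources $c,d$ it is neither a single edge, nor a nontrivial parallel composition, nor a serial composition). What is true -- and all the lemma needs -- is that \emph{some} pair of vertices works; the paper's Lemma~\ref{lemma:tw2-block-source} shows you may even fix one source arbitrarily, but not both. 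Your fallback of citing Courcelle's Lemma~6.15 does not repair this, because that is literally the statement you are proving, so as an independent proof it is circular (it only reproduces the paper's treatment of the lemma as a citation). To make the direction self-contained, replace the ``any pair'' step either by the existential classical statement or by the paper's tree-decomposition induction from the proof of Lemma~\ref{lemma:tw2-block-source}. Your remaining points -- handling disorientation by reversing a set of edges, and ruling out a serial split (hence ``P'' rather than ``S'') via the absence of cutvertices, cf.\ Lemma~\ref{lemma:nontrivial-block} -- are correct and match the paper's reasoning.
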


We shall also need the following stronger statement:
\begin{lemmaE}[][category=proofs]\label{lemma:tw2-block-source}
  Let $\graph$ be a connected graph of tree-width at most~$2$, $B$ be
  a nontrivial block of $\graph$ and $x$ be a vertex of $B$. Then,
  there exists another vertex $y$ of $B$ such that $B_{(x,y)}$ (the
  graph $B$ having $1$-source $x$ and $2$-source $y$) is a disoriented
  P-graph.
\end{lemmaE}
\begin{proofE}
  Let $(T,\beta)$ be a tree decomposition of $B$ such that
  $\width{T,\beta}=\twof{B}$. The sets $\beta(n)$, $n \in \vertof{T}$
  are called \emph{bags}. If $m$ is the parent of $n$ in $T$, the set
  $\adhof{T,\beta}{n} \isdef \beta(m) \cap \beta(n)$ is the
  \emph{adhesion} of $n$ in $(T,\beta)$ (by convention,
  $\adhof{T,\beta}{n} = \emptyset$ if $n$ is the root of $T$).

  Because $B$ is connected and has at least two
  vertices, there must be an edge from $x$ to some other vertex $y$
  and there must be some node whose bag contains both $x$ and $y$. We
  can w.l.o.g. assume that this node is the root of $T$ (otherwise we
  can reorient the edges of $T$ such that this node becomes the
  root). We can further assume w.l.o.g. that the bag of the root
  contains exactly $\{x,y\}$ (otherwise we add a new root node to the
  tree, connected to the old root, whose bag contains exactly
  $\{x,y\}$). Further, w.l.o.g, we can choose $(T,\beta)$ such that
  $\beta(n) \setminus \beta(m) \neq \emptyset$, for each pair $(m,n)$
  of parent and child nodes of $T$ (if $\beta(n) \subseteq \beta(m)$
  then the edge of $T$ between $m$ and $n$ can be contracted).  Since
  $B$ has no cutvertex, it follows that
  $\cardof{\adhof{T,\beta}{n}}\geq 2$, for all non-root nodes $n \in
  \vertof{T}$.  This is a consequence of the fact that the adhesion of
  each non-root node of $T$ is a separator of $B$, see
  e.g. \cite[Lemma 11.3]{DBLP:series/txtcs/FlumG06}.  Since
  $\cardof{\beta(n)} \leq 3$, the bag of each non-root node can be
  uniquely decomposed into its adhesion $\set{u_1,u_2}$ and a third
  element $v \not\in \set{u_1,u_2}$.  Where no confusion arises, we
  shall use the names $u_1,u_2$ (for the adhesion) and $v$ (for the
  remaining vertex) throughout the proof.  Moreover, w.l.o.g, we can
  choose $(T,\beta)$ such that $\adhof{T,\beta}{n} \neq
  \adhof{T,\beta}{m}$ for each pair $(m,n)$ of parent and child of
  $T$; if $\adhof{T,\beta}{n} = \adhof{T,\beta}{m}$, then $n$ can be
  attached as a child to the parent of $m$. For every pair of vertices
  $u_1,u_2$ of $B$ we define the graph $B^{\set{u_1,u_2}}$ of sort
  $\set{1,2}$ as the subgraph of $B$ consisting only of the vertices
  $u_1$ and $u_2$ and all the edges of $B$ between these two
  vertices. For every non-root node $n \in \vertof{T}$, with
  $\adhof{T,\beta}{n}=\set{u_1,u_2}$, we denote by
  $\subtree{B}{n}^{(u_1,u_2)}$ the subgraph of $B$ of sort
  $\set{1,2}$, with $u_1$ resp. $u_2$ as the first resp. second
  source, induced by $\bigcup \set{\beta(p) \mid p \text{ is a
      descendant of } n}$ minus the edges of $B^{\set{u_1,u_2}}$.

  \begin{fact}\label{fact:tree-decomp-dsp}
    For every non-root node $n \in \vertof{T}$, with
    $\adhof{T,\beta}{n}=\set{u_1,u_2}$, we have
    $\subtree{B}{n}^{(u_1,u_2)}\in\universeOf{DSP}$.
  \end{fact}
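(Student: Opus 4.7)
The plan is to prove the fact by structural induction on the subtree of $T$ rooted at $n$, proceeding from leaves upward. Two preliminary observations organize the induction. First, since $\beta(n) \setminus \beta(\mathrm{parent}(n)) \neq \emptyset$, the adhesion $\adhof{T,\beta}{n}$ has size~2, and the width of $(T,\beta)$ is at most~2, the bag $\beta(n)$ has exactly three vertices $\set{u_1, u_2, v}$ for some new vertex $v \notin \beta(\mathrm{parent}(n))$. Second, by the preceding assumption that every child's adhesion differs from its parent's, every child $n'$ of $n$ has an adhesion, of size 2 and contained in $\beta(n)$, that is distinct from $\set{u_1, u_2}$; hence the adhesion of $n'$ is either $\set{u_1, v}$ (call $n'$ \emph{type 1}) or $\set{v, u_2}$ (call it \emph{type 2}).

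For the base case, when $n$ is a leaf, the subtree-connectivity axiom of tree decompositions implies $v$ appears only in $\beta(n)$, so every edge of $B$ incident to $v$ goes to $u_1$ or $u_2$. Since $B$ has no cutvertex, neither $u_1$ nor $u_2$ separates $v$ from the rest of $B$, forcing $v$ to be adjacent to both $u_1$ and $u_2$. Letting $E_1$ and $E_2$ be the (nonempty) sets of $u_1$-$v$ and $v$-$u_2$ edges of $B$, the subgraph $\subtree{B}{n}^{(u_1, u_2)}$ is the serial composition of a nonempty parallel composition of $u_1$-$v$ single edges with a nonempty parallel composition of $v$-$u_2$ single edges (using disorientation to orient each single edge as needed), which is a DSP graph.

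For the inductive step, by the induction hypothesis $\subtree{B}{n_i}^{(u_1, v)} \in \universeOf{DSP}$ for each type~1 child $n_i$ and $\subtree{B}{n_i}^{(v, u_2)} \in \universeOf{DSP}$ for each type~2 child. Letting $E_1$ and $E_2$ respectively denote the sets of $u_1$-$v$ and $v$-$u_2$ edges of $B$ whose endpoints appear in descendant bags of $n$, define $G_1$ of sort $\set{1,2}$ with first source $u_1$ and second source $v$ as the parallel composition of all type~1 subgraphs $\subtree{B}{n_i}^{(u_1,v)}$ together with all single edges from $E_1$, and define $G_2$ analogously with first source $v$ and second source $u_2$. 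The subtree-connectivity axiom of $(T, \beta)$ guarantees that each vertex of $B$ appearing in a descendant bag of $n$ but outside $\set{u_1, u_2, v}$ lies in the private part of a unique child's subtree, and moreover $u_1$ (resp.\ $u_2$) only reappears in type~1 (resp.\ type~2) subtrees. A case analysis on edge endpoints then shows that the edges of $\subtree{B}{n}^{(u_1, u_2)}$ are partitioned exactly between the factors of $G_1$ and $G_2$, whence $\subtree{B}{n}^{(u_1, u_2)} = G_1 \sop G_2$.

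The main obstacle is to show that $G_1$ and $G_2$ are each nonempty, which is required for $G_1 \sop G_2$ to be a well-formed DSP graph. This follows from the 2-connectivity of $B$: since $\set{u_1, u_2}$ is a separator of $B$, each connected component of $B \setminus \set{u_1, u_2}$ must contain at least one neighbor of $u_1$ and at least one of $u_2$, forcing sufficiently many edges on each side of the separator to reach $v$. The remaining steps --- the edge-partition verification and the base case --- are routine but depend crucially on the subtree-connectivity axiom.
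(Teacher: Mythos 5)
Your proof is correct and follows essentially the same route as the paper: induction on the subtree of $T$ rooted at $n$, splitting the children by whether their adhesion is $\set{u_1,v}$ or $\set{v,u_2}$, and writing $\subtree{B}{n}^{(u_1,u_2)}$ as the serial composition of the two parallel compositions formed by the corresponding children's subgraphs together with the direct $u_1$\dash$v$ and $v$\dash$u_2$ edges, exactly the displayed equation in the paper's proof. The only difference is presentational: you separate the base case and spell out, via the absence of a cutvertex in $B$, that both factors of the serial composition are nonempty, a point the paper leaves implicit in its single equation.
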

  \begin{proof}\emph{Proof.}
    By induction on the structure of the subtree $\subtree{T}{n}$ of
    $T$ rooted at $n$, where $\set{u_1,u_2,v}\isdef\beta(n)$ and
    $\set{u_1,u_2}\isdef\adhof{T,\beta}{n}$. Let $p_1, \ldots, p_k$ be
    the children of $n$ in $T$. Then, for each $i \in \interv{1}{k}$,
    $\adhof{T,\beta}{p_i}$ is either $\set{u_1,v}$ or $\set{u_2,v}$.
    Let $P_{u_1,v} \uplus P_{v,u_2} = \set{p_1,\ldots,p_k}$ be the
    sets such that $\adhof{T,\beta}{p} = \set{u_1,v}$
    resp. $\adhof{T,\beta}{p} = \set{u_2,v}$, for all $p \in
    P_{u_1,v}$ resp. $p \in P_{v,u_2}$.  We now observe that:
    \begin{align*}
    \subtree{B}{n}^{(u_1,u_2)} = & ~\Big(
          \big(\pop^{\algof{G}}_{p \in P_{u_1,v}} \subtree{B}{p}^{(u_1,v)} ~\pop^{\algof{G}}~ B^{\set{u_1,v}} \big) ~\sop^{\algof{SP}} \\
          & ~\big(\pop^{\algof{G}}_{p \in P_{v,u_2}} \subtree{B}{p}^{(v,u_2)}  ~\pop^{\algof{G}}~ B^{\set{v,u_2}} \big) \big)\Big)
    \end{align*}
  \end{proof}

  Back to the proof, let $n$ be the root of $T$, and let $p_1, \ldots,
  p_k$ be the children of $n$ in $T$. We recall that
  $\beta(n)=\set{x,y}$.  We now observe that:
  \begin{align*}
    B_{(x,y)} = B^{\set{x,y}} ~\pop^{\algof{G}}~
  \big(\pop^{\algof{G}}_{i \in [1,k]} \subtree{B}{p_i}^{(x,y)}
  \big)
  \end{align*}
  This establishes that $B_{(x,y)} \in \universeOf{DSP}$.
  Now, we use the fact that $B$ is a block (i.e., has no cutvertex) to
  obtain that $B_{(x,y)}$ is $\circ$-atomic: Assume that $B_{(x,y)}$
  is not $\circ$-atomic.  Then, by Lemma~\ref{lemma:sp-decomp}, there
  are some graphs $\graph_1,\graph_2$ of sort $\set{1,2}$ with
  $B_{(x,y)} = \graph_1 \circ \graph_2$.  In particular, the
  $2$-source of $\graph_1$ (which is equal to the $1$-source of
  $\graph_2$) is a cutvertex of $B_{(x,y)}$.  However, this
  contradicts the definition of a block, which cannot contain a
  cutvertex.
\end{proofE}

\begin{figure}[t!]
  \vspace*{-\baselineskip}
  \centerline{\input{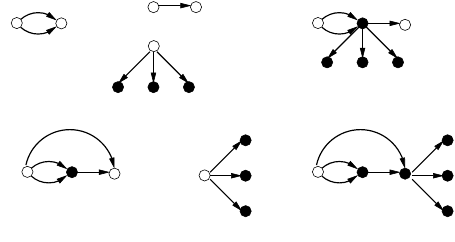_t}}
  \vspace*{-.5\baselineskip}
  \caption{Ternary serial-composition (a) and binary
    $\rop$-composition (b) for graph of tree-width at most~$2$.}
  \label{fig:tw2}
  \vspace*{-\baselineskip}
\end{figure}

The class of connected graphs of tree-width at most~$2$ is the following two-sorted algebra:

\begin{definition}\label{def:tw2}
  The class $\algof{G}_2$ has sorts $\sorts_2 =\{ \{1\},\{1,2\}\}$ and
  signature: \begin{align*}
    \fsignature_2 \isdef
    \set{\pop, \sop, \rop, \emptygraph_{\set{1}}} \cup
    \set{\sgraph{b}_{i,3-i} \mid b \in \alphabetTwo,i=1,2},
  \end{align*}
  where we recall that we overload $\pop$, which denotes the operations $\pop_{\slabs,\slabs}$ for $\slabs \in \sorts_2 =\{ \{1\},\{1,2\}\}$.
  The function symbols from $\fsignature_2$ are interpreted
  as follows: \begin{compactitem}[-]
  \item $\sop^{\algof{G}_2}(x,y,z)$, where $\sortof{x}=\sortof{y}=\set{1,2}$ and
    $\sortof{z}=\set{1}$, joins the $2$-source of $x$ with the
    $1$-sources of $y$ and $z$; the joined vertex is not a source of
    the result, of sort $\set{1,2}$,
  \item $x \rop^{\algof{G}_2} y$, where $\sortof{x}=\set{1,2}$ and
    $\sortof{y}=\set{1}$, joins the $2$-source of $x$ with the
    $1$-source of $y$; the joined vertex is not a source of the
    result, of sort $\set{1}$.
  \item The rest of the function symbols from $\fsignature_2$ are
  interpreted same as in the graph algebra $\algof{G}$.
  \end{compactitem}

\end{definition}

Figure \ref{fig:tw2} shows the interpretation of $\sop$ (a) and $\rop$
(b) in $\algof{G}_2$.  Note that $\algof{G}_2$ is a derived algebra of
the graph algebra $\algof{G}$\footnote{ $\sop^{\algof{G}_2}(x,y,z)
\isdef \rename{(2,3)}^\algof{G}(\restrict{\set{1,3}}^\algof{G}(x
\pop^\algof{G} \rename{[2,3,1]}^\algof{G}(y) \pop^\algof{G}
\rename{(1,2)}^\algof{G}(z)))$ and $x \rop^{\algof{G}_2} y \isdef
\restrict{\set{1}}(x \pop^\algof{G} \rename{(1,2)}^\algof{G}(y))$.}.
We denote by $\universeOf{G}_2^\slabs$ the restriction of the domain
$\universeOf{G}_2$ of $\algof{G}_2$ to the elements of sort
$\slabs\in\set{\set{1},\set{1,2}}$.  The notion of disoriented P-graph
(resp. S-graph) naturally extends from $\universeOf{DSP}$ to
$\universeOf{G}^{\set{1,2}}_2$.

We define regular grammars for the $\algof{G}_2$ class.  For purposes
of presentation, we write $X$, $Y$ (resp. $\mathcal{X}$ and
$\mathcal{Y}$) for the non-terminals (resp. sets of nonterminals) of
sort $\set{1}$ and $P$, $S$ (resp. $\mathcal{P}$ and $\mathcal{S}$)
for the non-terminals (resp. sets of nonterminals) of sort
$\set{1,2}$. Likewise, we will use the template names $\X$, $\Y$, $\P$
and $\S$ for footprints. Since the sorts of nonterminals are clear
from this naming convention, we omit writing them.

\begin{definition}\label{def:regular-tw2-grammar}
  A \emph{regular} tree-width $2$ grammar is a stratified grammar
  $\grammar=(\mathcal{X}\uplus\mathcal{Y}\uplus\mathcal{P}\uplus\mathcal{S},\rules)$,
  where $\fpof{\grammar} \subseteq \afp_2$ and:
  \begin{align*}
    \afp_2 \isdef & ~\{\X \rightarrow \pexp{\Y}{\geq 0},~ \P \rightarrow \pexp{\S}{\geq 2}, ~\Y \rightarrow \P \rop \X,~ \\
    & ~~\S \rightarrow \sop(\P,\S,\X),~ \S \rightarrow \sop(\P,\P,\X),~ \rightarrow \X\} ~\cup \\
    & ~\set{\P \rightarrow \sgraph{b}_{i,3-i},~ \S \rightarrow \sgraph{b}_{i,3-i} \mid b \in \alphabetTwo,~ i=1,2}
  \end{align*}
\end{definition}

First, we establish the existence of a universal aperiodic regular
tree-width $2$ grammar. Intuitively, a complete derivation of this
grammar mimicks the block tree decomposition of a graph
(Lemma~\ref{lemma:tw2-block}).  To this end, the sources of a
disoriented series-parallel block $B$ are chosen such that the
$1$-source of $B$ is the cutvertex that is the parent of $B$ in the
block-tree. Note that such a choice is always possible (Lemma
\ref{lemma:tw2-block-source}).

\begin{lemmaE}[][category=proofs]\label{lemma:universal-synt-reg-tw2}
  Let $\grammar_{2}\isdef(\set{X,Y,P,S},\rules)$ be the regular
  tree-width $2$ grammar with the below rules, for $b \in
  \alphabetTwo$ and $i=1,2$:
  \vspace*{-\baselineskip}
  \begin{center}
    \begin{minipage}{2cm}
      \begin{align*}
        \rightarrow & ~X \text{ (\ref{it4:syntactic-regular})} \\
        X \rightarrow & ~X \pop Y \text{ (\ref{it1:syntactic-regular})} \\
        X \rightarrow & ~\emptygraph_{\set{1}} \text{ (\ref{it2:syntactic-regular})} \\
      \end{align*}
    \end{minipage}
    \begin{minipage}{3cm}
      \begin{align*}
        Y \rightarrow & ~P \rop X \text{ (\ref{it31:syntactic-regular})} \\
        S \rightarrow & ~\sop(P,S,X) \text{ (\ref{it31:syntactic-regular})} \\
        S \rightarrow & ~\sop(P,P,X) \text{ (\ref{it31:syntactic-regular})} \\
      \end{align*}
    \end{minipage}
    \begin{minipage}{3cm}
      \begin{align*}
        S \rightarrow & ~\sgraph{b}_{i,3-i} \text{ (\ref{it31:syntactic-regular})} \\
        P \rightarrow & ~\sgraph{b}_{i,3-i} \text{ (\ref{it32:syntactic-regular})} \\
        P \rightarrow & ~P \pop S \text{ (\ref{it1:syntactic-regular})} \\
        P \rightarrow & ~S \pop S \text{ (\ref{it2:syntactic-regular})}
      \end{align*}
    \end{minipage}
  \end{center}
  The sets $\alangof{}{\text{$\algof{G}_2$}}{\grammar_2}$,
  $\universeOf{G}_2^{\set{1}}$ and $\{ \graph \in \universeOf{G}
  \text{ connected} \mid \twof{\graph} \le 2, \sortof{\graph} =
  \set{1}\}$ are equal.
\end{lemmaE}
\begin{proofE}
  ``$\alangof{}{\text{$\algof{G}_2$}}{\grammar_2} \subseteq \universeOf{G}_2^{\set{1}}$''
  Clearly, all graphs resulting from derivations of the $\algof{G}_2$-grammar $\grammar_2$ belong to $\universeOf{G}_2$ --- the universe of the algebra $\algof{G}_2$.
  Further, all graphs resulting from the axiom $X$ have sort $\set{1}$ and hence belong to  $\universeOf{G}_2^{\set{1}}$.

  \noindent``$\universeOf{G}_2^{\set{1}} \subseteq \{ \graph \text{ connected} \mid \twof{\graph} \le 2, \sortof{\graph} = \set{1}\}$'' Each graph $\graph \in
  \alangof{}{\text{$\algof{G}_2$}}{\grammar_2}$ has tree-width at most
  $2$, because the grammar $\grammar_2$ only uses operations from the
  derived algebra $\algof{G}_2$, whose function symbols are interpreted
  using operations from $\algof{G}$ that involve at most $3$
  sources. Moreover, all these operations preserve connectivity and
  each $X$-derivation yields terms of sort $\set{1}$.

  \noindent``$\{ \graph \text{ connected} \mid \twof{\graph} \le 2,
  \sortof{\graph} = \set{1}\} \subseteq
  \alangof{}{\text{$\algof{G}_2$}}{\grammar_2}$'' Let $\graph$ be a
  connected graph of tree-width at most~$2$ of sort $\set{1}$, and let
  $T$ be its block tree.  In case the $1$-source is a cutvertex, we
  can assume without loss of generality (by rotating the block tree),
  that the $1$-source is the root of the block tree.  In case the
  $1$-source is not a cutvertex, we can assume without loss of
  generality (by rotating the block tree), that the block containing
  the $1$-source is the root of the block tree; we then add the
  $1$-source as the new root to the block tree.  We introduce the
  following notation: For every node of the block tree consisting of a
  single vertex $v\in\vertof{\graph}$ (that is $v$ is either a
  cutvertex or the $1$-source of $\graph$), we denote by $\graph_v$
  the subgraph of $\graph$ that is induced by $v$ and its descendants
  in the block tree, where $\graph_v$ is considered a as graph of sort
  $\set{1}$ with $1$-source $v$.  Note that we have $\graph_v =
  \graph$ for the $1$-source $v$ of $\graph$, because $v$ is the root
  of the block tree.

  We will now prove that $\graph_v \in
  \alangof{X}{\text{$\algof{G}_2$}}{\grammar_2}$
  for every node of the block tree consisting of a
  single vertex $v\in\vertof{\graph}$.
  Recall that this is sufficient to prove the claim because we also consider $\graph_v = \graph$ for the $1$-source $v$ of $\graph$.
  If $\graph_v$ has one vertex and no edges, we obtain $\graph_v \in  \alangof{X}{\text{$\algof{G}_2$}}{\grammar_2}$ by applying the rule $X \rightarrow \emptygraph_{\set{1}}$ from $\grammar_2$.
  Else, by
  Lemma \ref{lemma:nontrivial-block}, each block of $\graph_v$ is
  nontrivial.
  Let $B_1,\ldots,B_\ell$ be the blocks that are the children of $v$ in the block tree.
  For each $j \in \interv{1}{\ell}$, let $c_1^j,
  \ldots, c_{n_j}^j$ be the vertices that are the children of $B_j$ in $T$.
  By the inductive hypothesis, there exist complete derivations $X \step{\grammar_2}^* \gamma_{i,j}$ such that $\gamma_{i,j}^{\algof{G}_2} = \graph_{c_i^j}$, where each $\graph_{c_i^j}$ is of sort $\set{1}$, having $c_i^j$ as its
  $1$-source.
  By Lemma~\ref{lemma:tw2-block-source}, for each $j \in
  \interv{1}{\ell}$ there is some vertex $w_j$ of $B_j$, such that $(B_j)_{(v,w_j)}$, the graph $B_j$ considered as graph of sort $\set{1,2}$, with $v$ (resp. $w_j$) as its $1$-source
  (resp. $2$-source), is a $\sop$-atomic disoriented series-parallel
  graph.
  By Lemma~\ref{lemma:universal-synt-reg-sp} (naturally
  extended to disoriented series-parallel graphs) there is a derivation $P \step{\grammar_\algof{DSP} }^* \beta_j$ of the universal series-parallel grammar   $\grammar_\algof{DSP}$ that witnesses $(B_j)_{(v,w_j)} \in  \alangof{}{\algof{DSP}}{\grammar_\algof{DSP}}$ (we recall that $\grammar_\algof{DSP}$ contains, in addition to the rules of $\grammar_\algof{SP}$, the rules $S \rightarrow \sgraph{b}_{2,1}$
  and $P \rightarrow \sgraph{b}_{2,1}$, for each $b \in \alphabetTwo$).

  We now take the derivations $P \step{\grammar_\algof{DSP} }^* \beta_j$, for every $1 \le j \le \ell$ and build modified derivations $P \step{\grammar_2}^* \theta_j$ such that $\delta_j^{\algof{G}_2}$ is the graph $(B_j)_{(v,w_j)}$ to which we attach all the graphs $\graph_{c_i^j}$ at the respective cutvertices $c_i^j$, for every $i$ such that $c_i^j \neq w_j$.
  We obtain the derivation $P \step{\grammar_2}^* \theta_j$ from  $P \step{\grammar_\algof{DSP} }^* \beta_j$ as follows:
    \begin{compactitem}[-]
    \item each step $S \step{\grammar_\algof{DSP}} P \sop S$ is replaced by the derivation:
      \begin{compactitem}[*]
      \item $S \step{\grammar_2} \sop(P,S,X)
        \step{\grammar_2}^* \sop(P,S,\gamma_{i,j})$, if the join vertex of the serial composition is the cutpoint $c_i^j$, for some $i \in \interv{1}{n_j}$, or
      \item $S \step{\grammar_2} \sop(P,S,X) \step{\grammar_2}
        \sop(P,S,\emptygraph_{\set{1}})$, otherwise.
      \end{compactitem}
    \item each step $S \step{\grammar_\algof{DSP}} P \sop P$ is
      replaced by the derivation: \begin{compactitem}[*]
      \item $S \step{\grammar_2} \sop(P,P,X) \step{\grammar}^*
        \sop(P,P,\gamma_{i,j})$, if the join vertex of the serial
        composition is the cutpoint $c_i^j$, for some $i \in
        \interv{1}{n_j}$, or
      \item $S \step{\grammar_2} \sop(P,P,X) \step{\grammar_2}
        \sop(P,P,\emptygraph_{\set{1}})$, otherwise.
      \end{compactitem}
    \end{compactitem}
  For each $j \in \interv{1}{\ell}$, we build the complete derivation
  \begin{align*}
    Y \step{\grammar_2} P \rop X \step{\grammar_2}^* \theta_j \rop \delta_j,
  \end{align*}
  where $\delta_j = \emptygraph_{\set{1}}$ in case that $c_i^j \neq w_j$ for all $i$, and $\delta_j = \gamma_{i,j}$ in case that $c_i^j = w_j$ for some $i$.
  Since $B_1, \ldots, B_\ell$ are the children of $v$
  in the block tree, each graph $\graph_j$ consisting of $B_j$ and all its descendants in the block tree is a graph of sort $\set{1}$
  with $v$ as $1$-source such that, moreover $\graph = \graph_1
  \pop^{\algof{G}_2} \ldots \pop^{\algof{G}_2} \graph_\ell =
  (\theta_1 \rop \delta_1 \pop \ldots \pop \theta_\ell \rop \delta_\ell)^{\algof{G}_2}$.
  We consider the complete derivation
  \begin{align*}
    X \step{\grammar_2}^* X \pop \pexp{Y}{\ell} \step{\grammar_2}
    \emptygraph_{\set{1}} \pop \pexp{Y}{\ell} \step{\grammar_2}^* \emptygraph_{\set{1}} \pop (\pop_{j\in\interv{1}{\ell}} \theta_j \rop \delta_j),
  \end{align*}
  which proves $\graph \in \alangof{X}{\text{$\algof{G}_2$}}{\grammar_2}$.
\end{proofE}

Note that $\grammar_2$ is an aperiodic regular tree-width $2$ grammar,
because $\fpof{\grammar_2} \subseteq \afp_2$ and the two rules of the
form (\ref{it1:syntactic-regular}) have period $1$. As in the case of
trees, the result of Lemma \ref{lemma:universal-synt-reg-tw2} proves
the ``only if'' direction of Theorem~\ref{thm:main} for the
$\algof{G}_2$ class (Corollary \ref{cor:completeness}). The ``if''
direction is dealt with by the construction of a recognizer for the
language of a given regular tree-width $2$ grammar (Theorem~\ref{thm:tw2-reg}). Just as before, the size of the recognizer is
doubly-exponential in the size of the input regular grammar.

\begin{figure}[t!]
  \centerline{\input{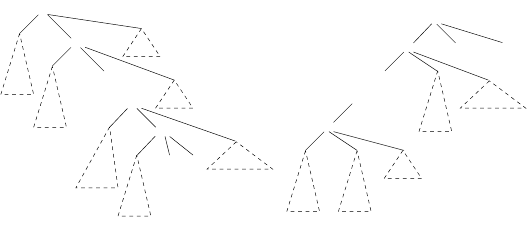_t}}
  \vspace*{-\baselineskip}
  \caption{A partial derivation of
    an S-graph $\graph = \sop(\ldots\sop(\sop(v_1,v_2,t_1),v_3,t_2),\ldots,v_n,t_{n-1})^{\algof{G}_2}$ (a) A
    view of $\graph$ corresponding to that partial derivation
    (b)}
  \vspace*{-\baselineskip}
  \label{fig:tw2-view}
\end{figure}

\newcommand{\twalg}{\text{$\algof{G}_2$}}

The definition of views for graphs of tree-width at most~$2$
(Definition \ref{def:tw2-view}) relies on the following
associative-like re-arrangement of serial composition terms.
Formally, we consider an equivalence relation between
$\fsignature_2$-terms, such that $u \assoc w$ if and only if $w$ can
be obtained from $u$ (or viceversa) by zero or more applications of
the following axiom:
\begin{align}\label{eq:assoc-tw2}
\sop(x_1,\sop(x_2,y,z_2),z_1) = \sop(\sop(x_1,x_2,z_1),y,z_2)
\end{align}
It is easy to see that both sides of the (\ref{eq:assoc-tw2}) axiom
evaluate to the same graph, for any substitution of $x_i,z_i,y$, for
$i=1,2$, with ground terms. The $\assoc$ relation allows to regroup
the ground $\fsignature_2$-terms which evaluate to the P-components of
an S-graph into a single ground $\fsignature_2$-term $u$ that
evaluates to the serial composition of its P-components
(Figure~\ref{fig:tw2-view}). We denote by $u \astep{\grammar}^* w$ the
existence of a derivation $u \step{\grammar}^* v$ such that $v \assoc
w$.

\begin{proofsTextEnd}
\begin{lemma}\label{lemma:tw2-assoc-start}
  Let $S \in \mathcal{S}$, $Q \in \mathcal{P} \uplus \mathcal{S}$ and $X \in \mathcal{X}$ be nonterminals.
  Then, for every derivation $S \astep{\grammar}^* \sop(u,Q,X)$, there are non-terminals $P_1, \ldots, P_n$ and $X_1, \ldots, X_{n-1}$, derivations $P_i \step{\grammar}^* v_i$ and $X_j
  \step{\grammar}^* t_j$, and rules $S_i \rightarrow \sop(P_i,S_{i+1},X_i)$, for $1 \le i < n-1$, with $S=S_1$, and a rule $S_{n-1} \rightarrow \sop(P_n,Q,X)$ such that
  $u \assoc \sop(\ldots\sop(v_1,v_2,t_1)\ldots,v_n,t_{n-1})$.
\end{lemma}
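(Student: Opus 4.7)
\begin{proofE}
The plan is to mirror the strategy of Lemma~\ref{lemma:assoc-sp-start}, but accounting for the ternary shape of the $\sop$ operation in $\algof{G}_2$. First, I would consider the derivation $S \step{\grammar}^* t$ for some $\fsignature_2$-term $t$ such that $t \assoc \sop(u, Q, X)$. Inspecting the footprint $\afp_2$, the only rules of a regular tree-width~$2$ grammar whose right-hand side introduces the symbol $\sop$ at the root are of the shape $S' \rightarrow \sop(P', S'', X')$ and $S' \rightarrow \sop(P', P'', X')$, with $S', S'' \in \mathcal{S}$, $P', P'' \in \mathcal{P}$ and $X' \in \mathcal{X}$. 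All other rules of $\grammar$ are either of the form $X' \rightarrow X' \pop Y'$, $X' \rightarrow \pop \pexp{Y'}{q}$, $Y' \rightarrow P' \rop X'$, or have right-hand sides that are ground or use only $\pop$/$\rop$; hence, they cannot produce a $\sop$ at the root of a subterm.

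Next, I would argue that, up to reordering of independent rule applications, the derivation $S \step{\grammar}^* t$ can be written so that all the rules introducing the $\sop$ symbols at positions that contribute to the spine of the outermost $\sop$ in $\sop(u,Q,X)$ come first. The key observation is that the axiom~(\ref{eq:assoc-tw2}) allows to rebalance a left-nested cascade of $\sop$'s into a right-nested cascade and vice versa, so that the derivation uses, in sequence, rules $S_i \rightarrow \sop(P_i, S_{i+1}, X_i)$ for $i = 1,\ldots,n-2$ (with $S_1 = S$), and a closing rule $S_{n-1} \rightarrow \sop(P_n, Q, X)$. After applying these rules, the intermediate term has the right-nested form $\sop(P_1, \sop(P_2, \ldots \sop(P_{n-1}, \sop(P_n, Q, X), X_{n-1}) \ldots, X_2), X_1)$, which by iterated use of~(\ref{eq:assoc-tw2}) is $\assoc$-equivalent to the left-nested term $\sop(\ldots \sop(\sop(P_1, P_2, X_1), P_3, X_2), \ldots, P_n, X_{n-1}) \sop'Q$-suffixed appropriately, namely $\sop(\sop(\ldots \sop(P_1, P_2, X_1), \ldots, P_n, X_{n-1}), Q, X)$ wrapping the rightmost $Q$ and $X$ at the top level.

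Then the remaining steps of the derivation complete the $P_i$ and $X_j$ subderivations independently, since the grammar is stratified and no rule can merge those subderivations back with the $\sop$-spine. So each $P_i$ admits a complete derivation $P_i \step{\grammar}^* v_i$ and each $X_j$ admits a complete derivation $X_j \step{\grammar}^* t_j$, yielding $u \assoc \sop(\ldots \sop(v_1, v_2, t_1) \ldots, v_n, t_{n-1})$ as required.

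The main obstacle is to make the reordering and the associativity rebalancing precise. This requires a careful induction on the length of the derivation that tracks which rule-applications contribute to the spine of outermost $\sop$'s (modulo $\assoc$) versus those contained inside the $P_i$- or $X_j$-subderivations. The stratification of $\grammar$ (nonterminals in $\mathcal{X} \uplus \mathcal{Y}$ versus $\mathcal{P} \uplus \mathcal{S}$) and the fact that $X \in \mathcal{X}$ and $Q \in \mathcal{P} \uplus \mathcal{S}$ appear at the rightmost-argument position of the outer $\sop$ are what guarantee that the spine is well-defined and can be uniquely extracted from the derivation tree, up to $\assoc$.
\end{proofE}
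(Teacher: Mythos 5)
Your proposal is correct and follows essentially the same route as the paper's proof: identify that only the rules $S' \rightarrow \sop(P,S'',X')$ and $S' \rightarrow \sop(P_1,P_2,X')$ can create $\sop$-symbols, reorder the derivation (up to $\assoc$) into a right-nested spine of such rules followed by the independent $P_i$- and $X_j$-subderivations, and rebalance with axiom~(\ref{eq:assoc-tw2}) to obtain the left-nested form of $u$. The paper asserts the reordering with the same brevity you flag as the remaining work, so no gap beyond the paper's own level of detail.
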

\begin{proof}\emph{Proof.}
  We consider the derivation $S \step{\grammar}^* t[Q,X]$, where $t[Q,X]$ is a term such that $t[Q,X] \assoc \sop(u,Q,X)$.
  We note that the only rules of a regular tree-width 2 grammar that generate series compositions are the rules of shape $S_1 \rightarrow \sop(P,S_2,X)$ and $S \rightarrow \sop(P_1,P_2,X)$.
  Further, because $Q$ occurs at the right-most position of a series-composition of terms, we must have (up to a reordering of the rule applications) that the derivation $S \step{\grammar}^* t[Q,X]$ has the form
  \begin{align*}
    S_1 \step{\grammar} & ~\sop(P_1,S_2,X_1) \step{\grammar} \sop(P_1,\sop(P_2,S_3,X_2),X_1) \step{\grammar} \\
    & \ldots \\
    \step{\grammar} & ~\sop(P_1,\sop(P_2,\sop(\ldots,\sop(P_{n},Q,X),\ldots),X_2),X_1) \\
    \step{\grammar}^* & ~\sop(v_1,\sop(v_2,\sop(\ldots,\sop(v_{n},Q,X),\ldots),t_2),t_1)
  \end{align*}
  where $S=S_1$, and $P_i \step{\grammar}^* v_i$, $X_j
  \step{\grammar}^* t_j$ are derivations, for all $i \in
  \interv{1}{n}$ and $j \in \interv{1}{n-1}$.
  We obtain $t[Q,X] =
  \sop(v_1,\sop(v_2,\sop(\ldots,\sop(v_{n},Q,X),\ldots),t_2),t_1)
  \assoc \sop(u,Q,X)$.
  Hence, $u \assoc \sop(\ldots\sop(v_1,v_2,t_1)\ldots,v_n,t_{n-1})$, see Figure~\ref{fig:tw2-view} (b).
\end{proof}

\begin{lemma}\label{lemma:tw2-assoc}
  Let $\grammar=(\mathcal{X} \uplus \mathcal{Y} \uplus \mathcal{P} \uplus \mathcal{S}, \rules)$ be a regular tree-width $2$ grammar, $S\in\mathcal{S}$ and $Q \in
  \mathcal{P} \uplus \mathcal{S}$ be nonterminals, $\graph \in \universeOf{G}^{\set{1,2}}_2$ be an S-graph and $K
  \in \universeOf{G}^{\set{1}}_2$ be a graph.
  Then, for all graphs $\graph_1,\graph_2\in\universeOf{G}^{\set{1,2}}_2$ and
  $H\in\universeOf{G}^{\set{1}}_2$ such that $\graph =
  \sop^{\algof{G}_2}(\graph_1,\graph_2,H)$, the following are
  equivalent:\begin{compactenum}[1.]
  \item\label{it1:lemma:tw2-assoc} there exists a derivation $S
    \astep{\grammar}^* \sop(u,Q,X)$, where $u$ is a ground
    $\fsignature_2$-term such that $u^{\algof{G}_2} = \graph$,
  \item\label{it2:lemma:tw2-assoc} there exists derivations $S
    \astep{\grammar}^* \sop(u_1,S',w)$ and $S' \astep{\grammar}^*
    \sop(u_2,Q,X)$, for some nonterminal $S' \in
    \mathcal{Y}_{\set{1,2}}$ and ground $\fsignature_2$-terms $u_1,
    u_2$ and $w$ such that $u_i^{\algof{G}_2} = \graph_i$, for
    $i=1,2$ and $w^{\algof{G}_2} = H$.
  \end{compactenum}
\end{lemma}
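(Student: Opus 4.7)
The plan is to mirror the proof of Lemma \ref{lemma:assoc-sp}, replacing the binary serial composition of $\algof{SP}$ by the ternary $\sop$ of $\algof{G}_2$ and using the associativity axiom~(\ref{eq:assoc-tw2}) in place of~(\ref{eq:assoc-sp}). The proof hinges on Lemma~\ref{lemma:tw2-assoc-start}, which provides a canonical chain decomposition of any derivation $S \astep{\grammar}^* \sop(u,Q,X)$.

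For the direction (\ref{it1:lemma:tw2-assoc}) $\Rightarrow$ (\ref{it2:lemma:tw2-assoc}), I would first invoke Lemma~\ref{lemma:tw2-assoc-start} to obtain a chain of rules $S_i \rightarrow \sop(P_i,S_{i+1},X_i)$ with $S_1 = S$, together with sub-derivations $P_i \step{\grammar}^* v_i$, $X_j \step{\grammar}^* t_j$, and $u \assoc \sop(\ldots\sop(v_1,v_2,t_1)\ldots,v_n,t_{n-1})$. Evaluating in $\algof{G}_2$, each $v_i^{\algof{G}_2}$ is a disoriented P-graph and $\graph$ is their series composition with the $t_j^{\algof{G}_2}$'s attached at the interior cut vertices. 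Because the decomposition of an S-graph into maximal P-components is unique (Lemma~\ref{lemma:sp-decomp} lifted to $\algof{DSP}$), the splitting $\graph = \sop^{\algof{G}_2}(\graph_1,\graph_2,H)$ determines a unique index $k \in \interv{1}{n-1}$ such that $\graph_1$ consists of $v_1,\ldots,v_k$ with dangling subgraphs $t_1,\ldots,t_{k-1}$, $\graph_2$ consists of $v_{k+1},\ldots,v_n$ with $t_{k+1},\ldots,t_{n-1}$, and $H = t_k^{\algof{G}_2}$. Choosing $S' \isdef S_{k+1}$ and rearranging via repeated applications of~(\ref{eq:assoc-tw2}) then splits the derivation into $S \astep{\grammar}^* \sop(u_1,S',w)$ and $S' \astep{\grammar}^* \sop(u_2,Q,X)$, with $u_1^{\algof{G}_2} = \graph_1$, $u_2^{\algof{G}_2} = \graph_2$, and $w = t_k$ so that $w^{\algof{G}_2} = H$.

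For the converse direction, I would combine the two given derivations by substitution into $S \astep{\grammar}^* \sop(u_1, \sop(u_2, Q, X), w)$, and then apply axiom~(\ref{eq:assoc-tw2}) once to rewrite this as $\sop(\sop(u_1,u_2,w), Q, X)$. Taking $u \isdef \sop(u_1,u_2,w)$, we obtain $u^{\algof{G}_2} = \sop^{\algof{G}_2}(u_1^{\algof{G}_2}, u_2^{\algof{G}_2}, w^{\algof{G}_2}) = \sop^{\algof{G}_2}(\graph_1,\graph_2,H) = \graph$, yielding the required derivation $S \astep{\grammar}^* \sop(u,Q,X)$.

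The principal obstacle lies in direction (\ref{it1:lemma:tw2-assoc}) $\Rightarrow$ (\ref{it2:lemma:tw2-assoc}), namely in matching the graph-level split at $H$ with the correct term-level chain position $k$. This requires adapting the uniqueness of the SP-decomposition to the $\algof{G}_2$ setting, where each maximal P-component $v_i^{\algof{G}_2}$ may sit inside a larger graph containing dangling subgraphs of sort $\set{1}$ attached at its endpoints; the key observation is that, because $P$-nonterminals of the regular grammar only generate pure $\algof{DSP}$-graphs, all dangling subgraphs live strictly at cut vertices of the underlying $\algof{DSP}$-structure of $\graph$. Hence the cut vertex joining $\graph_1$ and $\graph_2$ in $\graph$ unambiguously pinpoints the chain index $k$, and the remaining rearrangements via~(\ref{eq:assoc-tw2}) are then a routine iteration analogous to those in the proof of Lemma~\ref{lemma:assoc-sp}.
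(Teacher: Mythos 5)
Your proof is correct and follows essentially the same route as the paper: the forward direction invokes Lemma~\ref{lemma:tw2-assoc-start} to expose the chain of rules $S_i \rightarrow \sop(P_i,S_{i+1},X_i)$ and splits it at the chain index matching the graph-level decomposition $\graph = \sop^{\algof{G}_2}(\graph_1,\graph_2,H)$, while the converse recombines the two derivations and reassociates via axiom~(\ref{eq:assoc-tw2}) (your single direct application of the axiom is in fact a slightly cleaner packaging of what the paper does). The only quibble is your justification that ``$P$-nonterminals only generate pure $\algof{DSP}$-graphs'' --- they do not, since nested rules $\S \rightarrow \sop(\P,\S,\X)$ attach sort-$\set{1}$ graphs at interior join vertices of the P-components --- but the alignment fact you actually need (that any ternary $\sop^{\algof{G}_2}$-split of the S-graph must occur at a cut vertex of the chain, with all dangling material at that vertex going into $H$) is true and is exactly the step the paper itself asserts without further argument.
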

\begin{proof}\emph{Proof.}
  ``(\ref{it1:lemma:tw2-assoc}) $\Rightarrow$
  (\ref{it2:lemma:tw2-assoc})'' We assume that there exists a
  derivation $S \astep{\grammar}^* \sop(u,Q,X)$, where $u$ is a ground
  $\spsignature$-term such that $u^{\algof{G}_2} =\graph$.  By
  Lemma~\ref{lemma:tw2-assoc-start}, there are non-terminals $P_1,
  \ldots, P_n$ and $X_1, \ldots, X_{n-1}$, derivations $P_i
  \step{\grammar}^* v_i$ and $X_j \step{\grammar}^* t_j$, and rules
  $S_i \rightarrow \sop(P_i,S_{i+1},X_i)$, for $1 \le i < n-1$, with
  $S=S_1$, and a rule $S_{n-1} \rightarrow \sop(P_n,Q,X)$ such that $u
  \assoc \sop(\ldots\sop(v_1,v_2,t_1)\ldots,v_n,t_{n-1})$. Since
  $u^{\algof{G}_2} = \sop^{\algof{G}_2}(\graph_1,\graph_2,H)$, we must
  have: \begin{align*}
    \graph_1 = & ~(\sop(\ldots\sop(v_1,v_2,t_1)\ldots,v_{i},t_{i-1}))^{\algof{G}_2} \\
    \graph_2 = & ~(\sop(\ldots\sop(v_{i+1},v_{i+2},t_{i+1})\ldots,v_{n},t_{n-1}))^{\algof{G}_2} \\
    H = & ~t_i^{\algof{G}_2} \text{, for some } i \in \interv{1}{n-1}
  \end{align*}
  We define: \begin{align*}
    u_1 \isdef & ~\sop(\ldots\sop(v_1,v_2,t_1)\ldots,v_{i},t_{i-1}) \\
    u_2 \isdef & ~\sop(\ldots\sop(v_{i+1},v_{i+2},t_{i+1})\ldots,v_{n},t_{n-1}) \\
    w_1 \isdef & ~t_i
  \end{align*}
  The existence of the derivations $S \astep{\grammar}^*
  \sop(u_1,S_i,w_1)$ and $S_i \astep{\grammar}^* \sop(u_2,Q,X)$
  follows from the decomposition of the above derivation up to and
  from $S_i$ onwards, modulo re-orderings via associativity of the
  subderivations.

  \vspace*{.5\baselineskip}
  \noindent``(\ref{it2:lemma:tw2-assoc}) $\Rightarrow$
  (\ref{it1:lemma:tw2-assoc})''
  We assume that there exist derivations $S \astep{\grammar}^* \sop(u_1,S',w)$ and $S' \astep{\grammar}^* \sop(u_2,Q,X)$, where $u_i$ are ground $\spsignature$-terms such that $u_i^{\algof{G}_2} =\graph_i$ and $w^{\algof{G}_2}  = H$.
  By Lemma~\ref{lemma:tw2-assoc-start}, there are non-terminals $P_1, \ldots, P_n$ and $X_1, \ldots, X_{n-1}$, derivations $P_i \step{\grammar}^* v_i$ and $X_j
  \step{\grammar}^* t_j$, and rules $S_i \rightarrow \sop(P_i,S_{i+1},X_i)$, for $1 \le i < n-1$, with $S=S_1$, $S' = S_k$, and $w = t_k$, and a rule $S_{n-1} \rightarrow \sop(P_n,Q,X)$ such that $u_1 \assoc
  \sop(\ldots\sop(v_1,v_2,t_1)\ldots,v_i,t_{i-1})$ and $u_2 \assoc
  \sop(\ldots\sop(v_{i+1},v_{i+2},Q)\ldots,v_n,t_{n-1})$.
  Hence, we obtain $\sop(u_1,u_2,w_1) \assoc
  \sop(v_1,\sop(v_2,\sop(\ldots,\sop(v_{n},Q,X),\ldots),t_2),t_1)$,
  which proves $S \astep{\grammar}^* \sop(\sop(u_1,u_2,w),Q,X)$.
  Since $\graph = \sop^{\algof{G}_2}(\graph_1,\graph_2,H) =
  \sop^{\algof{G}_2}(u_1^{\algof{G}_2},u_2^{\algof{G}_2},w^{\algof{G}_2})$,
  we define $u \isdef \sop(u_1,u_2,w)$, thus obtaining $S
  \astep{\grammar}^* \sop(u,Q,X)$.
\end{proof}
\end{proofsTextEnd}

In order to simplify the development of the recognizer we consider
regular tree-width 2 grammars in the following alternative form:

\begin{definition}
\label{def:regular-tw2-grammar-alternative}
  A regular tree-width 2 grammar in \emph{alternative} form is a
  stratified grammar
  $\grammar=(\mathcal{X}\uplus\mathcal{Y}\uplus\mathcal{P}\uplus\mathcal{S},\rules)$
  such that $\fpof{\grammar} \subseteq \afp'_2$, where:
  \begin{align*}
    \afp'_2 \isdef & ~\{\X \rightarrow \pexp{\Y}{\geq 0},~ \P \rightarrow \pexp{\S}{\geq 1}, ~\Y \rightarrow \P \rop \X, \\
    & ~~\S \rightarrow \sop(\P,\S,\X), ~\S \rightarrow \sop(\P,\P,\X),~ \rightarrow \X\} ~\cup \\
    & ~\set{\S \rightarrow \sgraph{b}_{i,3-i} \mid b \in \alphabetTwo,~ i=1,2}
  \end{align*}
  such that for every rule $P \rightarrow S \in \rules$ there is some
  rule $S \rightarrow \sgraph{b}_{i,3-i} \in \rules$ and $S$ does
  not occur anywhere else in $\rules$.
\end{definition}
Each regular grammar $\grammar$ can be transformed into a regular
grammar in alternative form
$\widehat{\grammar}=(\mathcal{X}\uplus\mathcal{Y}\uplus\mathcal{P}\uplus\widehat{\mathcal{S}},\widehat{\rules})$,
by setting
$\widehat{\mathcal{S}}\isdef\mathcal{S}\uplus\set{S_{b_{i,3-i}}\mid
  b\in\alphabetTwo,~ i=1,2}$, for some fresh nonterminals
$S_{b_{i,3-i}} \not\in \mathcal{S}$, and replacing each rule $P
\rightarrow \sgraph{b}_{i,3-i} \in \rules$ by the rules $P \rightarrow
S_{b_{i,3-i}}, S_{b_{i,3-i}} \rightarrow \sgraph{b}_{i,3-i} \in
\widehat{\rules}$.  It is straightforward that
$\alangof{}{SP}{\widehat{\grammar}} =
\alangof{}{SP}{\grammar}$\footnote{In every derivation one can replace
the derivation steps $P \rightarrow S_{b_{i,3-i}}, S_{b_{i,3-i}}
\rightarrow \sgraph{b}_{i,3-i}$ with $P \rightarrow
\sgraph{b}_{i,3-i}$, and viceversa.} and that $\widehat{\grammar}$ is
aperiodic iff $\grammar$ is aperiodic.  Moreover, the size of the
grammar is preserved, i.e., $\size{\grammar} =
\Theta(\size{\widehat{\grammar}})$.

Intuitively, the profile for a graph $\graph \in \universeOf{G}_2$
maintains sufficient information about $\graph$ in order to determine
how $\graph$ can be parsed during some derivation; the cases for
$\graph \in \universeOf{G}^{\set{1}}_2$ resp. $\graph \in
\universeOf{G}^{\set{1,2}}_2$ build on the ideas developed for trees
and series-parallel graphs \ifLongVersion\else(Appendix
\ref{app:sp})\fi. The views of the graphs from $\universeOf{G}_2$ for
a regular tree-width $2$ grammar are defined below:

\begin{definition}\label{def:tw2-view}
  Let $\graph$ be a graph of tree-width $\le 2$ and
  $\grammar=(\mathcal{X}\uplus\mathcal{Y} \uplus \mathcal{P} \uplus
  \mathcal{S},\rules)$ be a normalized regular tree-width $2$ grammar
  in alternative form. We distinguish the cases:
  \begin{compactenum}[1.]
  \item\label{it1:tw2-view} $\graph \in \universeOf{G}^{\set{1}}_2$: a
    \emph{view of $\graph$ for $\grammar$} is a multiset
    $\mset{Y_1,\ldots, Y_n} \in \mpow{\mathcal{Y}}$ for which there
    exist complete derivations $Y_i \step{\grammar}^* v_i$, for all $i
    \in \interv{1}{n}$, such that $\graph = (v_1 \pop \ldots \pop
    v_n)^{\algof{G}_2}$. A \emph{reduced} view of $\graph$ is a
    multiset $\trunk{m}_\grammar$, where $m$ is a view of $\graph$ for
    $\grammar$.
  \item\label{it21:tw2-view} $\graph \in \universeOf{G}^{\set{1,2}}_2$
    is a P-graph: a \emph{view of $\graph$ for $\grammar$} is a
    multiset $\mset{S_1,\ldots,S_n} \in \mpow{\mathcal{S}}$ for which
    there exist complete derivations $S_i \step{\grammar}^* v_i$, for
    all $i \in \interv{1}{n}$, such that $\graph = (v_1 \pop \ldots
    \pop v_n)^{\algof{G}_2}$. The reduced views are defined as before
    (\ref{it1:tw2-view}).
  \item\label{it22:tw2-view} $\graph \in \universeOf{G}^{\set{1,2}}_2$
    is an S-graph: a \emph{view of $\graph$ for $\grammar$}
    is either: \begin{compactenum}
    \item\label{it221:tw2-view} a triple $(S,Q,X) \in
      \mathcal{S} \times (\mathcal{P} \uplus
      \mathcal{S}) \times \mathcal{X}$ for which
      there exists a derivation $S \astep{\grammar}^* \sop(u,Q,X)$,
      for some ground $\fsignature_2$-term $u$ such that
      $u^{\algof{G}_2} = \graph$, or
    \item\label{it222:tw2-view} a pair $(S,\bot) \in \mathcal{S}
      \times \set{\bot}$ for which there exists a derivation $S
      \step{\grammar}^* u$, for some ground $\fsignature_2$-term $u$
      such that $u^{\algof{G}_2} = \graph$.
    \end{compactenum}
    The views of S-graphs are reduced, by definition.
  \end{compactenum}
  The \emph{profile} of $\graph$ is the set of reduced views, denoted
  $h_\grammar(\graph)$.
\end{definition}

We define the finite algebra $\algof{A}_\grammar$ having the sorted
domain $\universeOf{A}^{\slabs} \isdef \set{h_\grammar(\graph) \mid
  \graph \in \universeOf{G}_2, \sortof{\graph}=\slabs}$, for $\slabs
\in \{\set{1},\set{1,2}\}$. A P-profile (resp. S-profile) is the
profile of a P-graph (resp. S-graph) of sort $\set{1,2}$. The function
symbols from $\fsignature_2$ are interpreted in $\algof{A}_\grammar$
by the least sets built according to the inference rules from
Figure~\ref{fig:tw2-alg}. The names of the inference rules distinguish
between P-profile and S-profiles, e.g., $\P \pop_{\set{1,2}} \S$ defines
$\pop^\twalg$ when one argument is a P-profile and the other is an
S-profile, $\sop^1(\P,\S)$ and $\sop^2(\P,\S)$ define $\sop^\twalg$
when the first and second arguments are a P- and an S-profile,
respectively. While the inference rules for parallel composition
resemble the ($\pop$) rule for trees (Figure \ref{fig:tree-alg}), the
inference rules for serial composition rely on finding grammar rules
from $\grammar$ that match either the multiset arguments using the
$\leadsto_\grammar$ relation (for P-profile arguments) or the
nonterminals from the tuples (for S-profile arguments).

The size of $a \in \universeOf{A}_\grammar$ is $\sizeof{a} \isdef
\sum_{m \in a} \cardof{m}$ if $\sortof{a} = \set{1}$ or $\sortof{a} =
\set{1,2}$ and $a$ is a P-profile, or $\sum_{s \in a} \lenof{s}$ if
$a$ is an S-profile.


\begin{figure}[t!]
  \vspace*{-\baselineskip}
{\small\begin{center}
  \begin{minipage}{4cm}
    \begin{prooftree}
      \AxiomC{$\begin{array}{c}
          \\
        \end{array}$}
      \RightLabel{$\mathtt{vertex}$}
      \UnaryInfC{$\emptymset \in \emptygraph_{\set{1}}^{\algof{A}_\grammar}$}
    \end{prooftree}
  \end{minipage}
  \begin{minipage}{4cm}
    \begin{prooftree}
      \AxiomC{$k_i \in a_i,~ i=1,2$}
      \RightLabel{$\pop_{\set{1}}$}
      \UnaryInfC{$\trunk{k_1 + k_2}_\grammar \in a_1 \pop^{\algof{A}_\grammar} a_2$}
    \end{prooftree}
  \end{minipage}

  \vspace*{.5\baselineskip}
  \begin{minipage}{4cm}
    \begin{prooftree}
      \AxiomC{$S \rightarrow \sgraph{b}_{i,3-i} \in \rules$}
      \RightLabel{$\mathtt{edge}$}
      \UnaryInfC{$\mset{S} \in \sgraph{b}^{\algof{A}_\grammar}$}
    \end{prooftree}
  \end{minipage}

  \vspace*{.5\baselineskip}
  \begin{minipage}{7.1cm}
    \begin{prooftree}
      \AxiomC{$\begin{array}{c}
          P \leadsto_{\grammar} m \in a_1,~ X \leadsto_{\grammar} k \in a_2 \\
          Y \rightarrow P \rop X \in \rules
          \end{array}$}
      \RightLabel{$\rop$}
      \UnaryInfC{$\mset{Y} \in a_1 \rop^{\algof{A}_\grammar} a_2$}
    \end{prooftree}
  \end{minipage}

  \vspace*{.5\baselineskip}
  \begin{minipage}{7cm}
    \begin{prooftree}
      \AxiomC{$\begin{array}{c} \\
          m_i \in a_i,~ i=1,2
        \end{array}$}
      \RightLabel{$\P \pop_{\set{1,2}} \P$}
      \UnaryInfC{$\trunk{m_1 + m_2}_\grammar \in a_1 \pop^{\algof{A}_\grammar} a_2$}
    \end{prooftree}
  \end{minipage}

  \vspace*{.5\baselineskip}
  \begin{minipage}{7cm}
    \begin{prooftree}
      \AxiomC{$m_1 \in a_1,~ (S,\bot) \in a_2$}
      \RightLabel{$\P \pop_{\set{1,2}} \S$}
      \UnaryInfC{$\trunk{m_1 + \mset{S}}_\grammar \in a_1 \pop^{\algof{A}_\grammar} a_2$}
    \end{prooftree}
  \end{minipage}

  \vspace*{.5\baselineskip}
  \begin{minipage}{7cm}
    \begin{prooftree}
      \AxiomC{$\set{(S_i,\bot) \in a_i}_{i=1,2}$}
      \RightLabel{$\S \pop_{\set{1,2}} \S$}
      \UnaryInfC{$\trunk{\mset{S_1} + \mset{S_2}}_\grammar \in a_1 \pop^{\algof{A}_\grammar} a_2$}
    \end{prooftree}
  \end{minipage}

  \vspace*{.5\baselineskip}
  \begin{minipage}{7.1cm}
    \begin{prooftree}
      \AxiomC{$\begin{array}{c}
          \set{P_i \leadsto_{\grammar} m_i \in a_i}_{i=1,2},~ X_1 \leadsto_{\grammar} k \in a_3 \\
          S \rightarrow \sop(P_1,S_1,X_1) \in \rules \\
          S_1 \rightarrow \sop(P_2,Q,X) \in \rules
        \end{array}$}
      \RightLabel{$\sop^1(\P,\P)$}
      \UnaryInfC{$(S,Q,X) \in \sop^{\algof{A}_\grammar}(a_1,a_2,a_3)$}
    \end{prooftree}
  \end{minipage}

  \begin{minipage}{7.1cm}
    \begin{prooftree}
      \AxiomC{$\begin{array}{c}
          \\
          \set{P_i \leadsto_{\grammar} m_i \in a_i}_{i=1,2},~
          X \leadsto_{\grammar} k \in a_3 \\
          S \rightarrow \sop(P_1,P_2,X) \in \rules
        \end{array}$}
      \RightLabel{$\sop^2(\P,\P)$}
      \UnaryInfC{$(S,\bot) \in \sop^{\algof{A}_\grammar}(a_1,a_2,a_3)$}
    \end{prooftree}
  \end{minipage}

  \vspace*{.5\baselineskip}
  \begin{minipage}{7cm}
    \begin{prooftree}
      \AxiomC{$\begin{array}{c}
          P \leadsto_{\grammar} m \in a_1, (S_1,Q,X) \in a_2 \\
          X_1 \leadsto_{\grammar} k \in a_3 \\
          S \rightarrow \sop(P,S_1,X_1) \in \rules
        \end{array}$}
      \RightLabel{$\sop^1(\P,\S)$}
      \UnaryInfC{$(S,Q,X) \in \sop^{\algof{A}_\grammar}(a_1,a_2,a_3)$}
    \end{prooftree}
  \end{minipage}

  \vspace*{.5\baselineskip}
  \begin{minipage}{7cm}
    \begin{prooftree}
      \AxiomC{$\begin{array}{c}
          P \leadsto_{\grammar} m \in a_1, (S_1,\bot) \in a_2 \\
          X_1 \leadsto_{\grammar} k \in a_3 \\
          S \rightarrow \sop(P,S_1,X_1) \in \rules
        \end{array}$}
      \RightLabel{$\sop^2(\P,\S)$}
      \UnaryInfC{$(S,\bot) \in \sop^{\algof{A}_\grammar}(a_1,a_2,a_3)$}
    \end{prooftree}
  \end{minipage}

  \vspace*{.5\baselineskip}
  \begin{minipage}{7cm}
    \begin{prooftree}
      \AxiomC{$\begin{array}{c}
          (S,S_1,X_1) \in a_1,~ P \leadsto_{\grammar} m \in a_2 \\
          X_1 \leadsto_\grammar k \in a_3 \\
          S_1 \rightarrow \sop(P,Q,X) \in \rules
        \end{array}$}
      \RightLabel{$\sop^1(\S,\P)$}
      \UnaryInfC{$(S,Q,X) \in \sop^{\algof{A}_\grammar}(a_1,a_2,a_3)$}
    \end{prooftree}
  \end{minipage}

  \begin{minipage}{7cm}
    \begin{prooftree}
      \AxiomC{$\begin{array}{c}
          \\
          (S,P,X_1) \in a_1,~ P \leadsto_{\grammar} m \in a_2 \\
          X_1 \leadsto_\grammar k \in a_3
        \end{array}$}
      \RightLabel{$\sop^2(\S,\P)$}
      \UnaryInfC{$(S,\bot) \in \sop^{\algof{A}_\grammar}(a_1,a_2,a_3)$}
    \end{prooftree}
  \end{minipage}

  \vspace*{.5\baselineskip}
  \begin{minipage}{7cm}
    \begin{prooftree}
      \AxiomC{$\begin{array}{c}
          (S,S_1,X_1) \in a_1,~ (S_1,Q,X) \in a_2 \\
          X_1 \leadsto_\grammar k \in a_3
        \end{array}$}
      \RightLabel{$\sop^1(\S,\S)$}
      \UnaryInfC{$(S,Q,X) \in \sop^{\algof{A}_\grammar}(a_1,a_2,a_3)$}
    \end{prooftree}
  \end{minipage}

  \vspace*{.5\baselineskip}
  \begin{minipage}{7cm}
    \begin{prooftree}
      \AxiomC{$\begin{array}{c}
          (S,S_1,X_1) \in a_1,~ (S_1,\bot) \in a_2 \\
          X_1 \leadsto_\grammar k \in a_3
        \end{array}$}
      \RightLabel{$\sop^1(\S,\S)$}
      \UnaryInfC{$(S,\bot) \in \sop^{\algof{A}_\grammar}(a_1,a_2,a_3)$}
    \end{prooftree}
  \end{minipage}
\end{center}}
\caption{The interpretation of the signature $\fsignature_2$ in
  $\algof{A}_\grammar$}\label{fig:tw2-alg}
\vspace*{-1.5\baselineskip}
\end{figure}


\begin{lemmaE}[][category=proofs]\label{lemma:size-of-tw2-rec}
  Let $\grammar = (\mathcal{X}\uplus\mathcal{Y} \uplus \mathcal{P} \uplus \mathcal{S}, \rules)$ be a normalized regular tree-width $2$ grammar in alternative form.
  Then, $\cardof{\universeOf{A}_\grammar} \in  2^{2^{\poly{\size{\grammar}}}}$ and $\sizeof{a} \in
  2^{\poly{\size{\grammar}}}$ for each $a \in
  \algof{A}_\grammar$.
  Moreover, $f^{\algof{A}_\grammar}(a_1,\ldots,a_{\arityof{f}})$ can be
  computed in time $2^{\poly{\size{\grammar}}}$, for each function symbol $f \in
  \fsignature_2$ and all elements $a_1,\ldots,a_{\arityof{f}} \in \universeOf{A}_\grammar$.
\end{lemmaE}
\begin{proofE}
  As in the proof of Lemma~\ref{lemma:size-of-tree-rec}, we have at most  ${\sizeof{\grammar}}^{\poly{\cardof{\mathcal{Y}}}}$
  reduced multisets in a profile of sort $\set{1}$ and at most
  ${\sizeof{\grammar}}^{\poly{\cardof{\mathcal{S}}}}$
  multisets in a P-profile of sort $\set{1,2}$.
  Moreover, there are at most $\cardof{\mathcal{S}} \cdot
  (\cardof{\mathcal{P}} +
  \cardof{\mathcal{S}}) \cdot \cardof{\mathcal{X}}$
  triples of the form $(S,Q,X)$ and at most
  $\cardof{\mathcal{S}}$ pairs $(S,\bot)$ in an S-profile
  of sort $\set{1,2}$.
  Thus, we obtain: \begin{align*}
    \cardof{\universeOf{A}_\grammar} \in & ~2^{{\sizeof{\grammar}}^{\poly{\cardof{\mathcal{Y}}}}} + 2^{{\sizeof{\grammar}}^{\poly{\cardof{\mathcal{S}}}}} + \\
    & ~2^{\cardof{\mathcal{S}} \cdot (\cardof{\mathcal{P}} + \cardof{\mathcal{S}}) \cdot \cardof{\mathcal{X}} +  \cardof{\mathcal{S}}} \\
    \in & ~2^{2^{\poly{\sizeof{\grammar}}}}
  \end{align*}
  An element $a \in \universeOf{A}_\grammar$ is a set of either
  reduced multisets from $\mpow{\mathcal{Y}}$ or $\mpow{\mathcal{S}}$,
  or a set of triples $(S,Q,X)$ and pairs $(S,\bot)$.  As in the
  proofs of Lemmas~\ref{lemma:size-of-tree-rec}
  and~\ref{lemma:size-of-sp-rec}, we then obtain $\sizeof{a} \in
  \sizeof{\grammar}^\poly{\sizeof{\grammar}}$ for each $a \in
  \universeOf{A}_\grammar$.

  We compute an upper bound on the time needed to evaluate the function $f^{\algof{A}_\grammar}$, for each $f \in
  \fsignature_2$:
  \begin{itemize}[-]
  \item the cases $\emptygraph_{\set{1}}^{\algof{A}_\grammar}$,
    $\sgraph{b}_{1,2}^{\algof{A}_\grammar}$ and $a_1 \pop^\algof{A}
    a_2$ have similar counterparts in the proofs of Lemmas
    \ref{lemma:size-of-tree-rec} and \ref{lemma:size-of-sp-rec},
    respectively.
  \item $a_1 \rop a_2$: since $a_1$ is a P-profile of sort
    $\set{1,2}$, there are at most
    ${\sizeof{\grammar}}^{\poly{\cardof{\mathcal{S}}}}$
    many multisets $m \in a_1$ and the check $P
    \leadsto_{\grammar} m$, for a given nonterminal $P$ and
    a multiset $m \in a_1$, takes
    ${\sizeof{\grammar}}^{\poly{\cardof{\mathcal{S}}}}$
    time. Similarly, because $a_2$ is a profile of sort $\set{1}$,
    there are at most
    ${\sizeof{\grammar}}^{\poly{\cardof{\mathcal{Y}}}}$
    multisets $k \in a_2$ and the check $X \leadsto_\grammar k$, for a
    given multiset $k \in a_2$, takes time
    ${\sizeof{\grammar}}^{\poly{\cardof{\mathcal{Y}}}}$.
    Then, the application of the ($\rop$) inference rule from Figure
    \ref{fig:tw2-alg}, takes $2^{\poly{\sizeof{\grammar}}}$ time.
  \item $\sop^{\algof{A}_\grammar}(a_1,a_2,a_3)$: depending on the
    type of the profiles $a_1$ and $a_2$, the application of either of
    the inference rules ($\P\sop^i\P$), ($\P\sop^i\S$), ($\S\sop^i\P$)
    and ($\S\sop^i\S$) requires an iteration over the rules of
    $\grammar$ and the evaluation of at most three conditions $P
    \leadsto_{\grammar} m$ (resp. $X \leadsto_\grammar k$)
    for some given nonterminal $P \in \mathcal{P}$
    (resp. $X \in \mathcal{X}$) and multiset $m \in a_i$, $i=1,2$ (resp. $k\in a_3$). Since these checks take time
    ${\sizeof{\grammar}}^{\poly{\cardof{\mathcal{S}}}}$
    (resp. ${\sizeof{\grammar}}^{\poly{\cardof{\mathcal{Y}}}}$),
    the entire computation takes $2^{\poly{\sizeof{\grammar}}}$ time.
  \end{itemize}
\end{proofE}

We establish the counterparts of Lemmas
\ref{lemma:tree-view-membership} and
\ref{lemma:tree-view-homomorphism}, below:


\begin{lemmaE}[][category=proofs]\label{lemma:tw2-view-membership}
  For all $\graph_1, \graph_2 \in \universeOf{G}_2$, if
  $h_{\grammar}(\graph_1) = h_{\grammar}(\graph_2)$
  then $\graph_1\in\alangof{}{\twalg}{\grammar} \iff
  \graph_2\in\alangof{}{\twalg}{\grammar}$.
\end{lemmaE}
\begin{proofE}
  We assume that $\graph_1\in\alangof{}{\twalg}{\grammar}$ and prove
  $\graph_2\in\alangof{}{\twalg}{\grammar}$ (the other direction is
  symmetric). Since $h_{\grammar}(\graph_1) = h_{\grammar}(\graph_2)$,
  we distinguish three cases:

  \vspace*{.5\baselineskip}\noindent
  \underline{$\sortof{\graph_1}=\sortof{\graph_2}=\set{1}$}: the proof
  in this case is similar to the proof of Lemma
  \ref{lemma:tree-view-membership}.

  \vspace*{.5\baselineskip}\noindent
  \underline{$\sortof{\graph_1}=\sortof{\graph_2}=\set{1,2}$ and
    $\graph_1$, $\graph_2$ are P-graphs}: the proof in this case is
  similar to the proof of Lemma \ref{lemma:sp-view-membership} in the
  case $\graph_1,\graph_2 \in \universeOf{SP}$ are both P-graphs.

  \vspace*{.5\baselineskip}\noindent
  \underline{$\sortof{\graph_1}=\sortof{\graph_2}=\set{1,2}$ and
    $\graph_1$, $\graph_2$ are S-graphs}: Since $\graph_1 \in
  \alangof{}{\twalg}{\grammar}$, there exists a complete derivation $S
  \step{\grammar}^* u_1$ such that $u_1^\twalg=\graph_1$, for some
  axiom $\rightarrow S$ of $\grammar$. By Definition
  \ref{def:tw2-view} (\ref{it222:tw2-view}), we obtain $(S,\bot) \in
  h_\grammar(\graph_1)$, hence $(S,\bot) \in h_\grammar(\graph_2)$, by
  the equality $h_{\grammar}(\graph_1) =
  h_{\grammar}(\graph_2)$. Again by Definition~\ref{def:tw2-view}
  (\ref{it222:tw2-view}), we obtain a complete derivation $S
  \step{\grammar}^* u_2$ such that $u_2^\twalg=\graph_2$, hence
  $\graph_2 \in \alangof{}{\twalg}{\grammar}$, because $\rightarrow S$
  is an axiom.
\end{proofE}

\begin{lemmaE}[][category=proofs]\label{lemma:tw2-view-homomorphism}
  $h_{\grammar}$ is a homomorphism between $\twalg$ and
  $\algof{A}_\grammar$.
\end{lemmaE}
\begin{proofE}
  We prove the following points, for all $b \in \alphabetTwo$, $1 \leq
  i \leq 2$ and $\graph_1, \graph_2 \in \universeOf{G}_2$:

  \vspace*{.5\baselineskip}
  \noindent\underline{$h_\grammar(\emptygraph_{\set{1}}^\twalg) =
    \emptygraph_{\set{1}}^{\algof{A}_\grammar}$}: this case is similar
  to the case $\emptygraph_{\set{1}}^\algof{T}$ from the proof of
  Lemma \ref{lemma:tree-view-homomorphism}.

  \vspace*{.5\baselineskip}
  \noindent\underline{$h_\grammar(\graph_1 \pop^\twalg \graph_2) =
    h_\grammar(\graph_1) \pop^{\algof{A}_\grammar}
    h_\grammar(\graph_2)$}, where
  $\sortof{\graph_1}=\sortof{\graph_2}=\set{1}$: this case is similar
  to the case of parallel composition from the proof of Lemma
  \ref{lemma:tree-view-homomorphism}.

  \vspace*{.5\baselineskip}
  \noindent\underline{$h_\grammar(\sgraph{b}_{i,3-i}^\twalg) = \sgraph{b}_{i,3-i}^{\algof{A}_\grammar}$}: this case is similar to
  the case $\sgraph{b}_{1,2}^\algof{SP}$ from the proof of Lemma
  \ref{lemma:sp-view-homomorphism}.

  \vspace*{.5\baselineskip}
  \noindent\underline{$h_\grammar(\graph_1 \rop^\twalg \graph_2) =
    h_\grammar(\graph_1) \rop^{\algof{A}_\grammar}
    h_\grammar(\graph_2)$}: ``$\subseteq$'' Since $\graph_1
  \rop^\twalg \graph_2$ is a graph of sort $\set{1}$, let $\mset{Y}
  \in h_\grammar(\graph_1 \rop^\twalg \graph_2)$ be a reduced view of
  $\graph_1 \rop^\twalg \graph_2$. Because $\graph_1 \rop^\twalg
  \graph_2$ is not a parallel composition of two or more graphs of
  sort $\set{1}$, any element of $h_\grammar(\graph_1 \rop^\twalg
  \graph_2)$ is a singleton multiset, for some nonterminal $Y \in \mathcal{Y}$.
  Then, by Definition~\ref{def:tw2-view} (\ref{it1:tw2-view}), there exists a complete derivation $Y \step{\grammar}^* u$ such that $u^\twalg = \graph_1 \rop^\twalg
  \graph_2$. Without loss of generality, we can assume that this
  derivation has the form: \begin{align*} Y \step{\grammar} P \rop X
    \step{\grammar}^* v_1 \rop X \step{\grammar}^* v_1 \rop v_2
  \end{align*}
  for some ground $\fsignature_2$-terms $v$ and $w$ such that
  $v_i^\twalg=\graph_i$, for $i = 1,2$. By a reordering of the steps,
  if necessary, we can consider that the derivation $P
  \step{\grammar}^* v_1$ has the form:
\begin{align*}
    P \step{\grammar}^* ~\pop_{S \in \mathcal{S}} \pexp{S}{m(S)}
    \step{\grammar}^* ~\pop_{S \in \mathcal{S},i\in m(S)} w_1(S,i)
  \end{align*}
  where $w_1(S,i)$ are ground $\fsignature_2$-terms such that $\graph_1 = (\pop_{S \in \mathcal{S},i\in m(S)}
  w_1(S,i))^\twalg$.
  Then $m$ is a view of $\graph_1$ such that $P   \leadsto_{\grammar} m$.
  We obtain $\trunk{m}_\grammar \in h_\grammar(\graph_1)$ and $P \leadsto_{\grammar} \trunk{m}_\grammar$,
  by Lemma \ref{fact:trunk}.
  Similarly, by reordering the steps of the
  derivation $X \step{\grammar}^* v_2$, we obtain a view $k$ of
  $\graph_2$ such that $X \leadsto_\grammar k$, hence
  $\trunk{k}_\grammar \in h_\grammar(\graph_2)$ and $X
  \leadsto_\grammar \trunk{k}_\grammar$, by Lemma \ref{fact:trunk}. We
  obtain $\mset{Y} \in h_\grammar(\graph_1) \rop^{\algof{A}_\grammar}
  h_\grammar(\graph_2)$, by applying the inference rule ($\rop$)
  from Figure \ref{fig:tw2-alg}.

  \noindent''$\supseteq$'' Let $\mset{Y} \in h_\grammar(\graph_1)
  \rop^{\algof{A}_\grammar} h_\grammar(\graph_2)$ be a multiset. By
  the inference rule ($\rop$) from Figure~\ref{fig:tw2-alg}, which
  defines $\rop^{\algof{A}_\grammar}$, each element of
  $h_\grammar(\graph_1) \rop^{\algof{A}_\grammar}
  h_\grammar(\graph_2)$ is a singleton multiset containing some
  nonterminal $Y \in \mathcal{Y}$. By the inference rule
  ($\rop$), there exists multisets $m \in h_\grammar(\graph_1)$ and $k
  \in h_\grammar(\graph_2)$ and a rule $Y \rightarrow P \rop X$ in
  $\grammar$ such that $P \leadsto_{\grammar} m$ and $X
  \leadsto_\grammar k$. Then, we obtain complete derivations: \begin{align*}
    P \step{\grammar}^* & ~\pop_{S \in \mathcal{S}} \pexp{S}{m(S)}
    \step{\grammar}^* ~\pop_{S \in \mathcal{S},i\in m(S)} w_1(S,i) \\
    X \step{\grammar}^* & ~\pop_{Y \in \mathcal{Y}} \pexp{Y}{k(Y)}
    \step{\grammar}^*  ~\pop_{Y \in \mathcal{Y},i\in k(Y)} w_2(Y,i)
  \end{align*}
  where $w_1(S,i)$ and $w_2(Y,i)$ are ground $\fsignature_2$-terms
  such that $\graph_1 = (\pop_{S \in \mathcal{S},i\in m(S)} w_1(S,i))^\twalg$ and
  $\graph_2 = (\pop_{Y \in \mathcal{Y},i\in k(Y)}
  w_2(Y,i))^\twalg$.
  By composing these derivations with the rule $Y
  \rightarrow P \rop X$, we obtain a complete derivation $Y
  \step{\grammar}^* u$ such that $u^\twalg=\graph_1
  \rop^\twalg \graph_2$. Since $\graph_1 \rop^\twalg \graph_2 \in
  \universeOf{A}_{\set{1}}$, we obtain $\mset{Y} \in
  h_\grammar(\graph_1 \rop^\twalg \graph_2)$, by Definition
  \ref{def:tw2-view} (\ref{it1:tw2-view}).

  \vspace*{.5\baselineskip}
  \noindent\underline{$h_\grammar(\graph_1 \pop^\twalg \graph_2) =
    h_\grammar(\graph_1) \pop^{\algof{A}_\grammar}
    h_\grammar(\graph_2)$}, where
  $\sortof{\graph_1}=\sortof{\graph_2}=\set{1,2}$: this case is
  similar to the case of parallel composition from the proof of Lemma
  \ref{lemma:sp-view-homomorphism}.

  \vspace*{.5\baselineskip}
  \noindent\underline{$h_\grammar(\sop^\twalg(\graph_1,\graph_2,H)) =
    \sop^{\algof{A}_\grammar}(h_\grammar(\graph_1),h_\grammar(\graph_2),h_\grammar(H))$}:
  Note that $\graph \isdef \sop^\twalg(\graph_1,\graph_2,H)$ is an S-graph, hence
  $h_\grammar(\graph)$ is an S-profile,
  whose elements are of the form $(S,Q,X)$ or $(S,\bot)$, where $S \in \mathcal{S}$, $Q \in \mathcal{P} \uplus
  \mathcal{S}$ and $X \in \mathcal{X}$.
  In the following, we tackle the cases of elements of the form $(S,Q,X)$,
  the other cases being similar and left to the reader. Based on the
  types of $\graph_1$ and $\graph_2$, we distinguish the following
  cases: \begin{itemize}[-]
  \item $\graph_1$ and $\graph_2$ are P-graphs: ``$\subseteq$'' Let
    $(S,Q,X) \in h_\grammar(\graph)$ be a reduced view of $\graph$. By
    Definition~\ref{def:tw2-view} (\ref{it221:tw2-view}), there exists
    a derivation $S \astep{\grammar}^* \sop(u,Q,X)$, for some ground
    $\fsignature_2$-term $u$ such that $\graph=u^\twalg$. By Lemma
    \ref{lemma:tw2-assoc}, there exist derivations $S
    \astep{\grammar}^* \sop(u_1,S_1,w)$ and $S_1 \astep{\grammar}^*
    \sop(u_2,Q,X)$, for some nonterminal $S_1 \in \mathcal{S}$ and
    ground $\fsignature_2$-terms $u_1, u_2$ and $w$ such that
    $u_i^{\algof{G}_2} = \graph_i$, for $i=1,2$ and $w^{\algof{G}_2} =
    H$. Then, by Lemma~\ref{lemma:tw2-assoc-start}, and because
    $\graph_1$ and $\graph_2$ are P-graphs, there are rules $S
    \rightarrow \sop(P_1,S_1,X_1)$ and $S_1 \rightarrow
    \sop(P_2,Q,X)$, and complete derivations $P_i \step{\grammar}^*
    u_i$ and $X_1 \step{\grammar}^* w$, for some non-terminals $P_1,
    P_2$ and $X_1$.  By reordering these derivations, if necessary, we
    obtain views $m_i$ of $\graph_i$ and $k$ of $H$ such that $P_i
    \leadsto_{\grammar} m_i$ and $X_1 \leadsto_\grammar k$.  By Lemma
    \ref{fact:trunk}, we obtain $P_i \leadsto_{\grammar}
    \trunk{m_i}_\grammar \in h_\grammar(\graph_i)$ and $X_1
    \leadsto_\grammar \trunk{k}_\grammar \in h_\grammar(H)$.  Hence,
    we obtain $(S,Q,X) \in
    \sop^{\algof{A}_\grammar}(h_\grammar(\graph_1),
    h_\grammar(\graph_2), h_\grammar(H))$, by applying the inference
    rule ($\sop^1(\P,\P)$) from Figure \ref{fig:tw2-alg}.

    \noindent''$\supseteq$'' Let $(S,Q,X) \in
    \sop^{\algof{A}_\grammar}(h_\grammar(\graph_1),h_\grammar(\graph_2),h_\grammar(H))$
    be a triple. Since $h_\grammar(\graph_1)$ and
    $h_\grammar(\graph_2)$ are P-profiles, by the inference rule
    ($\sop^1(\P,\P)$) from Figure \ref{fig:tw2-alg}, we have rules $S
    \rightarrow \sop(P_1,S_1,X_1), S' \rightarrow \sop(P_2,Q,X) \in
    \rules$ and reduced views $m_i \in h_\grammar(\graph_i)$ such that
    $P_i \leadsto_{\grammar} m_i$, for $i=1,2$ and $k \in
    h_\grammar(H)$ such that $X_1 \leadsto_\grammar k$. Then, there
    exist views $m'_i$ of $\graph_i$, for $i=1,2$ and $k'$ of $H$. The
    existence of these views implies the existence of complete
    derivations $P_i \step{\grammar}^* u_i$ and $X_1
    \step{\grammar}^* w$ such that $u_i^\twalg=\graph_i$,
    for $i=1,2$ and $w^\twalg=H$. Then, we obtain derivations $S
    \astep{\grammar}^* \sop(u_1,S_1,w)$ and $S_1
    \astep{\grammar}^* \sop(u_2,Q,X)$, hence there exists a
    derivation $S \astep{\grammar}^* \sop(u,Q,X)$, for some
    ground $\fsignature_2$-term $u$ such that
    $u^\twalg=\sop^\twalg(\graph_1,\graph_2,H)=\graph$. Hence,
    $(S,Q,X) \in h_\grammar(\graph)$, by Definition~\ref{def:tw2-view}
    (\ref{it221:tw2-view}).
  \item $\graph_1$ is a P-graph and $\graph_2$ is an S-graph:
    ``$\subseteq$'' Let $(S,Q,X) \in h_\grammar(\graph)$ be a reduced
    view of $\graph$. Similar to the previous case, we obtain
    derivations $S \astep{\grammar}^* \sop(u_1,S_1,w)$ and
    $S_1 \astep{\grammar}^* \sop(u_2,Q,X)$, for some
    nonterminal $S_1 \in \mathcal{S}$ and ground
    $\fsignature_2$-terms $u_1, u_2$ and $w$ such that
    $u_i^{\algof{G}_2} = \graph_i$, for $i=1,2$ and $w^{\algof{G}_2} =
    H$.
    Then, by Lemma~\ref{lemma:tw2-assoc-start}, and because $\graph_1$ is a P-graph, there is a rule $S \rightarrow \sop(P_1,S_1,X_1)$, and complete derivations $P_1 \step{\grammar}^* u_1$ and $X_1 \step{\grammar}^* w$, for some non-terminals $P_1$ and $X_1$.
    By reordering these derivations, if necessary, we obtain views $m_1$ of $\graph_1$ and $k$ of $H$ such that $P_1 \leadsto_{\grammar} m_1$ and $X_1 \leadsto_\grammar k$.
    By Lemma \ref{fact:trunk}, we obtain $P_1 \leadsto_{\grammar} \trunk{m_1}_\grammar \in h_\grammar(\graph_1)$ and $X_1 \leadsto_\grammar \trunk{k}_\grammar \in h_\grammar(H)$.
    Moreover, since $\graph_2$ is an S-graph, we have $(S_1,Q,X) \in h_\grammar(\graph_2)$, by Definition~\ref{def:tw2-view} (\ref{it221:tw2-view}).
    Then, we obtain $(S,Q,X) \in \sop^{\algof{A}_\grammar}(h_\grammar(\graph_1),
    h_\grammar(\graph_2), h_\grammar(H))$, by applying the inference
    rule ($\sop^1(\P,\S)$) from Figure \ref{fig:tw2-alg}.

    \noindent''$\supseteq$'' Let $(S,Q,X) \in
    \sop^{\algof{A}_\grammar}(h_\grammar(\graph_1),h_\grammar(\graph_2),h_\grammar(H))$
    be a triple. Since $h_\grammar(\graph_1)$ is a P-profile and
    $h_\grammar(\graph_2)$ is an S-profile, by the inference rule
    ($\sop^1(\P,\S)$) from Figure \ref{fig:tw2-alg}, we have a rule $S
    \rightarrow \sop(P,S_1,X_1) \in \rules$ and reduced views $m_1 \in
    h_\grammar(\graph_1)$, $(S_1,Q,X) \in h_\grammar(\graph_2)$ and $k
    \in h_\grammar(H)$, such that $P_1 \leadsto_{\grammar}
    m_1$ and $X_1 \leadsto_\grammar k$. Then, there exist views $m'_1$
    of $\graph_1$ and $k'$ of $H$, from which we obtain complete
    derivations $P_1 \step{\grammar}^* u_1$ and $X_1
    \step{\grammar}^* w$ such that $u_1^\twalg=\graph_1$ and
    $w^\twalg=H$. By combining these derivations with the rule $S
    \rightarrow \sop(P,S_1,X_1)$, we obtain a derivation $S
    \astep{\grammar}^* \sop(u_1,S_1,w_1)$. Moreover, because
    $(S_1,Q,X)$ is a view of $\graph_2$, there exists a derivation
    $S_1 \astep{\grammar} \sop(u_2,Q,X)$ such that
    $u_2^\twalg=\graph_2$. By Lemma \ref{lemma:tw2-assoc}, we obtain a
    derivation $S \astep{\grammar}^* \sop(u,Q,X)$ such that
    $u^\twalg=\graph$, hence $(S,Q,X) \in h_\grammar(\graph)$, by
    Definition~\ref{def:tw2-view} (\ref{it221:tw2-view}).
  \item $\graph_1$ is an S-graph and $\graph_2$ is a P-graph:
    ''$\subseteq$'' Let $(S,Q,X) \in h_\grammar(\graph)$ be a reduced
    view of $\graph$. Similar to the previous cases, we obtain
    derivations $S \astep{\grammar}^* \sop(u_1,S_1,w)$ and
    $S_1 \astep{\grammar}^* \sop(u_2,Q,X)$, for some
    nonterminal $S_1 \in \mathcal{S}$ and ground
    $\fsignature_2$-terms $u_1, u_2$ and $w$ such that
    $u_i^{\algof{G}_2} = \graph_i$, for $i=1,2$ and $w^{\algof{G}_2} =
    H$. Because $\graph_1$ is an S-graph, we can assume w.l.o.g. that
    the derivation $S \astep{\grammar}^* \sop(u_1,S_1,w)$ is
    of the form: \begin{align*}
      S \astep{\grammar}^* \sop(u_1,S_1,X_1) \step{\grammar}^* \sop(u_1,S_1,w)
    \end{align*}
    for a complete derivation $X_1 \step{\grammar}^* w$ such
    that $w^\twalg=H$. By a reordering of the steps of this
    derivation, if necessary, we obtain a view $k$ of $H$ such that
    $X_1 \leadsto_\grammar k$. Moreover, by Lemma \ref{fact:trunk}, we
    obtain $X_1 \leadsto_\grammar \trunk{k}_\grammar \in
    h_\grammar(H)$. Since $S \astep{\grammar}^*
    \sop(u_1,S_1,X_1)$, we have $(S,S_1,X_1) \in
    h_\grammar(\graph_1)$, by Definition~\ref{def:tw2-view}
    (\ref{it221:tw2-view}).
    Then, by Lemma~\ref{lemma:tw2-assoc-start}, and because $\graph_2$ is a P-graph, there is a rule $S_1 \rightarrow \sop(P,Q,X)$, and a complete derivation $P \step{\grammar}^* u_2$.
    By a reordering of the steps of this derivation, if necessary, we
    obtain a view $m$ of $\graph_2$ such that $P
    \leadsto_{\grammar} m$.
    Moreover, by Lemma
    \ref{fact:trunk}, we obtain $P \leadsto_{\grammar}
    \trunk{m}_\grammar \in h_\grammar(\graph_2)$. Then, we obtain
    $(S,Q,X) \in \sop^{\algof{A}_\grammar}(h_\grammar(\graph_1),
    h_\grammar(\graph_2), h_\grammar(H))$, by applying the inference
    rule ($\sop^1(\S,\P)$) from Figure \ref{fig:tw2-alg}.

    \noindent''$\supseteq$'' Let $(S,Q,X) \in
    \sop^{\algof{A}_\grammar}(h_\grammar(\graph_1),h_\grammar(\graph_2),h_\grammar(H))$
    be a triple. Since $h_\grammar(\graph_1)$ is an S-profile and
    $h_\grammar(\graph_2)$ is a P-profile, by the inference rule
    ($\sop^1(\S,\P)$) from Figure \ref{fig:tw2-alg}, we have a rule
    $S_1 \rightarrow \sop(P,Q,X) \in \rules$ and reduced views
    $(S,S_1,X_1)$ of $\graph_1$, $m$ of $\graph_2$ and $k$ of $H$ such
    that $P \leadsto_{\grammar} m$ and $X_1
    \leadsto_\grammar k$. By Lemma \ref{fact:trunk}, there exist views
    $m'$ of $\graph_2$ and $k'$ of $H$ such that $P
    \leadsto_{\grammar} m$ and $X_1 \leadsto_\grammar k$.
    Since $(S,S_1,X_1) \in h_\grammar(\graph_1)$, by Definition
    \ref{def:tw2-view} (\ref{it221:tw2-view}), there exists a
    derivation $S \astep{\grammar}^* \sop(u_1,S_1,X_1)$ such
    that $u_1^\twalg=\graph_1$. From the rule $S_1 \rightarrow
    \sop(P,Q,X)$ and the view $m'$ of $\graph_2$, we obtain a
    derivation $S_1 \astep{\grammar}^* \sop(u_2,Q,X)$ such
    that $u_2^\twalg=\graph_2$. From the view $k'$ of $H$ we obtain a
    derivation $X_1 \step{\grammar}^* w$ such that
    $w^\twalg=H$. By Lemma \ref{lemma:tw2-assoc}, there exists a
    derivation $S \astep{\grammar}^* \sop(u,Q,X)$ such that
    $u^\twalg=\sop^\twalg(\graph_1,\graph_2,H)=\graph$, hence $(S,Q,X)
    \in h_\grammar(\graph)$, by Definition~\ref{def:tw2-view}
    (\ref{it221:tw2-view}).
  \item $\graph_1$ and $\graph_2$ are S-graphs: ''$\subseteq$'' Let
    $(S,Q,X) \in h_\grammar(\graph)$ be a reduced view of $\graph$. As
    in the previous case, there exist reduced views $(S,S_1,X_1) \in
    h_\grammar(\graph_1)$ and $k \in h_\grammar(H)$, such that $X_1
    \leadsto_\grammar k$. Moreover, as in the previous case, we obtain
    a derivation $S_1 \astep{\grammar}^* \sop(u,Q,X)$, for a
    ground $\fsignature_2$-term $u$ such that $u^\twalg=\graph_2$,
    hence $(S_1,Q,X) \in h_\grammar(\graph_2)$, by Definition
    \ref{def:tw2-view} (\ref{it221:tw2-view}). Then, we obtain
    $(S,Q,X) \in \sop^{\algof{A}_\grammar}(h_\grammar(\graph_1),
    h_\grammar(\graph_2), h_\grammar(H))$, by applying the inference
    rule ($\sop^1(\S,\S)$) from Figure \ref{fig:tw2-alg}.

    \noindent''$\supseteq$'' Let $(S,Q,X) \in
    \sop^{\algof{A}_\grammar}(h_\grammar(\graph_1),h_\grammar(\graph_2),h_\grammar(H))$
    be a triple. Since both $h_\grammar(\graph_1)$ and
    $h_\grammar(\graph_2)$ are S-profiles, by the inference rule
    ($\sop^1(\S,\S)$) from Figure \ref{fig:tw2-alg}, we have reduced
    views $(S,S_1,X_1) \in h_\grammar(\graph_1)$, $(S_1,Q,X) \in
    h_\grammar(\graph_2)$ and $k \in h_\grammar(H)$ such that $X_1
    \leadsto_\grammar k$. By Lemma \ref{fact:trunk}, there exists a
    view $k'$ of $H$ such that $X_1 \leadsto_\grammar k'$ and a
    complete derivation $X_1 \step{\grammar}^* w$ such that
    $w^\twalg=H$. By Definition~\ref{def:tw2-view}
    (\ref{it221:tw2-view}), we obtain derivations $S
    \astep{\grammar}^* \sop(u_1,S_1,X_1)$ and $S_1
    \astep{\grammar}^* \sop(u_2,Q,X)$ such that
    $u_i^\twalg=\graph_i$, for $i=1,2$. By combining the derivations
    $S \astep{\grammar}^* \sop(u_1,S_1,X_1)$ and $X_1
    \step{\grammar}^* w$ we obtain $S
    \astep{\grammar}^* \sop(u_1,S_1,w)$, hence a derivation
    $S \astep{\grammar}^* \sop(u,Q,X)$ such that
    $u^\twalg=\sop^\twalg(\graph_1,\graph_2,H)=\graph$, by Lemma
    \ref{lemma:tw2-assoc}. Then, $(S,Q,X) \in h_\grammar(\graph)$, by
    Definition~\ref{def:tw2-view} (\ref{it221:tw2-view}).
  \end{itemize}
\end{proofE}

Furthermore, Lemma \ref{lemma:aperiodic-tree-rec} carries over from
the class of trees, because the idempotent power $\idemof{a}$ of each
S-profile $a$ has the aperiodic property $\idemof{a}
\pop^{\algof{A}_\grammar} a = \idemof{a}$, by the definition of
$\pop^{\algof{A}_\grammar}$ for S-profiles.

The main result of this subsection is the ``if'' direction of Theorem~\ref{thm:main} for the class of graphs of tree-width at most~$2$.

\begin{theoremE}[][category=proofs]\label{thm:tw2-reg}
  The language of each (aperiodic) regular tree-width $2$ grammar $\grammar$ is
  recognizable by an (aperiodic) recognizer $(\algof{A},B)$.
  Moreover,
  $\cardof{\universeOf{A}} \in 2^{2^{\poly{\size{\grammar}}}}$
  and $f^{\algof{A}}(a_1,\ldots,a_{\arityof{f}})$ can be computed in time
  $2^{\poly{\size{\grammar}}}$, for each function symbol $f \in
  \spsignature$ and elements $a_1,\ldots,a_n \in \universeOf{A}$.
\end{theoremE}
\begin{proofE}
  Let $\grammar=(\mathcal{X}\uplus\mathcal{Y}\uplus\mathcal{P}\uplus\mathcal{S},\rules)$ be a regular (aperiodic) tree-width $2$ grammar.
  Using
  Lemma~\ref{lemma:normal-form}, we obtain a normalized (aperiodic) grammar
  $\grammar'=(\mathcal{X}\uplus\mathcal{Y}\uplus\mathcal{P}\uplus\mathcal{S},\rules')$.
  We point out that $\grammar$ and $\grammar'$ have the same set of  nonterminals.
  In particular, we have
  $\alangof{}{\algof{\twalg}}{\grammar'} =
  \alangof{}{\algof{\twalg}}{\grammar}$ and $\sizeof{\grammar'} \in
  2^{\poly{\cardof{\mathcal{X}}+\cardof{\mathcal{Y}}+\cardof{\mathcal{P}}+\cardof{\mathcal{S}}} \cdot \log(\sizeof{\grammar})}$.
  We then consider the corresponding normalized regular grammar in alternative form $\widehat{\grammar'}=(\mathcal{X}\uplus\mathcal{Y}\uplus\mathcal{P}\uplus\widehat{\mathcal{S}},\widehat{\rules'})$.
  We recall that $\alangof{}{\algof{\twalg}}{\widehat{\grammar'}} = \alangof{}{\algof{\twalg}}{\grammar'}$ and $\size{\grammar'} = \Theta(\size{\widehat{\grammar'}})$.

  We build the recognizer $(\algof{A}_{\widehat{\grammar'}},B_{\widehat{\grammar'}})$
  based on the grammar $\widehat{\grammar'}$.
  By Lemma
  \ref{lemma:tw2-view-membership}, we obtain
  $\alangof{}{\twalg}{\widehat{\grammar'}} = h_{\widehat{\grammar'}}^{-1}(B_{\widehat{\grammar'}})$,
  where $B_{\widehat{\grammar'}} \isdef \set{h_{\widehat{\grammar'}}(\graph) \mid \graph
    \in \alangof{}{\twalg}{\widehat{\grammar'}}}$.
    By Lemma
  \ref{lemma:tw2-view-homomorphism},
  $(\algof{A}_{\widehat{\grammar'}},B_{\widehat{\grammar'}})$ is a recognizer for $\alangof{}{\twalg}{\widehat{\grammar'}}$.
  Because of $\alangof{}{\twalg}{\widehat{\grammar'}} = \alangof{}{\twalg}{\grammar}$,
  $\algof{A}_{\widehat{\grammar'}}$ is also a recognizer for the language of $\grammar$ in $\twalg$.
  Moreover, $\algof{A}_{\widehat{\grammar'}}$
  is aperiodic if $\widehat{\grammar'}$ is aperiodic, by the counterpart of
  Lemma \ref{lemma:aperiodic-tree-rec} for $\algof{G}_2$.
  We further recall that $\widehat{\grammar'}$ is aperiodic iff $\grammar'$ is aperiodic.
  The complexity upper bounds follow in the
  same way as in the proof of Theorem~\ref{thm:tree-reg}.
\end{proofE}

\section{Logical Definability of Aperiodic Recognizable Sets}
\label{sec:logic}

This section focuses on the relation between definability in Monadic
Second Order Logic (\mso) and recognizability of sets of graphs from
the three classes considered in the previous. The equivalence between
recognizability and \mso-definability for sets of words and ranked
sets of trees are well-known \cite{Buechi90,Doner70}. For unranked
sets of trees, recognizability equals definability in Counting
\mso\ (\cmso), i.e., \mso\ extended with modulo constraints on
cardinalities of sets~\cite[Theorem 5.3]{CourcelleI}. For graphs, in
general, \cmso-definability implies recognizability in the graph
algebra $\algof{G}$~\cite[Theorem 4.4]{CourcelleI}, but not
viceversa\footnote{There are uncountably many recognizable sets of
graphs and only countably many \cmso-definable ones~\cite[Proposition
  2.14]{CourcelleI}.}. Moreover, the equivalence between
recognizability in the graph algebra and \cmso-definability has been
established for classes of graphs defined by bounded
tree-width~\cite{10.1145/2933575.2934508,journals/lmcs/BojanczykP22},
also known as partial $k$-trees~\cite{BODLAENDER19981}.

Since recognizability in $\algof{G}$ implies recognizability in each
class $\class$ (i.e., derived algebra of $\algof{G}$),
\cmso-definability implies recognizability in $\class$. However, the
contrapositive does not hold, even if the domain of $\class$ has
tree-width bounded by an integer $k$, because recognizability in
$\class$ does not necessarily imply recognizability in the class
$\algof{G}^{\leq k}$. The equivalence is recovered if the class
$\class$ is \emph{parsable}, i.e., for each $\graph \in
\universeOf{C}$, some ground $\fsignature_\class$-term $t$ such that
$t^\class = \graph$ can be derived using a finite set of
\mso\ formul{\ae}~\cite[Theorem 4.8(2)]{CourcelleV}.

The main result of this section is the following refinement of the
equivalence between \cmso-definable and recognizable subsets of
parsable classes:

\begin{theorem}\label{thm:mso-aperiodic}
  Let $\class$ be a parsable class of graphs. A set
  $\langu\subseteq\universeOf{C}$ is recognizable by an aperiodic
  recognizer if and only if it is \mso-definable.
\end{theorem}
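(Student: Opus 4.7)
The plan is to reduce to the known equivalence between recognizability and \cmso-definability for parsable classes \cite{CourcelleV,10.1145/2933575.2934508}, and to identify aperiodicity of the parallel composition semigroup as the precise algebraic counterpart of the restriction from \cmso\ to \mso, i.e., the absence of modular-cardinality constraints in the logic.

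For the ($\Leftarrow$) direction, I would associate to an \mso formula $\varphi$ of quantifier rank $q$ the recognizer whose elements are the \mso\ $q$-types of graphs in $\class$. The operations $f^\algof{B}$ are the induced operations on types, well-defined by a Feferman--Vaught-style composition theorem for \mso\ over the operations of $\fsignature_\class$, and $C$ collects those $q$-types that satisfy $\varphi$. The key observation is that \mso can distinguish cardinalities of sets only up to a threshold $K(q)$ depending on $q$: for every $q$-type $a$, the iterated parallel composition $\pexp{a}{n}$ stabilises at $n \geq K(q)$, because no threshold predicate of the form ``there exist at least $k$ vertices satisfying some property'' flips beyond that point. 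If $\idemof{a}$ denotes this stable value, then $\idemof{a} \pop^\algof{B} a = \idemof{a}$, which is exactly the aperiodicity condition of Definition~\ref{def:aperiodic-recognizer}.

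For the ($\Rightarrow$) direction, I would use parsability to express an aperiodic recognizer $(\algof{B},C)$ by an \mso formula. Parsability yields a finite family of \mso formul{\ae} that jointly describe, for every $\graph \in \universeOf{C}$, an $\fsignature_\class$-term $t$ with $t^\class = \graph$. Proceeding by structural induction on $t$, I would construct, for every $b \in \universeOf{B}$, an \mso formula $\psi_b(x)$ asserting that the subterm rooted at position $x$ of the parse term evaluates in $\algof{B}$ to $b$. The delicate step is the interpretation of the associative--commutative parallel composition applied to an unboundedly long argument sequence, where the result depends only on the multiplicities of the values $b$ among the children. Aperiodicity of the parallel composition semigroup of $\algof{B}$ ensures that, for every $b$, the value of $\pexp{b}{n}$ depends only on $\min(n,K)$ for a global constant $K$ bounding the size of all idempotent powers. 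Hence these multiplicities need to be counted only up to $K$, which is expressible in \mso. Composing the formul{\ae} $\psi_b$ with the parsing formul{\ae} and checking membership in $C$ yields the required \mso definition of $\langu$.

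The main obstacle is to make precise the claim that aperiodicity of the parallel composition semigroup reduces the computation of $a_1 \pop^\algof{B} \cdots \pop^\algof{B} a_k$ on an arbitrary multiset $\mset{a_1,\ldots,a_k}$ to \mso-expressible threshold counting. Concretely, one must show that for every pair $(b,c) \in \universeOf{B}^2$ the equality $c = \pexp{b}{n}$ depends only on $\min(n,K)$, and then encode this equivalence in \mso by quantifying over a partition of the children of each parallel-composition node according to their value and counting the blocks up to $K$. A secondary point is to verify that the parsing schemes available for the parsable classes considered in this paper are \mso-definable (rather than \cmso-definable), which holds by direct inspection of the decomposition arguments in Sections~\ref{subsec:trees-reg} and~\ref{subsubsec:tw2}.
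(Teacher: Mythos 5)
Your plan is viable, but it takes a genuinely different route from the paper. The paper proves nothing model-theoretic from scratch: both directions are transferred to the unranked-tree case through the parse-tree algebra. Aperiodic recognizability of $\langu$ in $\class$ is shown equivalent to aperiodic recognizability of $\hval_\class^{-1}(\langu)$ in $\repalgebra_\class$ (Lemma~\ref{lem:parse-tree-class-relation}, via syntactic congruences), which by Lemma~\ref{lemma:red-rec} is equivalent to aperiodic recognizability in the single-sorted tree algebra $\treealgebra(\alphabetParse)$; the known equivalence for unranked trees (Theorem~\ref{thm:mso-aperiodic-trees}) then gives \mso-definability of the set of parse trees, and the two directions are closed using the Backwards Translation Theorem~\ref{thm:bt} together with the facts that $\hval_\class$ is a definable transduction (Proposition~\ref{prop:eval-hom-def-transduction}) and that $\class$ is parsable. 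You instead re-derive directly, at the level of $\class$, the two facts that the paper imports by citation: that the \mso\ $q$-type algebra is aperiodic because the type of $\pexp{a}{n}$ stabilizes beyond a threshold, and that evaluating an aperiodic recognizer over unbounded parallel compositions reduces to threshold counting, hence is \mso-expressible. Your route is more self-contained (and, like the paper's, needs parsability only for the recognizable-to-definable direction), but it shoulders exactly the workload that the reduction to Theorem~\ref{thm:mso-aperiodic-trees} was designed to avoid.

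Two caveats you should address to make this a proof. First, the stabilization claim is the crux, not an observation: ``no threshold predicate flips beyond that point'' is not an argument, and the claim is precisely what separates \mso\ from \cmso\ (for \cmso\ types it is false). You need a genuine Ehrenfeucht--Fra\"iss\'e or Feferman--Vaught composition argument showing that, for quantifier rank $q$, the type of $\pexp{a}{n}$ depends only on $\min(n,K(q))$ even though set quantifiers range over subsets crossing the glued copies; this is essentially re-proving the cited tree theorem. Second, the ``structural induction on $t$'' cannot be taken literally, since parse terms have unbounded depth and vary with the input graph: the formula $\psi_b(x)$ must instead be obtained by existentially quantifying set variables that label the nodes of the (transduction-defined) parse tree with elements of $\universeOf{B}$ and checking local consistency, with your threshold counting used at the parallel-composition nodes; determinism of the algebra then makes the labeling unique, so $\psi_b$ indeed defines the evaluation. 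Your final worry about whether the parsing schemes are \mso- rather than \cmso-definable is moot: Definition~\ref{def:parsable} already requires the parsing transduction to be definable by \mso\ formul{\ae}.
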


This result is a generalization of a similar statement that has been
already proved for the class $\algof{T}$ of trees~\cite[Theorem
  5.10]{journals/corr/abs-2008-11635}. Moreover, the classes
$\algof{(D)SP}$ of (disoriented) series-parallel graphs have been
already shown to be parsable~\cite[Theorem 6.10]{CourcelleV}, hence
Theorem~\ref{thm:mso-aperiodic} applies to them. For the sake of
completeness, we show here that $\algof{G}_2$ is parsable, thus
extending the characterization of the aperiodic recognizable sets of
the $\algof{G}_2$ class.

Furthermore, each class $\btwclass{k}$ has been shown to be parsable in~\cite{journals/corr/abs-2310-04764},
based on the combined results of~\cite{10.1145/2933575.2934508,journals/lmcs/BojanczykP22}, hence
Theorem~\ref{thm:mso-aperiodic} applies to these classes as well:

\begin{corollary}\label{cor:mso-aperiodic-btw}
  For each integer $k\geq1$, a set
  $\langu\subseteq\btwuniverse{k}$ is recognizable in $\btwclass{k}$ by an
  aperiodic recognizer if and only if it is \mso-definable.
\end{corollary}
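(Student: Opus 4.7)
The plan is to derive this corollary as an immediate consequence of Theorem~\ref{thm:mso-aperiodic} applied to the class $\btwclass{k}$. The only thing to check is that $\btwclass{k}$ fits the hypothesis of that theorem, namely that it is a parsable class of graphs. First, I would observe that $\btwclass{k}$ is indeed a class in the sense of Definition~\ref{def:class}: by construction, its signature $\btwsignature{k}$ is finite (being the restriction of $\fsignature_\algof{G}$ to sorts contained in $\interv{1}{k+1}$), and it contains $\pop_{\slabs,\slabs}$ for each such sort, with operations interpreted as in $\algof{G}$.

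Next, I would invoke parsability, which is where the real content lies. The definition from~\cite{CourcelleV} requires a finite family of \mso-formul{\ae} which, for each graph $\graph\in\btwuniverse{k}$, effectively pinpoints some ground $\btwsignature{k}$-term whose $\btwclass{k}$-evaluation is $\graph$. This parsability is not proved in the present paper; rather, it is cited from \cite{journals/corr/abs-2310-04764}, where it is built on the equivalence of recognizability in $\btwclass{k}$ and \cmso-definability established in \cite{10.1145/2933575.2934508,journals/lmcs/BojanczykP22}. I would not reprove this, only cite it, noting that the existence of such \mso-definable parsings is exactly what permits the translation between algebraic recognition over $\btwclass{k}$ and logical definability.

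With parsability of $\btwclass{k}$ at hand, Theorem~\ref{thm:mso-aperiodic} applies verbatim: a set $\langu\subseteq\btwuniverse{k}$ is recognizable by an aperiodic recognizer in $\btwclass{k}$ if and only if it is \mso-definable. This concludes the proof.

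The main obstacle, if any, is purely bibliographical rather than mathematical: one must verify that the parsability statement from the cited works is stated over exactly the subalgebra $\btwclass{k}$ used here (rather than some variant with a different choice of source sorts or operations), so that Theorem~\ref{thm:mso-aperiodic} can be invoked without additional translation. Since $\btwclass{k}$ is defined precisely as the restriction of the \hr-signature to sorts $\subseteq\interv{1}{k+1}$ with unchanged interpretations, this verification is routine.
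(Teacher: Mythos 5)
Your proposal is correct and is exactly the paper's route: the paper notes (in Section 2) that each $\btwclass{k}$ is a class, cites its parsability from \cite{journals/corr/abs-2310-04764} (building on \cite{10.1145/2933575.2934508,journals/lmcs/BojanczykP22}), and then obtains the corollary as an immediate instance of Theorem~\ref{thm:mso-aperiodic}. Nothing is missing.
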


The rest of this section introduces the definitions and lemmas for the
proof of Theorem~\ref{thm:mso-aperiodic}.

\subsection{Counting Monadic Second Order Logic}
\label{subsec:cmso}

The counting monadic second order logic (\cmso) is the set of
formul{\ae} defined inductively by the following syntax:
\begin{align*}
  \psi := & \ x=y \mid a(x,y_1,\ldots,y_{\arityof{a}}),~ a \in \alphabet \mid X(x) \mid \cardconstr{X}{q}{p} \\
  & \mid \psi \land \psi \mid \neg\psi \mid \exists x ~.~ \psi \mid \exists X ~.~ \psi
\end{align*}
where $x,y_1,\ldots,y_{\arityof{a}} \in \vars$ are individual
variables, $X \in \Vars$ are set variables and $p,q \in \nat$ are
integers such that $p \in \interv{0}{q-1}$. By \mso\ we denote the
subset of \cmso\ consisting of formul{\ae} that do not contain atomic
propositions of the form $\cardconstr{X}{q}{p}$. A sentence is a
formula in which each variable occurs in the scope of a quantifier.

The semantics of \cmso\ is given by the usual satisfaction relation
$\graph \models^\store \psi$, where $\graph$ is a graph and $\store$
is a mapping of individual (resp. set) variables to vertices or edges
(resp. sets mixing vertices with edges). The relation is defined
inductively on the structure of formul{\ae}, as usual. In particular,
$\graph \models^\store a(x,y_1,\ldots,y_n)$ holds iff $\store(x)$ is
an $a$-labeled edge attached to vertices $\store(y_1), \ldots,
\store(y_n)$, and $\graph \models^\store \cardconstr{X}{q}{p}$ holds
iff $\cardof{\store(X)}=p \mod q$.

If $\phi$ is a sentence, we write $\graph \models \phi$ instead of
$\graph \models^\store \phi$. A set of graphs $\langu$ is
\mscmso-definable iff there exists a \mscmso\ sentence~$\phi$
such that $\langu = \set{\graph \mid \graph \models \phi}$. We recall
the following result relating \mso-definability with recognizability
for unranked sets of trees:

\begin{theorem}[Theorem 5.10 in \cite{journals/corr/abs-2008-11635}]
\label{thm:mso-aperiodic-trees}
  A set $\langu \subseteq \universeOf{T}$ of trees is \mso-definable
  if and only if $\langu$ is recognizable in $\algof{T}$ by an
  aperiodic recognizer.
\end{theorem}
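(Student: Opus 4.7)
The plan is to prove each direction separately, using the grammar characterization of recognizability for the easy direction and an Ehrenfeucht--Fra\"iss\'e argument for the aperiodicity.

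For the direction ``aperiodic recognizable $\Rightarrow$ \mso-definable'', since $\algof{T}$ is aperiodic universally stratified (witnessed by $\grammar_\algof{T}$ of Lemma~\ref{lemma:universal-synt-reg-tree}), Corollary~\ref{cor:completeness} yields an aperiodic regular tree grammar $\grammar = (\mathcal{X} \uplus \mathcal{Y}, \rules)$ with $\langu = \alangof{}{T}{\grammar}$. Normalizing via Lemma~\ref{lemma:normal-form} preserves aperiodicity, so I may assume each rule of form (A) is $X \to X \pop Y$ (exponent $1$). I then translate $\grammar$ into an \mso\ sentence by existentially guessing a monadic labeling $U_Z$ for every nonterminal $Z$ (ranging over nodes for $Z \in \mathcal{X}$ and edges for $Z \in \mathcal{Y}$), and asserting in first-order that the labeling is locally consistent with the rules: every $U_Y$-edge obeys some rule of form (C); every $U_X$-node has, for each $Y$ that admits an (A)-rule for $X$, an unrestricted number of $Y$-labeled outgoing edges, and, for each $Y$ that does not, exactly the number of $Y$-labeled outgoing edges prescribed by some rule of form (B) for $X$. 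Because aperiodicity forces $q = 1$ for every (A)-rule, no modulo-cardinality atom $\cardconstr{\cdot}{q}{p}$ is needed: only fixed threshold counts, which are first-order expressible, occur. The resulting sentence lies in pure \mso.

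For the converse direction, fix a defining sentence $\phi$ of quantifier rank $n$ and consider the equivalence $T_1 \equiv_n T_2$ on $\universeOf{T}$ defined by agreement on all \mso\ sentences of rank at most $n$. Compositionality of \mso-types under $\pop$ and $\extend{b}$ shows $\equiv_n$ is a congruence on $\algof{T}$; finiteness of the set of rank-$n$ types gives finite index; and $\langu$ is a union of $\equiv_n$-classes. Hence $\algof{T}/{\equiv_n}$, paired with the union of $\equiv_n$-classes satisfying $\phi$, is a finite recognizer for $\langu$. It remains to prove aperiodicity of the parallel-composition monoid of $\algof{T}/{\equiv_n}$, i.e., that for every $T \in \universeOf{T}$ there is some $N \ge 1$ with $\pexp{T}{N} \equiv_n \pexp{T}{N+1}$. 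I establish this by an Ehrenfeucht--Fra\"iss\'e game with $N = 2^n$: Duplicator maintains the invariant that, after $r$ rounds, the two sides decompose into a matched block of used copies of $T$ plus a reservoir of at least $2^{n-r}$ unused copies, answering Spoiler's first-order move inside a copy by allocating a fresh reservoir copy on the other side, and answering a set move by copying the chosen subset inside matched copies and by a threshold-matched symmetric distribution across the reservoirs.

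The main obstacle is precisely this aperiodicity step. The delicate point is handling \mso\ set moves: Spoiler may select a subset with an intricate intersection pattern distributed across many copies of $T$, and Duplicator must reproduce an indistinguishable pattern on the side that has one extra copy. Because \mso\ without counting only resolves intersection cardinalities up to a threshold decreasing with the remaining quantifier rank, the reservoir bound $2^{n-r}$ suffices to absorb Spoiler's moves symmetrically. This is the familiar Hanf/threshold argument for disjoint unions of identical structures; it must be adapted to $\pop$, which identifies the roots, but since the identified vertex is a single distinguished point it is handled in a preliminary round and does not interfere with the copy-matching argument.
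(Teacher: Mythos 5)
The paper does not actually prove this statement: it imports it verbatim as \cite[Theorem 5.10]{journals/corr/abs-2008-11635}, so you are supplying a proof where the paper supplies a citation. Your first direction is sound and is a nice use of the paper's own machinery: Corollary~\ref{cor:completeness} plus Lemma~\ref{lemma:normal-form} reduce $\langu$ to the language of an aperiodic normalized regular tree grammar whose rules of form (\ref{it1:syntactic-regular}) all have exponent $1$, so the condition $X \leadsto_\grammar m$ on the multiset of $Y$-labelled outgoing edges of a node becomes a Boolean combination of exact and at-least threshold constraints with fixed bounds, and the usual guess-a-labelling encoding stays inside \mso\ with no $\cardconstr{X}{q}{p}$ atoms. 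There is no circularity, since neither Corollary~\ref{cor:completeness} nor Lemma~\ref{lemma:normal-form} depends on this theorem.

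The gap is in the converse, exactly at the step you single out as the main obstacle. Your Duplicator invariant keeps a reservoir of $2^{n-r}$ \emph{identically coloured, interchangeable} unused copies, restored after a set move by a ``threshold-matched symmetric distribution'' of colourings over the reservoirs. But one set move can shatter the reservoir into up to $2^{|T|}$ distinct colouring classes --- a number depending on $T$, not on $n$ --- and the extra copy on the $(N{+}1)$-side must land in one of them. If Spoiler gives every reservoir copy a different colouring (possible whenever $|T|$ is large relative to $2^{n-r}$), every class has size one, the two sides then disagree on how many copies carry some specific colouring, and no slack remains; no bound $N=f(n)$ can rescue an invariant phrased in terms of literal colourings. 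The standard repair is to coarsen: classify coloured copies by their rank-$(n{-}r{-}1)$ \mso\ type over the enlarged vocabulary (boundedly many in terms of $n$ and the alphabet, by Feferman--Vaught/Shelah composition), maintain agreement of the multisets of \emph{types} up to a threshold, and prove by simultaneous induction that $k$ versus $k{+}1$ copies of a common type are $(n{-}r{-}1)$-equivalent once $k$ exceeds a (much larger, but still only $n$-dependent) threshold. This is precisely where \mso\ and \cmso\ part ways, so it cannot be dispatched as ``the familiar Hanf argument'': Hanf locality fails for \mso, and the assertion that \mso\ resolves intersection cardinalities only up to a threshold is essentially the statement under proof. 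The shared-root issue you defer is, as you say, harmless.
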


A definable \emph{transduction} is a relation on graphs, defined by
finitely many \mso\ formul{\ae}. For an input graph $\graph$, a
definable transduction produces zero or more output graphs, as
follows: \begin{compactenum}[1.]
\item $\graph$ is accepted or rejected according to a formula
  $\varphi(X_1, \ldots, X_n)$ that, moreover, assigns sets of vertices
  and edges from $\graph$ to the variables $X_1, \ldots, X_n$; the
  transduction is said to be parameterless if $n=0$.
\item $\graph$ is copied into $k\geq1$ disjoint layers,
\item for the chosen valuation of $X_1,\ldots,X_n$, the output graph
  $\graph'$ consists of the vertices and edges $x$ from the $i$-th
  layer, as defined by formul{\ae}
  $\psi^{\mathsf{vert}/\mathsf{edge}}_i(x,X_1,\ldots,X_n)$, for $i \in
  \interv{1}{k}$. The incidence relation between the $a$-labeled edges
  $x$ and the vertices $y_1,\ldots,y_{\arityof{a}}$ of $\graph'$,
  taken from the $i_1, \ldots, i_{\arityof{a}}$ layers, are defined by
  formul{\ae}
  $\theta_{(a,i_1,\ldots,i_{\arityof{a}})}(x,y_1,\ldots,y_{\arityof{a}},X_1,\ldots,X_n)$.
\end{compactenum}
The relation $\trans \subseteq \universeOf{G} \times \universeOf{G}$
is defined by the sets $\trans(\graph)$ of outputs for the input graph
$\graph$, for all valuations of $X_1, \ldots, X_n$.

The main property of definable transductions is the \emph{Backwards
Translation Theorem}, which states that \mscmso-definable sets are
closed under inverse definable transductions. A consequence is that
definable transductions are closed under relational composition:
\begin{theorem}[Theorem 1.40 in \cite{courcelle_engelfriet_2012}]\label{thm:bt}
  If $\langu$ is an \mscmso-definable set of graphs and $\trans$ is an
  definable transduction then the set $\trans^{-1}(\langu)$ is
  \mscmso-definable. Consequently, the composition of definable
  transductions is a definable transduction.
\end{theorem}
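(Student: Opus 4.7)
The plan is to prove the Backwards Translation Theorem by syntactic translation of \cmso{} formul{\ae} along the definable transduction, and then to derive composition as a corollary. Fix a definable transduction $\trans$ specified by a parameter formula $\varphi(X_1,\ldots,X_n)$, domain formul{\ae} $\psi_i^{\mathsf{vert}}, \psi_i^{\mathsf{edge}}(x,X_1,\ldots,X_n)$ for $i\in\interv{1}{k}$, and incidence formul{\ae} $\theta_{(a,i_1,\ldots,i_{\arityof{a}})}(x,y_1,\ldots,y_{\arityof{a}},X_1,\ldots,X_n)$. Given a \cmso{} sentence $\phi$ defining $\langu$, the goal is to build a \cmso{} formula $\phi^\sharp(X_1,\ldots,X_n)$ such that for every graph $\graph$ and valuation $\store$ of $X_1,\ldots,X_n$, we have $\graph\models^\store \varphi\wedge\phi^\sharp$ iff the graph $\trans_\store(\graph)$ produced by the transduction belongs to $\langu$. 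The preimage is then defined by $\exists X_1\ldots\exists X_n.\,\varphi\wedge\phi^\sharp$.

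The translation proceeds by induction on subformul{\ae} of $\phi$, where a first-order variable $x$ appearing in $\phi$ is represented in the input by a pair $(x,i)$ consisting of an element of $\graph$ and a layer index $i\in\interv{1}{k}$, and a set variable $Y$ is represented by a $k$-tuple $(Y_1,\ldots,Y_k)$ of set variables over $\graph$. Concretely, each existential quantifier $\exists x.\,\chi$ is rewritten as $\bigvee_{i=1}^k \exists x.\,(\psi_i^{\mathsf{vert}}\vee\psi_i^{\mathsf{edge}})(x,\vec X)\wedge\chi^\sharp$, where $\chi^\sharp$ remembers that the quantified element lives in layer $i$; similarly $\exists Y.\,\chi$ becomes $\exists Y_1\ldots\exists Y_k.\,\bigwedge_{i=1}^k \forall y\in Y_i.\,(\psi_i^{\mathsf{vert}}\vee\psi_i^{\mathsf{edge}})(y,\vec X)\wedge\chi^\sharp$. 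An equality atom $x=y$ with $x$ in layer $i$ and $y$ in layer $j$ becomes $x=y$ if $i=j$ and $\false$ otherwise, because distinct layers produce distinct copies. An atom $a(x,y_1,\ldots,y_{\arityof{a}})$ with $x$ in layer $i$ and $y_\ell$ in layer $j_\ell$ is replaced by $\theta_{(a,j_1,\ldots,j_{\arityof{a}})}(x,\vec y,\vec X)$ when the layer profile is consistent and by $\false$ otherwise. Set membership $Y(x)$ with $x$ in layer $i$ is translated to $Y_i(x)$. Finally, a cardinality constraint $\cardconstr{Y}{q}{p}$ is decomposed by summing cardinalities modulo $q$ across the $k$ representing sets, which is expressible in \cmso{} (the only step requiring the counting atoms). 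A straightforward induction on $\phi$ shows the intended semantic equivalence, and when $\phi$ is an \mso{} sentence the construction stays within \mso{}.

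For the composition claim, let $\trans_1$ and $\trans_2$ be definable transductions. The aim is to exhibit formul{\ae} defining $\trans_2\circ\trans_1$. One introduces parameters $\vec X^{(1)}$ for $\trans_1$ and $\vec X^{(2)}$ for $\trans_2$; the copy set for the composite is the cartesian product of the layer sets (of sizes $k_1$ and $k_2$), giving $k_1\cdot k_2$ layers. The admissibility formula of the composite is $\varphi_1(\vec X^{(1)})\wedge \varphi_2^\sharp(\vec X^{(2)},\vec X^{(1)})$, where $\varphi_2^\sharp$ is the Backwards Translation of $\varphi_2$ along $\trans_1$; by the first part this is a \cmso{} formula over the input of $\trans_1$. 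The domain and incidence formul{\ae} of $\trans_2$ are similarly pulled back through $\trans_1$ using the translation from the first paragraph, producing the required \cmso{} formul{\ae} of the composite.

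The main technical obstacle is the set-quantifier case combined with cardinality constraints: because a single set variable over the output decomposes into $k$ disjoint pieces, the modular counting constraint $\cardconstr{Y}{q}{p}$ must be re-expressed as $\bigvee_{(p_1,\ldots,p_k)}\bigwedge_{i=1}^k \cardconstr{Y_i}{q}{p_i}$ ranging over all tuples with $p_1+\cdots+p_k\equiv p\pmod q$. This is finitary and stays within \cmso{}, but it is the reason the theorem cannot be stated purely for \mso{} when the source sentence uses counting. Outside this case, the translation is a mechanical rewriting, and the soundness proof is a routine induction matching each syntactic clause to the semantic definition of $\trans_\store$.
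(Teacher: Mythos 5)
This statement is imported verbatim from Courcelle and Engelfriet (Theorem 1.40) and the paper gives no proof of its own, so the only meaningful comparison is with the standard argument in that reference — which is exactly what you reproduce: the layer-indexed relativization of individual and set variables, translation of atoms through the incidence formul{\ae}, the finite modular decomposition of cardinality atoms across the $k$ copies, and composition by pulling the second transduction's defining formul{\ae} back through the first. Your proof is correct and essentially the textbook one; the only point worth tightening is that in the composition step the parameters of the second transduction, being sets of elements of the intermediate graph, must themselves be represented as $k_1$-tuples of set variables over the input, which your translation handles implicitly but should be stated.
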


\subsection{Parse Tree Algebras}
\label{subsec:parse-tree-algebras}

\begin{figure}[t!]
  \centerline{\input{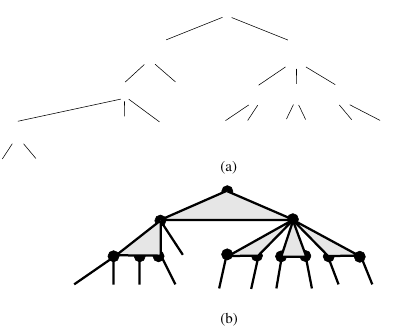_t}}
  \vspace*{-.5\baselineskip}
  \caption{A $\fsignature_2$-term (a) and its corresponding parse tree
    (b) The order of the nodes attached to an edge is
    counter-clockwise.}
  \vspace*{-\baselineskip}
  \label{fig:val}
\end{figure}

In the rest of this section, we fix an alphabet $\alphabet$ and a
class $\class$ with signature $\fsignature_\class$ and sorts
$\sorts_\class$ (Definition~\ref{def:class}). Let
\begin{align*}
  \alphabetParse \isdef
  \set{\symbOf{f} \mid f \in \fsignature_\class \setminus
    \set{\pop_{\slabs,\slabs} \mid \slabs \in \sorts_\class}}
\end{align*}
be an alphabet of edge labels of arities $\arityof{\symbOf{f}} =
\arityof{f}+1$. A \emph{parse tree} for $\class$ is a tree, having
$\alphabetParse$-labeled edges, obtained from a
$\fsignature_\class$-term $t$ by collapsing each
$\pop$-subterm\footnote{A subterm consisting of occurrences of some
symbol $\pop_{\slabs,\slabs}$ and variables, for some $\slabs \in
\sorts_\class$.} of $t$ into a single node and transforming each
occurrence of a function symbol $f\in\fsignature_\class$ from $t$ into
an edge with label $\symbOf{f}\in\alphabetParse$.

\begin{example}
Figure \ref{fig:val} (a) shows a $\fsignature_2$-term that evaluates
to the rightmost graph on Figure \ref{fig:tw2} (b), where all edge
labels are $b\in\alphabetTwo$. Figure \ref{fig:val} (b) shows the
parse tree of that term.
\end{example}


For a given class $\class$, the \emph{parse tree algebra}
$\repalgebra_\class$ is a $(\fsignature_\class,\sorts_\class)$-algebra
of trees over the edge label alphabet $\alphabetParse$. The domain
$\repdomain_\class$ of $\repalgebra_\class$ is the set of
$\alphabetParse$-labeled trees having associated sorts from
$\sorts_\class$. The function symbols $f \in
\fsignature_\class\setminus \set{\pop_{\slabs,\slabs} \mid \slabs \in
  \sorts_\class}$ are interpreted over $\alphabetParse$-labeled trees
instead of graphs as follows:
\begin{align*}
f^{\repalgebra_\class}(\tree_1,\ldots,\tree_\arityof{f}) \isdef
~\extend{\symbOf{f}}^{\algof{T}}(\tree_1,\ldots,\tree_\arityof{f})
\end{align*}

We denote by $\treealgebra(\alphabetParse)$ the algebra of
$\treealgebra(\alphabetParse)$-labeled trees, of sort $\set{1}$, and
its universe by $\trees(\alphabetParse)$. Note that not every tree
from $\trees(\alphabetParse)$ is a parse tree, because parse trees
correspond to well-sorted terms. However, recognizability in the
multi-sorted algebra of parse trees equals recognizability in the
single-sorted tree algebra:


\begin{lemmaE}[][category=proofs]\label{lemma:red-rec}
  Let $\class$ be a class. Each set $\langu \subseteq
  \repdomain_\class$ of parse trees is (aperiodic) recognizable in the
  parse tree algebra $\repalgebra_\class$ iff $\langu$ is
  (aperiodic) recognizable in the tree algebra
  $\treealgebra(\alphabetParse)$.
\end{lemmaE}
\begin{proofE}
  ``$\Rightarrow$'' Assume that $\langu \subseteq \repdomain_\class$
  is (aperiodic) recognizable in $\repalgebra_\class$ and let
  $\algof{B}=(\set{\universeOf{B}^\slabs}_{\slabs\in\sorts_\class},
  \set{f^{\algof{B}}}_{f\in\fsignature_\class})$ be a (aperiodic)
  locally finite $(\fsignature_\class,\sorts_\class)$-algebra, such
  that $\langu = \homof{\algof{P}_\class}{B}^{-1}(D)$, for some set
  $D\subseteq\universeOf{B}$. We define the locally finite algebra
  $\overline{\algof{B}}$ having domain
  $\overline{\universeOf{B}}^{\set{1}} \isdef \set{(b,\slabs) \mid b
    \in \universeOf{B}^\slabs,\slabs\in\sorts_\class} \cup \set{\bot}$
  and a single sort $\set{1}$, that interprets each function symbol
  from $\fsignature_{\treealgebra(\alphabetParse)}$ as follows: \begin{align*}
    \extend{\symbOf{f}}^{\overline{\algof{B}}}(\overline{b}_1, \ldots, \overline{b}_{\arityof{f}}) \isdef & \begin{cases}
      (f^\algof{B}(b_1,\ldots, b_\arityof{f}),\sortof{f}), \\
      \hspace*{5mm} \text{ if } \overline{b}_i = (b_i,\slabs_i) \text{, for } i \in \interv{1}{\arityof{f}} \\
      \hspace*{5mm} \text{ and } \rankof{f} = \tuple{\slabs_1, \ldots,\slabs_\arityof{f}} \\
      \bot, \text{ otherwise}
    \end{cases} \\
    (b_1,\slabs_1) \pop^{\overline{\algof{B}}} (b_2,\slabs_2) \isdef & \begin{cases}
      (b_1 \pop^\algof{B} b_2, \slabs_1), \text{ if } \slabs_1 = \slabs_2 \\
      \bot, \text{ otherwise}
    \end{cases}
  \end{align*}
  We prove that $\algof{B}$ is aperiodic only if
  $\overline{\algof{B}}$ is aperiodic. Let $(b,\slabs) \in
  \overline{\universeOf{B}}$ be an element. Then, $\idemof{(b,\slabs)}
  = (\idemof{b},\slabs)$, by the definition of
  $\pop^{\overline{\algof{B}}}$. By the aperiodicity of $\algof{B}$,
  we obtain $\idemof{(b,\slabs)} \pop^{\overline{\algof{B}}}
  (b,\slabs) = (\idemof{b}\pop^\algof{B} b, \slabs) =
  (\idemof{b},\slabs) = \idemof{(b,\slabs)}$, thus
  $\overline{\algof{B}}$ is aperiodic, because the choice of
  $(b,\slabs)$ was arbitrary.
  We define the set
  $\overline{D} \isdef \set{(b,\slabs) \mid b \in \universeOf{B}^\slabs \cap D, \slabs \in \sorts_\class}$
  and the function
  $\overline{h} : \treealgebra(\alphabetParse) \rightarrow \overline{\universeOf{B}}$ as follows:
  \begin{equation*}
  \overline{h}(\tree) \isdef 
    (\homof{\algof{P}_\class}{B}(\tree),\slabs), \text{ for } \tree \in \universeOf{P}_\class^\slabs
  \end{equation*}
  By the definition of $\overline{h}$ and $\overline{D}$, we have
  $\homof{\algof{P}_\class}{B}(\tree) \in D \iff \overline{h}(\tree)
  \in \overline{D}$, for each $\tree \in \universeOf{P}_\class$, hence
  $\langu=\homof{\algof{P}_\class}{B}^{-1}(D)=\overline{h}^{-1}(\overline{D})$.
  We are left with proving that $\overline{h}$ a homomorphism between
  $\treealgebra(\alphabetParse)$ and $\overline{\algof{B}}$, by
  showing the following points:

  \vspace*{.5\baselineskip}\noindent\underline{$\overline{h}(\extend{\symbOf{f}}^{\treealgebra(\alphabetParse)}(\tree_1,\ldots,\tree_{\arityof{f}}))
    =
    \extend{\symbOf{f}}^{\overline{\algof{B}}}(\overline{h}(\tree_1),
    \ldots, \overline{h}(\tree_n))$}, for $f \in \fsignature_\class
  \setminus \set{\pop_{\slabs,\slabs} \mid \slabs\in\sorts_\class}$:
  Assume that
  $\rankof{f}=\tuple{\sortof{\tree_1},\ldots,\sortof{\tree_{\arityof{f}}}}$
  and $\slabs = \sortof{f} =  \sortof{\extend{\symbOf{f}}^{\treealgebra(\alphabetParse)}(\tree_1,\ldots,\tree_{\arityof{f}})}$. Using the definition of $\extend{\symbOf{f}}$, we obtain:
  \begin{align*}
    & \overline{h}(\extend{\symbOf{f}}^{\treealgebra(\alphabetParse)}(\tree_1,\ldots,\tree_{\arityof{f}})) \\
    = & ~(\homof{\algof{P}_\class}{B}(\extend{\symbOf{f}}^{\treealgebra(\alphabetParse)}(\tree_1,\ldots,\tree_{\arityof{f}})), \slabs) \\
    = & ~(\homof{\algof{P}_\class}{B}(f^{\algof{P}_\class}(\tree_1,\ldots,\tree_{\arityof{f}})), \slabs) \\
    = & ~(f^\algof{B}(\homof{\algof{P}_\class}{B}(\tree_1), \ldots, \homof{\algof{P}_\class}{B}(\tree_{\arityof{f}})), \slabs) \\
    = & ~\extend{\symbOf{f}}^{\overline{\algof{B}}}((\homof{\algof{P}_\class}{B}(\tree_1),\sortof{\tree_1}), \ldots, (\homof{\algof{P}_\class}{B}(\tree_{\arityof{f}}),\sortof{f})) \\
    = & ~\extend{\symbOf{f}}^{\overline{\algof{B}}}(\overline{h}(\tree_1), \ldots, \overline{h}(\tree_n))
  \end{align*}

  \vspace*{.5\baselineskip}\noindent\underline{$\overline{h}(\tree_1
    \pop^{\treealgebra(\alphabetParse)} \tree_2) =
    \overline{h}(\tree_1) \pop^{\treealgebra(\alphabetParse)}
    \overline{h}(\tree_1)$}: Assume that we have $\slabs \isdef \sortof{\tree_1}=\sortof{\tree_2}$ for some $\slabs \in \sorts_\class$. 
    We compute, using the definition of $\pop^{\overline{\algof{B}}}$:
  \begin{align*}
    \overline{h}(\tree_1 \pop^{\treealgebra(\alphabetParse)} \tree_2) = & (\homof{\algof{P}_\class}{B}(\tree_1 \pop^{\treealgebra(\alphabetParse)} \tree_2), \slabs) \\
    = & (\homof{\algof{P}_\class}{B}(\tree_1) \pop^{\algof{B}} \homof{\algof{P}_\class}{B}(\tree_2), \slabs) \\
    = & (\homof{\algof{P}_\class}{B}(\tree_1), \slabs) \pop^{\overline{\algof{B}}} (\homof{\algof{P}_\class}{B}(\tree_2), \slabs) \\
    = & \overline{h}(\tree_1) \pop^{\overline{\algof{B}}} \overline{h}(\tree_2)
  \end{align*}


  \vspace*{.5\baselineskip}\noindent``$\Leftarrow$'' Assume that
  $\langu$ is recognizable in $\treealgebra(\alphabetParse)$ and let
  $\overline{\algof{B}} = (\universeOf{B}^{\set{1}},
  \set{\extend{\symbOf{f}}^{\overline{\algof{B}}}}_{f\in\fsignature_\class})$
  be a locally-finite single-sorted algebra such that $\langu =
  \homof{\treealgebra(\alphabetParse)}{\overline{\algof{B}}}^{-1}(\overline{D})$,
  for some set $\overline{D}\subseteq\overline{\universeOf{B}}$. We
  define the domain $\universeOf{B}^\slabs \isdef \set{(b,\slabs) \mid
    b \in \overline{\universeOf{B}}}$ of sort $\slabs$, for each
  $\slabs \in \sorts_\class$ and the
  $(\fsignature_\class,\sorts_\class)$-algebra $\algof{B} \isdef
  (\set{\universeOf{B}^\slabs}_{\slabs\in\sorts_\class},
  \set{f^\algof{B}}_{f \in \fsignature_\class})$ such that, for each
  $f \in \fsignature_\class$: \begin{align*}
    f^\algof{B}((b_1,\slabs_1), \ldots,
    (b_{\arityof{f}},\slabs_{\arityof{f}})) \isdef &
    ~(\extend{\symbOf{f}}^{\overline{\algof{B}}}(b_1,\ldots,b_\arityof{f}), \sortof{f}), \\
    \text{ if } f \in \fsignature_\class \setminus \set{\pop_{\slabs,\slabs} \mid \slabs \in \sorts_\class}, &
    ~\rankof{f}=\tuple{\slabs_1,\ldots,\slabs_{\arityof{f}}} \\
      (b_1,\slabs) \pop^\algof{B}_{\slabs,\slabs} (b_2,\slabs) \isdef & ~(b_1 \pop^{\overline{\algof{B}}} b_2, \slabs)
      \text{, for all } \slabs \in \sorts_\class
  \end{align*}
  We prove that $\overline{\algof{B}}$ is aperiodic only if
  $\algof{B}$ is aperiodic. Let $(b,\slabs) \in \universeOf{B}$ be an
  element. Then, $\idemof{(b,\slabs)}=(\idemof{b},\slabs)$, by the
  definition of $\pop^\algof{B}_{\slabs,\slabs}$. By the aperiodicity
  of $\overline{\algof{B}}$, we obtain $\idemof{(b,\slabs)}
  \pop^\algof{B}_{\slabs,\slabs} (b,\slabs) = (\idemof{b}
  \pop^{\overline{\algof{B}}} b,\slabs) = (\idemof{b},\slabs) =
  \idemof{(b,\slabs)}$.

  We define the set $D \isdef \set{(b,\slabs) \mid b \in \overline{D},
    \slabs \in \sorts_\class}$ and the function $h :
  \universeOf{P}_\class \rightarrow \universeOf{B}$ as $h(\tree)
  \isdef
  (\homof{\treealgebra(\alphabetParse)}{\overline{\algof{B}}}(\tree),
  \sortof{\tree})$. Then, we immediately obtain that $h(\tree)\in D
  \iff
  \homof{\treealgebra(\alphabetParse)}{\overline{\algof{B}}}(\tree)\in\overline{D}$,
  for each $\tree \in \universeOf{P}_\class$, i.e.,
  $\langu=\homof{\treealgebra(\alphabetParse)}{\overline{\algof{B}}}^{-1}(\overline{D})=h^{-1}(D)$. We
  are left with proving that $h$ is a homomorphism between
  $\algof{P}_\class$ and $\algof{B}$, by showing the following points:

  \vspace*{.5\baselineskip}\noindent\underline{$h(f^{\algof{P}_\class}(\tree_1,\ldots,\tree_{\arityof{f}}))
    = f^\algof{B}(h(\tree_1), \ldots, h(\tree_\arityof{f}))$}, for $f
    \in \fsignature_\class \setminus \set{\pop_{\slabs,\slabs} \mid
      \slabs\in\sorts_\class}$: Let $\slabs \isdef
  \sortof{h(f^{\algof{P}_\class}(\tree_1,\ldots,\tree_{\arityof{f}}))}$. We
  compute: \begin{align*}
    & ~h(f^{\algof{P}_\class}(\tree_1,\ldots,\tree_{\arityof{f}})) \\
    = & ~(\homof{\treealgebra(\alphabetParse)}{\overline{\algof{B}}}(f^{\algof{P}_\class}(\tree_1,\ldots,\tree_{\arityof{f}})), \slabs) \\
    = & ~(\extend{\symbOf{f}}^{\overline{\algof{B}}}(\homof{\treealgebra(\alphabetParse)}{\overline{\algof{B}}}(\tree_1),\ldots,\homof{\treealgebra(\alphabetParse)}{\overline{\algof{B}}}(\tree_{\arityof{f}}),\slabs) \\
    = & ~f^\algof{B}((\homof{\treealgebra(\alphabetParse)}{\overline{\algof{B}}}(\tree_1),\sortof{\tree_1}), \ldots, (\homof{\treealgebra(\alphabetParse)}{\overline{\algof{B}}}(\tree_{\arityof{f}}),\sortof{f})) \\
    = & ~f^\algof{B}(h(\tree_1), \ldots, h(\tree_{\arityof{f}}))
    \end{align*}

  \vspace*{.5\baselineskip}\noindent\underline{$h(\tree_1
    \pop^{\algof{P}_\class}_{\slabs,\slabs} \tree_2) = h(\tree_1)
    \pop^{\algof{B}}_{\slabs,\slabs} h(\tree_2)$}: We denote
  $\sortof{\tree_1} = \sortof{\tree_2} = \slabs$ and
  compute: \begin{align*}
    h(\tree_1 \pop^{\algof{P}_\class}_{\slabs,\slabs} \tree_2) = & ~(\homof{\treealgebra(\alphabetParse)}{\overline{\algof{B}}}(\tree_1 \pop^{\algof{P}_\class}_{\slabs,\slabs} \tree_2), \slabs) \\
    = & ~(\homof{\treealgebra(\alphabetParse)}{\overline{\algof{B}}}(\tree_1) \pop^{\overline{\algof{B}}}_{\slabs,\slabs} \homof{\treealgebra(\alphabetParse)}{\overline{\algof{B}}}(\tree_2),\slabs) \\
    = & ~(\homof{\treealgebra(\alphabetParse)}{\overline{\algof{B}}}(\tree_1),\slabs) \pop^{\algof{B}} (\homof{\treealgebra(\alphabetParse)}{\overline{\algof{B}}}(\tree_2),\slabs) \\
    = & ~h(\tree_1)  \pop^{\algof{B}} h(\tree_2)
  \end{align*}
\end{proofE}

Each parse tree $\tree \in \universeOf{P}_\class$ obtained by
compacting a $\fsignature_\class$-term $t$, has an associated graph,
denoted $\hval_\class(\tree) \isdef t^\class$. The following result
ensures that $\hval_\class$ is well-defined:

\begin{propositionE}[][category=proofs]\label{prop:unique-representations}
  For every parse tree $\tree \in \universeOf{P}_\class$ there is only
  one $\fsignature_\class$-term up to the associativity and
  commutativity of the operations $\pop_{\slabs,\slabs}$, with $\slabs
  \in \sorts_\class$, such that $\tree = t^{\repalgebra_\class}$.
\end{propositionE}
\begin{proofE}
  The proof is by induction on the size of $\tree$. Because every
  parse tree $\tree \in \universeOf{R}_\class$ is representable, we
  have that $\tree = \pop_{\slabs,\slabs}^{\repalgebra_\class}
  f_i^{\repalgebra_\class}(\tree_1^i,\ldots,\tree_\arityof{f_i}^i)$
  for some sort $\slabs \in \sorts_\class$, function symbols $f_i \in
  \fsignature_\class \setminus \set{\pop_{\slabs,\slabs} \mid \slabs
    \in \sorts_\class}$ and parse trees $\tree_j^i$.  We note that the
  choice of the function symbols $f_i$ is unique up to reordering
  because these are all the edges that are attached to the root of the
  parse tree. By the inductive hypothesis, we have that there are some
  terms $t_j^i$ such that $\tree_j^i = (t_j^i)^{\repalgebra_\class}$,
  which are unique up to the associativity and commutativity of the
  operations $\pop_{\slabs,\slabs}$.  Hence, $t = \pop_{\slabs,\slabs}
  f_i(\tree_1^i,\ldots,\tree_\arityof{f_i}^i)$ is a term with $\tree =
  t^{\repalgebra_\class}$ that is unique up to the associativity and
  commutativity of the operations $\pop_{\slabs,\slabs}$.
\end{proofE}

It is manifest that $\sortof{\hval_\class(\tree)}=\sortof{\tree}$, for
each $\tree \in \universeOf{P}_\class$ and that $\hval_\class$ is a
(sort-preserving) homomorphism between $\algof{P}_\class$ and
$\class$. Since both algebras are representable, this homomorphism is
unique, thus called the evaluation homomorphism. It is known that the
evaluation homomorphism is a definable transduction, for each class
$\btwclass{k}$~\cite[Proposition 7.48]{courcelle_engelfriet_2012}. The
following is a generalization of this result to arbitrary classes:

\begin{propositionE}[][category=proofs]\label{prop:eval-hom-def-transduction}
  For each class $\class$, the function $\hval_\class$ is a
  parameterless definable transduction.
\end{propositionE}
\begin{proofE}
  Let $k\geq1$ be the minimum integer such that each sort that occurs
  in some function symbol from $\fsignature_\class$ is a subset of
  $\interv{1}{k+1}$. Since $\fsignature_\class$ is finite, such an
  integer exists. Let $h : \universeOf{P}_\class \rightarrow
  \universeOf{P}_{\btwclass{k}}$ be homomorphism that replaces each
  $\symbOf{f}$-labeled edge of a parse tree for $\class$ with the
  parse tree corresponding to the $\btwsignature{k}$-term $t_f$ that
  interprets the function symbol $f$ in $\btwclass{k}$. Since each
  such term is finite, the $h$ function is a definable
  transduction. By Theorem~\ref{thm:bt}, $\hval_\class = h \circ
  \hval_{\btwclass{k}}$ is a definable transduction, because
  $\hval_{\btwclass{k}}$ is a definable transduction,
  by~\cite[Proposition 7.48]{courcelle_engelfriet_2012}.
\end{proofE}

Then, recognizability of a set in the class $\class$ of graphs equals
recognizability of the inverse $\hval$-image in the algebra
$\algof{P}_\class$:

\begin{lemmaE}[][category=proofs]\label{lem:parse-tree-class-relation}
  Let $\class$ be a class and $\langu \subseteq \universeOf{C}$ be a
  set of graphs.  Then, $\mathcal{S} \isdef \hval_\class^{-1}(\langu)$
  is (aperiodic) recognizable in $\repalgebra_\class$ if and only if
  $\langu$ is (aperiodic) recognizable in $\class$.
\end{lemmaE}
\begin{proofE}
``$\Rightarrow$'' We assume that $\mathcal{S} = \hval_\class^{-1}(\langu)$ is
  recognizable and let $(\algof{I},J)$ be a
  $(\fsignature_\class,\sorts_\class)$-recognizer such that
  $\mathcal{S} = \homof{\algof{P}_\class}{I}^{-1}(J)$. By Lemma
  \ref{lemma:syntactic-congruence-homomorphism}, the algebra
  ${\algof{P}_\class}_\requiv{\mathcal{S}}$ is locally finite and
  there exists a surjective homomorphism $g$ from $\mathcal{I}$ to
  ${(\algof{P}_\class)}_\requiv{\mathcal{S}}$ such that $h_\requiv{S}
  = g \circ \homof{\algof{P}_\class}{I}$. By Lemma
  \ref{lemma:factorization-of-related-syntactic-congruences}, the
  algebras ${(\algof{P}_\class)}_\requiv{\mathcal{S}}$ and
  $\class_\requiv{\langu}$ are isomorphic, hence
  $\class_\requiv{\langu}$ is locally finite, thus $\langu$ is
  recognizable in $\class$, witnessed by the homomorphism
  $h_\requiv{\langu}$. We assume now that $\mathcal{I}$ is an
  aperiodic algebra and prove that
  ${(\algof{P}_\class)}_\requiv{\mathcal{S}}$ is aperiodic. Provided
  that ${(\algof{P}_\class)}_\requiv{\mathcal{S}}$ is aperiodic, the
  aperiodicity of $\class_\requiv{\langu}$ follows from the
  isomorphism between ${(\algof{P}_\class)}_\requiv{\mathcal{S}}$ and
  $\class_\requiv{\langu}$. Let $a \in \universeOf{P}_\class$. By the
  surjectivity of $g$, there exists $i \in \universeOf{I}$ such that
  $g(i) = [a]_\requiv{\mathcal{S}}$ and let $\slabs$ be the sort of
  $i$, which is also the sort of $[a]_\requiv{\mathcal{S}}$. Hence,
  $g(\idemof{i}) =\idemof{[a]_\requiv{\mathcal{S}}}$ and we compute:
  $\idemof{[a]_\requiv{\mathcal{S}}}
  \pop^{{(\algof{P}_\class)}_\requiv{\mathcal{S}}}_{\slabs,\slabs} [a]_\requiv{\mathcal{S}} = g(\idemof{i})
  \pop^{{(\algof{P}_\class)}_\requiv{\mathcal{S}}}_{\slabs,\slabs}
  g(i) = g(\idemof{i} \pop^{\mathcal{I}}_{\slabs,\slabs} i) =
  g(\idemof{i}) = [a]_\requiv{\mathcal{S}}$. Since the choice of
  $a \in \universeOf{P}_\class$ was arbitrary, we obtain that
  ${(\algof{P}_\class)}_\requiv{\mathcal{S}}$ is aperiodic.

  \vspace*{.5\baselineskip}\noindent''$\Leftarrow$'' We assume that
  $\langu$ is recognizable and let $(\mathcal{I},J)$ be a
  $(\fsignature_\class,\sorts_\class)$-recognizer such that $\langu =
  \homof{\class}{I}^{-1}(J)$. Using a similar argument as for the
  previous direction, we obtain that $\class_\requiv{\langu}$ is a
  locally finite algebra that, moreover, is aperiodic if $\mathcal{I}$
  is aperiodic. Since ${(\algof{P}_\class)}_\requiv{\mathcal{S}}$ and
  $\class_\requiv{\langu}$ are isomorphic, by the argument from the
  previous direction, we obtain that
  ${(\algof{P}_\class)}_\requiv{\mathcal{S}}$ is locally finite and,
  moreover, aperiodic if $\class_\requiv{\langu}$ is aperiodic. Then,
  $\mathcal{S}$ is (aperiodic) recognizable, witnessed by
  $h_\requiv{\mathcal{S}}$.
\end{proofE}

\subsection{Parsable Classes}
\label{subsec:parsable}

In general, the inverse of the evaluation homomorphism $\hval$ is not
a definable transduction. The parsable classes are the ones for which
this is the case. The following definition is closely related to the
parsable presentations of sets of graphs, introduced by
Courcelle~\cite[Definition 4.2]{CourcelleV}:

\begin{definition}\label{def:parsable}
  The class $\class$ is \emph{parsable} iff there exists a definable
  transduction $\parsefunc \subseteq \hval_{\class}^{-1}$ such that
  $\dom{\parsefunc}=\universeOf{C}$.
\end{definition}

Parsability ensures the equivalence between recognizability and
\cmso-definability:

\begin{theorem}[Theorem 4.8(2) in \cite{CourcelleV}]\label{thm:parsable-cmso-rec}
  Let $\class$ be a parsable class and $\langu\subseteq\universeOf{C}$
  be a set. Then, $\langu$ is \cmso-definable if and only if $\langu$
  is recognizable in $\class$.
\end{theorem}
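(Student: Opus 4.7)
\smallskip\noindent\emph{Proof plan.} The plan is to handle the two directions separately, using the established machinery of the paper plus Courcelle's classical results.

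For the ``only if'' direction (\cmso-definable implies recognizable in $\class$), the strategy is to invoke the classical theorem of Courcelle that every \cmso-definable set of graphs is recognizable in the full graph algebra $\algof{G}$ (this is~\cite[Theorem 4.4]{CourcelleI}, referenced already in the text). Since $\class$ is, by Definition~\ref{def:class}, a derived algebra of $\algof{G}$, Lemma~\ref{lemma:derived-sub-rec} immediately transfers recognizability from $\algof{G}$ to $\class$, yielding the conclusion. This direction requires no use of parsability.

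For the ``if'' direction (recognizable in $\class$ implies \cmso-definable), the plan is to chain the lemmas established earlier in the section and then apply the Backwards Translation Theorem. Concretely, I would proceed as follows. Let $\langu\subseteq\universeOf{C}$ be recognizable in $\class$ and set $\mathcal{S} \isdef \hval_\class^{-1}(\langu) \subseteq \repdomain_\class$. First, by Lemma~\ref{lem:parse-tree-class-relation}, $\mathcal{S}$ is recognizable in the parse tree algebra $\repalgebra_\class$. Second, by Lemma~\ref{lemma:red-rec}, $\mathcal{S}$ is recognizable in the tree algebra $\treealgebra(\alphabetParse)$, that is, as a set of unranked trees over the finite edge alphabet $\alphabetParse$. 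Third, by the classical equivalence between recognizability and \cmso-definability for unranked trees (Theorem~5.3 in~\cite{CourcelleI}, already cited in the text), $\mathcal{S}$ is \cmso-definable.

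It remains to transfer \cmso-definability of $\mathcal{S}$ back to $\langu$, and this is where parsability enters. By Definition~\ref{def:parsable}, there is a definable transduction $\parsefunc \subseteq \hval_\class^{-1}$ whose domain is all of $\universeOf{C}$. The key observation is the set-theoretic identity $\langu = \parsefunc^{-1}(\mathcal{S})$: on the one hand, if $\graph \in \langu$, then $\parsefunc(\graph)$ is nonempty because $\graph \in \dom{\parsefunc}$, and any $\tree \in \parsefunc(\graph)$ satisfies $\hval_\class(\tree) = \graph \in \langu$, hence $\tree \in \mathcal{S}$ and $\graph \in \parsefunc^{-1}(\mathcal{S})$; on the other hand, if $\graph \in \parsefunc^{-1}(\mathcal{S})$, then some $\tree \in \parsefunc(\graph) \cap \mathcal{S}$ witnesses $\graph = \hval_\class(\tree) \in \langu$. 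Applying Theorem~\ref{thm:bt} (Backwards Translation) to $\mathcal{S}$ and the definable transduction $\parsefunc$ then gives that $\parsefunc^{-1}(\mathcal{S}) = \langu$ is \cmso-definable.

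The main obstacle is conceptual rather than technical: the argument hinges on having all three algebras ($\class$, $\repalgebra_\class$, $\treealgebra(\alphabetParse)$) lined up with matching notions of recognizability, and on exploiting parsability precisely at the point where we must descend from an unranked tree language (over which \cmso\ agrees with recognizability) back to a graph language (where in general it does not). All the heavy lifting for this alignment has already been packaged into Lemmas~\ref{lem:parse-tree-class-relation} and~\ref{lemma:red-rec} and Proposition~\ref{prop:eval-hom-def-transduction}, so the proof reduces to their orchestration together with Backwards Translation.
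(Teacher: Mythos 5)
The paper does not actually prove this statement: it is imported verbatim as Theorem~4.8(2) of \cite{CourcelleV}. Your reconstruction is nevertheless correct and runs on exactly the machinery the paper itself assembles for the analogous Theorem~\ref{thm:mso-aperiodic}. Your ``if'' direction is the same chain the paper uses there --- from $\class$ to $\repalgebra_\class$ to $\treealgebra(\alphabetParse)$ via Lemmas~\ref{lem:parse-tree-class-relation} and~\ref{lemma:red-rec}, then the tree-level equivalence of recognizability and \cmso-definability from \cite{CourcelleI}, then Backwards Translation (Theorem~\ref{thm:bt}) through the parsing transduction $\parsefunc$, including the identity $\langu = \parsefunc^{-1}(\hval_\class^{-1}(\langu))$, which you justify more explicitly than the paper does. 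Your ``only if'' direction takes the shortcut the paper states in prose at the opening of Section~\ref{sec:logic} (Courcelle's Recognizability Theorem on $\algof{G}$ followed by Lemma~\ref{lemma:derived-sub-rec}), rather than routing back through $\hval_\class$ and Proposition~\ref{prop:eval-hom-def-transduction} as in the $\Leftarrow$ direction of Theorem~\ref{thm:mso-aperiodic}; since no aperiodicity has to be preserved for \cmso, this is legitimate and in fact the simpler choice. No gaps.
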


We prove the main result of this section before showing its
applicability to the class $\algof{G}_2$. Note that this result was
already proved for the class of trees (Theorem~\ref{thm:mso-aperiodic-trees}).

\paragraph{Proof of Theorem~\ref{thm:mso-aperiodic}}
``$\Rightarrow$'' Let $(\algof{A},\universeOf{B})$ be an aperiodic
$(\fsignature_\class,\sorts_\class)$-recognizer for $\langu$ such that
$\langu=\homof{\class}{A}^{-1}(B)$. Then, $\homof{\class}{A} \circ
\hval_\class$ is a homomorphism between $\algof{P}_\class$ and
$\algof{A}$ and $\mathcal{S} \isdef \hval^{-1}_\class(\langu) =
(\homof{\class}{A} \circ \hval_\class)^{-1}(B)$ is aperiodic
recognizable in $\algof{P}_\class$. By Lemma \ref{lemma:red-rec},
$\mathcal{S}$ is aperiodic recognizable in
$\treealgebra(\alphabetParse)$, hence $\mathcal{S}$ is \mso-definable,
by Theorem~\ref{thm:mso-aperiodic-trees}. Since $\class$ is parsable,
there exists a definable transduction $\parsefunc \subseteq
\hval_{\class}^{-1}$ such that $\dom{\parsefunc}=\universeOf{C}$. We
obtain $\langu=\parsefunc^{-1}(\mathcal{S})$, thus $\langu$ is
\mso-definable, by Theorem~\ref{thm:bt}.

\noindent ``$\Leftarrow$'' Assume that $\langu$ is \mso-definable and
$\mathcal{S} = \hval^{-1}_\class(\langu)$ be the set of parse trees
that evaluate to some graph from $\langu$. By Proposition
\ref{prop:eval-hom-def-transduction}, $\hval_\class$ is a definable
transduction, hence $\mathcal{S}$ is an \mso-definable set of trees,
by Theorem~\ref{thm:bt}. By Theorem~\ref{thm:mso-aperiodic-trees},
$\mathcal{S}$ is aperiodic recognizable in
$\treealgebra(\alphabetParse)$, hence also aperiodic recognizable in
$\algof{P}_\class$, by Lemma \ref{lemma:red-rec}. By Lemma
\ref{lem:parse-tree-class-relation}, $\langu$ is aperiodic
recognizable in $\class$. \qed

It is already known that the classes of (disoriented) series-parallel
graphs are parsable~\cite[Theorems 6.10 and 6.12]{CourcelleV}. For
self-completeness reasons, we re-prove the following theorem using our
notation:

\begin{theoremE}[][category=proofs]\label{thm:sp-reg-parsable}
  The classes $\algof{SP}$ and $\algof{DSP}$ are parsable.
\end{theoremE}
\begin{proofE}
  We first describe the parsing transduction $\parsefunc_\algof{SP}$
  for the $\algof{SP}$ class. The parsing transduction for the
  $\algof{DSP}$ class will be the composition $\parsefunc_\algof{SP}
  \circ \trans_\mathit{rev}$ of a transduction $\trans_\mathit{rev}$
  that reverses an arbitrarily chosen set of edges and
  $\parsefunc_\algof{SP}$. Note that the output of
  $\trans_\mathit{rev}$ is rejected by $\universeOf{SP}$ if it does
  not belong to $\universeOf{SP}$.

  We recall the following definitions from \cite[Section
    6]{CourcelleV}. Let $\graph \in \universeOf{SP}$ be an (oriented)
  series-parallel graph and $x,y \in \vertof{\graph}$ be (not
  necessarily distinct) vertices of $\graph$. \begin{enumerate}[1.]
  \item $x \leq y$ iff there exists an oriented path from $x$ to $y$
    in $\graph$ and $x < y$ iff $x \leq y$ and $x \neq y$.
  \item for $x < y$, $\graph[x,y]$ is the subgraph of $\graph$
    consisting of all vertices $x \leq z \leq y$ and all edges between
    them.
  \item $\vertavoid{\graph}$ is the set of vertices of $\graph$ that are
    not on some path from $\sourceof{\graph}(1)$ to
    $\sourceof{\graph}(2)$. Note that every vertex from
    $\vertof{\graph}$ is on some path from $\sourceof{\graph}(1)$ to
    $\sourceof{\graph}(2)$, however the vertices from
    $\vertavoid{\graph}$ are avoided by least one such path.
  \item for each $x \in \vertavoid{\graph}$ and $y \in \vertof{\graph}$,
    we define $\mathsf{left}(x) \isdef y$ (resp. $\mathsf{right}(x)
    \isdef y$) iff the following hold: \begin{enumerate}[(a)]
    \item\label{it1:left-vertex} $y < x$ (resp. $x < y$).
    \item\label{it2:left-vertex} every path from $\sourceof{\graph}(1)$ to
      $\sourceof{\graph}(2)$ that crosses $x$ also crosses $y$.
    \item\label{it3:left-vertex} there is some path from $\sourceof{\graph}(1)$ to
      $\sourceof{\graph}(2)$ that crosses $y$ but not $x$.
    \item any vertex $y'$ that satisfies conditions
      (\ref{it1:left-vertex}-\ref{it3:left-vertex}) for $x$ is such
      that $y' \leq y$ (resp. $y \leq y'$).
    \end{enumerate}
   \item $\graph[x]\isdef\graph[\mathsf{left}(x),\mathsf{right}(x)]$,
     for each vertex $x \in \vertavoid{\graph}$. The existence and
     unicity of $\mathsf{left}(x)$ and $\mathsf{right}(x)$ is proved
     in \cite[Lemma 6.6]{CourcelleV}.
   \item
     $\graph[e]\isdef\graph[\edgerelof{\graph}(e,1),\edgerelof{\graph}(e,2)]$,
     for each edge $e \in \edgeof{\graph}$.
   \item $\singledge{\graph}{e}$ is the graph consisting of the edge
     $e$ and vertices $\edgerelof{\graph}(e,1)$ and
     $\edgerelof{\graph}(e,2)$, for each edge $e \in
     \edgeof{\graph}$. Note that $\graph[e]$ and
     $\singledge{\graph}{e}$ are not the same, e.g., if $\graph =
     (\sgraph{b}_{1,2} \pop \sgraph{b}_{1,2})^\algof{SP}$. Moreover,
     each $\graph[e]$, resp. $\singledge{\graph}{e}$, is a P-component
     of $\graph$.
  \end{enumerate}
  We call a P-component of $\graph$ any $\sop$-atomic subgraph of
  $\graph$, i.e., either a single-edge graph or a parallel
  compositions of at least two other subgraphs. An S-component of
  $\graph$ is a subgraph $C$ of $\graph$ that is a serial composition
  of at least two other subgraphs of $\graph$ such that $C
  \sop^\algof{SP} K$ or $K \sop^\algof{SP} C$ are not subgraphs of
  $\graph$ for any other graph $K$. By the corrected\footnote{The
  original statement of \cite[Lemma 6.7]{CourcelleV} considers only
  $\set{\graph[x] \mid x \in \edgeof{\graph} \cup \vertavoid{\graph}}$
  and ignores $\set{\singledge{\graph}{e} \mid e \in
    \edgeof{\graph}}$.} statement of~\cite[Lemma 6.7]{CourcelleV}, the
  set of P-components of some series-parallel subgraph of $\graph$
  equals $\set{\graph[x] \mid x \in \edgeof{\graph} \cup
    \vertavoid{\graph}} \cup \set{\singledge{\graph}{e} \mid e \in
    \edgeof{\graph}}$. Moreover, the functions $\mathsf{left},
  \mathsf{right} : \vertavoid{\graph} \rightarrow \vertof{\graph}$ are
  definable using \mso.

  To prove that the class $\algof{SP}$ is parsable, we describe a
  definable transduction $\parsefunc_\algof{SP}$ that builds from each
  input graph $\graph$ (i.e., $\graph \in \universeOf{G}$ is any
  graph, not just some series-parallel graph) a parse tree $\tree \in
  \universeOf{P}_\algof{SP}$ such that $\hval_\algof{SP}(\tree) =
  \graph$. First, the input graph $\graph$ is rejected if $\graph
  \not\in \universeOf{SP}$. The latter condition can be checked by
  encoding the conditions of \cite[Lemma 6.1]{CourcelleV} using
  \mso\ formul{\ae}. The transduction uses a parameter $X$ that is
  assigned to a subset of $\vertavoid{\graph} \cup \edgeof{\graph}$
  having the following \mso-definable properties (we abuse notation
  and write $x \in X$ whenever $x$ belongs to the valuation of
  $X$): \begin{itemize}[-]
  \item $\graph[x]\neq\graph[y]$, for all $x \neq y \in X$,
  \item for each $x \in \vertavoid{\graph} \cup \edgeof{\graph}$ there
    exists $y \in X$ such that $\graph[x]=\graph[y]$.
  \end{itemize}

  \begin{figure}[t!]
    \centerline{\input{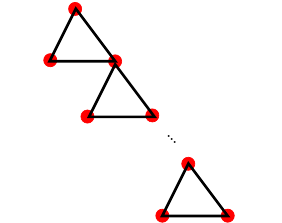_t}}
    \vspace*{-.5\baselineskip}
    \caption{Chaining of $\symbOf{\sop}$ edges in $\algof{SP}$ parse
      tree. The nodes of the parse tree are shown in red and the edges
      in black.}
    \vspace*{-\baselineskip}
    \label{fig:sp-parse}
  \end{figure}
  
  The transduction $\parsefunc_\algof{SP}$ uses $k=3$ layers.  We
  denote by $(x,i)$ an element $x \in X$ taken from the layer $i \in
  \set{1,2,3}$. The nodes of the output parse tree $\tree$ are the
  vertices $x \in X$ from the $1^\mathit{st}$ layer. The edges of
  $\tree$ are elements from the $3^\mathit{rd}$ layer. We use the
  $2^\mathit{nd}$ layer only for the additional vertices used by the
  edges that encode serial composition subterms. The intuition is that
  a node $(x,i)$ is the root of a subtree that evaluates to a subgraph
  of $\graph$ occurring as the $i^{\mathit{th}}$ argument of a binary
  $\sop$ operation, being thus attached to the $(i+1)^\mathit{th}$
  position of a $\symbOf{\sop}$-labeled edge of $\tree$, for
  $i=1,2$. Moreover, the distinction between $\graph[e]$ and
  $\singledge{\graph}{e}$, for some edge $e \in \edgeof{\graph}$ is
  mirrored by the distinction between $(e,1)$ (i.e., a node of
  $\tree$) and $(e,3)$ (i.e., an edge of $\tree$). The parse tree
  $\tree$ of $\graph$ (i.e., the result of
  $\parsefunc_\algof{SP}(\graph)$ for the given valuation of $X$) is
  defined as follows: \begin{itemize}[-]
  \item $\vertof{\tree} \isdef \set{(x,1),(x,2) \mid x \in X}$,
  \item $\edgeof{\tree}$ consists of either: \begin{itemize}[*]
  \item unary $\symbOf{\sgraph{b}}_{1,2}$-labeled edges $(x,3)$,
    attached to a single vertex, for each $x \in X \cap
    \edgeof{\graph}$. These edges correspond to the P-components of
    $\graph$ of the form $\singledge{\graph}{e}$, for $e \in
    \edgeof{\graph}$. The choice of the vertex
    $\edgerelof{\tree}((x,3),1)$ depends on the position of the
    single-edge graph $\singledge{\graph}{x}$,
    namely: \begin{enumerate}[(i)]
    \item $\singledge{\graph}{x}$ occurs alone or on the
      $1^\mathit{st}$ position on some S-component of $\graph$, in
      which case $\edgerelof{\tree}((x,3),1)=(y,1)$, where $y$ is the
      representative of the innermost enclosing P-component of that
      S-component,
    \item $\singledge{\graph}{x}$ occurs on some S-component, but not
      on the first or last position, in which case
      $\edgerelof{\tree}((x,3),1)=(x,1)$,
    \item $\singledge{\graph}{x}$ occurs on the last position on some
      S-component, in which case $\edgerelof{\tree}((x,3),1)=(y,2)$,
      where $y$ is the representative of the previous P-component
      within that S-component.
    \end{enumerate}
    This case distinction can be encoded in \mso.
  \item ternary $\symbOf{\sop}$-labeled edges correspond to the
    P-components of $\graph$ that occur in some S-component. More
    specifically, for each $x \in X$, we have $\graph[x] = C_1
    \pop^\algof{SP} \ldots \pop^\algof{SP} C_m$, where $C_i =
    \graph[x_{i,1}] \sop^\algof{SP} \ldots \sop^\algof{SP}
    \graph[x_{i,n_i}]$ and $x_{i,j} \in X$, for all $i \in
    \interv{1}{m}$ and $j \in \interv{1}{n_i}$. Then, for each $i \in
    \interv{1}{m}$ and $j \in \interv{1}{n_i}$, $\tree$ has a
    $\symbOf{\sop}$-labeled edge $(x_{i,j},3)$ attached to the
    following vertices (see Figure \ref{fig:sp-parse} for an
    illustration): \begin{align*}
      \edgerelof{\tree}((x_{i,j},3),1) \isdef & ~\begin{cases} (x,1) \text{, if } j = 1 \\
        (x_{i,j-1},2) \text{, if } 2 \leq j \leq n_i-1
      \end{cases} \\
      \edgerelof{\tree}((x_{i,j},3),k) \isdef & ~(x_{i,j},k-1) \text{, for } 1 \leq j < n_i-1, k=2,3 \\
      \edgerelof{\tree}((x_{i,n_i-1},3),2) \isdef &~(x_{i,n_i-1},1) \\
      \edgerelof{\tree}((x_{i,n_i-1},3),3) \isdef &~(x_{i,n_i},1) \\
    \end{align*}
    Note that the relation ``\emph{$\graph[x_{i,j}]$ is to the left of
    $\graph[x_{i,\ell}]$ in the S-component $C_i$}'', for all $1 \leq
    j < \ell \leq n_i$, can be defined in \mso, thus the distinction
    between the two cases in the definition of
    $\edgerelof{\tree}((x_{i,j},3),1)$ above is \mso-definable.
  \end{itemize}
  \end{itemize}
  The root of $\tree$ is $(x,1)$, where $x \in X$ is the
  unique element such that $\graph[x]$ is maximal in the subgraph
  relation among all $y \in X$. The proof of the fact that
  $\hval_{\algof{SP}}(\tree)=\graph$ is routine, by induction on the
  decomposition of $\graph$ given by Lemma \ref{lemma:sp-decomp}.

  To prove that the class $\algof{DSP}$ is parsable, we define the
  copyless transduction $\trans_\mathit{rev}$ having parameter $X$,
  assigned to a set of edges of the input graph that will be reversed
  in the output graph $\graph'$, i.e., for each edge $x \in X$ we have
  $\labof{\graph'}(x) \isdef \labof{\graph}(x)^r$, where $b^r$ is a
  unique edge label corresponding to $b \in \alphabetTwo$ and
  $\edgerelof{\graph'}(x,i) \isdef \edgerelof{\graph}(x,3-i)$, for $i
  = 1,2$. Then, $\parsefunc_\algof{DSP}$ is the composition of
  $\trans_\mathit{rev}$ with the modified version of
  $\parsefunc_\algof{SP}$, where the edges with labels $b^r$ are
  encoded by unary $\symbOf{\sgraph{b}}_{2,1}$ edges in the output
  parse tree.
\end{proofE}

Moreover, each class $\btwclass{k}$ is parsable, by the combined
results
of~\cite{10.1145/2933575.2934508,journals/lmcs/BojanczykP22}. Note
that, even though the connected graphs from the domain of
$\btwclass{2}$ are the exactly the ones from the domain of
$\algof{G}_2$, the signatures of the two classes are different, since
$\algof{G}_2$ is a derived algebra of $\btwclass{2}$. Finally, we
prove that:

\begin{theoremE}[][category=proofs]\label{thm:tw-two-reg-parsable}
  The class $\algof{G}_2$ is parsable.
\end{theoremE}
\begin{proofE}
  To prove that $\algof{G}_2$ is parsable, we build a definable
  transduction $\parsefunc_2$ that produces, from any connected graph
  $\graph$ of tree-width $\leq2$ a parse tree $\tree \in
  \universeOf{P}_{\twalg}$ such that $\hval_{\twalg}(\tree) =
  \graph$. We define $\parsefunc_2 \isdef \trans_2 \circ \trans_1$,
  where the definable transductions $\trans_1$ and $\trans_2$ are
  described below. We refer to the definitions introduced in the proof
  of Theorem~\ref{thm:sp-reg-parsable}.

  First, $\trans_1$ checks whether the input graph $\graph$ has
  tree-width at most~$2$, by checking the absence of a $K_4$ minor,
  and assigns its parameter $X$ to a set of edges from $\graph$.  The
  output $\graph'$ of $\trans_1$ is the same as $\graph$, except for
  the edges from $X_1$ that are reversed. The reversed edges have the
  labels $b^r$, where the original labels are $b \in \alphabetTwo$.

  Second, $\trans_2$ produces a parse tree $\tree$ of its input graph
  $\graph$ in a similar way as the $\parsefunc_\algof{SP}$
  transduction from the proof of Theorem
  \ref{thm:sp-reg-parsable}. Note that each block of $\graph$ is a
  P-component of $\graph$ and, moreover, the condition that
  $\graph[x]$ does not contain a cutvertex can be encoded in \mso.
  The block tree of $\graph$ can be represented by the \mso-definable
  ``\emph{descendent-of}'' relation between the representatives of its
  blocks, which in fact, describes a forest. We assume w.l.o.g. that
  the root of the block tree of $\graph$ is a cutvertex of $\graph$,
  or the $1$-source of $\graph$, in case $\graph$ consists of a single
  block, i.e., $\graph$ is $2$-connected.

  \begin{figure}[t!]
    \centerline{\input{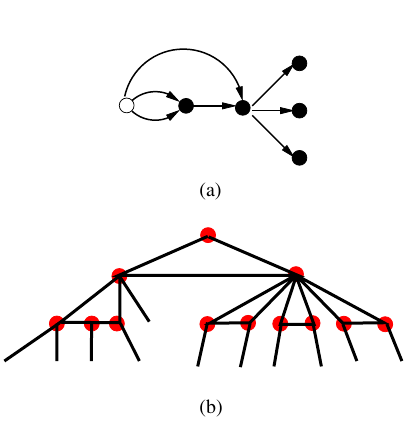_t}}
    \vspace*{-.5\baselineskip}
    \caption{A graph from $\universeOf{G}_2$ (a) and its corresponding
      parse tree (b). The braces indicate the representatives of
      P-components $\graph[v_2]$ and $\graph[e_1]$. All other
      P-components are single edges $\singledge{\graph}{e_i}$ having
      $e_i$ as representative, for $i = 1, \ldots, 7$. The nodes of the
      parse tree are shown in red and the edges in black. }
    \vspace*{-\baselineskip}
    \label{fig:parse}
  \end{figure}

  The transduction $\trans_2$ uses $k=7$ layers and assigns its
  parameter $X$ to a set of vertices and edges from its input $\graph$
  (i.e., the output of $\trans_1$) that are the unique representatives
  of the P-components of $\graph$. In addition $X$ contains the
  $1$-source of $\graph$, which can be identified by an \mso\ formula.

  The output parse tree $\tree$ is defined as follows. The nodes of
  $\tree$ are $\vertof{\tree} \isdef \set{(x,1), (x,2), (x,3) \mid x
    \in X}$. We use the layer $j \in \set{1,2,3}$ to represent the
  nodes attached to the $\symbOf{\sop}$-labeled edges on the
  $(j+1)^\mathit{th}$ position of an edge. The layer $j \in
  \set{4,5,6,7}$ distinguishes the four types of edges, as
  follows: \begin{itemize}[-]
  \item unary $\symbOf{\emptygraph}_{\set{1}}$-labeled edges of the
    form $(x,4)$, for some $x \in X$, attached to a single vertex
    $\edgerelof{\tree}((x,4),1)=(x,j)$, which is attached to the
    $j^\mathit{th}$ position of another edge, for $j \in \set{2,3}$.
  \item unary $\symbOf{\sgraph{b}}_{i,3-i}$-labeled edges $(x,5)$
    correspond to the edges $x \in X \cap \edgeof{\graph}$, where $i =
    1$ if $\labof{\graph}(x)=b$ and $i=2$ if $\labof{\graph}(x)=b^r$,
    i.e., the orientation of the edge depends on its label,
    $b^r$-labeled edges being exactly the ones reversed by
    $\trans_1$. These edges are attached to a single vertex
    $\edgerelof{\tree}((x,6),1) \isdef (x,j)$, where the vertex
    $(x,j)$ is attached to some other edge on the $j^\mathit{th}$
    position, for $j \in \set{2,3,4}$.
  \item ternary $\symbOf{\rop}$-labeled edges $(x,6)$, where $x \in X$
    is the representative of a block of $\graph$. Each
    $\symbOf{\rop}$-labeled edge is attached to the following
    vertices: \begin{align*}
      & \edgerelof{\tree}((x,6),1) \isdef \\
      & \begin{cases}
        (y,3) \text{, where } y \text{ is the representative of the P-component} \\[-1mm]
        \hspace*{9mm} \text{ to which } x \text{ is attached in some S-component} \\
        (y,1) \text{, where } y \text{ is the $1$-source of } \graph
      \end{cases}
    \end{align*}
    Note that the relation that finds the representative of the
    P-component to which the block $x$ is attached in an S-component
    (as the $3^\mathit{rd}$ argument of a serial composition) can be
    encoded in \mso, similar to the left-neighbour relation in an
    S-component (see proof of Theorem~\ref{thm:sp-reg-parsable}).
  \item $\symbOf{\sop}$-labeled edges of arity $4$ are defined similar
    to the ternary $\symbOf{\sop}$-labeled edges from the proof of
    Theorem~\ref{thm:sp-reg-parsable}. In particular, each S-component
    of a P-component $\graph[x]$ of $\graph$ is of the form
    $\sop^\twalg(\graph[x_1], \sop^\twalg(\graph[x_2], \ldots
    \sop^\twalg(\graph[x_{n-1}],\graph[x_{n}],H_{n-1}) \ldots, H_2),
    H_1)$, where $\graph[x_1]$, $\ldots$, $\graph[x_n]$ are
    P-components of $\graph[x]$ and $H_1, \ldots, H_{n-1}$ are
    (possibly empty) parallel compositions of blocks of sort
    $\set{1}$. Then, for each $i\in\interv{1}{n}$, $\tree$ has a
    $\symbOf{\sop}$-labeled edge $(x_i,7)$ attached to the following
    vertices: \begin{align*} \edgerelof{\tree}((x_i,7),1) \isdef &
      ~\begin{cases} (x,1) \text{, if } i = 1 \\
        (x_{i-1},2) \text{, if } i \geq 2
      \end{cases} \\
      \edgerelof{\tree}((x_i,7),j) \isdef & ~(x_i,j-1) \text{, for } j \in \set{2,3,4}
    \end{align*}
    Note that the subtree corresponding to each non-empty parallel
    composition of blocks $H_i$ is a parallel composition of
    $\symbOf{\rop}$-labeled edges whose $1^\mathit{st}$ vertex is
    $(x_i,1)$, by the definition of $\symbOf{\rop}$-labeled edges. The
    relation between $x_i$ and the representative of each block from
    $H_i$ can be defined using \mso.
  \end{itemize}
  We refer to Figure \ref{fig:parse} for an illustration of this
  definition. We note that the case distinctions above are definable
  using \mso.  The root of $\tree$ is $(r,1)$, where $r \in Y$ exists
  and is unique. The proof that $\hval_{\twalg}(\tree)=\graph$ goes by
  induction on the decomposition of $\graph$ given by Lemmas
  \ref{lemma:sp-decomp} and \ref{lemma:tw2-block}.
\end{proofE}

\section{Conclusion}

We introduce regular grammars for classes of graphs of tree-width at
most $2$. These grammars provide finite and compact representations
for the recognizable sets in each class. The inclusion of a
context-free language into a regular language can be decided in
doubly-exponential time. A further restriction of regular grammars
describes exactly the \mso-definable sets thereof, that are, moreover,
recognized by aperiodic algebras.

As future work, we consider a more precise assessment of the size of
the minimal recognizer algebra for the classes considered in this
paper, as well as addressing the problem of finding regular grammars
for the class of graphs of tree-width at most $3$.

\section*{Acknowledgements}
Marius Bozga and Radu Iosif wish to acknowledge the support of the
French National Research Agency project Non-Aggregative Resource
COmpositions (NARCO) under grant number ANR-21-CE48-0011.  Florian
Zuleger wishes to acknowledge the support of the FWF project AUTOSARD:
``Automated Sublinear Amortised Resource Analysis of Data Structures''
No.~P36623 and the project VASSAL: ``Verification and Analysis for
Safety and Security of Applications in Life'' funded by the European
Union under Horizon Europe WIDERA Coordination and Support
Action/Grant Agreement No. 101160022
\includegraphics[width=.03\textwidth]{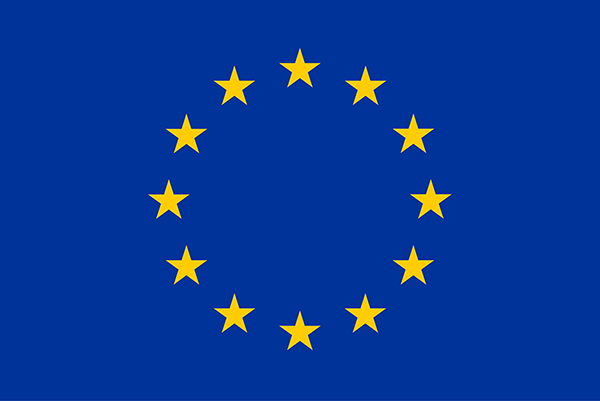}. All
authors thank the anonymous reviewers for their insightful comments.

\bibliographystyle{IEEEtran}
\bibliography{refs}

\ifLongVersion\else
\appendices

\section{Proofs}
\label{app:proofs}
\printProofs[proofs]

\section{Additional Material on Recognizability}
\label{app:rec}
\printProofs[rec]

\section{Series Parallel Graphs}
\label{app:sp}
\printProofs[sp]
\fi

\end{document}